\definecolor{myblue}{RGB}{30,100,200}
\definecolor{myorange}{RGB}{220,110,20}
\definecolor{mygreen}{RGB}{30,150,80}
\definecolor{swapgreen}{HTML}{405B24}
\definecolor{myred}{RGB}{190,30,30}
\definecolor{mygray}{RGB}{120,120,120}
\definecolor{mypurple}{RGB}{130,40,180}
\definecolor{mybrown}{RGB}{160,90,30}
\definecolor{lightblue}{RGB}{220,235,255}
\definecolor{lightorange}{RGB}{255,235,210}
\definecolor{lightgreen}{RGB}{210,240,220}
\definecolor{lightpurple}{RGB}{238,225,250}
\renewcommand{\selectlanguage}[1]{}
\newtheorem{theorem}{Theorem}
\newtheorem{lemma}{Lemma}
\newtheorem{corollary}{Corollary}
\newtheorem{remark}{Remark}
\newtheorem{proposition}{Proposition}
\newtheorem{conjecture}{Conjecture}
\NewDocumentCommand{\evalat}{sO{\big}mm}{%
  \IfBooleanTF{#1}
   {\mleft. #3 \mright|_{#4}}
   {#3#2|_{#4}}%
}
\newcommand{\hilsymb}{\mathcal{H}}
\newcommand{\hilone}{\hilsymb_m}
\newcommand{\fockn}{\hilone^n}
\newcommand{\fock}{\mathcal{H}_m}
\newcommand{\focknbasis}{\mathcal{B}_m^n}
\newcommand{\orb}{\operatorname{Orb}}
\newcommand{\End}{\operatorname{End}}
\newcommand{\uni}{\operatorname{U}}
\newcommand{\spa}{\operatorname{span}}
\newcommand{\schop}{\mathcal{S}_{\rm op}(\fock)}
\renewcommand{\ip}[2]{\langle #1 , #2 \rangle}
\newcommand{\ipbig}[2]{\big\langle #1 , #2 \big\rangle}
\newcommand{\gram}{\operatorname{Gram}}
\newcommand{\id}{\mathds{1}}
\newcommand{\NN}{\mathbb{N}}
\newcommand{\nocontentsline}[3]{}
\newcommand{\tocless}[1]{\begingroup\let\addcontentsline=\nocontentsline #1\endgroup}
\begin{document}

\title{Orbit dimensions in linear and Gaussian quantum optics}

\author{Eliott Z. Mamon\,\orcidlink{0009-0003-9186-2019}}
\affiliation{Laboratoire d’Informatique de Paris 6, CNRS, Sorbonne Université, 4 Place Jussieu, 75005 Paris, France}
\email{eliott.mamon@lip6.fr}

\date{\today}

\begin{abstract}
We study the dimension of the manifold of quantum states (called orbit) that a given bosonic state can reach under linear or quadratic Hamiltonian evolutions. That is, we investigate how many directions in the Hilbert space a state can explore in these sub-universal regimes. After showcasing a simple way to compute orbit dimensions, we find that these topological quantities reveal fundamental insights into the structure of attainable state spaces (e.g., boson bunching does not increase the number of accessible directions) with multifaceted consequences. First, we illustrate how they can alone yield no-go results for some transformations. We then propose ways to probe orbit dimensions using homodyne/heterodyne measurements on pure states, or photon counters on two copies of general states. We also relate orbit dimensions to the number of directions accessible to bosonic variational circuits. Next, we study links between orbit dimensions and the resource theory of non-Gaussianity (resp. $P$-nonclassicality), and prove that free states coincide with a unique minimal-dimension orbit in the pure multimode case, under Gaussian (resp. displaced-linear) unitaries. We then extend this result to a mixed-state setting, provided that an alternative convex-roof-based definition of orbit dimensions is taken; however, we show that those fail to be fixed-mode monotones under the respective free operations. Our entire framework is proven to hold in both discrete and continuous-variable settings, and can be used with Fock as well as phase-space representations such as the Wigner or stellar representations. Overall, this work offers a new perspective on the structure of reachable quantum states of light, which can help practitioners understand limitations and sources of expressivity and non-Gaussianity (or $P$-nonclassicality) in bosonic quantum information protocols such as quantum machine learning.
\end{abstract}

\maketitle

\section{Introduction}

Bosonic quantum systems are ubiquitous in a wide range of physical platforms. In such systems, quantum information resides in a number $m$ (assumed to be finite) of bosonic modes, which are described by an infinite-dimensional Hilbert space known as the \textit{Fock space}. Quantum optical platforms are such examples, where the modes correspond to distinct properties (spectral, temporal, angle of polarization, etc.) of the electromagnetic field. This is the setting of both discrete variable  \cite{kok_linear_2007,Aaronson-ComputationalComplexity-2013} and continuous variable \cite{lloyd_quantum_1999,Braunstein-QuantumInformation-2005,weedbrook_gaussian_2012,Adesso-ContinuousVariable-2014,hamilton_gaussian_2017,Serafini-QuantumContinuous-2023,Walschaers-NonGaussianQuantum-2021} platforms, which are equally explored as a basis for quantum computing.

Ideally, a quantum computing platform should be \textit{universally controllable}, i.e. have the ability to unitarily evolve any state in Fock space into any other state, by a suitable driving sequence \cite{dalessandro_introduction_2021}. This is not the case for linear or even Gaussian quantum optics, as the continuous group of accessible unitaries $G$ is only \textit{finite-dimensional} (meaning informally that its elements $U\in G$ can be specified by a finite number of real parameters) and therefore cannot make an initial quantum state explore an infinite-dimensional state space. In fact, since $\dim(G)$ is only \textit{quadratic} in $m$, it cannot implement arbitrary unitary transformations deterministically on some subspace of Fock space of finite but \textit{exponentially} large dimension, either \cite{zanardi_universal_2004}.
Recovering universality thus requires the ability to enact additional non-Gaussian unitaries \cite{lloyd_quantum_1999} (which is experimentally challenging) or to use intermediate measurements \cite{knill_scheme_2001,kok_linear_2007,VanMeter-GeneralLinearoptical-2007} (which may trade determinism for low success rates). Other strategies using measurements to increase the expressivity of gates include adaptive gates \cite{chabaud_quantum_2021} (which increase delay and losses) or adaptive state injection \cite{Monbroussou-QuantumAdvantage-2025} (which sacrifices purity). 

\begin{figure}[h!]
\centering
\includegraphics[width=0.25\textwidth]{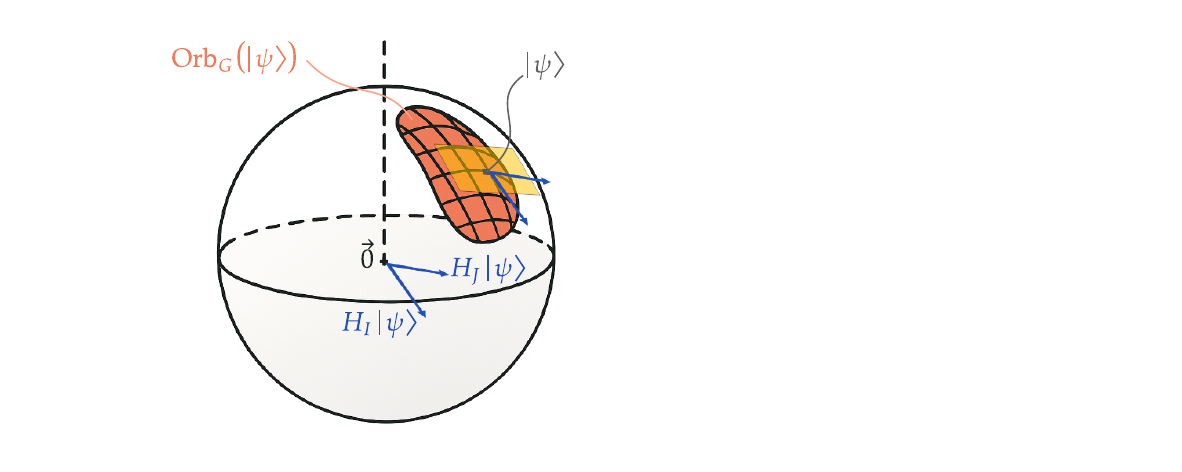}
\caption{Illustration of an orbit of a state $\ket{\psi}$ under group $G$ (red). The tangent space to the orbit at point $\ket{\psi}$ (yellow) encodes the possible movement directions (blue) under small evolutions from the group $G$. The vectors $H_I \ket{\psi}$ ($H_I \in \mathcal{B}_{\mathfrak{g}}$) are translations of a basis of these directions to the origin of the Hilbert space. The dimension of the orbit, equal to the tangent space dimension, is found by counting the number of linearly independent (w.r.t.~\textit{real} linear combinations) vectors $H_I \ket{\psi}$ (\cref{eq:concrete-rank-formula-ket}).}
\label{fig:orbit-tangent-space-diagram}
\end{figure}
In the meantime, a sub-universal unitary group $G$ is only able to send an implementable state $\ket{\psi}$ to a \textit{strict} subset of state space, called the \textit{orbit} of $\ket{\psi}$ under $G$.
Furthermore, preparing specific bosonic states remains much more of a delicate, case by case issue \cite{dellanno_multiphoton_2006}. It is therefore of crucial importance to gain insight on the structure of these orbits, and on how they partition the state space.

In the literature, several \textit{invariants} under passive linear \cite{migdal_which_2014,parellada_no-go_2023,Parellada-LieAlgebraic-2026} and Gaussian \cite{chabaud_stellar_2020,Hahn-AssessingNonGaussian-2026,Mele-SymplecticRank-2026} quantum optics have been identified. 
While somewhat interpretable, and useful to rule out certain state conversions or to verify experimental setups \cite{Rodari-ObservationLie-2025}, these invariants do not always help to convey a better picture of the orbits themselves.

In this work, we showcase the \textit{dimension} of an orbit (as a submanifold in state space), which is an invariant under $G$ that has the added benefit of directly revealing a topological property of orbits themselves.
While the concept of orbit dimensions has found uses in quantum theory (e.g. in entanglement theory of symmetric qubits \cite{cenci_symmetric_2010,lyons_minimum_2005}), the fact that they can be easily evaluated at the Lie algebraic level (and without studying the stabilizer subgroup explicitly \cite{dalessandro_introduction_2021}) is seldom explored.

The work is structured as follows. 
In \cref{sec:q-optics-intro}, we recall the basic setup of quantum optics in Fock space, quadratic Hamiltonians, and corresponding unitary orbits, after which we state the main structural theorem. 
In \cref{sec:main-text-props-of-orbit-dimensions}, we collect some properties that orbit dimensions satisfy. 
In \cref{sec:evaluation-orbit-dimensions}, we first explain how to evaluate orbit dimensions analytically or numerically in Fock or phase-space representations and provide some examples, after which we discuss their experimental evaluation.
In \cref{sec:rel-RTs}, we explore the relation between orbit dimensions and some resource theories of quantum optics.
The results are discussed in \cref{sec:disc-and-outlook} and concluded in \cref{sec:CCL}.
In the end, the provided Appendix contains all the technical proofs and necessary background on the needed concepts.

\section{Structure of orbits, and orbit dimensions}\label{sec:q-optics-intro}

\subsection{States of quantum optics}
Consider a bosonic system of $m$ modes, whose Hilbert space is the $m$-mode \textit{Fock space} $\fock:=\oplus_{n\in\NN}\,\fockn$, where $\fockn := \spa(\focknbasis)$, with the \textit{$n$-photon Fock basis states} $\focknbasis := \{\ket{n_1,\dots,n_m}\ \ |\ \ n_i\geq0\,,\ \, n_1 + \cdots + n_m = n \}$.
We also write occupation numbers as $\bm{n}:=(n_1,\dots,n_m)$ and $|\bm{n}|:=n_1+\cdots+n_m$.
Mixed states are described by density operators, i.e. non-negative operators $\rho$ with $\Tr[\rho]=1$. 
Pure states can be represented either in the "ket" picture, as normalized elements $\ket{\psi}\in\fock$, or in the "ketbra" picture, as rank-$1$ density operators $\ketbra{\psi}$.
The infinite-dimensionality of $\fock$ demands a twofold technical care when stating properties about states. First, some kets and density operators are not physically meaningful due to having infinite average energy.
We hereafter focus on Schwartz states (states whose Fock basis coefficients decay faster than polynomially), which we call \textit{physical states} and which avoid this technicality.
Note that this class of states includes in particular finite support states (in the Fock basis), as well as, for instance, coherent states and cat states.
Second, topologies on these state spaces must be precised. %
We refer to \cref{sec:SM-Schwartz-spaces-and-proof-of-lsc} for details on these Schwartz spaces and their topologies.

We recall the definitions of canonical operators. The $k^{\rm th}$ \textit{annihilation operator} $a_k$ acts on the Fock basis as $a_k \ket{n_1,\dots,n_k,\dots,n_m}:=\sqrt{n_k} \ket{n_1,\dots,n_k-1,\dots,n_m}$ (for $n_k=0$ it is by definition the zero vector), the \textit{creation operator} $a^\dagger_k$ acts as $a^\dagger_k \ket{n_1,\dots,n_k,\dots,n_m}:=\sqrt{n_k+1} \ket{n_1,\dots,n_k+1,\dots,n_m}$, and they satisfy
\begin{equation}\label{eq:CCR}
[a_k,a^\dagger_l]=\delta_{kl}\id\,.  
\end{equation}

\subsection{Unitary groups and Lie algebras of quantum optics}

We begin by introducing some Hamiltonians on Fock space.
For all $1\leq k < l \leq m$, we let
\allowdisplaybreaks[1]
\begin{align}
e_{kl} &:= \frac{1}{2} (a^\dagger_k a_l + a^\dagger_l a_k)\,,\label{eq:def-generator-e}\\
E_{kl} &:= \frac{i}{2} (a^\dagger_k a_l - a^\dagger_l a_k)\,,\label{eq:def-generator-E}\\
r_{kl} &:= \frac{1}{2} (a^\dagger_k a^\dagger_l + a_k a_l)\,,\label{eq:def-generator-r}\\
R_{kl} &:= \frac{i}{2} (a^\dagger_k a^\dagger_l - a_k a_l)\,,\label{eq:def-generator-R}
\intertext{and for all $1\leq k \leq m$,}
N_{k} &:= a^\dagger_k a_k\,,\label{eq:def-generator-N}\\
s_{k} &:= \frac{1}{2} ( a^{\dagger\,2}_k + a_k^{\,2} )\,,\label{eq:def-generator-s}\\
S_{k} &:= \frac{i}{2} ( a^{\dagger\,2}_k - a_k^{\,2} )\,,\label{eq:def-generator-S}\\
q_{k} &:= \frac{1}{\sqrt{2}} ( a^{\dagger}_k + a_k )\,,\label{eq:def-generator-q}\\
p_{k} &:= \frac{i}{\sqrt{2}} ( a^{\dagger}_k - a_k )\,.\label{eq:def-generator-p}
\end{align}
Various levels of experimental abilities in quantum optics correspond to different subsets of the above Hamiltonians that one can enact on the system. Indeed, each of them generates (by exponentiation) a unitary associated to the following physical interpretations \cite{luis_quantum_1995,weedbrook_gaussian_2012}:
$e_{kl}$ generates a $\pi/2$ phase shifting beam splitter (between modes $k$ and $l$), $E_{kl}$ generates a non phase shifting beam splitter, $N_{k}$ generates a phase shifter (on mode $k$), $r_{kl},R_{kl}$ generate different kinds of two-mode squeezing unitaries, $s_{k},S_{k}$ generate different kinds of single-mode squeezing unitaries, and $p_k,q_k$ generate single-mode displacements along the real and imaginary axes in the phase space of mode $k$.
A given subset $\mathcal{X}$ of these Hamiltonians generates a corresponding unitary subgroup $G$ of implementable unitaries (inside the infinite-dimensional group of unitary operators on $\fock$), as $G:=\{ e^{-i t_1 H_1}\cdots e^{-i t_a H_a} \ |\ a\in \NN,\ t_j\in \mathbb{R},\ H_j \in \mathcal{X} \}$.
In fact, the same group is obtained if in the above set the finite list $\mathcal{X}$ is replaced by $i \mathfrak{g}$, where $\mathfrak{g}$ denotes the \textit{Lie algebra} generated by $i \mathcal{X}$ \cite{dalessandro_introduction_2021}, i.e. the smallest set of skew-Hermitian operators containing $i \mathcal{X}$ that is closed under linear combinations and under the commutator $[A,B]:=AB-BA$.
We restrict our focus to four physically relevant granularities of $\mathcal{X}$: \textit{passive linear quantum optics} (PLO) only allows beam splitters and phase shifters, \textit{displaced passive linear quantum optics} (DPLO) extends PLO with displacements, \textit{active linear quantum optics} (ALO) extends PLO with squeezing, and \textit{Gaussian quantum optics} (GO) extends PLO with both displacements and squeezing.

\setlength{\tabcolsep}{4pt}
\begin{table*}[]
\begin{tabular}{@{}lll@{}}
\toprule
Unitary group $G$ & Basis $\mathcal{B}_{\mathfrak{g}}$ of Lie algebra $\mathfrak{g}$                           & $\dim(G)=\dim(\mathfrak{g})$ \\ \midrule
PLO \hspace{4pt} (Passive linear quantum optics)               & $i\,\{ e_{kl},\ E_{kl},\ N_{k} \}$                                                         & $m^2$                        \\
DPLO (Displaced passive linear quantum optics)             & $i\,\{ e_{kl},\ E_{kl},\ N_{k},\ q_k,\ p_k,\ \id \}$                 & $m^2 + 2m + 1$                   \\
ALO \hspace{3.5pt} (Active linear quantum optics)              & $i\,\{ e_{kl},\ E_{kl},\ N_{k},\ r_{kl},\ R_{kl},\ s_{k},\ S_{k},\ \id \}$                 & $2m^2 + m + 1$                   \\
GO \hspace{8.75pt} (Gaussian quantum optics)               & $i\,\{ e_{kl},\ E_{kl},\ N_{k},\ q_k,\ p_k,\ r_{kl},\ R_{kl},\ s_{k},\ S_{k},\ \id\}$ & $2m^2 + 3m + 1$              \\ \bottomrule
\end{tabular}
\caption{Common unitary groups of $m$-mode quantum optics, and bases of their associated Lie algebras.}
\label{tab:Lie-algebra-bases}
\end{table*}

We summarize these groups in \cref{tab:Lie-algebra-bases}, by identifying  bases $\mathcal{B}_{\mathfrak{g}}$ of their Lie algebras. On each line of this table, the claim that $\mathcal{B}_{\mathfrak{g}}$ is a Lie algebra basis amounts to the claim that the commutator of each two basis elements can be written as a real linear combination of all basis elements, which can be elementarily verified from \cref{eq:CCR}. Note that any one of these sets $\mathcal{B}_{\mathfrak{g}}$ would cease to be a Lie algebra basis if a single element was removed, which can clarify why two kinds of beam splitter ($e_{kl},E_{kl}$) or squeezing generators must be included.

\subsection{Orbits of quantum optics}\label{sec:orbs-of-q-optics}

Orbits of states $\ket{\psi}$ or $\rho$ under a unitary group $G$ are defined as $\orb_G(\ket{\psi}) := \{ U\ket{\psi} \ | \ U \in G \}$ and $\orb_G(\rho) := \{ U\rho U^\dagger \ | \ U \in G \}$.
Due to the group structure of $G$, two orbits are either equal or disjoint, and therefore orbits partition the state space into equivalence classes.

The main structural facts about orbits that we will rely on are stated in the following \cref{thm:orbit-dim-maintext}, which we prove in \cref{sec:SM-EMRep-and-proof-of-orbit-structure}.
\begin{theorem}[Orbit structure and dimensions]%
\label{thm:orbit-dim-maintext}
Let $G$ and $\mathcal{B}_\mathfrak{g}=i\{H_1,\dots,H_d\}$ be one of the $m$-mode quantum optical unitary groups and associated Lie algebra bases considered in \cref{tab:Lie-algebra-bases}. 
For any physical state $\ket{\psi}$ and physical density operator $\rho$, the orbits $\orb_{G}(\ket{\psi})$ and $\orb_{G}(\rho)$ admit a natural smooth manifold structure, whose (manifold) dimensions may be calculated as
\begin{align}
\dim(\orb_{G}(\ket{\psi})) &= \rank_{\mathbb{R}}(\{ H_1 \ket{\psi},\dots, H_d \ket{\psi}\})\,,\label{eq:concrete-rank-formula-ket}\\
\dim(\orb_{G}(\rho)) &= \rank_{\mathbb{R}}(\{ [H_1,\rho],\dots,[H_d,\rho]\})\,.\label{eq:concrete-rank-formula-density}
\end{align}
\end{theorem}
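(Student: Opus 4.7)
The plan is to give a detailed proof in the DV setting using classical Lie theory of smooth group actions, and then indicate how the same argument transports to the CV setting once Gaussian unitaries on Schwartz states are handled as strongly continuous representations with analytic vectors.

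In the DV setting, $G=G_{\rm PLO}=\uni(m)$ is a compact Lie group acting smoothly on the finite-dimensional Hilbert space $\mathcal{H}_m^{\leq N}$. First, I would invoke the standard orbit--stabilizer theorem for smooth Lie group actions \cite{lee_introduction_2012}: the orbit map $\alpha_\psi\colon G\to\mathcal{H}_m^{\leq N}$, $U\mapsto U\ket{\psi}$, factors through the quotient $G/\mathrm{Stab}_G(\ket{\psi})$ as a smooth embedding; since $G$ is compact this embedding is automatically proper, so $\orb_G(\ket{\psi})$ inherits the structure of a finite-dimensional smoothly embedded submanifold of $\mathcal{H}_m^{\leq N}$. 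The same reasoning applied to the adjoint action $U\mapsto U\rho U^\dagger$ yields the analogous statement for density operators.

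Next, I would compute the tangent space at $\ket{\psi}$ as the image of the differential $d_e\alpha_\psi\colon\mathfrak{g}\to T_{\ket\psi}\mathcal{H}_m^{\leq N}\cong\mathcal{H}_m^{\leq N}$. Differentiating the one-parameter subgroup $t\mapsto e^{-itH_j}\ket{\psi}$ at $t=0$ gives $d_e\alpha_\psi(-iH_j)=-iH_j\ket{\psi}$, so the tangent space equals $\spa_{\mathbb{R}}\{-iH_j\ket{\psi}\}_{j=1}^d$. Its real dimension is $\rank_{\mathbb{R}}\{-iH_j\ket{\psi}\}=\rank_{\mathbb{R}}\{H_j\ket{\psi}\}$, the equality holding because multiplication by $-i$ is an $\mathbb{R}$-linear automorphism of $\fock$ viewed as a real vector space. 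This proves \cref{eq:concrete-rank-formula-ket}. The same calculation with the tangent vectors $\frac{d}{dt}\bigl|_{t=0}e^{-itH_j}\rho e^{itH_j}=-i[H_j,\rho]$, carried out in $\schop$ with the trace norm topology, yields \cref{eq:concrete-rank-formula-density}.

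For the CV setting, I would replay this argument in the Schwartz category. The metaplectic and Weyl--Heisenberg representations realize each Gaussian unitary as a strongly continuous one-parameter group on $\fock$ whose generator is essentially self-adjoint on $\mathcal{S}(\fock)$, and by Nelson's theorem Schwartz vectors are analytic for every $H_j\in i\mathcal{B}_\mathfrak{g}$, so $t\mapsto e^{-itH_j}\ket{\psi}$ admits a norm-convergent power series with derivative $-iH_j\ket{\psi}\in\mathcal{S}(\fock)$; iterating the commutation relations \cref{eq:CCR} shows that $\alpha_\psi$ descends to a smooth injective immersion $\bar\alpha_\psi\colon G/\mathrm{Stab}_G(\ket{\psi})\to\fock$ whose tangent image at $\ket{\psi}$ is again $\spa_{\mathbb{R}}\{-iH_j\ket{\psi}\}$. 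The main obstacle I anticipate is the last step of upgrading this immersion to an embedded submanifold in the norm topology: unlike PLO, the groups DPLO, ALO, and GO are non-compact (squeezing and displacement generators spoil compactness), so properness of $\alpha_\psi$ is not automatic. I expect this to be resolvable by combining analytic-vector estimates on $\mathcal{S}(\fock)$ with closedness arguments for Gaussian stabilizers, but these technicalities are precisely what the authors isolate as their conjecture; once granted, the rank computation above transports verbatim and delivers both \cref{eq:concrete-rank-formula-ket,eq:concrete-rank-formula-density}.
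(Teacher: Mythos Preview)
Your proposal is correct and follows essentially the same route as the paper: invoke the orbit--stabilizer theorem for a compact Lie group acting smoothly on a finite-dimensional space to obtain the embedded-submanifold structure, then read off the orbit dimension as the rank of the differential of the orbit map at the identity via one-parameter subgroups, treating the ket and density-operator pictures as the defining and adjoint representations respectively. Your handling of the CV case also matches the paper's stance---you correctly isolate non-compactness of the squeezing/displacement groups as the obstruction to upgrading the immersion to an embedding, which is precisely what the paper leaves as a conjecture; one minor quibble is that $G_{\rm PLO}$ is not literally $\uni(m)$ but its image under the photonic homomorphism into $\uni(\mathcal{H}_m^{\leq N})$, though this does not affect your argument.
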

Without further precisions, $G$ will from now on denote one of the considered groups of \cref{tab:Lie-algebra-bases}. 

Intuitively, since an orbit is a manifold, its manifold-dimension matches the vector space dimension of the tangent space at one of its states, and the latter may be obtained by counting (\cref{eq:concrete-rank-formula-ket,eq:concrete-rank-formula-density}) how many independent directions this state can travel in state space when evolved through all the available independent directions offered by $G$ (see \cref{fig:orbit-tangent-space-diagram}).
 
\begin{remark}\label{eq:remark-unkown-if-topologies-coincide-or-not-except-PLO}
Except for the case $G=G_{\rm PLO}$, we do not know whether the considered topologies on these orbits (for which they are manifolds) coincide with the subspace topologies induced by the Hilbert space norms or not. Although we would conjecture that they do coincide, we stress that all the results of this work hold \textit{regardless} of the answer.  
\end{remark}

The word "natural" in \cref{thm:orbit-dim-maintext} is made precise in \cref{sec:SM-EMRep-and-proof-of-orbit-structure}, but here it can be understood to mean that for these manifold structures, \cref{thm:VQC-orbit-dim-maintextinformal} below does hold.

In \cref{eq:concrete-rank-formula-ket,eq:concrete-rank-formula-density}, $\rank_{\mathbb{R}}(\cdot):=\dim_{\mathbb{R}}(\spa_{\mathbb{R}}(\cdot))$ denotes the rank of a list of vectors treated as elements of a \textit{real} vector space \footnote{That is, it is the minimal subset size $k$ for which all vectors not in the subset can be written as  \textit{real} linear combinations of vectors in the subset. The real rank of a list of vectors is hence greater or equal to its complex rank.}.
The complex Hilbert space $\fock$ with inner-product $\braket{\varphi}{\varphi'}$ and for which $\{\ket{\bm{n}}\ |\ \bm{n} \in \NN^m\}$ is an orthonormal basis, may also be viewed as a real Hilbert space, by defining the inner-product $\ip{\varphi}{\varphi'}_{\mathbb{R}}:=\Re \braket{\varphi}{\varphi'} $, for which $\{\ket{\bm{n}},i\ket{\bm{n}}\ |\ \bm{n} \in \NN^m\}$ is an orthonormal basis.
Likewise, the complex Hilbert space $B_2(\fock)$ of Hilbert-Schmidt operators on $\fock$ with inner-product $\ip{A}{B}:=\Tr[A^\dagger B]$ and orthonormal basis $\{\ketbra{\bm{n}}{\bm{n}'} \ |\ \bm{n},\bm{n}' \in \NN^m\}$ can also be viewed as a real Hilbert space, by defining the inner-product $\ip{A}{B}_{\mathbb{R}}:=\Re \ip{A}{B}$, for which $\{\ketbra{\bm{n}}{\bm{n}'}, i\ketbra{\bm{n}}{\bm{n}'} \ |\ \bm{n},\bm{n}' \in \NN^m\}$ is an orthonormal basis.

Since in a real Hilbert space, the rank of a list of $d$ vectors is equal to the rank of their $d \times d$ Gram matrix \cite[Thm. 7.2.10]{horn_matrix_2013}, %
\cref{eq:concrete-rank-formula-ket,eq:concrete-rank-formula-density} can be rewritten as
\begin{align}
\dim(\orb_{G}(\ket{\psi})) &= \rank( \gram_G(\ket{\psi}) )\,,\label{eq:concrete-rank-gram-formula-ket}\\
\dim(\orb_{G}(\rho)) &= \rank( \gram_G(\rho) )\,,\label{eq:concrete-rank-gram-formula-density}
\end{align}
where the $\dim(G) \times \dim(G)$ real symmetric matrices $\gram_G(\ket{\psi})$ and $\gram_G(\rho)$ are defined as
$[\gram_G(\ket{\psi})]_{I,J}:=\Re \braket{H_{I}\psi}{H_{J}\psi}$
and
$[\gram_G(\rho)]_{I,J}:=\ip{[H_{I},\rho]}{[H_{J},\rho]}$ (the latter is already real).
For pure states, these read after simplifications: 
\begin{align}
[\gram_G(\ket{\psi})]_{I,J} &= \mathbb{E}_{\psi}(\{H_{I},H_{J}\})\,,\label{eq:concrete-gram-formula-ket-expression}\\
[\gram_G(\ketbra{\psi})]_{I,J} &= 2\operatorname{Cov}_{\psi}(H_{I},H_{J})\,,\label{eq:concrete-gram-formula-ketbra-expression}
\end{align}
where we denote $\{A,B\}:=\frac{1}{2}(AB + BA)$, $\mathbb{E}_{\psi}(O):=\braket{\psi}{O|\psi}$ and $\operatorname{Cov}_{\psi}(O,O'):=\mathbb{E}_{\psi}(\{O,O'\})- \mathbb{E}_{\psi}(O) \mathbb{E}_{\psi}(O')$.
For general mixed states, simplifications give: %
\begin{align}
[\gram_G(\rho)]_{I,J} &= 2\Tr[\{H_I,H_{J}\} \rho^2] - 2 \Tr[H_I \rho H_{J} \rho] \,.\label{eq:concrete-gram-formula-rho-expression}
\end{align}
Note that diagonal entries $[\gram_G(\rho)]_{I,I}$ for displacement Hamiltonians are related to the \textit{quantum coherence scale} \cite{Hertz-QuadratureCoherence-2020,griffet_interferometric_2023}.

\section{Properties of orbit dimensions}\label{sec:main-text-props-of-orbit-dimensions}
Orbit dimensions are integers that can range from $0$ to $\dim(G)$ (see \cref{eq:concrete-rank-formula-ket,eq:concrete-rank-formula-density}).

Let us stress that orbits, although finite-dimensional \textit{manifolds}, are (except for $G_{\rm PLO}$ and finite-support states) not contained inside any finite-dimensional \textit{linear subspace} of the Hilbert space. This is best illustrated with the vacuum orbit $\orb_{G_{\rm GO}}(\ket{\bm{0}})$, as it contains in particular all coherent states $\{\ket{\bm{\alpha}}\}_{\bm{\alpha} \in \mathbb{C}^m}$ which are known to form an (overcomplete) basis of $\fock$ \cite{Glauber-CoherentIncoherent-1963,weedbrook_gaussian_2012}, implying that $\spa_{\mathbb{C}}(\orb_{G_{\rm GO}}(\ket{\bm 0}))=\fock$. We refer to \cref{subsec:SM-infinite-dimensional-span-of-orbits} for the proof for other orbits.
\subsection{Invariance property}
 Because an orbit under $G$ has the structure of a manifold (\cref{thm:orbit-dim-maintext}), its dimension is (tautologically) invariant under $G$ evolutions.
Two states that have different orbit dimensions under $G$ must lie in distinct orbits, and hence cannot be converted into one another by a unitary from $G$.
For example, this is found to be the case between a Fock basis state $\ket{n_1,\dots,n_m}$ and a genuine one-mode superposition state $(\sum_{n_1=0}^{N_{1}} \alpha_{n_{1}} \ket{n_{1}})\otimes\ket{n_2,\dots,n_m}$ over $m$ total modes (see \cref{tab:orbit-dimensions}).

The invariance property can be used as well to show the impossibility of realizing a desired unitary action on a logical subspace deterministically by a global $U\in G$. We illustrate this in \cref{sec:SM-CNOT-dual-rail} for the case of the CNOT gate in linear or Gaussian quantum optics in the dual-rail encoding \cite{knill_scheme_2001}: orbit dimensions alone reveal an obstruction to its deterministic implementation.

For pure states, the orbit dimension in the ket picture is a valid invariant in the ketbra picture (since \cref{eq:concrete-rank-formula-ket} is global-phase invariant), and hence still constitutes a genuine property of a ketbra pure state.
For instance (see \cref{tab:orbit-dimensions}), even though over $m=2$ modes, one-mode superposition states with extra vacuum $\sum_{n_1=0}^{N_{1}} \alpha_{n_{1}} \ket{n_{1},0}$ and $N\geq3$ NOON states $(|N,0\rangle+|0,N\rangle)/\sqrt{2}$ have the same $G_{\rm PLO}$ orbit dimensions in the ketbra picture (3), the fact that their $G_{\rm PLO}$ orbit dimensions in the ket picture differ (3 and 4 respectively) implies that they lie in different $G_{\rm PLO}$ orbits in \textit{both} the ket and ketbra pictures.

Interestingly, for $G_{\rm PLO}$ it is shown in \cite{Parellada-LieAlgebraic-2026} that besides its rank, the whole \textit{spectrum} of the covariance matrix of \cref{eq:concrete-gram-formula-ketbra-expression} --- after crucially changing our conventional prefactors of $1/2$ in \cref{eq:def-generator-e,eq:def-generator-E} to $1/\sqrt{2}$ --- is invariant.

Lastly, we stress here that two states having the same orbit dimension is, of course, only a \textit{necessary} condition for their interconvertibility, and not a sufficient one; e.g. Fock basis states of same total photon number $n$ are known to be inequivalent under $G_{\rm PLO}$ \cite{migdal_which_2014} (or see also \cref{subsec:relations-with-stellar-or-symplectic-ranks} later).

\subsection{Genericity on finite-dimensional shells}

The maximal orbit dimension value reached inside a finite-dimensional subspace of physical states is attained for all \textit{generic} states. In particular, for an energy cutoff subspace $\oplus_{0 \leq n \leq N}\,\fockn \subset \fock$, the following holds:
\begin{proposition}\label{prop:generic-orbit-dim}
Let $\ket{\psi}$ be a pure state drawn at random (according to the uniform measure) from the complex unit sphere on the subspace
$\mathcal{H}_{m}^{\leq N}$ of total photon number at most $N$.
Then, with probability one,
\begin{align}\label{eq:generic-orbit-dim-energy-cutoff}
\dim(\orb_{G}(\ket{\psi})) = \dim(G) - \delta_{N=0} m^2 - \delta_{N=1} (m-1)^2\,.
\end{align}
As a consequence, it also holds that the set of states satisfying \cref{eq:generic-orbit-dim-energy-cutoff} is dense in the unit sphere on $\mathcal{H}_{m}^{\leq N}$.
\end{proposition}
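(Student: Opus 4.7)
The approach is to reduce the almost-sure / density claim to a single extremal question about the rank of a polynomial matrix. By the Gram matrix form of \cref{thm:orbit-dim-maintext}, $\dim(\orb_{G}(\ket{\psi})) = \rank(\gram_{G}(\ket{\psi}))$, and every entry $[\gram_{G}(\ket{\psi})]_{I,J} = \mathbb{E}_{\psi}(\{H_I,H_J\})$ is a real quadratic polynomial in the real and imaginary parts of the Fock coefficients $\{\langle\bm{n}|\psi\rangle\}_{|\bm{n}|\leq N}$. Hence $\{\ket{\psi}\in\mathcal{H}_m^{\leq N}:\rank(\gram_{G}(\ket{\psi}))<r\}$ is the common zero set of the $r\times r$ minors, a real algebraic subvariety of $\mathcal{H}_m^{\leq N}$. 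Let $r^{*}:=\max_{\ket{\psi}\in\mathcal{H}_m^{\leq N}}\rank(\gram_{G}(\ket{\psi}))$. At level $r^{*}$ this subvariety is proper, so intersected with the unit sphere it is closed, of Lebesgue measure zero, and with empty interior. This simultaneously establishes the almost-sure value $r^{*}$ and the density of the full-measure complement, modulo identifying $r^{*}$ with $\dim(G)-\delta_{N=0}m^{2}-\delta_{N=1}(m-1)^{2}$.

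\textbf{Identifying $r^*$.} Upper bound: for $N=0$, the whole $G_{\rm PLO}\subset G$ fixes $\ket 0$, so $\dim(\orb_{G}(\ket 0))\leq \dim(G)-m^{2}$; for $N=1$ and any $\ket{\psi}=\alpha\ket 0+\ket v$ with single-photon part $\ket v\neq 0$, the subgroup $\{U\in G_{\rm PLO}:U\ket v=\ket v\}\cong U(m-1)$ (of real dimension $(m-1)^{2}$) stabilizes $\ket\psi$, so $\dim(\orb_{G}(\ket\psi))\leq \dim(G)-(m-1)^{2}$; for $N\geq 2$ only the trivial bound $\dim(G)$ is needed. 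Lower bound: by the polynomial argument of the first paragraph, it suffices to exhibit a single state attaining $r^{*}$. The cases $N\in\{0,1\}$ are handled by direct evaluation of each $H_{I}\ket{\psi_{0}}$ from the definitions \cref{eq:def-generator-e,eq:def-generator-E,eq:def-generator-r,eq:def-generator-R,eq:def-generator-N,eq:def-generator-s,eq:def-generator-S,eq:def-generator-q,eq:def-generator-p} together with the canonical commutation relations \cref{eq:CCR}: on $\ket 0$, the PLO generators vanish while $\{q_{k}\ket 0,p_{k}\ket 0,\id\ket 0,r_{kl}\ket 0,R_{kl}\ket 0,s_{k}\ket 0,S_{k}\ket 0\}$ populate, sector by sector, exactly the claimed $\dim(G)-m^{2}$ independent real directions; on a state $\ket{\psi_{0}}\in\mathcal{H}_m^{\leq 1}$ with all Fock coefficients nonvanishing, a similar sector-by-sector tally recovers $\dim(G)-(m-1)^{2}$.

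\textbf{Main obstacle.} The crux is the $N\geq 2$ case, which requires exhibiting a state $\ket{\psi_{0}}\in\mathcal{H}_m^{\leq N}$ whose stabilizer in $G$ is discrete. For $G=G_{\rm PLO}$, the strategy is to exploit the fact that $U(m)$ acts on $\mathcal{H}_m^{(n)}\cong\operatorname{Sym}^{n}(\mathbb{C}^{m})$ via the $n$th symmetric power of the defining representation; in particular the central element $e^{i\theta}\id_{m}\in U(m)$ acts by the character $e^{in\theta}$ there, so a state with nontrivial support in two sectors of distinct photon number (at least one with $n\geq 2$) has no continuous central stabilizer, and the $U(m-1)$ pinned by the single-photon component is then broken by a ``general position'' choice of the $n\geq 2$ component, since the residual $U(m-1)$ acts on the symmetric square of $\ket v^{\perp}$ with trivial generic isotropy. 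For the larger groups $G\supset G_{\rm PLO}$, the additional generators $\id,q_{k},p_{k},r_{kl},R_{kl},s_{k},S_{k}$ send $\ket{\psi_{0}}$ into Fock sectors of photon number $>N$, providing the remaining $\dim(G)-m^{2}$ real directions $\mathbb{R}$-linearly independent from any PLO direction at $\ket{\psi_{0}}$. Making this rigorous for all $(m,N)$ with $N\geq 2$, most cleanly via an explicit $\ket{\psi_{0}}$ and an induction on $m$, is the main technical work; once done, the semicontinuity argument of the first paragraph concludes.
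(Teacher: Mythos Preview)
Your measure-zero / density argument is essentially the paper's: the Gram-matrix entries are polynomial (the paper says real-analytic) in the sphere coordinates, so the sub-maximal-rank locus is the zero set of a not-identically-zero analytic function and hence has measure zero on the sphere. The paper packages this as a separate Lemma and is slightly more careful about transferring Lebesgue-nullness from chart domains to the round measure on the sphere, but the idea is identical.

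Where you diverge is in identifying $r^{*}$. Your upper bounds via stabilizer dimension (all of $G_{\rm PLO}$ for $N=0$, a $U(m-1)$ inside $G_{\rm PLO}$ for $N=1$) are correct and more explicit than anything in the paper. For the lower bounds you attempt an analytical argument, and you flag the $N\geq 2$ case as the ``main obstacle''. You should know that the paper does \emph{not} resolve this obstacle analytically either: its proof of the value of $d_{G,\rm max}(\mathcal{H}_m^{\leq N})$ for $N\geq 1$ consists of numerically sampling random states on $S_{\mathcal{H}_m^{\leq 2}}$ for $m=1,\dots,8$, observing that the computed orbit dimension equals $\dim(G)$, and then proposing (without proof) a concrete witness family --- the phase-twisted uniform superposition over Fock states with $\leq\min(2,N)$ photons --- conjectured to achieve the right value for all $m$. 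So your incomplete analytical route is strictly more ambitious than the paper's own standard of proof here; the gap you worry about is genuinely left open.

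One small caution on your sketch for $N\geq 2$: the claim that the non-PLO generators send $\ket{\psi_0}$ ``into Fock sectors of photon number $>N$'' is not literally true (they also produce lower-sector components), so the independence of those $\dim(G)-m^2$ directions from each other, not just from the PLO directions, needs its own argument. Likewise the ``trivial generic isotropy of $U(m-1)$ on $\operatorname{Sym}^2(\ket v^\perp)$'' step is not immediate and would need to be checked.
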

The special cases in \cref{eq:generic-orbit-dim-energy-cutoff} stem from the particularly simple structure of the PLO action on $0$ and $1$-photon subspaces. %
We prove this result in \cref{sec:SM-genericity}, by relying on the fact that \cref{eq:concrete-gram-formula-ket-expression,eq:concrete-gram-formula-ketbra-expression,eq:concrete-gram-formula-rho-expression} are \textit{analytic} functions on the sphere, and by constructing explicit state families $\ket{\psi_{m,N}}$ in these subspaces that achieve the claimed maximal orbit dimensions.
States with sub-maximal orbit dimension are in this sense the exception rather than the norm, yet many states of interest in quantum optics are of this kind, owing to their highly structured form, see e.g. \cref{tab:orbit-dimensions}.

\subsection{Robustness from below}

Given a state of a certain orbit dimension $r$, there exists an $\epsilon>0$ and a finite number of energy moments, such that a disturbance of the state that shifts these moments by less than $\epsilon$ guarantees the new orbit dimension to be at least as large as $r$. In other words:

\begin{proposition}\label{prop:lsc-orbit-dim}
On physical states and density operators, the orbit dimension maps $\ket{\psi} \mapsto \dim(\orb_{G}(\ket{\psi}))$ and $\rho \mapsto \dim(\orb_{G}(\rho))$ are \textit{lower semi-continuous} with respect to the Schwartz topologies.
\end{proposition}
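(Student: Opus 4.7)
The plan is to combine the Gram matrix formulation of orbit dimensions from \cref{eq:concrete-rank-gram-formula-ket,eq:concrete-rank-gram-formula-density} with the elementary lower semi-continuity of the rank function on matrices. Concretely, I would first invoke the standard observation that for any $d$, the map $\rank \colon \mathbb{R}^{d\times d} \to \mathbb{N}$ is lower semi-continuous with respect to any matrix norm: the set of matrices of rank at least $r$ is open, since $\rank(A) \geq r$ is equivalent to the nonvanishing of at least one of the finitely many $r \times r$ minors of $A$, which is an open condition in the entries.

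The second and main step is to show that each entry of the Gram matrices $\gram_G(\ket{\psi})$ and $\gram_G(\rho)$ depends continuously on its argument in the Schwartz topology. By \cref{eq:concrete-gram-formula-ket-expression,eq:concrete-gram-formula-ketbra-expression,eq:concrete-gram-formula-rho-expression}, these entries are polynomial moments in the creation and annihilation operators: in the pure state case they take the form $\mathbb{E}_{\psi}(\{H_I,H_J\})$, and in the mixed state case they involve quantities such as $\Tr[\{H_I,H_J\}\rho^2]$ and $\Tr[H_I \rho H_J \rho]$. By the defining seminorm structure of $\mathcal{S}(\fock)$ and $\schop$ (faster-than-polynomial decay of Fock coefficients, respectively of the Wigner function), multiplication by any polynomial in $a_k, a_k^\dagger$ is continuous, so composition with the inner-product (respectively trace) yields joint continuity of each Gram matrix entry as a functional on $\mathcal{S}(\fock)$ (respectively $\mathcal{D}(\fock)$).

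Putting the two pieces together: if $\ket{\psi_n} \to \ket{\psi}$ in the Schwartz topology, then $\gram_G(\ket{\psi_n}) \to \gram_G(\ket{\psi})$ entrywise in $\mathbb{R}^{d\times d}$, and lower semi-continuity of rank gives $\liminf_n \dim(\orb_G(\ket{\psi_n})) \geq \dim(\orb_G(\ket{\psi}))$; the mixed state case proceeds identically using \cref{eq:concrete-gram-formula-rho-expression}.

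The main obstacle lies entirely in the second step, and only in the CV setting. In the DV setting, continuity is immediate since all operators are bounded and the Schwartz topology coincides with the Hilbert space norm topology on the finite-dimensional $\mathcal{H}_m^{\leq N}$. In the CV setting, the required continuity of polynomial-moment functionals is a standard consequence of how the Schwartz topologies are constructed, but one must carefully match the Schwartz seminorms to the polynomial bounds needed for $H_I H_J$ acting on $\ket{\psi}$ (or for $\rho$ by left and right multiplication), and verify that Hermitian conjugation, inner product, and trace are continuous in these topologies. I would defer a precise development of these facts to the Supplementary Material, alongside the rigorous definitions of $\mathcal{S}(\fock)$ and $\schop$ already referenced there.
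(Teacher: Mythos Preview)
Your proposal is correct and follows essentially the same route as the paper: factor the orbit-dimension map as a Schwartz-continuous Gram-matrix assignment composed with the lower semi-continuous matrix rank on $\mathbb{R}^{d\times d}$. The paper's only cosmetic variation is that it establishes Schwartz continuity for the \emph{linear} maps $\ket{\psi}\mapsto H_I\ket{\psi}$ and $\rho\mapsto[H_I,\rho]$ directly (then invokes continuity of the inner product to assemble the Gram matrix), rather than working through the expanded quadratic-in-$\rho$ expressions of \cref{eq:concrete-gram-formula-rho-expression}.
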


We refer to \cref{sec:SM-Schwartz-spaces-and-proof-of-lsc} for a detailed proof.

\subsection{Implications for number of directions explored in state space by bosonic variational quantum circuits}

Consider a variational quantum circuit (VQC)
\begin{equation}\label{eq:VQC-orbit-dim-maintextinformal-definition}
U(\bm{\theta}) := W_{p} e^{-i \theta_p H_{p}} W_{p-1} \cdots W_{1} e^{-i \theta_1 H_{1}} W_{0}\,,
\end{equation}
where each $H_{k}$ is an element of the Lie algebra basis $\mathcal{B}_\mathfrak{g}$ and each $W_k$ is a fixed unitary in $G$.
Given an initial state $\ket{\psi}$ (resp. $\rho$), let $\ket{\psi_{\rm out}(\bm{\theta})} := U(\bm{\theta}) \ket{\psi}$ (resp. $\rho_{\rm out}(\bm{\theta}) := U(\bm{\theta}) \rho U(\bm{\theta})^\dagger$) be the corresponding output state of the VQC.
At parameter values $\bm{\theta}$, denote by $\operatorname{localdim}_{\ket{\psi}}(\bm{\theta})$ (resp. $\operatorname{localdim}_{\rho}(\bm{\theta})$) the number of directions explored in state space by the output state $\ket{\psi_{\rm out}(\bm{\theta}')}$ (resp. $\rho_{\rm out}(\bm{\theta}')$) when infinitesimally varying the $p$ parameters $\bm{\theta}'=(\theta_1',\dots,\theta_p')$ around $\bm{\theta}$.
\begin{theorem}[Orbit dimensions upper-bound local dimensions of bosonic VQCs]\label{thm:VQC-orbit-dim-maintextinformal}
If the initial state $\ket{\psi}$ (resp. $\rho$) is an exponential-decay state, then at all parameter values $\bm{\theta} \in \mathbb{R}^p$:
\begin{align}
\operatorname{localdim}_{\ket{\psi}}(\bm{\theta}) &\leq \dim(\orb_{G}(\ket{\psi}))\,,\\
\text{resp.}\qquad\ 
\operatorname{localdim}_{\rho}(\bm{\theta}) &\leq \dim(\orb_{G}(\rho))\,.
\end{align}
For a pure state in the ketbra picture, this can be equivalently stated in terms of the Quantum Fisher Information Matrix $\operatorname{QFIM}_\psi(\bm{\theta})$ of $\ket{\psi(\bm{\theta})}$, as:
\begin{equation}
\rank( \operatorname{QFIM}_\psi(\bm{\theta}) ) \leq \dim(\orb_{G}(\ketbra{\psi}))\,.
\end{equation}
\end{theorem}
For technical reasons, we required the Fock basis coefficients decay of $\ket{\psi}$ ($\rho$) to be subtly stronger than general physical states (exponential, rather than just faster than polynomial).
Note that the quantity $\operatorname{localdim}_{\rho}(\bm{\theta})$ is called the "number of degrees of freedom" of the output state in \cite{Monbroussou-QuantumAdvantage-2025}.
A more general and formal version of \cref{thm:VQC-orbit-dim-maintextinformal} is proven in \cref{subsec:proof-thm-2}.
The reason why this theorem is not an immediate feature of \cref{thm:orbit-dim-maintext}, is that we must do without the fact that topologies on orbits are compatible with the ambient ones (since we do not prove this fact in this work, c.f. \cref{eq:remark-unkown-if-topologies-coincide-or-not-except-PLO}); the way we are able to circumvent this gap in the proof is by again exploiting the property of \textit{analyticity} of the relevant functions.
An illustrating consequence of this theorem is deferred to \cref{sec:disc-and-outlook} (see \cref{fig:vqc-input-state-rank-scaling-dense}).

\section{Evaluating orbit dimensions}\label{sec:evaluation-orbit-dimensions}

\subsection{Calculations in the Fock representation}

For states with a finite expansion in the Fock basis, their orbit dimensions can be evaluated numerically or with symbolic computation through direct evaluation of either \cref{eq:concrete-rank-formula-ket,eq:concrete-rank-formula-density} or \cref{eq:concrete-rank-gram-formula-ket,eq:concrete-rank-gram-formula-density,eq:concrete-gram-formula-ket-expression,eq:concrete-gram-formula-ketbra-expression,eq:concrete-gram-formula-rho-expression}. %
This scales efficiently for state families whose expansion size grows as $\mathcal{O}(\text{poly}(m))$.
A symbolic implementation is provided in \cite{Bosonic-Orbit-Dimensions-Github-cff}.

For some states, closed form formulas can also be derived analytically.

Consider for example a Fock basis state $\ket{\psi}=\ket{\bm{n}}$ and the group $G=G_{\rm PLO}$. The generators act on $\ket{\psi}$ as:
\begin{align}
e_{kl}\ket{\psi} &=
\begin{aligned}[t]
&\frac{1}{2}\sqrt{(n_k+1)n_l} \ket{\dots,n_k+1,\dots,n_l-1,\dots}\\
+\ &\frac{1}{2}\sqrt{n_k(n_l+1)} \ket{\dots,n_k-1,\dots,n_l+1,\dots}\,,
\end{aligned}\\
E_{kl}\ket{\psi} &=
\begin{aligned}[t]
\frac{i}{2}&\sqrt{(n_k+1)n_l} \ket{\dots,n_k+1,\dots,n_l-1,\dots}\\
-\ \frac{i}{2}&\sqrt{n_k(n_l+1)} \ket{\dots,n_k-1,\dots,n_l+1,\dots}\,,
\end{aligned}\\
N_{k}\ket{\psi} &= n_k \ket{\dots,n_k,\dots,n_l,\dots}\,.
\end{align}

Denote by $u$ the number of \textit{unoccupied} modes in $\ket{n_1,\dots,n_m}$.
Observe that (i) $e_{kl}\ket{\psi}$ and $E_{kl}\ket{\psi}$ are zero if and only if $n_k=n_l=0$, and hence %
there are $\binom{m}{2} - \binom{u}{2}$ nonzero vectors of each type; (ii) different vectors $e_{kl}\ket{\psi}$ and $e_{k'l'}\ket{\psi}$ or $E_{kl}\ket{\psi}$ and $E_{k'l'}\ket{\psi}$ are complex-orthogonal, since they are superpositions of product states whose $j^{\rm th}$ slot states are complex-orthogonal for $j$ a non-common mode between $\{k,l\}$ and $\{k',l'\}$; (iii) all $e_{kl}\ket{\psi}$ and $N_{k'}\ket{\psi}$ or $E_{kl}\ket{\psi}$ and $N_{k'}\ket{\psi}$ are likewise complex-orthogonal (here due to the orthogonality of the $j^{\rm th}$ slots for any $j\in\{k,l,k'\}$); (iv) vectors $E_{kl}\ket{\psi}$ are real-orthogonal to all $e_{kl}\ket{\psi}$ and $N_k\ket{\psi}$ since their coefficients in the Fock basis are purely imaginary for the former and purely real for the latter; (v) all $N_k\ket{\psi}$ are real-collinear, and at least one of them is non-zero (except for the vacuum state $\ket{\psi}=\ket{0}$).
Therefore,  these $m^2$ vectors consist of $2(\binom{m}{2} - \binom{u}{2}) + \delta_{u\neq m}$ non-zero and real-orthogonal vectors, and the rest are zero, which by \cref{eq:concrete-rank-formula-ket} establishes %
that:
\begin{equation}\label{eq:fock-state-ket-orbit-dim-PLO}
\dim(\orb_{G_{\rm PLO}}(\ket{\bm{n}})) = m(m-1) - u(u-1) + \delta_{u \neq m}\,.
\end{equation}

One may reason similarly for the other groups $G$, and in the ketbra picture as well (the only difference in that case is that phase shifter generators $N_k$ do not contribute).  Other classes of states can also be tackled without major complications, this time favoring the Gram matrix approach of \cref{eq:concrete-rank-gram-formula-ket,eq:concrete-rank-gram-formula-density,eq:concrete-gram-formula-ket-expression,eq:concrete-gram-formula-ketbra-expression,eq:concrete-gram-formula-rho-expression} and evaluating its rank by affecting standard linear algebraic manipulations to the matrix.
We collect some examples obtained in this way in \cref{tab:orbit-dimensions}.

\renewcommand{\arraystretch}{1.25} %
\begin{table*}[!t]
\begin{threeparttable}
\begin{tabular}{@{}llll@{}}
\toprule
\multirow[t]{2}{*}{Unitary}         & \multirow[t]{2}{*}{Fock basis state} & \multirow[t]{2}{*}{One-mode superposition state} & \multirow[t]{2}{*}{NOON state ($N\geq3$)}\\
group $G$  & $\ket{n_1,\dots,n_m}$ & $(\sum_{n_1=0}^{N_{1}} \alpha_{n_{1}} \ket{n_{1}})\otimes\ket{n_2,\dots,n_m}$ & $\frac{1}{\sqrt{2}}(\ket{N,0} + \ket{0,N})\otimes\ket{n_3,\dots,n_m}$\\
 \midrule
\multirow[t]{2}{*}{$G_{\rm PLO}$}    & $m(m-1) - u(u-1) + \delta_{\ket{\psi}} \delta_{u \neq m}$                  & $m(m-1) - u(u-1) + 1 + \delta_{\ket{\psi}} \delta_{u \neq m-1}$ & $m(m - 1) - u(u - 1) + 1 + \delta_{\ket{\psi}}$           \\[4pt]
\multirow[t]{2}{*}{$G_{\rm DPLO}$}    & $m(m+1) - u(u-1) + \delta_{\ket{\psi}}$                  & $m(m+1) - u(u-1) + 1 + \delta_{\ket{\psi}}$ & $m(m+1) - u(u-1) + 1 + \delta_{\ket{\psi}}$           \\[4pt]
\multirow[t]{2}{*}{$G_{\rm ALO}$}    & $2m^2 - u(u-1) + \delta_{\ket{\psi}}$                    &  $2m^2 - u(u-1) + 1 + \delta_{\ket{\psi}}$  & $2m^2 - u(u-1) + 1 + \delta_{\ket{\psi}}$        \\[4pt]
\multirow[t]{2}{*}{$G_{\rm GO}$}    & $2m(m+1) - u(u-1) + \delta_{\ket{\psi}}$                    & $2m(m+1) - u(u-1) + 1 + \delta_{\ket{\psi}}$ & $2m(m+1) - u(u-1) + 1 + \delta_{\ket{\psi}}$  \\
\bottomrule
\end{tabular}
\end{threeparttable}
\caption{Orbit dimensions, under common quantum optics groups, of some classes of pure states $\ket{\psi}$. Values are stated for both ket \big($\dim(\orb_{G}(\ket{\psi}))$\big) and ketbra \big($\dim(\orb_{G}(\ketbra{\psi}))$\big) pictures, with $\delta_{\ket{\psi}}=1$ for the ket picture and $0$ for the ketbra picture. For all cases of $\ket{\psi}$ listed, the number $u$ is defined as the number of \textit{unoccupied} modes in the product part of the state that is a Fock basis state, i.e. in $\ket{\psi} = (\cdots)\otimes\ket{n_k,\dots,n_m}$, $u:=|\{i \in \{ k, \dots, m\} \ | \ n_i = 0 \}|$. The one-mode superposition state is assumed to be genuine, i.e. to contain at least two nonzero terms in its superposition.}
\label{tab:orbit-dimensions}
\end{table*}

\subsection{Calculations in phase-space representations}
States in quantum optics are often alternatively described through phase-space representations, which map states $\ket{\psi}$ or $\rho$ to certain functions on the phase space $\mathbb{C}^m\cong\mathbb{R}^{2m}$.
Two examples are the \textit{stellar-function} representation $\ket{\psi} \leftrightarrow f^\star_\psi(\bm{z})$ of pure states \cite{chabaud_stellar_2020,Chabaud-ResourcesBosonic-2023}, and the \textit{Wigner-function} representation $\rho \leftrightarrow W_\rho(\bm{q},\bm{p})$ of mixed states \cite{weedbrook_gaussian_2012}.
As both of these correspondences are linear and one-to-one, it directly holds that \cref{eq:concrete-rank-formula-ket,eq:concrete-rank-formula-density} remain valid in the stellar representation and the Wigner representation respectively, with the elements $H_k \ket{\psi}$ and $[H_k,\rho]$ replaced with the action  $\widehat{H_k}^{(\star)} f^\star_\psi$ or $\widehat{H_k}^{(\mathrm{W})} W_\rho$ of a corresponding differential operator $\widehat{H_k}$ onto the state function.
That is, orbit dimensions can then be calculated via \cref{eq:concrete-rank-formula-ket,eq:concrete-rank-formula-density} as the rank of lists of \textit{functions}:
\begin{align*}
\dim(\orb_{G}(\ket{\psi})) &= \rank_{\mathbb{R}}(\{ \widehat{H_1}^{(\star)}\! f^\star_\psi,\dots, \widehat{H_d}^{(\star)}\! f^\star_\psi\})\,,\\
\dim(\orb_{G}(\rho)) &= \rank_{\mathbb{R}}(\{ \widehat{H_1}^{(\mathrm{W})}\! W_\rho,\dots,\widehat{H_d}^{(\mathrm{W})}\! W_\rho\})\,,
\end{align*}
which can be more convenient in cases of infinite-support states.
Notably, this can be used to show the non-$G_{\rm GO}$-convertibility between pairs of states of infinite stellar ranks \cite{chabaud_stellar_2020,Walschaers-NonGaussianQuantum-2021} (which cannot be assessed with the concept of stellar rank alone), as shown in \cref{subsec:SM-stellar-representation-examples}.
Note that since the above correspondences are in fact Hilbert space isomorphisms (they preserve inner-products), the values of Gram matrix entries are independent of the (Fock or phase-space) representation in which they are calculated.
We refer to \cref{sec:SM-phase-space-representations} for details and example calculations.

\subsection{Experimental estimation of orbit dimensions}\label{subsec:experimental-schemes}

\subsubsection{Pure states: homodyne and heterodyne schemes}
For pure states, \cref{eq:concrete-rank-gram-formula-ket,eq:concrete-rank-gram-formula-density,eq:concrete-gram-formula-ket-expression,eq:concrete-gram-formula-ketbra-expression} relate orbit dimensions to the rank of Gram matrices whose entries can each be estimated via expectation values of some Hermitian operators of the form $\{ H_I, H_J \}$ for the ket picture (\cref{eq:concrete-gram-formula-ket-expression}) and $\{ H_I, H_J \}$ or $H_I$ in the ketbra picture (\cref{eq:concrete-gram-formula-ketbra-expression}). 
In both cases, this involves $\mathcal{O}(\dim(\mathfrak{g})^2) = \mathcal{O}(m^4)$ polynomial observables of degree at most $4$ in $a_k,a^\dagger_k$ or equivalently in $q_k,p_k$ (via \cref{eq:def-generator-q,eq:def-generator-p}).
Such constant-degree polynomials can be in principle estimated using heterodyne measurement (by post-processing by each polynomial's antinormal form), but perhaps less obviously, also using homodyne measurements:
\begin{proposition}
\label{prop:homodyne-measurements-maintext}
Joint measurements of homodyne observables $\cos(\theta_k)q_k + \sin(\theta_k)p_k$ over $4$ fixed modes, with each phase $\theta_k$ running through any fixed list of $5$ distinct angles modulo $\pi$ (e.g. $(0, \pi/2, \pi/4, -\pi/4, \pi/5)$), suffice to estimate (via appropriate post-processing) expectation values of all degree-$4$ polynomials in these $4$ modes.
\end{proposition}
The details of this procedure and its proof are given in \cref{sec:measuring-pure-gram-matrix-entries}.
The method outlined there, of estimating constant-degree polynomial observables using joint local homodyne measurements involving a finite number of phase settings in each mode, may be of independent interest.
Shadow-based estimation techniques for bosonic systems \cite{Becker-ClassicalShadow-2024,Gandhari-PrecisionBounds-2024,Thomas-SheddingLight-2025} may also be considered to estimate the list of degree-$4$ observables.

\subsubsection{Mixed states: two-copy PNRDs scheme}
For general mixed states, entries of $\gram_G(\rho)$ involve second-order moments of the state (\cref{eq:concrete-gram-formula-rho-expression}). This suggests \cite{brun_measuring_2004} attempting to relate them to expectation values of observables defined on two parallel copies of the state ($\rho\otimes\rho$); however it %
is not clear how the measurement of those observables is to be implemented.
Still, let us theoretically relate orbit dimensions under $G$ to experimental data, assuming access to two copies of the state and photon number-resolving detectors (PNRDs).
By the second-order approximation of small-time evolutions $U_{H}^{(t)}(\rho) := \exp(-i H t) \rho \exp(i H t) = \rho - it [H,\rho] - \frac{t^2}{2}[H,[H,\rho]] + \mathcal{O}(t^3)$ on physical states $\rho$, entries of the Gram matrix can be re-written 
as
\begin{align}\label{eq:gram-matrix-second-order-derivative-expression}
\begin{aligned}[t]
&[\gram_G(\rho)]_{I,J}
=\\
&\frac{1}{2}\left(
\evalat[\Big]{\frac{d^2}{dt^2}}{t=0} \beta_{I,J}^\rho(t)
- \evalat[\Big]{\frac{d^2}{dt^2}}{t=0} \beta_{I,0}^\rho(t)
- \evalat[\Big]{\frac{d^2}{dt^2}}{t=0} \beta_{0,J}^\rho(t)
\right)\,,
\end{aligned}
\end{align}
with $\beta_{I,J}^\rho(t) := \ip{U_{H_I}^{(t)}(\rho)}{U_{H_J}^{(t)}(\rho)}$, $\beta_{I,0}^\rho(t) := \ip{U_{H_I}^{(t)}(\rho)}{\rho}$ and $\beta_{0,J}^\rho(t) := \ip{\rho}{U_{H_J}^{(t)}(\rho)}$.
We refer to \cref{sec:SM-swap-test-second-derivatives} for a more detailed derivation of \cref{eq:gram-matrix-second-order-derivative-expression}.
By evolving two parallel copies of $\rho$ for time $t$ under $H_I$ and $H_J$ respectively, the quantum optical \textit{SWAP test} \cite{volkoff_ancilla-free_2022,griffet_interferometric_2023} (using a cascade of $m$ beam 
splitters, and PNRDs) then provides experimental access to the quantity $\beta_{I,J}^\rho(t)$ (in expectation value), as depicted in \cref{fig:evolved-swap-test}.
Having obtained estimates of $\beta_{I,J}^\rho(t)$ for several small times $t \approx 0$, one can deduce an estimate of its second-order derivative at $t=0$ (although this estimator will not be unbiased in general). Estimating such derivatives for $\beta_{I,0}^\rho(t)$ and $\beta_{0,J}^\rho(t)$ can be done analogously (by evolving only one of the two copies before the SWAP test). Combining these three estimates through \cref{eq:gram-matrix-second-order-derivative-expression} thus yields an estimate of the entry $[\gram_G(\rho)]_{I,J}$.

\begin{figure}[h]
\centering
\begin{tikzpicture}[scale=0.78, transform shape, x=1cm,y=1cm,
  every node/.style={font=\footnotesize}]
  \foreach \y in {1.45,0.85,0.25,-0.95,-1.55,-2.15}{
    \draw[line width=0.75pt,black] (-3.28,\y) -- (2.55,\y);
  }

  \node[font=\large, text=black!88] at (-3.95,0.85) {$\rho$};
  \draw[decorate, decoration={brace,mirror,amplitude=4.5pt},
        draw=black!88, line width=0.42pt]
        (-3.48,1.53) -- (-3.48,0.17);
  \node[font=\large, text=black!88] at (-3.95,-1.55) {$\rho$};
  \draw[decorate, decoration={brace,mirror,amplitude=4.5pt},
        draw=black!88, line width=0.42pt]
        (-3.48,-0.87) -- (-3.48,-2.23);

  \node[draw=myblue!80!black, fill=lightblue,
        minimum width=1.60cm, minimum height=0.90cm,
        align=center, font=\large] at (-1.9,1.15) {$e^{-itH_I}$};
  \node[draw=myblue!80!black, fill=lightblue,
        minimum width=1.60cm, minimum height=0.90cm,
        align=center, font=\large] at (-1.9,-1.85) {$e^{-itH_J}$};

  \draw[line width=0.65pt,myblue!75!black] (-0.55,1.60) -- (-0.55,0.70);
  \node[font=\scriptsize, text=myblue!80!black] at (-0.55,0.54) {$\rho_I(t)$};
  \draw[line width=0.65pt,myblue!75!black] (-0.55,-1.40) -- (-0.55,-2.30);
  \node[font=\scriptsize, text=myblue!80!black] at (-0.55,-2.48) {$\rho_J(t)$};

  \foreach \x/\ytop/\ybottom in {
    0.72/1.45/-0.95,
    0.95/0.85/-1.55,
    1.18/0.25/-2.15}{
    \draw[line width=0.9pt,swapgreen] (\x,\ytop) -- (\x,\ybottom);
    \begin{scope}[shift={(\x,\ytop)}]
      \draw[line width=0.9pt,swapgreen] (-0.085,-0.085) -- (0.085,0.085);
      \draw[line width=0.9pt,swapgreen] (-0.085,0.085) -- (0.085,-0.085);
    \end{scope}
    \begin{scope}[shift={(\x,\ybottom)}]
      \draw[line width=0.9pt,swapgreen] (-0.085,-0.085) -- (0.085,0.085);
      \draw[line width=0.9pt,swapgreen] (-0.085,0.085) -- (0.085,-0.085);
    \end{scope}
  }

  \foreach \y/\n in {1.45/1,0.85/2,0.25/3}{
    \begin{scope}[shift={(2.85,\y)}]
      \path[draw=swapgreen, fill=lightgreen, line width=0.45pt]
        (-0.29,-0.14) -- (-0.29,0.14) -- (0.15,0.14)
        arc[start angle=90,end angle=-90,radius=0.14]
        -- cycle;
    \end{scope}
    \node[right=2pt,font=\scriptsize,text=swapgreen] at (3.16,\y) {$n_{\n}$};
  }
  \node[font=\scriptsize, text=swapgreen] at (2.85,-0.10) {PNRDs};

  \foreach \y in {-0.95,-1.55,-2.15}{
    \draw[line width=0.65pt,mygray!45,dashed] (2.65,\y) -- (3.45,\y);
    \draw[line width=0.65pt,mygray!55] (3.39,\y-0.10) -- (3.59,\y+0.10);
    \draw[line width=0.65pt,mygray!55] (3.39,\y+0.10) -- (3.59,\y-0.10);
  }

  \draw[draw=swapgreen, line width=0.55pt,dashed, rounded corners=7pt]
    (0.05,1.72) rectangle (6.75,-2.60);
  \node[font=\scriptsize\bfseries, text=swapgreen] at (3.35,-2.88)
    {bosonic SWAP test};

  \node[font=\normalsize] at (4.05,0.85) {$\longrightarrow$};
  \node[right, align=center] at (4.20,0.43)
    {{\Large $\underbrace{(-1)^{\sum_k \textcolor{swapgreen}{n_k}}}$}\\[0.15em]
     {\scriptsize unbiased estimator}\\[-0.15em]
     {\scriptsize of $\beta_{I,J}^\rho(t)$}};
\end{tikzpicture}
\caption{Illustration (for an $m=3$ mode state $\rho$, and for $H_I$, $H_J$ acting on modes $\{1,2\}$ and $\{2,3\}$) of the proposed scheme for the estimation of $\beta_{I,J}^\rho(t)$ using PNRDs, by performing the bosonic SWAP test \cite{volkoff_ancilla-free_2022,griffet_interferometric_2023} on two copies of the state $\rho$ that have undergone different evolutions (under $H_I$ and $H_J$, which can be one-mode or two-mode, c.f. \cref{tab:Lie-algebra-bases}) for the same time $t$. This bosonic SWAP test consists of a cascade (dark green vertical lines) of non phase-shifting 50:50 beam splitters $e^{+ i ({\pi/2}) E_{k\, (m+k)}}$ (c.f. \cref{eq:def-generator-E}) between the $k^{\rm th}$ mode of the first and the $k^{\rm th}$ mode of the second system, followed by photon counters (PNRDs) measuring only one of the two copies, and returning (as a post-processing) the parity of the total number of photons observed.}
\label{fig:evolved-swap-test}
\end{figure}

\subsubsection{Sensitivity to noise}
In practice, due to experimental noise, the preparation of a target state $\rho$ (pure or mixed) produces a nearby mixed state $\rho_{\mathrm{exp}}$. \Cref{prop:lsc-orbit-dim} at least guarantees that under a sufficiently low level of "physical noise" (i.e. affecting energy moments in a controlled way), $\rank(\gram_G(\rho_{\mathrm{exp}})) \geq \dim(\orb_{G}(\rho))$; but in view of \cref{prop:generic-orbit-dim} and moreover of shot noise, equality may seem hopeless.
However, after having constructed the matrix $\gram_{G,\mathrm{exp}}(\rho_{\mathrm{exp}})$ from experimental shots, evaluating its rank is done using a numerical routine that counts the number of eigenvalues greater than a tolerance threshold $\tau$.
Consequently, if $\gram_{G,\mathrm{exp}}(\rho_{\mathrm{exp}})$ and $\gram_G(\rho)$ are $\epsilon$-close in each entry (this $\epsilon$ accounts for both preparation and shot noise) with $\epsilon< \lambda_{\mathrm{min}}^+/(2\dim(\mathfrak{g}))$, then the numerical rank evaluation on the experimentally built matrix is at least guaranteed to yield $\dim(\orb_{G}(\rho))$ when $\tau \in (\dim(\mathfrak{g}) \epsilon,\ \lambda_{\mathrm{min}}^+ - \dim(\mathfrak{g}) \epsilon)$ (see \cref{sec:SM-numerical-rank-noisy-Gram-matrix} for details); where $\lambda_{\mathrm{min}}^+$, which we refer to as a "spectral gap" parameter, denotes the least positive eigenvalue of $\gram_G(\rho)$.

\section{Relation between orbit dimensions and existing resource theories of quantum optics}\label{sec:rel-RTs}

\subsection{Resource theories of quantum optics}\label{subsec:RTs-of-qo}
We begin by briefly recalling three existing \textit{resource theories} (RTs) of quantum optics that have received considerable attention in the literature. Resource theories \cite{Chitambar-QuantumResource-2019} are useful frameworks to capture situations in which subclasses of quantum states and operations, deemed as \textit{free states} and \textit{free operations}, are singled-out for being easy to obtain experimentally, while at the same time being insufficient to perform useful quantum information protocols or avoid classical simulation. A \textit{monotone} in a RT is then a state property $f(\rho) \in \mathbb{R}$ such that for all free operations $\mathcal{E}$ (and for all states $\rho$), $f(\mathcal{E}(\rho)) \leq f(\rho)$.

First, let us denote by $\mathcal{G}$ the set of \textit{Gaussian states}, which consists of all states $\rho$ whose Wigner function $W_\rho(\cdot)$ is a Gaussian function in phase-space, and by $\mathcal{G}^{\rm pure}$ the subset of those that are pure.
In the RT of \textit{non-Gaussianity}, free states are defined as the "continuous" convex hull of $\mathcal{G}$, which we denote by $\operatorname{conv}(\mathcal{G})$; it is the set of all convex mixtures
\begin{equation}\label{eq:convex-mixture-Gaussian-states}
\rho = \int d \lambda \, p_\lambda \, \rho_\lambda  
\end{equation}
of Gaussian states $\rho_\lambda$ by a (possibly continuous) probability distribution $p_\lambda$ over $\mathcal{G}$ \cite{Takagi-ConvexResource-2018}.
In fact, it holds \cite{Takagi-ConvexResource-2018,Albarelli-ResourceTheory-2018} that those always admit pure-state decompositions, i.e. the Gaussian states $\rho_\lambda$ in \cref{eq:convex-mixture-Gaussian-states} can be assumed to be pure, hence
\begin{equation}
\operatorname{conv}(\mathcal{G}) = \operatorname{conv}(\mathcal{G}^{\rm pure})\,. 
\end{equation}
The states outside of $\operatorname{conv}(\mathcal{G})$ are termed \textit{quantum non-Gaussian} states \cite{Albarelli-ResourceTheory-2018}. 

Second, in the RT of \textit{Wigner negativity}, free states are defined as the set $\mathcal{W}_+$ of states $\rho$ whose Wigner function $W_\rho(\cdot)$ is non-negative; as such, $\mathcal{G} \subseteq \mathcal{W}_+$. 

Third, in the RT of \textit{(optical) nonclassicality}, free states are defined as the set $\mathcal{P}_+$ of states $\rho$ whose Glauber-Sudarshan $P$-function \cite{Glauber-CoherentIncoherent-1963,Sudarshan-EquivalenceSemiclassical-1963} is non-negative. We will refer to the states outside of $\mathcal{P}_+$ as \textit{$P$-nonclassical} states. Since the $P$-function of a quantum state $\rho$ is a quasidistribution decomposition of $\rho$ into coherent states, i.e.
\begin{equation}\label{eq:convex-mixture-classical-states}
\rho = \int d^2\bm{\alpha} \, P(\bm{\alpha}) \, \ketbra{\bm{\alpha}} \,, 
\end{equation}
and since coherent states are pure, it follows \cite{Hillery-ClassicalPure-1985} that the pure subset $\mathcal{P}_+^{\rm pure}$ is exactly the coherent states, %
and hence that \cite{Yadin-OperationalResource-2018,Tan-NonclassicalLight-2019,Tan-ResourceTheories-2019}
\begin{equation}
\mathcal{P}_+ = \operatorname{conv}(\mathcal{P}_+^{\rm pure})\,.
\end{equation}

In the realm of \textit{pure} states, these notions simplify considerably: Wigner-nonnegative pure states coincide with Gaussian pure states \cite{Hudson-WhenWigner-1974,Soto-WhenWigner-1983}, and as just said $P$-classical pure states coincide with coherent states.

\subsection{Minimal orbit dimensions}\label{subsec:minimal-orbdims}

All Gaussian pure states (resp. coherent states) live in the same orbit under $G_{\rm GO}$ (resp. $G_{\rm DPLO}$), that of the vacuum $\ketbra{\bm{0}}$ \cite{weedbrook_gaussian_2012,Walschaers-NonGaussianQuantum-2021}, 
and these orbits have dimension
\begin{align}
\dim(\orb_{G_{\rm GO}}(\ketbra{\bm{0}})) &= m(m+3)\,,\label{eq:GO-orbdim-vac}\\
\dim(\orb_{G_{\rm DPLO}}(\ketbra{\bm{0}})) &= 2m\,;\label{eq:DPLO-orbdim-vac}
\end{align}
as per the first column of \cref{tab:orbit-dimensions} (for $u=m$).
Thus, a $G_{\rm GO}$ (resp. $G_{\rm DPLO}$) orbit dimension of a pure state found to be greater than $m(m+3)$ ($2m$) acts as a witness of its non-Gaussianity or Wigner negativity (resp. its nonclassicality).

Interestingly, as covered by our next two results, we show that (i) the above orbit dimensions of \cref{eq:GO-orbdim-vac,eq:DPLO-orbdim-vac} are \textit{minimal} among all states $\rho$ (\cref{thm:minimal-orbdim-statements-main}), and 
(ii) among \textit{pure} states, they are the \textit{unique} orbits of minimal dimension in the multimode setting (\cref{cor:minimal-orbdim-purestates-characterization-main}).

\begin{theorem}[Characterization of minimal orbit dimensions]
\label{thm:minimal-orbdim-statements-main}
Let $\rho$ be a physical state, and consider one of the four $m$-mode quantum optical unitary groups $G$ of \cref{tab:Lie-algebra-bases}.
Then,
\begin{itemize}
\item the following lower bounds hold:
\begin{equation}\label{eq:minimal-centered-orbdim-general-lowerbounds-main}
\dim(\orb_{G}(\rho)) \geq \dim(G)^{\scriptscriptstyle\!-} - K(\rho) \geq \dim(G)^{\scriptscriptstyle\!-} -\, m^2\,,
\end{equation}
where $K(\rho) := k_1^2 + \cdots + k_r^2$ denotes the sum of the squares of the multiplicities in the symplectic spectrum of $\rho$;

\item
and the following equivalence holds:
\begin{equation}\label{eq:minimal-orbdim-purestates-characterization-proof-GO-equiv-main}
\begin{gathered}
\dim(\orb_G(\rho)) = \dim(G)^{\scriptscriptstyle\!-} - m^2\\
\iff\\
\rho \sim_G \sigma,\\
\end{gathered}
\end{equation}
where $\sigma$ is a state that is stabilized by all of $G_{\rm PLO}$, i.e.
$\sigma = \oplus_{n \in \mathbb{N}} \, p_n \id_{\fockn}$ for some $p_n \in \mathbb{R}\,.$
\end{itemize}
\end{theorem}
In the above statements, we denoted $\dim(G)^{\scriptscriptstyle\!-} := \dim(G) - \delta_{G \in (G_{\rm DPLO},G_{\rm ALO},G_{\rm GO})}$ \footnote{This appears as a reflection of the fact that the identity Hamiltonian, which appears for these three groups' Lie algebra bases of \cref{tab:Lie-algebra-bases}, never contributes to the orbit dimension in the density operator picture.}.
We refer to \cref{subsec:SM-minimal-orbit-dimensions} for a proof of \cref{thm:minimal-orbdim-statements-main}, which hinges on a study of the structure of the covariance matrices that remain invariant under the action of Gaussian unitaries.

\subsubsection{Pure states: coincidence with pure free states for $m\geq 2$}

Since in the multimode setting, the only pure state that is of the form of $\sigma$ (in \cref{{eq:minimal-orbdim-purestates-characterization-proof-GO-equiv-main}}) is the vacuum state, we immediately obtain:
\begin{corollary}[Characterization of pure states of minimal orbit dimension]\label{cor:minimal-orbdim-purestates-characterization-main}
In the \emph{multimode} setting $(m\geq 2)$, there is a \emph{unique} \emph{pure} $G$-orbit of minimal dimension ($\dim(G)^{\scriptscriptstyle\!-} - m^2$), 
and it coincides exactly with the \emph{pure free states} (over $m$ modes) of the following resource theories (RTs) of quantum optics:
\begin{itemize}
\item for $G=G_{\rm GO}$: the RT of \emph{non-Gaussianity} or the RT of \emph{Wigner negativity};
\item for $G=G_{\rm DPLO}$: the RT of \emph{optical nonclassicality}.
\end{itemize}
\end{corollary}

\Cref{cor:minimal-orbdim-purestates-characterization-main} grants the property of orbit dimension under $G_{\rm GO}$ (resp. $G_{\rm DPLO}$) the status of a \textit{universal} witness of non-Gaussianity (resp. nonclassicality) for multimode pure states.
Note that we do not investigate whether such characterization can be turned into an \textit{efficient} practical tester of non-Gaussianity for pure states \cite{Girardi-ItGaussian-2025} under appropriate assumptions on energy moments of the state (which would provide lower bounds on the spectral gap parameter $\lambda_{\mathrm{min}}^+$).

\subsubsection{Mixed states: non-equivalence with free states}

One may naturally wonder next if this equivalence remains true for mixed states, i.e. if a mixed state being free in these RTs is equivalent to the minimality of the associated $G$-orbit dimension.
The answer is negative, as per the following \cref{rem:minimal-GO-orbdim-vs-Gaussianity-main}.
Let us recall here that \textit{thermal states} \cite{Adesso-ContinuousVariable-2014,Tan-NonclassicalLight-2019} are the states of the form $\tau_{\bm{\beta}} = \tau_{\beta_1} \otimes \cdots \otimes \tau_{\beta_m}$, where for $\beta>0$, $\tau_\beta$ denotes the single-mode thermal state given by $\tau_\beta := (1-e^{-\beta}) \sum_{n=0}^{\infty} e^{- \beta n} \ketbra{n}$.
We refer to \textit{equal-temperature} thermal states for those where $\beta_1 = \cdots = \beta_m$, and to \textit{non-equal-temperature} thermal states otherwise. Note that thermal states $\tau_{\bm{\beta}}$ are Gaussian, and are equal-temperature if and only if their symplectic eigenvalues are all equal \cite{weedbrook_gaussian_2012}.
\begin{remark}
[For mixed states, minimal $G_{\rm GO}$ ($G_{\rm DPLO}$)-orbit dimension is neither necessary nor sufficient for freeness in RT of non-Gaussianity (nonclassicality)]
\label{rem:minimal-GO-orbdim-vs-Gaussianity-main}
The family of states $\sigma$ in \cref{thm:minimal-orbdim-statements-main} (\cref{eq:minimal-orbdim-purestates-characterization-proof-GO-equiv-main}) contains in particular all equal-temperature thermal states. However, note that this family of states $\sigma$
\begin{enumerate}
\item[(i)] does not contain states $G_{\rm GO}$-equivalent to non-equal-temperature thermal states

(because symplectic eigenvalues are $G_{\rm GO}$-invariants \cite{Adesso-ContinuousVariable-2014});
\item[(ii)] contains states outside of both $\operatorname{conv}(\mathcal{G})$ and $\mathcal{P}_+$, such as $\id_{\fockn}/\dim(\fockn)$ for $n\geq1$

(because the latter are only supported on $\fockn$, which contains no pure Gaussian states for $n\geq1$ 
\footnote{If a state supported on an $n\geq1$ $\fockn$ were in $\operatorname{conv}(\mathcal{G})$ or $\mathcal{P}_+$, its associated function $p_\lambda$ or $P(\bm{\alpha})$ in \cref{eq:convex-mixture-Gaussian-states,eq:convex-mixture-classical-states} would be a probability distribution, which would have to be supported in $\fockn$ as well (c.f. \cref{eq:counterexample-convex-roof-zero-orthogonal-support}), but that is impossible since the only pure Gaussian state in a $\fockn$ is the vacuum, with $n=0$.}).
\end{enumerate}

Thus, in the space of mixed states, the subset of states of minimal $G_{\rm GO}$ (resp. $G_{\rm DPLO}$) orbit dimension has no inclusion relation with $\operatorname{conv}(\mathcal{G})$ (resp. $\mathcal{P}_+$).%
\end{remark}

\subsubsection{Convex-roof based orbit dimensions}

To alleviate this mismatch, we may introduce an alternative notion of "orbit dimension" for mixed states, based on a (continuous) \textit{convex roof} construction on top of the pure state definition. That is, we keep the definition of $\dim(\orb_G(\rho))$ of \cref{sec:orbs-of-q-optics} only for pure states $\rho$, and we extend it on mixed states as
\begin{equation}\label{eq:orbdim-cr-def}
\operatorname{dimOrb}_G^{\rm cr} (\rho):= \min_{\{p_\lambda,\rho_\lambda\}} \max_{p_\lambda} \dim(\orb_G(\rho_\lambda)) \,,
\end{equation}
where the minimum is over all convex decompositions $\rho = \int d \lambda \, p_\lambda \, \rho_\lambda $ of $\rho$ into \textit{pure} states $\rho_\lambda$, and the maximum (technically an essential supremum, to discard measure-zero maxima) is over all pure states $\rho_\lambda$ in the decomposition.
Note that this re-definition for mixed states is analogous to how the symplectic rank $\aleph^{\rm cr}(\rho)$ in \cite{Mele-SymplecticRank-2026} departs from the ordinary symplectic rank $\aleph(\rho)$ \cite{Adesso-ContinuousVariable-2014,Serafini-QuantumContinuous-2023} for mixed states by taking a convex-roof re-definition; see \cref{subsec:relations-with-stellar-or-symplectic-ranks}.

It easily follows from the definition of \cref{eq:orbdim-cr-def} that the set of states $\rho$ that have minimal $\operatorname{dimOrb}_G^{\rm cr}(\rho)$ is exactly the (continuous) convex hull of the set of pure states $\rho$ that have minimal $\dim(\orb_G(\rho))$. Consequently, %
the above \cref{cor:minimal-orbdim-purestates-characterization-main} directly yields the following:
\begin{corollary}
[Equivalence between minimal convex-roof $G_{\rm GO}$ ($G_{\rm DPLO}$)-orbit dimension and freeness in RT of non-Gaussianity (nonclassicality)]\label{cor:minimal-cr-orbdim--characterization-main}
In the \emph{multimode} setting $(m\geq 2)$, the space of (physical) states $\rho$ that achieve the minimal convex-roof orbit dimension ($\operatorname{dimOrb}_G^{\rm cr}(\rho) = \dim(G)^{\scriptscriptstyle\!-} - m^2$) coincides exactly with the \emph{free states} (over $m$ modes) of the following resource theories (RTs) of quantum optics:
\begin{itemize}
\item for $G=G_{\rm GO}$: the RT of \emph{non-Gaussianity};
\item for $G=G_{\rm DPLO}$: the RT of \emph{optical nonclassicality}.
\end{itemize}
\end{corollary}
For the case of $G=G_{\rm GO}$, this is analogous to the situation of stellar rank and (convex-roof) symplectic rank, which are also equivalent to freeness in the RT of non-Gaussianity \cite{Chabaud-HolomorphicRepresentation-2022,Mele-SymplecticRank-2026}.

\subsection{Orbit dimensions are generally not monotones}

Given this equivalence, one may further ask if these two convex-roof orbit dimensions are \textit{monotones} under the free operations of the matching RTs  --- just as the stellar rank \cite{Chabaud-HolomorphicRepresentation-2022} and the (convex-roof) symplectic rank \cite{Mele-SymplecticRank-2026} have been proven to be in the RT of non-Gaussianity.

First, we stress here that when the number of modes $m$ is allowed to change, the answer is clearly negative. This is because the channel $\rho \mapsto \rho \otimes \ketbra{\bm{0}}$ that appends extra modes in the vacuum state always increases $\dim(\orb_{G_{\rm GO}}(\rho))$ (as more modes means more generators --- c.f. \cref{tab:Lie-algebra-bases} --- which in turn offers, for any $\rho$, new evolution directions in the Hilbert space), and this channel is considered free in both RTs of non-Gaussianity and nonclassicality \cite{Takagi-ConvexResource-2018,Yadin-OperationalResource-2018}.

If we now restrict the study to a fixed number of modes $m$ (and only free channels that have this number of input and output modes), we may still ask whether monotony holds. It turns out that the answer remains negative:
\begin{theorem}[Orbit dimensions are not fixed-mode monotones]\label{thm:orbdims-are-not-fixed-mode-monotones}
For $G=G_{\rm GO}$ (resp. $G=G_{\rm DPLO}$), the convex-roof orbit dimensions $\rho \mapsto \operatorname{dimOrb}_{G}^{\rm cr}(\rho)$ (as well as the ordinary orbit dimensions $\rho \mapsto \dim(\orb_{G}(\rho))$) are not fixed-mode monotones under the free operations in the RT of non-Gaussianity (resp. nonclassicality).
\end{theorem}
We prove \cref{thm:orbdims-are-not-fixed-mode-monotones} in \cref{subsec:SM-fixed-mode-nonmonotonicity} by exhibiting a simple example of a two-mode state whose orbit dimensions increase under the action of a two-mode CPTP map that is free in both RTs.

\subsection{Differences and links between orbit dimensions and stellar or symplectic ranks}\label{subsec:relations-with-stellar-or-symplectic-ranks}
As the stellar and symplectic ranks are two other properties of bosonic states that are also integer-valued invariants under $G_{\rm GO}$ unitaries (and that have undergone considerable study in the literature), it is natural to wonder about possible equivalences or links with them. We thus clarify in this section the differences and some partial relations with these concepts.

First, let us recall the definitions of stellar and symplectic ranks.
The \textit{stellar rank} of a pure state $\rho$ can be loosely thought of as the minimal number of photon additions needed to obtain $\rho$ from a Gaussian state; it is then extended to mixed states via a convex-roof construction \cite{chabaud_stellar_2020,Walschaers-NonGaussianQuantum-2021,Chabaud-ClassicalSimulation-2021,Chabaud-HolomorphicRepresentation-2022}.
The \textit{covariance matrix} of $\rho$ is the matrix $V_\rho \in \mathbb{R}^{2m \times 2m}$ defined by
$[V_\rho]_{kl} := \operatorname{Cov}_{\rho}(r_k,r_l)$, where $\bm{r} := (q_1,p_1,\dots,q_m,p_m)^\intercal$. Then, as per Williamson's theorem \cite{Adesso-ContinuousVariable-2014,Nicacio-WilliamsonTheorem-2021}, there exists a unique $m$-tuple $\bm{\sigma}:=(\sigma_1,\sigma_2,\dots,\sigma_m)$ with $\sigma_i \leq \sigma_{i+1}$ for all $i$ (called the \textit{symplectic spectrum} of $\rho$), such that for some $U \in G_{\rm ALO}$, $V_{U \rho U^\dagger}$ takes on the diagonal form $V_{U \rho U^\dagger}=\operatorname{diag}(\sigma_1,\sigma_1,\sigma_2,\sigma_2,\dots,\sigma_m,\sigma_m)$. Each distinct value $\eta$ in $\bm{\sigma}$ is called a \textit{symplectic eigenvalue} of $\rho$, and its \textit{multiplicity} is the number of times that it appears in $\bm{\sigma}$. Finally, one definition of symplectic rank, which we will refer to as the \textit{ordinary symplectic rank} and denote by $\aleph(\rho)$, is the number of symplectic eigenvalues $\eta$ of $V_\rho$ that are strictly greater than $1/2$ \cite{Adesso-ContinuousVariable-2014,Serafini-QuantumContinuous-2023}. A second definition of symplectic rank is the one adopted in the more recent work of \cite{Mele-SymplecticRank-2026}, which we will refer to as the \textit{convex-roof symplectic rank} and denote by $\aleph^{\rm cr}(\rho)$: it is defined by keeping the previous definition only on pure states, and extending to mixed states via a convex-roof construction.

Let us stress that all presented notions of $G$-orbit dimensions are \textit{not equivalent} to neither the notion of stellar rank, nor those of symplectic rank. That is, for all four groups $G$ of \cref{tab:Lie-algebra-bases}, one can find pairs of states (even among pure states) that have equal $G$-orbit dimensions but different stellar rank (resp. symplectic rank), and vice-versa. Such examples are easy to come by (e.g. among the first two columns of \cref{tab:orbit-dimensions}), and we provide explicit example families in \cref{subsec:SM-orbit-dimensions-vs-stellar-symplectic-ranks}.

Nevertheless, inequalities that relate these properties may still hold.
We have already seen above that orbit dimensions are lower-bounded by the square sum of the symplectic multiplicities: i.e. the first inequality in \cref{eq:minimal-centered-orbdim-general-lowerbounds-main}.
As the ordinary symplectic rank is also related to these multiplicities, this can easily provide us an inequality involving this ordinary symplectic rank as well (see \cref{subsec:SM-minimal-orbit-dimensions} for the proof):
\begin{corollary}[Lower bound on orbit dimensions in terms of ordinary symplectic rank]\label{cor:orbdim-lowerbound-symplectic-rank-main}
Let $\rho$ be a physical state, and consider one of the four $m$-mode quantum optical unitary groups $G$ of \cref{tab:Lie-algebra-bases}.
Then,
\begin{equation}\label{eq:orbdim-lowerbound-symplectic-rank-claim-main}
\dim(\orb_{G}(\rho)) \geq \dim(G)^{\scriptscriptstyle\!-} - m^2 + 2 \, \aleph(\rho) \, (m - \aleph(\rho))\,.
\end{equation}
\end{corollary}

We do not investigate general inequalities relating orbit dimensions to the stellar rank, or to $\aleph^{\rm cr}(\rho)$.

\section{Discussion and Outlook}\label{sec:disc-and-outlook}

It is quite remarkable that orbit dimensions of Fock basis states $\ket{\bm{n}}$
(see first column of \cref{tab:orbit-dimensions}) do not depend on the particular values of the occupation numbers $n_i$, but only on whether they are zero or non-zero.
This shows that while bosons can (unlike fermions) bunch into common modes, doing so does not increase their number of explorable directions in Hilbert space.
An illustrative example is given by passive linear optics with $n=m$ photons in $m$ modes, where there is a quadratic difference in the scaling of $\dim(\orb_{G_{\rm PLO}}(\ketbra{\bm{n}}))$ between the separate state $\bm{n}=(1,1,\dots,1)$ ($m(m-1) = \mathcal{O}(m^2)$) and the bunched state $\bm{n}=(m,0,\dots,0)$ ($2m-2 = \mathcal{O}(m)$). This shows (via \cref{thm:VQC-orbit-dim-maintextinformal}) a quadratic difference in dimensional expressivity \cite{Funcke-DimensionalExpressivity-2021a,Haug-CapacityQuantum-2021} between variational quantum circuits (VQCs) with these two input states, which we illustrate in \cref{fig:vqc-input-state-rank-scaling-dense}. Besides, one can notice that this higher orbit dimension state $\bm{n}=(1,1,\dots,1)$ is precisely the one that would provide computational hardness when used as the input state of Boson Sampling, while computational hardness is lost if the bunched state is used instead \cite{Aaronson-ComputationalComplexity-2013}.
Whether orbit dimensions can be used to better understand which regimes of Boson Sampling have computational advantages is an interesting direction for future work.

\begin{figure}[t]
\centering
\begin{tikzpicture}[x=1.10cm,y=1cm,
  densewire/.style={draw=mygray!72, line width=0.38pt, line cap=round},
  densephase/.style={rectangle, draw=mygray!92!black, fill=mygray!64,
    line width=0.28pt, minimum width=0.10cm, minimum height=0.045cm,
    inner sep=0pt}]

  \newcommand{\denseBS}[2]{%
    \begin{scope}[shift={(#1,#2)}]
      \fill[white] (-0.135,0.03) rectangle (0.135,-0.19);
      \draw[densewire] (-0.14,0)
        .. controls (-0.09,0) and (-0.08,-0.071) .. (-0.04,-0.071)
        -- (0.04,-0.071)
        .. controls (0.08,-0.071) and (0.09,0) .. (0.14,0);
      \draw[densewire] (-0.14,-0.16)
        .. controls (-0.09,-0.16) and (-0.08,-0.089) .. (-0.04,-0.089)
        -- (0.04,-0.089)
        .. controls (0.08,-0.089) and (0.09,-0.16) .. (0.14,-0.16);
    \end{scope}%
  }
  \newcommand{\densePS}[2]{%
    \node[densephase] at (#1,#2) {};%
  }
  \newcommand{\denseEvenLayer}[2]{%
    \foreach \i in {0,2,...,16}{
      \pgfmathsetmacro{\yy}{1.36-0.16*\i}
      \denseBS{#1}{\yy}
      \densePS{#2}{\yy}
    }%
  }
  \newcommand{\denseOddLayer}[2]{%
    \foreach \i in {1,3,...,15}{
      \pgfmathsetmacro{\yy}{1.36-0.16*\i}
      \denseBS{#1}{\yy}
    }%
    \foreach \i in {1,3,...,17}{
      \pgfmathsetmacro{\yy}{1.36-0.16*\i}
      \densePS{#2}{\yy}
    }%
  }

  \foreach \i in {0,...,17}{
    \pgfmathsetmacro{\y}{1.36-0.16*\i}
    \draw[densewire] (-3.28,\y) -- (-1.06,\y);
    \draw[densewire] (-0.17,\y) -- (1.83,\y);
    \node[font=\tiny, text=mygray!76!black] at (-0.615,\y)
      {\(\cdot\!\!\cdot\!\!\cdot\)};
  }

  \node[rotate=90, font=\large, text=myred!78!black] at (-4.32,0)
    {\(\ket{\hbox to 2.12cm{\(\displaystyle m,0,\ldots\)\hfil
      \(\displaystyle 0\)}}\)};
  \node[rotate=90, font=\large, text=myblue!72!black] at (-3.83,0)
    {\(\ket{\hbox to 2.12cm{\(\displaystyle 1,\ldots\)\hfil
      \(\displaystyle 1\)}}\)};
  \node[rotate=-90, font=\normalsize, text=myred!78!black] at (2.87,0)
    {\makebox[2.42cm][l]{%
      \(\operatorname{localdim}\leq\mathcal{O}(m)\)}};
  \node[rotate=-90, font=\normalsize, text=myblue!72!black] at (2.38,0)
    {\makebox[2.42cm][l]{%
      \(\operatorname{localdim}\leq\mathcal{O}(m^2)\)}};

  \node[anchor=south, font=\footnotesize, text=mygray!90!black]
    at (-4.075,1.50) {Input};
  \node[anchor=south, font=\footnotesize, text=mygray!90!black]
    at (2.625,1.50) {Output};

  \denseEvenLayer{-2.94}{-2.70}
  \denseOddLayer{-2.43}{-2.19}
  \denseEvenLayer{-1.92}{-1.68}
  \denseOddLayer{-1.41}{-1.17}

  \denseOddLayer{-0.03}{0.21}
  \denseEvenLayer{0.48}{0.72}
  \denseOddLayer{0.99}{1.23}
  \denseEvenLayer{1.50}{1.74}
\end{tikzpicture}
\caption{Illustration of the quadratic difference in dimensional expressivity (which is equivalently the local dimension or the QFIM rank of the parametrized output state $\ket{\psi_{\rm out}(\bm{\theta})}$, see \cref{thm:VQC-orbit-dim-maintextinformal}), due to differences in orbit dimensions (c.f. \cref{eq:fock-state-ket-orbit-dim-PLO}), that a variational quantum circuit made of parametrized beam splitters (bent optical paths) and phase shifters (grey boxes) can reach over $m$ modes (irrespective of the circuit's architecture or parameter count), when using a \textit{bunched} (red) versus \textit{separate} (blue) input Fock state of $n=m$ photons.}
\label{fig:vqc-input-state-rank-scaling-dense}
\end{figure}

Taking superpositions over a constant number of modes, perhaps surprisingly, only appears to give a limited increase to orbit dimensions.
For instance, orbit dimensions for the NOON state family do not show a dependency in $N$ for $N\geq3$ (see last column of \cref{tab:orbit-dimensions}). There is a slight increase throughout the $N=1,2,3$ range however, since for $N=1,2$ the states lie in the orbits of the corresponding Fock basis states with $\ket{1,0}$ and $\ket{1,1}$ in the first two modes (for $N=2$ this is the Hong-Ou-Mandel effect \cite{Hong-MeasurementSubpicosecond-1987}), which gives an increase in orbit dimensions of $2u$ from $N=1$ to $N=2$ and a further increase of $1$ from $N=2$ to $N=3$ (see \cref{tab:orbit-dimensions}); but this dependency is not too surprising, considering the special structures of single-photon transformations \cite{Moyano-Fernandez-LinearOptics-2017} and of $N=2$ photon interference \cite{Tichy-InterferenceIdentical-2014, DeGliniasty-SimpleRules-2024}.
Likewise, taking finite superpositions of $N$ Fock basis states that differ just in one mode, only increases the orbit dimensions by $1$ (compared to the single Fock basis states), independently of $N\geq 2$ (see middle column of \cref{tab:orbit-dimensions}).

Orbit dimensions in the single-mode setting have a somewhat degenerate structure: 
the vacuum as well as higher Fock basis states all have the same orbit dimension in the ketbra picture (e.g. $4$ for $G_{\rm GO}$), though these orbits are disjoint (due to having different stellar ranks \cite{chabaud_stellar_2020}). %
In the multimode setting however, the vacuum has different orbit dimensions than all higher Fock basis states, and $\ket{\bm{n}}$,$\ket{\bm{n'}}$ with different numbers of unoccupied modes $u,u'$ have different orbit dimensions, except in the case $u, u' \in \{0,1\}$.

Since upper-bounds of \cref{thm:VQC-orbit-dim-maintextinformal} provide an input-state dependent limitation on the dimensional expressivity \cite{Funcke-DimensionalExpressivity-2021a,Haug-CapacityQuantum-2021} of bosonic VQCs (via \cref{thm:VQC-orbit-dim-maintextinformal}), such as those used for quantum machine learning (QML) applications, calculating the orbit dimension of a circuit's input state therefore provides the minimal number of parameters $p$ that the circuit should contain to fall in the so-called \textit{overparametrized} regime of QML, which is linked to loss landscapes with fewer local minima \cite{Haug-CapacityQuantum-2021,Larocca-TheoryOverparametrization-2023,Anschuetz-QuantumVariational-2022}.
Stated as bounds on the rank of the quantum Fisher information matrix (QFIM) of a parametrized quantum state, the upper-bounds of \cref{thm:VQC-orbit-dim-maintextinformal} are also informative in the context of bosonic quantum metrology, where this QFIM rank indicates locally the number of parameter combinations that can be estimated simultaneously \cite{Liu-QuantumFisher-2019,Mihailescu-MetrologicalSymmetries-2025}. As an example, in distributed quantum sensing, the notion of perfect privacy (the parties can only learn a global function of the parameters but not the parameters themselves) implies that the QFIM of the shared state must have rank $1$ \cite{Shettell-PrivateNetwork-2022,Bugalho-PrivateRobust-2025,Hassani-PrivacyNetworks-2025}; hence studying which global states have an orbit dimension (under the relevant group $G$ of local unitaries) of $1$ provides a way to infer which states can achieve such privacy.

The schemes outlined in \cref{subsec:experimental-schemes} to measure experimentally the dimension of a state's orbit, despite their sensitivity to noise, provide an enticing prospect of directly probing a global geometrical feature of the accessible state manifold. Further studying which physical assumptions on state classes ensure a spectral gap scaling that makes these schemes efficient for estimating these orbit dimensions, or merely bounding them, is an interesting further direction of research.

Under the lens of the resource theories (RTs) of quantum optics that are non-Gaussianity (for $G_{\rm GO}$) and nonclassicality (for $G_{\rm DPLO}$), we found that for multimode pure states, free states are characterized as the unique orbit of minimal dimension (\cref{cor:minimal-orbdim-purestates-characterization-main}). This is, to the best of our knowledge, a novel insight in these RTs, and may be helpful in furthering the general understanding of how these RTs structure the pure state space.
While convex-roof orbit dimensions then propagate this insight over to mixed states (\cref{cor:minimal-cr-orbdim--characterization-main}), the fact that they fail to be even fixed-mode monotones (\cref{thm:orbdims-are-not-fixed-mode-monotones}) may appear as a limitation of the framework of orbit dimensions as a whole. However, that orbit dimensions can increase under these "free" fixed-mode operations is not unnatural, as in the realm of mixed states, some "noise" operations may be regarded as free (such as the one in the proof of \cref{thm:orbdims-are-not-fixed-mode-monotones}), but it is intuitive and well-established that noise can generally increase the (dimensional) expressivity \cite{Garcia-Martin-EffectsNoise-2024,Wu-RandomnessEnhancedExpressivity-2024,Monbroussou-QuantumAdvantage-2025}. 
To sum it up, when working over a fixed number of modes, orbit dimensions provide both a quantifier of a state's expressivity (under a quantum optical unitary group), and a way to quantify its amount of quantum resource (non-Gaussianity or nonclassicality), although for the latter case they fail to faithfully capture even the fixed-mode order relations in these RTs.

\section{Conclusion}\label{sec:CCL}
In this work, we have put forward the notion of orbit dimensions as a central concept in the study of bosonic systems undergoing quadratic Hamiltonian evolutions, such as linear or Gaussian quantum optics.
Despite its simplicity, we have demonstrated that this framework
provides insights into the structure of reachable states starting from a given state, and notably of the set of non-Gaussian and $P$-nonclassical states. Having established theoretically its unified validity in all usual discrete or continuous-variable treatments, provided example calculations, and discussed amenability to experiments, we have cast orbit dimensions as a general investigation toolkit for the study of dimensional expressivity of bosonic variational quantum circuits, and of state conversion.
The validity of the presented framework hinges on the fact that it is limited to dealing with Hamiltonian evolution occuring on a fixed number of modes and excluding any measurement. The no-go results that the framework can yield were only obtained for deterministic and exact state conversion. Exploring possible extensions to more general settings is an interesting direction of work, but goes beyond the scope of this paper.

\paragraph{Acknowledgments} EZM is supported by the grant ANR-22-PNCQ-0002. EZM thanks Léo Monbroussou for many discussions on earlier projects which motivated this work, Elham Kashefi and Mario Sigalotti for insightful feedback, and Frédéric Grosshans, Ulysse Chabaud, Mattia Walschaers, Pérola Milman, Gerard Milburn, Bettina Kazandjian, Verena Yacoub and Hugo Thomas for fruitful discussions.

\let\oldaddcontentsline\addcontentsline
\renewcommand{\addcontentsline}[3]{}

\bibliography{refs--used-only.bib}

\let\addcontentsline\oldaddcontentsline

\clearpage

\appendix

\counterwithin{theorem}{section}
\counterwithin{definition}{section}
\counterwithin{lemma}{section}
\counterwithin{corollary}{section}
\counterwithin{fact}{section}
\counterwithin{remark}{section}
\counterwithin{example}{section}
\counterwithin{proposition}{section}
\counterwithin{conjecture}{section}

\renewcommand{\thetheorem}{\thesection\arabic{theorem}}
\renewcommand{\thedefinition}{\thesection\arabic{definition}}
\renewcommand{\thelemma}{\thesection\arabic{lemma}}
\renewcommand{\thecorollary}{\thesection\arabic{corollary}}
\renewcommand{\thefact}{\thesection\arabic{fact}}
\renewcommand{\theremark}{\thesection\arabic{remark}}
\renewcommand{\theexample}{\thesection\arabic{example}}
\renewcommand{\theproposition}{\thesection\arabic{proposition}}
\renewcommand{\theconjecture}{\thesection\arabic{conjecture}}

\tableofcontents

\vspace{1cm}

\section{Example of impossibility of CNOT gate in dual-rail encoding}\label{sec:SM-CNOT-dual-rail}
Dual-rail encoding consists of encoding a qubit state as a state in the two-dimensional logical subspace $\spa\{\ket{0}_L:=\ket{1,0},\ket{1}_L:=\ket{0,1}\}$ of Fock space. 
Then, two-qubit states correspond to the four-dimensional logical subspace $\spa\{\ket{00}_L,\ket{01}_L,\ket{10}_L,\ket{11}_L\}$, where $\ket{00}_L = \ket{1,0,1,0}$, $\ket{01}_L = \ket{1,0,0,1}$, $\ket{10}_L = \ket{0,1,1,0}$, and $\ket{11}_L = \ket{0,1,0,1}$ are specific $m=4,n=2$ Fock basis states.

The logical CNOT gate on this four-dimensional subspace is the unitary $\mathrm{CNOT}_L$ defined to act on the basis states as:
\begin{equation}
\begin{aligned}
\ket{00}_L \mapsto \ket{00}_L\,,\\
\ket{01}_L \mapsto \ket{01}_L\,,\\
\ket{10}_L \mapsto \ket{11}_L\,,\\
\ket{11}_L \mapsto \ket{10}_L\,.
\end{aligned}
\end{equation}
In particular, it must satisfy by linearity:
\begin{align}
\mathrm{CNOT}_L \ket{+0}_L = \ket{\Phi^+}_L\,,
\end{align}
where  $\ket{+0}_L = \frac{1}{\sqrt{2}}(\ket{00}_L+\ket{10}_L)$ and $\ket{\Phi^+}_L = \frac{1}{\sqrt{2}}(\ket{00}_L+\ket{11}_L)$.

However, as can be verified using symbolic computation \cite{Bosonic-Orbit-Dimensions-Github-cff}, the orbit dimensions of $\ket{+0}_L$ and $\ket{\Phi^+}_L$ under $G_{\rm GO}$ are equal to different values, 39 and 38 respectively (or 38 and 37 in the ketbra picture). This illustrates that the concept of orbit dimensions alone implies the fact that the unitary gate $\mathrm{CNOT}_L$ cannot belong to the $4$-mode Gaussian unitary group $G_{\rm GO}$ (and hence not to any of its subgroups, such as $G_{\rm PLO}$, either).

For illustrative purposes, we also report the smallest positive eigenvalues $\lambda_{\mathrm{min}}^+$ of the Gram matrices associated to these $4$-mode states $\ket{+0}_L$ and $\ket{\Phi^+}_L$: they are $1/4$ and $1/2$ (or $1/2$ and $1$ in the ketbra picture).

\section{Background on general unitary group representations}

Let us recall notions of group representations, and of irreducible subspaces (see e.g. \cite[Chap. 3]{Folland-CourseAbstract-2016}).
Given a group $G$, a \textit{(unitary) group representation} of $G$ onto a (possibly infinite-dimensional) complex Hilbert space $\mathcal{H}$, is a map $\phi:G \to \uni(\mathcal{H})$ from $G$ to the group of unitary operators on $\mathcal{H}$ that is a group homomorphism, i.e. fulfills $\phi(g_1 g_2) = \phi(g_1) \phi(g_2)$ for all $g_1,g_2 \in G$. A closed subspace $\mathcal{V} \subseteq \mathcal{H}$ is said to be an \textit{invariant subspace} for the representation if $\phi(g)(\mathcal{V}) \subseteq \mathcal{V}$ for all $g \in G$.
If $\mathcal{V}$ is an invariant subspace, then the map $G \to \uni(\mathcal{V})$ given by $g \mapsto \left.\phi(g)\right|_{\mathcal{V}}$ still defines a representation of $G$, called the \textit{subrepresentation} of $\phi$ on $\mathcal{V}$.
The representation $\phi$ is said to be \textit{irreducible} if its only closed invariant subspaces are $\{0\}$ and $\mathcal{H}$.
Two representations $\phi_1:G \to \uni(\mathcal{H}_1)$ and $\phi_2:G \to \uni(\mathcal{H}_2)$ of $G$ are said to be \textit{equivalent} if there exists a Hilbert space isomorphism $U:\mathcal{H}_1 \to \mathcal{H}_2$ such that $\phi_2(g) = U \circ\phi_1(g)\circ U^{-1}$ for all $g \in G$.

Lastly, a representation $\phi:G \to \uni(\mathcal{H})$ is said to be \emph{completely reducible without multiplicities}, when there exists a family of pairwise-orthogonal closed subspaces $(\mathcal{H}_\lambda)_{\lambda}$ of $\mathcal{H}$ satisfying
\begin{equation}\label{eq:orbit-closed-span-hilbert-space-decomposition}
\mathcal{H} = \bigoplus_{\lambda} \mathcal{H}_\lambda\,,
\end{equation}
such that for every $\lambda$, $\dim(\mathcal{H}_\lambda) \geq 1$, $\mathcal{H}_\lambda$ is an invariant subspace for $\phi$ and the associated subrepresentation of $\phi$ on $\mathcal{H}_\lambda$ is irreducible, 
and such that for every $\lambda\neq\lambda'$, the subrepresentations $\phi_\lambda : G \to \uni(\mathcal{H}_\lambda)$ and $\phi_{\lambda'} : G \to \uni(\mathcal{H}_{\lambda'})$ are not equivalent representations of $G$.

Given a point $p \in \mathcal{H}$, we will denote $\orb_G(p) := \{\phi(g) \cdot p \,|\, g \in G \}$ the \textit{orbit} of $p$ under the group representation $\phi$; the representation $\phi$ of $G$ should be clear from the context when using the notation $\orb_G(p)$.

We will need the following lemma, which is a direct consequence of the above definitions (and of Schur's lemma):
\begin{lemma}\label{lem:orbit-closed-span}
Let $\phi:G \to \uni(\mathcal{H})$ be a unitary group representation of a group $G$ on a Hilbert space $\mathcal{H}$.
Suppose that $\phi$ is \emph{completely reducible without multiplicities}, via a decomposition of the Hilbert space $\mathcal{H}$ of the form of \cref{eq:orbit-closed-span-hilbert-space-decomposition}.

Then, for every $p \in \mathcal{H}$, the closed linear span of the orbit of $p$ is given by
\begin{equation}\label{eq:orbit-closed-span-mainclaim}
\overline{\spa_{\mathbb{C}}(\orb_G(p))} \ = \bigoplus_{\substack{\lambda;\\ P_\lambda(p) \,\neq\, 0}} \mathcal{H}_\lambda\,,
\end{equation}
where $P_\lambda$ denotes the orthogonal projection onto $\mathcal{H}_\lambda$.
\begin{proof}
Fix $p \in \mathcal{H}$, and denote $\mathcal{H}_p := \overline{\spa_{\mathbb{C}}(\orb_G(p))}$.
First, the closed subspace $\mathcal{H}_p$ is an invariant subspace under $\phi$.
Indeed, let $q \in \mathcal{H}_p$. By definition, there exists a sequence $(q_n)$ in $\spa_{\mathbb{C}}(\orb_G(p))$ such that $q_n \underset{n\to\infty}{\longrightarrow}  q$, with each $q_n$ writing as a finite linear combination of the form
$q_n = \sum_{i=1}^{k^{(n)}} c_i^{(n)} \phi(g_i^{(n)}) \cdot p$, with $k_n \in \mathbb{N}$, $c_i^{(n)} \in \mathbb{C}$, $g_i^{(n)} \in G$.
For all $n$ and all $g \in G$, we have
\begin{align}
\phi(g) \cdot q_n 
&= \sum_{i=1}^{k^{(n)}} c_i^{(n)} \phi(g) \phi(g_i^{(n)}) \cdot p\\
&= \sum_{i=1}^{k^{(n)}} c_i^{(n)} \phi(g\, g_i^{(n)}) \cdot p\,,
\end{align} 
using the linearity of $\phi(g)$ and the fact that $\phi$ is a group representation, which shows that $\phi(g) \cdot q_n  \in \spa_{\mathbb{C}}(\orb_G(p))$.
Because $\phi(g) \cdot q_n \underset{n\to\infty}{\longrightarrow} \phi(g) \cdot q$ (since $\phi(g)$ is continuous, being a unitary operator), it means that $\phi(g) \cdot q  \in \overline{\spa_{\mathbb{C}}(\orb_G(p))} = \mathcal{H}_p$ (for all $g \in G$), which establishes that $\mathcal{H}_p$ is an invariant subspace.

Now, let $P_p$ denote the orthogonal projection onto $\mathcal{H}_p$.
Since $\mathcal{H}_p$ is an invariant subspace for $\phi$ and since $\phi(g)$ is unitary, we have for all $g$:
\begin{equation}\label{eq:orbit-closed-span-commutation-relation-with-Pp}
P_p \circ \phi(g) = \phi(g) \circ P_p\,,
\end{equation}
For each $\lambda$, since the restricted representation is irreducible, applying Schur's lemma (e.g. \cite[Theorem 3.5]{Folland-CourseAbstract-2016}) to the restriction of \cref{eq:orbit-closed-span-commutation-relation-with-Pp} to $\mathcal{H}_\lambda$ gives that $\left.P_p\right|_{\mathcal{H}_\lambda} = c_\lambda \id_{\mathcal{H}_\lambda}$ for some $c_\lambda \in \mathbb{C}$.
Therefore,
\begin{equation}\label{eq:orbit-closed-span-expression-of-Pp}
P_p = \bigoplus_\lambda c_\lambda P_\lambda\,.
\end{equation}
The fact that $P_p$ is a projection enforces $c_\lambda \in \{0,1\}$, and we in fact claim that $c_\lambda = 1$ if and only if $P_\lambda(p) \neq 0$, otherwise $c_\lambda = 0$.
Indeed, since we have $P_\lambda(p) = P_\lambda P_p(p) = c_\lambda P_\lambda(p)$ (using respectively that $p \in \mathcal{H}_p$ and \cref{eq:orbit-closed-span-expression-of-Pp}), $P_\lambda(p) \neq 0$ implies that $c_\lambda = 1$.
Furthermore, since by the invariance of the subspace $\mathcal{H}_\lambda$ we also have
\begin{equation}\label{eq:orbit-closed-span-commutation-relation-with-Plambda}
P_\lambda \circ \phi(g) = \phi(g) \circ P_\lambda\,,
\end{equation}
assuming that $P_\lambda(p) = 0$ and applying \cref{eq:orbit-closed-span-commutation-relation-with-Plambda} to $p$ gives that for all $g \in G$, $P_\lambda (\phi(g) \cdot p) = 0$, 
i.e.  $\orb_G(p) \subseteq \mathcal{H}_\lambda^\perp$. This implies that $\spa_\mathbb{C}(\orb_G(p)) \subseteq \mathcal{H}_\lambda^\perp$ and hence (by taking linear span and closure, respectively)
\begin{equation}\label{eq:orbit-closed-span-cite-inclusion-perp}
\mathcal{H}_p \subseteq \mathcal{H}_\lambda^\perp\,.
\end{equation}
Applying now \cref{eq:orbit-closed-span-expression-of-Pp} to a point $q_\lambda \in \mathcal{H}_\lambda\setminus\{0\}$ yields (using \cref{eq:orbit-closed-span-cite-inclusion-perp}) that $q_\lambda = c_\lambda q_\lambda$, which implies that $c_\lambda = 0$, establishing the claim.

Therefore, \cref{eq:orbit-closed-span-expression-of-Pp} simplifies to
\begin{equation}
P_p \ = \bigoplus_{\substack{\lambda;\\ P_\lambda(p) \,\neq\, 0}} P_\lambda\,,
\end{equation}
and hence taking the image of both sides establishes \cref{eq:orbit-closed-span-mainclaim}.
\end{proof}
\end{lemma}

\section{Background on manifolds}

We roughly introduce some necessary notions of finite-dimensional smooth manifolds, but refer the reader to \cite{lee_introduction_2012} for the precise definitions.

In short, a finite-dimensional \textit{smooth manifold} $X$ is a space which "locally looks like" $\mathbb{R}^r$ for some fixed integer $r$.
The \textit{dimension} $r$ of a manifold may be thought as the number of independent real numbers needed to locally describe a patch of the manifold. For example, the unit sphere $S^2 \subset \mathbb{R}^3$ is a smooth embedded submanifold of $ \mathbb{R}^3$, and its dimension is 2.
The smooth structures make it possible to define the notion of an infinitesimal first-order approximation of the space $X$ around a point $x \in X$, called the \textit{tangent space} $T_x X$ of $X$ at point $x$, and this tangent space $T_x X$ has the structure of a real vector space of dimension $r$ (see \cref{fig:orbit-tangent-space-diagram} for a visualization of a tangent space). They also make it possible to define a notion of smooth maps $f$ between two smooth manifolds $X$ and $Y$, and therefore of their first-order derivative $D f(x): T_x X \to T_{f(x)}Y$, as a \textit{linear} map between respective tangent spaces (which can be thought of as a coordinate-free generalization of the concept of a Jacobian matrix for a differentiable map $f:\mathbb{R}^r \to \mathbb{R}^s$).

We also briefly recall concepts related to constant-rank maps.
Let $f: X \to Y$ be a smooth map between two smooth manifolds $X$ and $Y$. $f$ is said to be a \textit{constant-rank $q$} map if the linear maps $D f(x)$ are all of the same rank $q$ (for all $x \in X$).
A map $f$ of constant-rank $q$ is called a \textit{smooth immersion} if $f$ is injective and $q=\dim(X)$ (i.e. all the linear maps $D f(x)$ are injective), a \textit{smooth submersion} if $f$ is surjective and $q=\dim(Y)$ (i.e. all the linear maps $D f(x)$ are surjective), and a \textit{diffeomorphism} if it is both a smooth immersion and a smooth submersion.

\textit{Analytic} manifolds are analogous to smooth manifolds, but allow the concept of an \textit{analytic map} $f: X \to Y$ to be well-defined, i.e. it is defined in local coordinates as the property that f agrees its Taylor series locally around every point of $X$.

\textit{Lie groups} are manifolds which are at the same time groups, and whose group operations are smooth (and in fact analytic). The tangent space at the identity element of a Lie group $G$ is called the \textit{Lie algebra} of $G$, denoted  $\mathfrak{g}$, and it has the structure of a real vector space, as well as a Lie bracket operation $[.,.]:\mathfrak{g} \times \mathfrak{g} \to \mathfrak{g}$.  Since it is a tangent space of the Lie group (as a manifold), it holds that $\dim(\mathfrak{g}) = \dim(G)$.

\section{Background on extended metaplectic representation, and proof of Theorem \ref{thm:orbit-dim-maintext}}\label{sec:SM-EMRep-and-proof-of-orbit-structure}

Throughout this section (and this section only), the $m$-mode quantum optical unitary groups "$G$"$\subset \uni(\fock)$ such as those studied in the main text (\cref{tab:Lie-algebra-bases}) will be renamed correspondingly $G_{\fock}$ (e.g. the PLO unitary group on $\fock$ is now denoted $G_{\fock,\rm PLO}$).

\subsection{Unitary Gaussian quantum optics through a group representation}\label{subsec:background-on-optics-through-a-repr}

These unitary groups from the main text are all subgroups of the Gaussian unitary group, $G_{\rm GO}$, the group generated by quadratic Hamiltonians (degree $\leq 2$) in the canonical operators.
In Gaussian quantum optics, it is common to describe these unitaries in terms of their action on the $m$-mode phase space. That is, elements of the group $G_{\fock,\rm GO}$ are in practice specified by a pair $(\bm{d},S)$ --- where $\bm{d} \in \mathbb{R}^{2m}$ is a displacement vector and $S \in \mathrm{Sp}(2m,\mathbb{R})$ is a real symplectic matrix --- via a map $f: \mathbb{R}^{2m} \times \mathrm{Sp}(2m,\mathbb{R}) \to G_{\fock,\rm GO} \subseteq \uni(\fock)$ \cite{weedbrook_gaussian_2012}.
From a theoretical point of view, (i) it is technically desirable that the map $f$ be a group homomorphism (as we will see later), (ii) furthermore global phases should be treated correctly (as they are part of $G_{\fock,\rm GO}$), and (iii) there should be a unique principled way to choose such a mapping $f$; however these three requirements fail to hold for the above assignment $f:(\bm{d},S) \mapsto U$ due to some remaining freedom of global phase to assign to each unitary $U$ \cite{arvind_real_1995}. The mathematical tool that resolves these requirements is the so-called \textit{extended metaplectic representation}. %
For an introduction to the metaplectic representation in the context of quantum optics, we refer the reader to \cite{arvind_real_1995}. A more mathematical treatment of the extended metaplectic representation may be found in \cite[Chap. 4]{Folland-HarmonicAnalysis-1989}.

What this provides us is a finite-dimensional Lie group $G_0$, and a unitary representation $\phi:G_0 \to \uni(\fock)$ of $G_0$, such that the image $\phi(G_0)$ is exactly the Gaussian unitary group $G_{\fock,\rm GO}$ (and with $\dim(G_0)=\dim(G_{\fock,\rm GO}) = 2m^2 + 3m + 1$).
The three other unitary groups considered in the main text ($G_{\fock,\rm PLO}$, $G_{\fock,\rm DPLO}$, and $G_{\fock,\rm ALO}$) may then also be obtained as the image of three respective subgroups of $G_0$ under this representation $\phi$.

Moreover, this representation $\phi$ is well-behaved enough to allow one to study its "infinitesimal" evolutions, as we formalize next.

\subsection{General properties of strongly-continuous representations, and structure of orbits}
We start by outlining the concepts of \textit{smoothness} and \textit{analyticity} associated to  unitary representations, as they will both be needed later. For further details, we generally refer the reader to \cite[Sec. 12.2.11]{Moretti-SpectralTheory-2017}.
Here, let $\phi:G\to \uni(\mathcal{H})$ be a strongly-continuous unitary representation of a finite-dimensional Lie group $G$ onto a (possibly infinite-dimensional) Hilbert space $\mathcal{H}$. For a fixed $p \in \mathcal{H}$, denote by $f_p:G\to \mathcal{H}$ the \textit{orbit map} (for $p$), defined as $f_p(g):=\phi(g) \cdot p$.
The above requirement of $\phi$ being \textit{strongly-continuous} is defined to mean that the orbit maps $f_p$ are all continuous (for all $p \in \mathcal{H}$).

We recall that a Lie group $G$ comes with not only a smooth manifold structure but in fact an analytic manifold structure (c.f. e.g. \cite[Chap. 2]{Varadarajan-LieGroups-1984}), which enables the notions of \textit{smooth maps} and \textit{analytic maps} between $G$ and other analytic manifolds to be well-defined.
The spaces $\mathcal{H}^\infty$ (resp. $\mathcal{H}^\omega$) of \textit{smooth vectors} (resp. \textit{analytic vectors}) of the representation $\phi$ are defined as the set of vectors $p \in \mathcal{H}$ for which the orbit map $f_p:G \to \mathcal{H}$ is smooth (resp. analytic). Note that $\mathcal{H}^\omega \subseteq \mathcal{H}^\infty$.

It is always possible to induce from the representation $\phi:G \to \uni(\mathcal{H})$ of the group $G$, a \textit{derived representation} of the Lie algebra $\mathfrak{g}$ of $G$, which is defined as
\begin{equation}\label{eq:SC-repr-def-of-its-derivative}
\phi'(X) \cdot p := \evalat[\Big]{\frac{d}{dt}}{t=0} \Big( \phi(e^{tX}) \cdot p \Big)
\end{equation} 
for all $X \in \mathfrak{g}$ and for at least all $p \in \mathcal{H}^\infty$.
To be explicit, what is meant here is that for all such $X$ and $p$, as $t \to 0$ the function $t\mapsto (\phi(e^{tX}) \cdot p - p)/t$ converges (with respect to the norm of $\mathcal{H}$) to some point in $\mathcal{H}$, which is denoted $\phi'(X) \cdot p$. 
Note that \cref{eq:SC-repr-def-of-its-derivative} also takes an exponentiated form,
\begin{equation}\label{eq:SC-repr-def-of-its-derivative-exponentiated-form}
e^{t \phi'(X)} = \phi(e^{tX})\,,
\end{equation}
valid for all $t \in \mathbb{R}$, $X \in \mathfrak{g}$ \cite[p. 743]{Moretti-SpectralTheory-2017}.

We now collect general properties of these objects that will be useful to us. First, $\mathcal{H}^\infty$ and $\mathcal{H}^\omega$ are vector subspaces of $\mathcal{H}$, and they are both dense in $\mathcal{H}$ \cite[Prop. 12.85]{Moretti-SpectralTheory-2017}. The maps $\phi'(X)$ ($X \in \mathfrak{g}$) are linear, all preserve the subspace $\mathcal{H}^\infty$ \cite[Thms. 12.79.c \& 12.81]{Moretti-SpectralTheory-2017}, and with such restriction to $\mathcal{H}^\infty$ the derived representation thereby gives an $\mathbb{R}$-linear map $\phi':\mathfrak{g} \to \End(\mathcal{H}^\infty)$ \cite[Thms. 12.79.d \& 12.81]{Moretti-SpectralTheory-2017}.  Next, for $X \in \mathfrak{g}$, the operator $\phi'(X)$ is in general well-defined on an even bigger domain than $\mathcal{H}^\infty$ (but still not all of $\mathcal{H}$); i.e. the right-hand side of \cref{eq:SC-repr-def-of-its-derivative} is meaningful for some other points $p$ not in $\mathcal{H}^\infty$. In practice, given an operator $\phi'(X)$, and a point $p \in \mathcal{H}$, we may know that the operator $\phi'(X)$ is well-defined at $p$, but not know yet whether $p$ is a smooth vector of the representation $\phi$ ($p \in \mathcal{H}^\infty$) or not. The following sufficent condition helps to assess this: if $(\phi'(X))^k$ is well-defined at $p$ for all $X$ in a basis of $\mathfrak{g}$ and for all $k \in \mathbb{N}$, then $p \in \mathcal{H}^\infty$ \cite[Thm. 1.1]{Goodman-AnalyticEntire-1969}.
Similarly, for a $p \in \mathcal{H}$, it holds that $p \in \mathcal{H}^\omega$ if and only if for all $c_1,\dots,c_d \in \mathbb{R}$, the point $p$ is an \textit{analytic vector for the operator} $A:= -i \sum_{j=1}^d c_j \phi'(X_j)$ (with $\{ X_1,\dots,X_d \}$ a basis of $\mathfrak{g}$ \cite[Prop. 12.84]{Moretti-SpectralTheory-2017}. In this criterion, the notion of "$p$ is an analytic vector for the operator $A$" is defined as: $A$ along with all its powers $A^n$ ($n \in \mathbb{N}$) are well-defined at $p$, and satisfy $\sum_{n=0}^\infty \frac{t^n}{n!} \norm{A^n p} < +\infty$ for some $t>0$.

We are now able to state the last general property, \cref{thm:SC-repr-orbit-structure}, which concerns the manifold structure of orbits and their dimension:

\begin{theorem}\label{thm:SC-repr-orbit-structure}
Let $G$ be a finite-dimensional Lie group, let $\phi:G\to \uni(\mathcal{H})$ be a strongly-continuous unitary representation of $G$ onto a (possibly infinite-dimensional) Hilbert space $\mathcal{H}$, and consider the associated representation $\phi':\mathfrak{g} \to \End(\mathcal{H}^\infty)$ of the Lie algebra $\mathfrak{g}$ of $G$.
Then, for every $p \in \mathcal{H}^\infty$: the orbit $\orb_G(p) := \{ \phi(g) \cdot p \,|\, g \in G \}$ inherits the structure of a smooth manifold, whose dimension can be computed as 
\begin{equation}\label{eq:SC-repr-orbit-dim-formula-claim}
\dim(\orb_G(p)) = \rank_{\mathbb{R}}(\{ \phi'(X_1) \cdot p,\dots, \phi'(X_k) \cdot p\})\,
\end{equation}
with $\{ X_1,\dots,X_k \}$ any basis of $\mathfrak{g}$,
and this smooth manifold structure on the orbit is (uniquely) characterized by the fact that it makes the orbit map $f_p:G\to \orb_G(p)$ be a smooth submersion.

\begin{proof}
For a fixed $p \in \mathcal{H}^\infty$, define the stabilizer subgroup $G_p$ of $G$ as $G_p := \{ g \in G \,|\, \phi(g) \cdot p = p \}$, and define the associated quotient space $G/G_p := \{ [g]  \,|\, g \in G \}$, where $[g] := \{ g \, s \,|\, s \in G_p \}$.
Since $G_p$ is a closed subgroup of $G$ (this follows from the strong-continuity of $\phi$), a version of the closed subgroup theorem \cite[Thm. 21.17]{lee_introduction_2012} provides us with the following: (i) $G_p$ is a Lie group in its own right, whose Lie algebra can be identified as the space $\mathfrak{g}_p := \{ X \in \mathfrak{g} \,|\, \forall t \in \mathbb{R}\ e^{tX} \in G_p \}$, 
(ii) the quotient space $G/G_p$ is a finite-dimensional topological manifold of dimension $\dim(G/G_p)=\dim(G) - \dim(G_p)$, and admits a unique smooth manifold structure such that the quotient map $\pi:G\to G/G_p$ defined by $\pi(g):=[g]$ is a smooth submersion.

Next, define the map $F_p:G/G_p \to \orb_G(p)$ by $F_p([g]) := \phi(g) \cdot p$. Note that the expression $\phi(g) \cdot p$ only depends on the equivalence class $[g]$ of $g$ (since if $g' = g s$ with $s \in G_p$, then $\phi(g') \cdot p = \phi(g) \phi(s) \cdot p = \phi(g) \cdot p$), which makes the map $F_p$ well-defined.
One may elementarily check that the map $F_p$ is bijective. Consequently, the topology and smooth structure of $G/G_p$ can be transported onto $\orb_G(p)$ through this bijection $F_p$, thus endowing $\orb_G(p)$ with a smooth manifold structure that makes $F_p$ a diffeomorphism. In other words, via the map $F_p$, the orbit $\orb_G(p)$ can be simply identified with the smooth manifold $G/G_p$.
Lastly, the map $f_p = F_p \circ \pi$ is a smooth submersion, being a composition a smooth submersion ($\pi$) with a diffeomorphism ($F_p$).

Let now $f_p':\mathfrak{g} \to \mathcal{H}$ be the linear map defined by $f_p'(X) := \phi'(X) \cdot p$. It remains to establish \cref{eq:SC-repr-orbit-dim-formula-claim}, i.e. to show that $\dim(\orb_G(p)) = \rank_{\mathbb{R}}(f_p')$.
To this end, we claim that $\ker(f_p') = \mathfrak{g}_p$. Indeed, if $X \in \mathfrak{g}_p$, it means that
\begin{equation}\label{eq:proof-kerfpprime-equals-gp-eq1}
\forall t \in \mathbb{R},\ \phi(e^{tX}) \cdot p = p\,,
\end{equation}
but since $p \in \mathcal{H}^\infty$, \cref{eq:proof-kerfpprime-equals-gp-eq1} can be differentiated at $t=0$ (c.f. \cref{eq:SC-repr-def-of-its-derivative}) to give
\begin{equation}
\phi'(X) \cdot p = 0\,,
\end{equation}
and thus $X \in \ker(f_p')$. Conversely, suppose that  $X \in \ker(f_p')$. Denoting the curve $\gamma(t):= \phi(e^{tX}) \cdot p$, it satisfies  $\gamma(0) = p$, and is a smooth map $\gamma:\mathbb{R} \to \mathcal{H}$ --- as it is a composition of $f_p$ (smooth since $p \in \mathcal{H}^\infty$) and the smooth map $t \mapsto e^{tX}$. For a fixed $t$, consider now the curve $\sigma(s):=\gamma(t+s)$. By composition, $\sigma$  is again a smooth map, and differentiating it at $s$ gives $\dot{\sigma}(s) = \dot{\gamma}(t+s)$. But since (due to $\phi$ being a group representation) we have $\sigma = \phi(e^{tX}) \circ \gamma$ with $\phi(e^{tX})$ a linear map, and linear maps are their own derivatives, we also have $\dot{\sigma}(s) = \phi(e^{tX}) \cdot \dot{\gamma}(s)$. We thus obtained $\dot{\gamma}(t+s) = \phi(e^{tX}) \cdot \dot{\gamma}(s)$. For $s=0$, this gives
\begin{equation}\label{eq:SC-repr-orbit-structure-proof-gammatdot-at-t-and-at-0-relation}
\dot{\gamma}(t) = \phi(e^{tX}) \cdot \dot{\gamma}(0)\,.
\end{equation}
But $\dot{\gamma}(0) = \phi'(X) \cdot p = 0$ (since $X \in \ker(f_p')$), hence $\dot{\gamma}(t) = \phi(e^{tX}) \cdot 0 = 0$ for all $t$. Thus, $\gamma: \mathbb{R} \to \mathcal{H}$ is a curve in the Hilbert space $\mathcal{H}$ that has zero derivative everywhere. But this necessarily means that $\gamma(t)$ is constant. Indeed, for any $w \in \mathcal{H}$, by letting $L_w(v):= \langle w, v \rangle$, we have that the map $L_w: \mathcal{H} \to \mathbb{C}$ is a linear and continuous (since bounded, by Cauchy-Schwarz) map; therefore the curve $g_w := L_w \circ \gamma : \mathbb{R} \to \mathbb{C}$ is also differentiable everywhere, with $\dot{g_w}(t) = (L_w \circ \dot{\gamma})(t) = \langle \dot{\gamma}(t), w \rangle = 0$, hence we know that for all $t$, $g_w(t)=g_w(0)$, thus $\langle \gamma(t) - \gamma(0), w \rangle = 0$ (for all $w \in \mathcal{H}$), which implies that $\gamma(t)=\gamma(0) = p$.

This establishes that $X \in \mathfrak{g}_p$, which concludes the proof that $\ker(f_p') = \mathfrak{g}_p$.

Putting things together, we therefore have
\begin{align}
\dim(\orb_G(p)) &= \dim(G/G_p)\\
&= \dim(G) - \dim(G_p)\label{eq:orb-stab-dim-formula-generic}\\
&= \dim(\mathfrak{g}) - \dim(\ker(f_p'))\\
&= \rank_{\mathbb{R}}(f_p')\,,
\end{align}
where the last equality uses the rank-nullity theorem from linear algebra, for the real-linear map $f_p':\mathfrak{g} \to \mathcal{H}$ from the finite-dimensional real vector space $\mathfrak{g}$ to the (possibly infinite-dimensional) vector space $\mathcal{H}$  (seen as a real vector space).
\end{proof}
\end{theorem}

We now give a remark on the case of $G$ being compact, which is the case for the group of passive linear optics $G_{\rm PLO}$ (c.f. \cref{eq:remark-unkown-if-topologies-coincide-or-not-except-PLO} in the main text):
\begin{remark}\label{rem:appendix-Gcompact-case}
If $G$ is compact, then the topology on an orbit provided by \cref{thm:SC-repr-orbit-structure} coincides with the subset topology inherited from $\mathcal{H}$.

Without such assumption, the former is at least finer than the latter (meaning that closeness in the orbit's former topology implies closeness in the Hilbert space $\mathcal{H}$).
\end{remark}
\paragraph{Proof of \cref{rem:appendix-Gcompact-case}}
Indeed, without yet assuming the compactness of $G$, we have the following situation, for a given point $p \in \mathcal{H}$:
\begin{equation}\label{diag:orbit-map-decomposition}
\begin{tikzcd}
G \arrow[rrd, "f_p"] \arrow[ddd, "\pi"'] &  &                     \\
                                         &  & \mathcal{H}         \\
                                         &  & \orb_G(p) \arrow[u, "i"'] \\
G/G_p \arrow[rru, "F_p"']                &  &                    
\end{tikzcd}
\end{equation}
Here, $i: q \mapsto q$ denotes the inclusion map, for which, importantly, we consider the topology on its domain $\orb_G(p)$ to be the one provided by \cref{thm:SC-repr-orbit-structure}, while we consider the topology on its codomain $\mathcal{H}$ to be the standard Hilbert space topology.
We now suppose that $p \in \mathcal{H}^\infty$. In this diagram, we have already seen that the maps $\pi$, $F_p$, and $f_p$ are all smooth (see also proof of \cref{thm:SC-repr-orbit-structure}); the only map that is potentially non-smooth is the map $i$.

In fact, the map $i$ is at least guaranteed to be continuous.
To show this, we first show that the map $\tilde{F}_p := i \circ F_p : G/G_p \to \mathcal{H}$ is continuous.
We have, for any subset $U \subseteq \mathcal{H}$:
\begin{align}
(\tilde{F}_p^{-1})(U)
&= \pi\left( \pi^{-1}\left( \tilde{F}_p^{-1}(U) \right)  \right)\\
&= \pi\left( (\tilde{F}_p \circ \pi)^{-1}(U)  \right)\\
&= \pi\left( (f_p)^{-1}(U)  \right)\,,\label{eq:relation-between-two-topologies-discussion--cite1}
\end{align}
where the first equality is by surjectivity of $\pi$.
But since $\pi$ is a smooth submersion, it is an open map (e.g. \cite[Prop. 4.28]{lee_introduction_2012}) i.e. sends open sets to open sets. Thus, if $U$ is open in $\mathcal{H}$, then by the continuity of $f_p$, $(f_p)^{-1}(U)$ is open in $G$, and hence by the openness of $\pi$, $\pi\left( (f_p)^{-1}(U)  \right)$ is open in $G/G_p$. This shows, by \cref{eq:relation-between-two-topologies-discussion--cite1}, that $(\tilde{F}_p)^{-1}(U)$ is open in $G/G_p$ for all open $U \subseteq \mathcal{H}$, and hence that $\tilde{F}_p$ is continuous.
Therefore, the map $i$ can be written as the composition $i=\tilde{F}_p \circ F_p^{-1}$ of a continuous map ($\tilde{F}_p$) with a smooth map ($F_p^{-1}$), which establishes that $i$ is continuous. The continuity of $i$ establishes the second claim of the remark.

Let us introduce the map $$\tau_p : \orb_G(p) \to \orb_G(p)$$ to act as the identity $q \mapsto q$, but with its domain being regarded with its topology provided by \cref{thm:SC-repr-orbit-structure} while its codomain is regarded with the subset topology from $\mathcal{H}$.
The map $\tau_p$ is bijective by definition. 
Lastly, we also introduce $\hat{F}_p : G/G_p \to \orb_G(p)$ to be the same map as $F_p$ but with the codomain $\orb_G(p)$ regarded with the subset topology from $\mathcal{H}$. The map $\hat{F}_p$ is continuous (due to $\tilde{F}_p$ being continuous).
Since we have $\tau_p = \hat{F}_p \circ (F_p)^{-1}$, it is a continuous map by composition (recall that $F_p$ is a diffeomorphism hence its inverse is indeed continuous).
Our map $\tau_p$ is therefore a continuous bijection.
Now, if $G$ is compact, then so is the domain of $\tau_p$ (being the image of the compact $G$ under the continuous map $F_p \circ \pi$). Moreover, the codomain of $\tau_p$ is Hausdorff (as a subset of the Hausdorff space $\mathcal{H}$).
Our map $\tau_p$ is therefore a continuous bijection from a compact space to a Hausdorff space. This implies (e.g. \cite[Corr. A.36]{Tu-IntroductionManifolds-2011}) that $\tau_p$ is a homeomorphism, which proves (in the case of $G$ compact) that the two topologies considered on the set $\orb_G(p)$ coincide.$\qed$

\subsection{Properties of the extended metaplectic representation $\phi$}
We now turn our attention back to our representation of interest, the extended metaplectic representation $\phi:G_0 \to \uni(\fock)$ of the extended metaplectic group $G_0$.
This representation $\phi$ is such that for all $X \in \mathfrak{g}_0$ ($\mathfrak{g}_0$ denoting the Lie algebra of the Lie group $G_0$), the operator $\phi'(X)$ is a skew-Hermitian polynomial in the canonical operators $a_k,a_k^\dagger$ of degree $\leq 2$.
In fact, $\phi'$ is surjective onto all such operators \cite[Chap. 4]{Folland-HarmonicAnalysis-1989}, and since the kernel of $\phi$ is discrete (see proof of \cref{lem:optical-groups-are-images-of-closed-subgroups-under-EMR} below), $\phi'$ is also injective. It is thus a one-to-one linear map between $\mathfrak{g}_0$ and the space of all skew-Hermitian such polynomial operators of degree $\leq 2$.

\subsubsection{Closed preimages of the quantum optical groups}
As the four unitary groups $G_{\fock}$ considered in the main text (\cref{tab:Lie-algebra-bases}) are subgroups of $\uni(\fock)$, it follows (due to $\phi$ being a representation) 
that their preimages under $\phi$ form subgroups of the extended metaplectic group $G_0$. In fact, the latter are \textit{closed} subgroups of $G_0$:
\begin{lemma}\label{lem:optical-groups-are-images-of-closed-subgroups-under-EMR}
The preimages $G:=\phi^{-1}(G_{\fock})$ of the four $m$-mode unitary groups $G_{\fock,\rm PLO}$, $G_{\fock,\rm DPLO}$, $G_{\fock,\rm ALO}$, $G_{\fock,\rm GO}$ are all \emph{closed} subgroups $G_{\rm PLO}$, $G_{\rm DPLO}$, $G_{\rm ALO}$, $G_{\rm GO}$ of the extended metaplectic group $G_0$.
\begin{proof}
Let us first present more explicitly the extended metaplectic group $G_0$: it is the Lie group defined as \cite{Folland-HarmonicAnalysis-1989} the semidirect product
\begin{equation}\label{eq:EMG-explicit-semidirect-def}
G_0 := \mathrm{H}_{2m+1} \rtimes \mathrm{Mp}(2m,\mathbb{R})\,,
\end{equation}
where $\mathrm{H}_{2m+1}$ is the \textit{Heisenberg} Lie group of dimension $2m+1$ (whose underlying set is $\mathbb{R}^{2m+1}$), and where $\mathrm{Mp}(2m,\mathbb{R})$ is the \textit{metaplectic} Lie group of dimension $2m^2 + m$.

The definition of such a semidirect product between two Lie groups is such that: it is first a smooth manifold defined as simply the cartesian product $G_0 = \mathrm{H}_{2m+1} \times \mathrm{Mp}(2m,\mathbb{R})$ of both smooth manifolds, on top of which an appropriate group law is defined (which intuitively is the merging of the two group laws that makes the "rotations" from $\mathrm{Mp}(2m,\mathbb{R})$ and the "translations" from $\mathrm{H}_{2m+1}$ compose in the expected way), making it a Lie group.
Moreover, there is a smooth surjective and $2$-to-$1$ Lie group homomorphism $\Pi : \mathrm{Mp}(2m,\mathbb{R}) \to \mathrm{Sp}(2m,\mathbb{R})$ from the metaplectic group to the \textit{symplectic} group $\mathrm{Sp}(2m,\mathbb{R})$.

Now, it is known \cite{arvind_real_1995,Folland-HarmonicAnalysis-1989} that the four unitary groups in the claim are the images under $\phi$ of particularly simple subgroups of $G_0$:
\begin{align}
G_{\fock,\rm PLO} =&\ \phi(G'_{\rm PLO})\,,\\ G'_{\rm PLO} :=&\ \{ e_{\mathrm{H}_{2m+1}} \} \times \Pi^{-1}(K)\,; \label{eq:optical-groups-are-images-of-closed-subgroups-under-EMR--cite1}\\[6pt]
G_{\fock,\rm DPLO} =&\ \phi(G'_{\rm DPLO})\,,\\ G'_{\rm DPLO} :=&\ \mathrm{H}_{2m+1} \times \Pi^{-1}(K)\,; \label{eq:optical-groups-are-images-of-closed-subgroups-under-EMR--cite2}\\[6pt]
G_{\fock,\rm ALO} =&\ \phi(G'_{\rm ALO})\,,\\ G'_{\rm ALO} :=&\ Z_{\mathrm{H}_{2m+1}} \times \mathrm{Mp}(2m,\mathbb{R})\,; \label{eq:optical-groups-are-images-of-closed-subgroups-under-EMR--cite3}\\[6pt]
G_{\fock,\rm GO} =&\ \phi(G'_{\rm GO})\,,\\ G'_{\rm GO} :=&\ G_0\, \label{eq:optical-groups-are-images-of-closed-subgroups-under-EMR--cite4}\,,
\end{align}
where $e_{\mathrm{H}_{2m+1}}:=(\bm{0},\bm{0},0)$ denotes the identity element of $\mathrm{H}_{2m+1}$, $Z_{\mathrm{H}_{2m+1}} := \{(\bm{0},\bm{0},t)\ |\ t\in\mathbb{R}\} \subset \mathrm{H}_{2m+1}$ is the central subgroup of $\mathrm{H}_{2m+1}$, $K$ is the subgroup of $\mathrm{Sp}(2m,\mathbb{R})$ defined as $K :=  \mathrm{Sp}(2m,\mathbb{R}) \cap \mathrm{O}(2m)$, with $\mathrm{O}(2m)$ the \textit{orthogonal} group, and we denoted the products as cartesian products since here we are just focusing on the underlying sets of these subgroups of $G_0$. 

Since cartesian products of closed sets are closed sets, the four above subgroups $G'$ of $G_0$ (\cref{eq:optical-groups-are-images-of-closed-subgroups-under-EMR--cite1,eq:optical-groups-are-images-of-closed-subgroups-under-EMR--cite2,eq:optical-groups-are-images-of-closed-subgroups-under-EMR--cite3,eq:optical-groups-are-images-of-closed-subgroups-under-EMR--cite4}) are therefore all closed in $G_0$.

In each case of $G_{\fock}$, we have above identified \textit{a} subgroup $G'$ of $G_0$ whose image under $\phi$ is $G_{\fock}$, but the preimage $\phi^{-1}(G_{\fock})$ can generally be larger than $G'$, as $\phi$ is not quite injective. Indeed, it is clear that
\begin{equation}\label{eq:optical-groups-are-images-of-closed-subgroups-under-EMR--preimage-from-subgroup-of-preimage}
\phi^{-1}(G_{\fock}) = G' \ker(\phi)\,, 
\end{equation} 
where the right-hand side indicates the subgroup of $G_0$ generated by elements of $G'$ and of $\ker(\phi)$.
But it is also known that
\begin{equation}\label{eq:optical-groups-are-images-of-closed-subgroups-under-EMR--kerphi}
\ker(\phi) = \left\{ 
\begin{aligned}
&\big( (\bm{0},\bm{0},k) , e_{\mathrm{Mp}(2m,\mathbb{R})} \big) \,,\, \\
&\big( (\bm{0},\bm{0},k+1/2) , \epsilon_{\mathrm{Mp}(2m,\mathbb{R})} \big) 
\end{aligned}
\,\bigg|\, k \in \mathbb{Z} \right\}\,,
\end{equation}
(which is a discrete subgroup of $G_0$), where $\ker(\Pi) := \{ e_{\mathrm{Mp}(2m,\mathbb{R})}, \epsilon_{\mathrm{Mp}(2m,\mathbb{R})} \}$; let us justify \cref{eq:optical-groups-are-images-of-closed-subgroups-under-EMR--kerphi} in more detail.
Indeed, consider the associated representation $\tilde{\phi} : \mathrm{H}_{2m+1} \rtimes \mathrm{Sp}(2m,\mathbb{R}) \to \uni(\fock)/\{\id, -\id\}$. This $\tilde{\phi}$ is the representation that descends from $\phi$ after projection/quotienting, i.e. it is related to $\phi$ via
\begin{equation}\label{eq:optical-groups-are-images-of-closed-subgroups-under-EMR--tildephirelation}
\tilde{\phi} \circ \pi_{G_0} = \pi_{\uni(\fock)} \circ \phi\,,
\end{equation}
with the projection maps $\pi_{G_0} : G_0 \to \mathrm{H}_{2m+1} \rtimes \mathrm{Sp}(2m,\mathbb{R})$ and $\pi_{\uni(\fock)} : \uni(\fock) \to \uni(\fock)/\{\id, -\id\}$.
By making a choice of representatives for all image points of $\tilde{\phi}$ in $\uni(\fock)/\{\id, -\id\}$, the representation $\tilde{\phi}$ is turned into an "almost-representation" $\bar{\phi} : \mathrm{H}_{2m+1} \rtimes \mathrm{Sp}(2m,\mathbb{R}) \to \uni(\fock)$, which is a map that is not quite a representation in the sense that the product rule is only satisfied up to a $\pm 1$ sign. It is the kernel (preimage of $+\id$) of \textit{this} map $\bar{\phi}$ that is explicitly given in \cite[p. 196]{Folland-HarmonicAnalysis-1989}:
\begin{equation}\label{eq:optical-groups-are-images-of-closed-subgroups-under-EMR--kerphibar}
\ker(\bar{\phi}) = \big\{ \big( (\bm{0},\bm{0},k) , e_{\mathrm{Sp}(2m,\mathbb{R})} \big) \,|\, k \in \mathbb{Z} \big\}\,.
\end{equation}
First, let us obtain back the kernel of $\tilde{\phi}$ from this kernel of $\bar{\phi}$.
Let $(X,\tilde{A}) \in \mathrm{H}_{2m+1} \rtimes \mathrm{Sp}(2m,\mathbb{R})$, and suppose that $(X,\tilde{A}) \in \ker(\tilde{\phi})$. This means that either $\bar{\phi}((X,\tilde{A})) = +\id$ or $\bar{\phi}((X,\tilde{A})) = -\id$.
In the first case, \cref{eq:optical-groups-are-images-of-closed-subgroups-under-EMR--kerphibar} gives that $(X,\tilde{A}) = \big( (\bm{0},\bm{0},k) , e_{\mathrm{Sp}(2m,\mathbb{R})} \big)$ for some $k \in \mathbb{Z}$. 
Suppose now that $\bar{\phi}((X,\tilde{A})) = -\id$.
Decomposing the map $\bar{\phi}$ into its two components $\rho$ (Heisenberg part) and $\bar{\mu}$ (symplectic part), c.f. \cite[p. 196]{Folland-HarmonicAnalysis-1989}, this case reads $\rho(X) \bar{\mu}(\tilde{A}) = - \id$.
But by the explicit expression of the Heisenberg representation $\rho$ (\cite[Eq. 1.25]{Folland-HarmonicAnalysis-1989}) we have for $t\in \mathbb{R}$ that
\begin{equation}\label{eq:optical-groups-are-images-of-closed-subgroups-under-EMR--rho00t}
\rho( (\bm{0},\bm{0},t) ) = e^{2 \pi i t} \id\,.
\end{equation}
In particular (\cref{eq:optical-groups-are-images-of-closed-subgroups-under-EMR--rho00t} for $t=1/2$), we have $\rho( (\bm{0},\bm{0},1/2) ) = -\id$, hence we can rewrite the above as $\rho( (\bm{0},\bm{0},1/2) ) \rho(X) \bar{\mu}(\tilde{A}) = \id$, and thus (since $\rho$ \textit{is} a representation, as it is only $\bar{\mu}$ that is only an "almost-representation") also as:
$\rho( (\bm{0},\bm{0},1/2) X) \bar{\mu}(\tilde{A}) = \id$. In other words, we have in this case that $((\bm{0},\bm{0},1/2) X,\tilde{A}) \in \ker(\bar{\phi})$, which by \cref{eq:optical-groups-are-images-of-closed-subgroups-under-EMR--kerphibar} gives $((\bm{0},\bm{0},1/2) X,\tilde{A}) = \big( (\bm{0},\bm{0},k) , e_{\mathrm{Sp}(2m,\mathbb{R})} \big)$, i.e. that $(X,\tilde{A}) = \big( (\bm{0},\bm{0},k-1/2) , e_{\mathrm{Sp}(2m,\mathbb{R})} \big)$ for some $k \in \mathbb{Z}$.
Putting both cases together, we have hence established that:
\begin{equation}\label{eq:optical-groups-are-images-of-closed-subgroups-under-EMR--kerphitilde}
\ker(\tilde{\phi}) = \big\{ \big( (\bm{0},\bm{0},a/2) , e_{\mathrm{Sp}(2m,\mathbb{R})} \big) \,|\, a \in \mathbb{Z} \big\}\,.
\end{equation}
We can deduce the kernel of $\phi$ from this kernel of $\tilde{\phi}$.
Let $(X,A) \in G_0$, and suppose that $(X,A) \in \ker(\phi)$. By \cref{eq:optical-groups-are-images-of-closed-subgroups-under-EMR--tildephirelation}, it implies that $\pi_{G_0}(X,A) := (X,\tilde{A}) \in \ker(\tilde{\phi})$, which shows from \cref{eq:optical-groups-are-images-of-closed-subgroups-under-EMR--kerphitilde} that 
\begin{equation}\label{eq:optical-groups-are-images-of-closed-subgroups-under-EMR--kerphi-inclusion-step}
\ker(\phi) \subseteq \left\{ 
\begin{aligned}
&\big( (\bm{0},\bm{0},a/2) , e_{\mathrm{Mp}(2m,\mathbb{R})} \big) \,,\, \\
&\big( (\bm{0},\bm{0},a/2) , \epsilon_{\mathrm{Mp}(2m,\mathbb{R})} \big) 
\end{aligned}
\,\bigg|\, a \in \mathbb{Z} \right\}\,.
\end{equation}
But conversely, $\phi\big( (\bm{0},\bm{0},a/2) , A \big) = \rho( (\bm{0},\bm{0},a/2) ) \mu( A )$ (decomposing this time the map $\phi$ itself into its semidirect product components $\rho$ and $\mu$), which with \cref{eq:optical-groups-are-images-of-closed-subgroups-under-EMR--rho00t} gives $\phi\big( (\bm{0},\bm{0},a/2) , A \big) = e^{i \pi a} \mu( A )$. Since for both cases of $A= e_{\mathrm{Mp}(2m,\mathbb{R})}$ and $A=\epsilon_{\mathrm{Mp}(2m,\mathbb{R})}$, we have $\mu(A) \in \{\id, -\id\}$ (as $\ker(\Pi) = \{ e_{\mathrm{Mp}(2m,\mathbb{R})}, \epsilon_{\mathrm{Mp}(2m,\mathbb{R})} \}$ and $\mu$ is a representation), we obtain that the elements of the right-hand side of \cref{eq:optical-groups-are-images-of-closed-subgroups-under-EMR--kerphi-inclusion-step} are elements of $\ker(\phi)$ if and only if $a \in 2 \mathbb{Z}$ or $a + 1 \in 2 \mathbb{Z}$, which exactly establishes \cref{eq:optical-groups-are-images-of-closed-subgroups-under-EMR--kerphi}.

Now, from \cref{eq:optical-groups-are-images-of-closed-subgroups-under-EMR--kerphi} and \cref{eq:optical-groups-are-images-of-closed-subgroups-under-EMR--preimage-from-subgroup-of-preimage} (along with the expressions of $G'$ of \cref{eq:optical-groups-are-images-of-closed-subgroups-under-EMR--cite1,eq:optical-groups-are-images-of-closed-subgroups-under-EMR--cite2,eq:optical-groups-are-images-of-closed-subgroups-under-EMR--cite3,eq:optical-groups-are-images-of-closed-subgroups-under-EMR--cite4}), the preimages $\phi^{-1}(G_{\fock})$ for the four groups $G_{\fock}$ are readily found to be:
\begin{align}
\begin{split}
G_{\rm PLO}
:=&\ \phi^{-1}(G_{\fock,\rm PLO})\\
=&\  Z_{1/2} \times \Pi^{-1}(K)\,,
\end{split}\label{eq:optical-groups-are-images-of-closed-subgroups-under-EMR--explicit-preimage-PLO}
\\[6pt]
\begin{split}
G_{\rm DPLO}
:=&\ \phi^{-1}(G_{\fock,\rm DPLO})\\
=&\ \mathrm{H}_{2m+1} \times \Pi^{-1}(K)\,,
\end{split}\label{eq:optical-groups-are-images-of-closed-subgroups-under-EMR--explicit-preimage-DPLO}
\\[6pt]
\begin{split}
G_{\rm ALO}
:=&\ \phi^{-1}(G_{\fock,\rm ALO})\\
=&\ Z_{\mathrm{H}_{2m+1}} \times \mathrm{Mp}(2m,\mathbb{R})\,,
\end{split}\label{eq:optical-groups-are-images-of-closed-subgroups-under-EMR--explicit-preimage-ALO}
\\[6pt]
\begin{split}
G_{\rm GO}
:=&\ \phi^{-1}(G_{\fock,\rm GO})\\
=&\ G_0\,,
\end{split}\label{eq:optical-groups-are-images-of-closed-subgroups-under-EMR--explicit-preimage-GO}
\end{align}
where $Z_{1/2} := \{ (\bm{0},\bm{0},a/2) \,|\, a \in \mathbb{Z} \} \subset \mathrm{H}_{2m+1}$. (As such, note that we have found the preimages $\phi^{-1}(G_{\fock})$ to be equal to $G'$ for all considered groups $G$ except $G=G_{\rm PLO}$, for which the preimage is strictly larger.)

The only important point about these preimage expressions, for us, is that they make explicit the \textit{closedness} of these subsets.
Indeed, note that $K$ is a closed subgroup of $\mathrm{Sp}(2m,\mathbb{R})$ (this follows from $\mathrm{O}(2m)$ being a closed subset of the invertible matrices $\mathrm{GL}(2m,\mathbb{R})$), and hence (by continuity of $\Pi$)  $\Pi^{-1}(K)$ is a closed in $\mathrm{Mp}(2m,\mathbb{R})$.
Note also that $Z_{1/2}$ is closed in $\mathrm{H}_{2m+1}$, since it is a discrete subgroup, and that $Z_{\mathrm{H}_{2m+1}}$ is closed in $\mathrm{H}_{2m+1}$, being a linear subspace of the underlying manifold $\mathbb{R}^{2m+1}$.
Of course, it also holds that $\{ e_{\mathrm{H}_{2m+1}} \}$ is closed in $\mathrm{H}_{2m+1}$, that $\mathrm{Mp}(2m,\mathbb{R})$ is closed in itself, and that $G_0$ is closed in itself.
Therefore, since cartesian products of closed sets are closed sets, these four subgroups $G$ of \cref{eq:optical-groups-are-images-of-closed-subgroups-under-EMR--explicit-preimage-PLO,eq:optical-groups-are-images-of-closed-subgroups-under-EMR--explicit-preimage-DPLO,eq:optical-groups-are-images-of-closed-subgroups-under-EMR--explicit-preimage-ALO,eq:optical-groups-are-images-of-closed-subgroups-under-EMR--explicit-preimage-GO} are all closed subgroups of $G_0$.
\end{proof}
\end{lemma}

\subsubsection{Infinite-dimensional linear span of orbits (except for $G_{\rm PLO}$)}\label{subsec:SM-infinite-dimensional-span-of-orbits}
Let us here give the justification for the infinite-dimensional linear span of orbits (except in the $G_{\rm PLO}$ case) claimed in the main text's \cref{sec:main-text-props-of-orbit-dimensions}:
\begin{proposition}\label{prop:QO-orbit-span-infinite-dim-conditions}
For all nonzero $\ket{\psi} \in \fock$, the following holds:
\begin{itemize}
\item The subspace $\spa_{\mathbb{C}}(\orb_{G_{\rm PLO}}(\ket{\psi}))$ is finite-dimensional if and only if $\ket{\psi}$ has finite support in the Fock basis;
\item The subspaces $\spa_{\mathbb{C}}(\orb_{G}(\ket{\psi}))$ for $G=G_{\rm DPLO}, G_{\rm ALO}, G_{\rm GO}$ are all infinite-dimensional.
\end{itemize}
\begin{proof}

These are simply consequences of the way in which the restrictions $\left.\phi\right|_{G} :{G}\to \uni(\fock)$ 
of the extended metaplectic representation for these four groups decompose into irreducible representations, and of \cref{lem:orbit-closed-span}. Indeed, fix a nonzero $p\in \fock$, and let us denote $\mathcal{V}_{p,G} := \spa_{\mathbb{C}}(\orb_{G}(p))$ and $\mathcal{H}_{p,G} := \overline{\mathcal{V}_{p,G}}$.

Consider first $G_{\rm PLO}$. It is known that $\left.\phi\right|_{G_{\rm PLO}}$ is completely reducible (without multiplicities) via the following decomposition \cite[p. 208]{Folland-HarmonicAnalysis-1989} (or see also [\onlinecite{Aniello-ExploringRepresentation-2006}; \onlinecite[Thm. 5]{Harrow-ChurchSymmetric-2013}]):
\begin{align}
\fock = \bigoplus_{n\in \mathbb{N}} \mathcal{H}_{m}^{n}\,,
\end{align}
where recall that $\mathcal{H}_{m}^{n}$ denotes the subspace of $\fock$ of total photon number exactly $n$.
Therefore, by \cref{lem:orbit-closed-span}, we have
\begin{equation}\label{eq:QO-orbit-span-infinite-dim-conditions--decomp-PLO}
\mathcal{H}_{p,G_{\rm PLO}} \ = \bigoplus_{\substack{n \in \mathbb{N};\\ P_n(p) \,\neq\, 0}} \mathcal{H}_{m}^{n}\,,
\end{equation}
where $P_n$ denotes the orthogonal projection onto $\mathcal{H}_{m}^{n}$.
Since each $\mathcal{H}_{m}^{n}$ is finite-dimensional, it follows from \cref{eq:QO-orbit-span-infinite-dim-conditions--decomp-PLO} that
$\mathcal{H}_{p,G_{\rm PLO}}$ is finite-dimensional if and only if $p$ has finite support in the Fock basis. But note that in Hilbert spaces, a subspace is finite-dimensional if and only if its closure is finite-dimensional (this follows from the fact that finite-dimensional subspaces are always closed). Thus, we obtain that $\mathcal{V}_{p,G_{\rm PLO}}$ is finite-dimensional if and only if $p$ has finite support in the Fock basis.

Consider now $G_{\rm DPLO}$. Since it is known that for the \textit{Weyl-Heisenberg} subgroup $G_{\rm WH} \subset G_{\rm DPLO} \subset G_0$ (physically $G_{\rm WH}$ consists of "just displacements" operators), $\left.\phi\right|_{G_{\rm WH}}$ is irreducible \cite[Prop. 1.43]{Folland-HarmonicAnalysis-1989}, then \textit{a fortiori} $\left.\phi\right|_{G_{\rm DPLO}}$ is also irreducible. Therefore (since an irreducible representation is already in completely reduced form), by \cref{lem:orbit-closed-span}, we have $\mathcal{H}_{p,G_{\rm DPLO}} = \fock$, which yields that $\mathcal{V}_{p,G_{\rm DPLO}}$ is infinite-dimensional.

Consider next $G_{\rm ALO}$. It is known that $\left.\phi\right|_{\mathrm{Mp}(2m,\mathbb{R})}$ is completely reducible (without multiplicities) via the following decomposition \cite[p.194]{Folland-HarmonicAnalysis-1989}:
\begin{align}
\fock = \mathcal{H}_m^\mathrm{even} \oplus \mathcal{H}_m^\mathrm{odd}\,,
\end{align}
where $\mathcal{H}_m^\mathrm{even}$ and $\mathcal{H}_m^\mathrm{odd}$ denote the subspaces of $\fock$ of even and odd total photon number, respectively.
Since $G_{\rm ALO}$ enlarges $\mathrm{Mp}(2m,\mathbb{R})$ with new unitaries $\phi(g)$ differing from already present ones only by global phases, then $\left.\phi\right|_{G_{\rm ALO}}$ still decomposes into the same two even/odd irreducible components.
Therefore, by \cref{lem:orbit-closed-span}, we have
$\mathcal{H}_{p,G_{\rm ALO}} \ = \ $ $\mathcal{H}_m^\mathrm{even}, \mathcal{H}_m^\mathrm{odd}, \fock$ if $p$ has respectively support on  only $\mathcal{H}_m^\mathrm{even}$, only $\mathcal{H}_m^\mathrm{odd}$, or both. Since $\mathcal{H}_m^\mathrm{even}$ and $\mathcal{H}_m^\mathrm{odd}$ are infinite-dimensional, it follows in any case that $\mathcal{H}_{p,G_{\rm ALO}}$ is infinite-dimensional, which yields that $\mathcal{V}_{p,G_{\rm ALO}}$ is infinite-dimensional.

Lastly, consider $G_{\rm GO}$. Since, like $G_{\rm DPLO}$, it contains the Weyl-Heisenberg subgroup $G_{\rm WH}$, it follows likewise that $\left.\phi\right|_{G_{\rm GO}} = \phi$ is irreducible, and hence (from \cref{lem:orbit-closed-span}) that $\mathcal{V}_{p,G_{\rm GO}}$ is infinite-dimensional.
\end{proof}
\end{proposition}
One could also show that for $G=G_{\rm DPLO}, G_{\rm ALO}, G_{\rm GO}$ and nonzero $\rho \in B_2(\fock)$, the subspace $\spa_{\mathbb{C}}(\orb_{G}(\rho))$ is infinite-dimensional, but we do not give a proof of this here. However, in the particular case of pure states (the ketbra picture), it is easily seen to be a consequence of the above \cref{prop:QO-orbit-span-infinite-dim-conditions}, by noticing that if $\orb_{G}(\ket{\psi})$ is of infinite-dimensional span, then it must contain an infinite sequence of pairwise-orthogonal states $(\ket{\psi_n})_{n\in \mathbb{N}}$, and thus $\orb_{G}(\ketbra{\psi})$ also contains the infinite sequence $(\ketbra{\psi_n}{\psi_n})_{n\in \mathbb{N}}$, which is still pairwise-orthogonal in $B_2(\fock)$, establishing that $\orb_{G}(\ketbra{\psi})$ is also of infinite-dimensional span.

\subsubsection{Smooth/analytic vectors of $\phi$}
Next, we move to properties of smooth/analytic vectors of the extended metaplectic representation $\phi$. We make the following claims: (i) Schwartz states are all smooth vectors of $\phi$; (ii) exponential-decay states (in the Fock basis) are all analytic vectors of $\phi$. By exponential-decay states, we mean states $\ket{\psi} = \sum_{\bm{n} \in \mathbb{N}^m} c_{\bm{n}} \ket{\bm{n}}$ such that $|c_{\bm{n}}| \in \mathcal{O}(e^{ - c |\bm{n}|})$ for some $c>0$.

Let us first justify claim (i).
But clearly the action of any such finite degree polynomial operator on a Schwartz state is well-defined, and produces still a Schwartz state (as it can at worst slow down the decay of Fock basis coefficients by a polynomial factor). This shows that $(\phi'(X))^n$ is well-defined on all Schwartz states for all $n$, and hence --- by the above-mentioned sufficient condition for smooth vectors --- all Schwartz states are smooth vectors of $\phi$.

For claim (ii), we first need to establish some lemmas:
\begin{lemma}\label{lem:norm-polyop-on-fockstate-bound}
If $P$ is a polynomial operator of degree $\leq d$ in the canonical operators $a_k,a_k^\dagger$, there exists a constant $C_P > 0$ such that for all $\bm{n} \in \mathbb{N}^m$:
\begin{equation}\label{eq:norm-polyop-on-fockstate-bound-main-claim}
\norm{P \ket{\bm{n}}} \leq C_P (|\bm{n}| + 1)^{d/2}\,.
\end{equation}
\begin{proof}
The normal ordering of $P$ gives us coefficients $c_{\bm{k},\bm{l}} \in \mathbb{C}$ such that:
\begin{equation}\label{eq:polyop-degree-d--normal-ordering}
P=\sum_{\substack{\bm{k},\bm{l} \in \mathbb{N}^m\\|\bm{k}|+|\bm{l}|\leq d}} c_{\bm{k},\bm{l}} \ (\hat{\bm{a}}^\dagger)^{\bm{k}} \, \hat{\bm{a}}^{\bm{l}}\,.
\end{equation}
But by repeated applications of the defining relations
\begin{align*}
a^\dagger_j \ket{n_1,\dots,n_j,\dots,n_m}&:=\sqrt{n_j+1} \ket{n_1,\dots,n_j+1,\dots,n_m}\\
a_j \ket{n_1,\dots,n_j,\dots,n_m}&:=\sqrt{n_j} \ket{n_1,\dots,n_j-1,\dots,n_m}\,,
\end{align*}
(with the latter expression being defined as zero when $n_j = 0$), one obtains:
\begin{align}
&(\hat{\bm{a}}^\dagger)^{\bm{k}} \, \hat{\bm{a}}^{\bm{l}} \ket{\bm{n}}\nonumber\\
&= \prod_{j=1}^m \sqrt{n_j(n_j - 1) \cdots (n_j - l_j + 1)} (\hat{\bm{a}}^\dagger)^{\bm{k}} \ket{\bm{n} - \bm{l}}\nonumber\\
&=\prod_{j=1}^m
\begin{aligned}[t]
&\sqrt{n_j(n_j - 1) \cdots (n_j - l_j + 1)}\\
&\quad \sqrt{(n_j - l_j  + 1)(n_j - l_j + 2) \cdots (n_j - l_j + k_j)}\\
&\quad \ket{\bm{n} - \bm{l} + \bm{k}} 
\end{aligned}\nonumber\\
&= \prod_{j=1}^m \sqrt{\frac{n_j!}{(n_j - l_j)!} \frac{(n_j - l_j + k_j)!}{(n_j - l_j)!}} \ket{\bm{n} - \bm{l} + \bm{k}}\,,
\end{align}
and hence
\begin{align}\label{eq:norm-polyop-on-fockstate-bound-1}
\norm{(\hat{\bm{a}}^\dagger)^{\bm{k}} \, \hat{\bm{a}}^{\bm{l}} \ket{\bm{n}}} 
= \prod_{j=1}^m \sqrt{\frac{n_j!}{(n_j - l_j)!} \frac{(n_j - l_j + k_j)!}{(n_j - l_j)!}}\,,
\end{align}
where if $l_j > n_j$ for any $j$ the whole expression is zero.

We now use the bounds
\begin{align}
\frac{n_j!}{(n_j - l_j)!} &= n_j (n_j - 1) \cdots (n_j - l_j + 1)\nonumber\\
&\leq n_j^{l_j} \leq (n_j + 1)^{l_j}\nonumber\\[3pt]
&\leq (|\bm{n}| + 1)^{l_j}\,,
\end{align}
and
\begin{align}
&\frac{(n_j - l_j + k_j)!}{(n_j - l_j)!}\nonumber\\[5pt]
&= (n_j - l_j  + 1)(n_j - l_j + 2) \cdots (n_j - l_j + k_j)\nonumber\\[3pt]
&\leq (n_j - l_j + k_j)^{k_j} \leq (n_j +k_j + 1)^{k_j}\nonumber\\[3pt]
&\leq (|\bm{n}| + |\bm{k}| + 1)^{k_j}\nonumber\\[3pt]
&\leq (|\bm{n}| + 1)^{k_j}  (|\bm{k}| + 1)^{k_j}\,,
\end{align}
in \cref{eq:norm-polyop-on-fockstate-bound-1}, which yields
\begin{align}
\norm{(\hat{\bm{a}}^\dagger)^{\bm{k}} \, \hat{\bm{a}}^{\bm{l}} \ket{\bm{n}}} 
&\leq \prod_{j=1}^m  (|\bm{n}| + 1)^{(k_j + l_j)/2} (|\bm{k}| + 1)^{k_j/2}\nonumber\\
&= (|\bm{n}| + 1)^{(|\bm{k}|+|\bm{l}|)/2} (|\bm{k}| + 1)^{|\bm{k}|/2}\,.\label{eq:norm-polyop-on-fockstate-bound-2}
\end{align}
Thus, by taking the norm of \cref{eq:polyop-degree-d--normal-ordering}, and using the triangle inequality along with the obtained bound of \cref{eq:norm-polyop-on-fockstate-bound-2}, we obtain:
\begin{align}
\norm{P \ket{\bm{n}}} &\leq \sum_{\substack{\bm{k},\bm{l} \in \mathbb{N}^m\\|\bm{k}|+|\bm{l}|\leq d}} |c_{\bm{k},\bm{l}}| \ \norm{(\hat{\bm{a}}^\dagger)^{\bm{k}} \, \hat{\bm{a}}^{\bm{l}} \ket{\bm{n}}} \nonumber\\
&\leq \sum_{\substack{\bm{k},\bm{l} \in \mathbb{N}^m\\|\bm{k}|+|\bm{l}|\leq d}} |c_{\bm{k},\bm{l}}| \ (|\bm{n}| + 1)^{(|\bm{k}|+|\bm{l}|)/2} (|\bm{k}| + 1)^{|\bm{k}|/2}\nonumber\\
&\leq \sum_{\substack{\bm{k},\bm{l} \in \mathbb{N}^m\\|\bm{k}|+|\bm{l}|\leq d}} |c_{\bm{k},\bm{l}}| \ (|\bm{n}| + 1)^{d/2} (d + 1)^{d/2}\,,
\end{align}
which establishes \cref{eq:norm-polyop-on-fockstate-bound-main-claim} by letting $$C_P := (d+ 1)^{d/2} \sum_{\substack{\bm{k},\bm{l} \in \mathbb{N}^m\\|\bm{k}|+|\bm{l}|\leq d}} |c_{\bm{k},\bm{l}}|\,.$$
\end{proof}
\end{lemma}

\begin{lemma}\label{lem:norm-polyop-on-finitesupfockstate-intermsofnumberoperator}
If $P$ is a polynomial operator of degree $\leq d$ in the canonical operators $a_k,a_k^\dagger$, there exists a constant $C'_P > 0$ such that for all states $\ket{\psi}$ of finite support in the Fock basis:
\begin{equation}\label{eq:norm-polyop-on-finitesupfockstate-intermsofnumberoperator-mainclaim}
\norm{P \ket{\psi}} \leq C'_P \norm{ (\hat{N}_{\rm tot} + \id)^{d/2} \ket{\psi} }\,.
\end{equation}
\begin{proof}
First, let us note that for a fixed $\bm{n} \in \mathbb{N}^m$, the vector $P \ket{\bm{n}}$ remains of finite support in the Fock basis (as $P$ is a finite-degree polynomial), and thus writes as
\begin{equation}\label{eq:norm-polyop-on-finitesupfockstate-intermsofnumberoperator--proof-eq1}
P \ket{\bm{n}} = \sum_{\bm{k} \in S_P^{\bm{n}}} b_{\bm{k}} \ket{\bm{k}}
\end{equation}
for some nonzero scalars $b_{\bm{k}} \in \mathbb{C}$ and some finite set $S_P^{\bm{n}} \subset \mathbb{N}^m$.
More precisely, since $P$ is of degree $\leq d$, the components $b_{\bm{k}}$ in \cref{eq:norm-polyop-on-finitesupfockstate-intermsofnumberoperator--proof-eq1} must be such that each $k_j$ is at most raised/lowered by $d$, i.e. $|n_j - k_j| \leq d$ for all $j$, which implies that $S_P^{\bm{n}} \subseteq \{\bm{k} \in \mathbb{Z}^m \,|\, n_j - d \leq k_j \leq n_j + d \}$.
Thus, by potentially setting extra coefficients $b_{\bm{k}}$ to zero, we may rewrite \cref{eq:norm-polyop-on-finitesupfockstate-intermsofnumberoperator--proof-eq1} as
\begin{equation}\label{eq:norm-polyop-on-finitesupfockstate-intermsofnumberoperator--proof-eq2}
P \ket{\bm{n}} = \sum_{\bm{\delta} \in S_P} b_{\bm{n} + \bm{\delta}} \ket{\bm{n} + \bm{\delta}}\,,
\end{equation}
where $S_P := \{ -d, \dots, d \}^m \subset \mathbb{Z}^m$, which is a finite set independent of $\bm{n}$.

Let now $\ket{\psi}$ be a finite-support state, which writes as
\begin{equation}
\ket{\psi} = \sum_{\bm{n} \in S_\psi} c_{\bm{n}} \ket{\bm{n}}
\end{equation}
for some scalars $c_{\bm{n}} \in \mathbb{C}$ and some finite set $S_\psi \subset \mathbb{N}^m$.
We have:
\allowdisplaybreaks
\begin{align}
&\norm{P \ket{\psi}}^2\\
&= \norm{\sum_{\bm{n} \in S_\psi} c_{\bm{n}} P \ket{\bm{n}}}^2 \\
&= \norm{ \sum_{\bm{n} \in S_\psi} \sum_{\bm{\delta} \in S_P} c_{\bm{n}} b_{\bm{n} + \bm{\delta}} \ket{\bm{n} + \bm{\delta}} }^2 \\
&= \norm{ \sum_{\bm{\delta} \in S_P} \left(\sum_{\bm{n} \in S_\psi} c_{\bm{n}} b_{\bm{n} + \bm{\delta}} \ket{\bm{n} + \bm{\delta}}\right) }^2 \\
&\leq |S_P| \sum_{\bm{\delta} \in S_P} \norm{ \left(\sum_{\bm{n} \in S_\psi} c_{\bm{n}} b_{\bm{n} + \bm{\delta}} \ket{\bm{n} + \bm{\delta}}\right) }^2 \label{eq:norm-polyop-on-finitesupfockstate-intermsofnumberoperator--proof-chain-cite1}\\
&= |S_P| \sum_{\bm{\delta} \in S_P}  \left(\sum_{\bm{n} \in S_\psi} |c_{\bm{n}} b_{\bm{n} + \bm{\delta}}|^2 \right) \label{eq:norm-polyop-on-finitesupfockstate-intermsofnumberoperator--proof-chain-cite2}\\
&= |S_P| \sum_{\bm{n} \in S_\psi} |c_{\bm{n}}|^2 \sum_{\bm{\delta} \in S_P}  |b_{\bm{n} + \bm{\delta}}|^2  \\
&= |S_P| \sum_{\bm{n} \in S_\psi} |c_{\bm{n}}|^2 \ \norm{P \ket{\bm{n}}}^2 \label{eq:norm-polyop-on-finitesupfockstate-intermsofnumberoperator--proof-chain-cite3}\\
&\leq |S_P| \sum_{\bm{n} \in S_\psi} |c_{\bm{n}}|^2  \ \left(C_P (|\bm{n}| + 1)^{d/2}\right)^2 \label{eq:norm-polyop-on-finitesupfockstate-intermsofnumberoperator--proof-chain-cite4}\\
&= |S_P| (C_P)^2 \sum_{\bm{n} \in S_\psi} |c_{\bm{n}}|^2  \ (|\bm{n}| + 1)^{d} \\
&= |S_P| (C_P)^2 \sum_{\bm{n} \in S_\psi} |c_{\bm{n}}|^2  \ \norm{ (\hat{N}_{\rm tot} + \id)^{d/2} \ket{\bm{n}} }^2 \\
&= |S_P| (C_P)^2 \sum_{\bm{n} \in S_\psi}  \ \norm{ (\hat{N}_{\rm tot} + \id)^{d/2} \, c_{\bm{n}} \ket{\bm{n}} }^2 \\
&= |S_P| (C_P)^2 \norm{ \sum_{\bm{n} \in S_\psi} (\hat{N}_{\rm tot} + \id)^{d/2} \, c_{\bm{n}} \ket{\bm{n}} }^2 \label{eq:norm-polyop-on-finitesupfockstate-intermsofnumberoperator--proof-chain-cite5}\\
&= |S_P| (C_P)^2 \norm{ (\hat{N}_{\rm tot} + \id)^{d/2} \sum_{\bm{n} \in S_\psi}  c_{\bm{n}} \ket{\bm{n}} }^2 \\
&= |S_P| (C_P)^2 \norm{ (\hat{N}_{\rm tot} + \id)^{d/2} \ket{\psi} }^2\,,
\end{align}
which establishes \cref{eq:norm-polyop-on-finitesupfockstate-intermsofnumberoperator-mainclaim} by letting $C'_P := \sqrt{|S_P|} C_P\,.$

In the above steps: 
in \cref{eq:norm-polyop-on-finitesupfockstate-intermsofnumberoperator--proof-chain-cite1} we used the triangle inequality for the norm followed by the norm inequality $\norm{\cdot}_1 \leq \sqrt{|S_P|} \norm{\cdot}_2$ on $\mathbb{R}^{|S_P|}$; 
in \cref{eq:norm-polyop-on-finitesupfockstate-intermsofnumberoperator--proof-chain-cite2,eq:norm-polyop-on-finitesupfockstate-intermsofnumberoperator--proof-chain-cite3} we use Pythagoras' theorem;
in \cref{eq:norm-polyop-on-finitesupfockstate-intermsofnumberoperator--proof-chain-cite4} we applied \cref{lem:norm-polyop-on-fockstate-bound};
and in \cref{eq:norm-polyop-on-finitesupfockstate-intermsofnumberoperator--proof-chain-cite5} we used the fact that the operator $(\hat{N}_{\rm tot} + \id)^{d/2}$ is diagonal in the Fock basis, along with Pythagoras' theorem.
\end{proof}
\end{lemma}

\begin{lemma}\label{lem:norm-polyop-power-on-fockstate-bound}
If $P$ is a polynomial operator of degree $\leq d$ in the canonical operators $a_k,a_k^\dagger$, there exists a constant $C'_P > 0$ such that for all $k \in \mathbb{N}$ and all $\bm{n} \in \mathbb{N}^m$:
\begin{equation}\label{eq:norm-polyop-power-on-fockstate-bound-mainclaim}
\norm{P^k \ket{\bm{n}}} \leq (C'_P)^k \, \big(|\bm{n}| + k d + 1\big)^{kd/2}\,.
\end{equation}
\begin{proof}
We will use the same constant $C'_P$ as the one provided by \cref{lem:norm-polyop-on-finitesupfockstate-intermsofnumberoperator}.
Fix $\bm{n} \in \mathbb{N}^m$, and let us prove \cref{eq:norm-polyop-power-on-fockstate-bound-mainclaim} by induction over $k$.
For $k=0$, the equation holds (as both sides are $1$).

Suppose now that \cref{eq:norm-polyop-power-on-fockstate-bound-mainclaim} is true for some $k \in \mathbb{N}$.
We have:
\begin{align}
&\norm{ P^{k+1} \ket{\bm{n}} }\\
&= \norm{ P \left( P^k \ket{\bm{n}} \right) }\\
&\leq C'_P \norm{ (\hat{N}_{\rm tot} + \id)^{d/2} \left( P^k \ket{\bm{n}} \right) }\label{eq:norm-polyop-power-on-fockstate-bound--proof-chain-cite1}\\
&\leq C'_P \big(|\bm{n}| + k d + 1\big)^{d/2}  \norm{ P^k \ket{\bm{n}} }\label{eq:norm-polyop-power-on-fockstate-bound--proof-chain-cite2}\\
&\leq C'_P \big(|\bm{n}| + k d + 1\big)^{d/2} \ (C'_P)^k \, \big(|\bm{n}| + k d + 1\big)^{kd/2}\label{eq:norm-polyop-power-on-fockstate-bound--proof-chain-cite3}\\
&= (C'_P)^{(k+1)} \, \big(|\bm{n}| + k d + 1\big)^{(k+1)d/2}\\
&\leq (C'_P)^{(k+1)} \, \big(|\bm{n}| + (k+1) d + 1\big)^{(k+1)d/2}\,,
\end{align}
which concludes the induction.
In the above steps: 
in \cref{eq:norm-polyop-power-on-fockstate-bound--proof-chain-cite1} we applied \cref{lem:norm-polyop-on-finitesupfockstate-intermsofnumberoperator} to the finite-support state $P^k \ket{\bm{n}}$;
in \cref{eq:norm-polyop-power-on-fockstate-bound--proof-chain-cite2} we used the fact that $P^k$ is of degree $\leq k d$, along with the fact that the operator $(\hat{N}_{\rm tot} + \id)^{d/2}$ is diagonal in the Fock basis;
and in \cref{eq:norm-polyop-power-on-fockstate-bound--proof-chain-cite3} we used the induction hypothesis, i.e. \cref{eq:norm-polyop-power-on-fockstate-bound-mainclaim}.
\end{proof}
\end{lemma}

Having established \cref{lem:norm-polyop-power-on-fockstate-bound}, we can now prove our desired lemma about exponential-decay states:
\begin{lemma}\label{lem:exp-decay-states-satisfy-analytic-vector-condition}
If $\ket{\psi}$ is an exponential-decay state, then for all polynomial operators $P$ of degree $\leq 2$, there exists $t>0$ such that
\begin{equation}\label{eq:exp-decay-states-satisfy-analytic-vector-condition--proof-mainclaim}
\sum_{k=0}^\infty \frac{t^k}{k!} \norm{P^k \ket{\psi}} < +\infty\,.
\end{equation}
\begin{proof}
Let $\ket{\psi} = \sum_{\bm{n} \in \mathbb{N}^m} c_{\bm{n}} \ket{\bm{n}} \in \fock$, and let $P$ be a polynomial operator of degree $\leq d$.
Let us first show that if $d=2$, then for all $\beta>0$, there exists $t,A,C>0$ such that the inequality
\begin{equation}\label{eq:exp-decay-states-satisfy-analytic-vector-condition--proof-midclaim}
\sum_{k=0}^\infty \frac{t^k}{k!} \norm{P^k \ket{\psi}} \leq \sum_{\bm{n}\in\mathbb{N}^m} |c_{\bm{n}}| \big[A e^{\beta |\bm{n}|} + C\big]
\end{equation}
holds (with the terms being potentially infinite). Indeed, we have:
\begin{align}
&\sum_{k=0}^\infty \frac{t^k}{k!} \norm{P^k \ket{\psi}}\\
&= \sum_{k=0}^\infty \frac{t^k}{k!} \norm{ \sum_{\bm{n}\in\mathbb{N}^m} c_{\bm{n}} P^k \ket{\bm{n}} }\\
&\leq \sum_{k=0}^\infty \frac{t^k}{k!} \sum_{\bm{n}\in\mathbb{N}^m} |c_{\bm{n}}| \, \norm{ P^k \ket{\bm{n}} }\label{eq:exp-decay-states-satisfy-analytic-vector-condition--proof-midclaim-chain-cite1}\\
&\leq \sum_{k=0}^\infty \frac{t^k}{k!} \sum_{\bm{n}\in\mathbb{N}^m} |c_{\bm{n}}| \, (C'_P)^k \, \big(|\bm{n}| + k d + 1\big)^{\frac{kd}{2}}\label{eq:exp-decay-states-satisfy-analytic-vector-condition--proof-midclaim-chain-cite2}\\
&= \sum_{\bm{n}\in\mathbb{N}^m} |c_{\bm{n}}| \, \sum_{k=0}^\infty \frac{(t C'_P)^k}{k!} \,\big(|\bm{n}| + k d + 1\big)^{\frac{kd}{2}}\,,
\end{align}
where in \cref{eq:exp-decay-states-satisfy-analytic-vector-condition--proof-midclaim-chain-cite1} we used the triangle inequality
and in \cref{eq:exp-decay-states-satisfy-analytic-vector-condition--proof-midclaim-chain-cite2} we applied \cref{lem:norm-polyop-power-on-fockstate-bound}.
But
\begin{align}
&  \frac{(t C'_P)^k}{k!} \, \big(|\bm{n}| + k d + 1 \big)^{\frac{kd}{2}}\\
&\leq \frac{(t C'_P)^k}{k!} \, 2^{(\frac{kd}{2} -1)} \big((|\bm{n}| + 1)^{\frac{kd}{2}} + (k d)^{\frac{kd}{2}} \big)\label{eq:exp-decay-states-satisfy-analytic-vector-condition--proof-midclaim-chain-cite3}\\
&\leq \frac{(t C'_P)^k}{k!} \, 2^{\frac{kd}{2}} \big((|\bm{n}| + 1)^{\frac{kd}{2}} + (k d)^{\frac{kd}{2}} \big)\\
&= \frac{(t C'_P (2(|\bm{n}|+1))^{\frac{d}{2}})^k}{k!} \ + \  (t C'_P (2d)^{\frac{d}{2}})^k \, \frac{(k^{\frac{d}{2}})^k}{k!}\\
&\leq \frac{(t C'_P (2(|\bm{n}|+1))^{\frac{d}{2}})^k}{k!} \ + \  (t C'_P (2d)^{\frac{d}{2}})^k \, e^{(k^{d/2})}\label{eq:exp-decay-states-satisfy-analytic-vector-condition--proof-midclaim-chain-cite4}\\
&= \frac{(2 \, t \, C'_P \, (|\bm{n}|+1))^k}{k!} \ + \  (4 \, t \, C'_P \, e)^k\,,\label{eq:exp-decay-states-satisfy-analytic-vector-condition--proof-midclaim-chain-cite5}
\end{align}
where in \cref{eq:exp-decay-states-satisfy-analytic-vector-condition--proof-midclaim-chain-cite3} we used the inequality $(x+y)^\alpha \leq 2^{\alpha -1} (x^\alpha + y^\alpha)$ valid for all reals $x,y,\alpha \geq 0$, in \cref{eq:exp-decay-states-satisfy-analytic-vector-condition--proof-midclaim-chain-cite4} we used the inequality $x^k/k! \leq e^x$, and in \cref{eq:exp-decay-states-satisfy-analytic-vector-condition--proof-midclaim-chain-cite5} the specialization to $d=2$ enables further simplification.
Hence,
\begin{align}
&\sum_{k=0}^\infty \frac{t^k}{k!} \norm{P^k \ket{\psi}}\\
&\leq \sum_{\bm{n}\in\mathbb{N}^m} |c_{\bm{n}}|  \left[ \sum_{k=0}^\infty \frac{(2 \, t \, C'_P \, (|\bm{n}|+1))^k}{k!} \ + \ \sum_{k=0}^\infty (4 \, t \, C'_P \, e)^k \right]\\
&= \sum_{\bm{n}\in\mathbb{N}^m} |c_{\bm{n}}|  \left[ e^{(2  t  C'_P  (|\bm{n}|+1))} \ + \ \sum_{k=0}^\infty (4 \, t \, C'_P \, e)^k \right]\,.\\
\end{align}
Consider now an arbitrary $\beta > 0$. 
Note that for all $t < (4 \, C'_P \, e)^{-1}$, the innermost sum is finite, 
and for all $t \leq \beta \, (2 \, C'_P)^{-1}$, we have $e^{(2  t  C'_P  (|\bm{n}|+1))} \leq e^{\beta (|\bm{n}|+1)}$.

Thus, by picking any fixed $t>0$ satisfying $$t < \min((4 \, C'_P \, e)^{-1}, \beta \, (2 \, C'_P)^{-1})\,,$$ it follows that the inequality
\begin{equation}\label{eq:exp-decay-states-satisfy-analytic-vector-condition--proof-midclaim-versionwithplusone}
\sum_{k=0}^\infty \frac{t^k}{k!} \norm{P^k \ket{\psi}} \leq \sum_{\bm{n}\in\mathbb{N}^m} |c_{\bm{n}}| \big[e^{\beta (|\bm{n}|+1)} + C\big]
\end{equation}
holds, with $C := \sum_{k=0}^\infty (4 \, t \, C'_P \, e)^k < +\infty$. This establishes \cref{eq:exp-decay-states-satisfy-analytic-vector-condition--proof-midclaim} by letting $A := e^\beta$.

Now, if $\ket{\psi}$ is an exponential-decay state, it means that there exists $a > 0$ such that $|c_{\bm{n}}| \in \mathcal{O}(e^{- a |\bm{n}|})$, i.e. such that for all $\bm{n} \in \mathbb{N}^m$, $|c_{\bm{n}}| < B e^{- a |\bm{n}|}$ for some constant $B > 0$.
Applying the claim of \cref{eq:exp-decay-states-satisfy-analytic-vector-condition--proof-midclaim} with $\beta := a/2 > 0$, we obtain that there exists $t,A,C>0$ such that
\begin{align}
&\sum_{k=0}^\infty \frac{t^k}{k!} \norm{P^k \ket{\psi}}\\
&\leq \sum_{\bm{n}\in\mathbb{N}^m} |c_{\bm{n}}| \big[A e^{\frac{a}{2} |\bm{n}|} + C\big] \\
&\leq \sum_{\bm{n}\in\mathbb{N}^m} B e^{- a |\bm{n}|} \big[A e^{\frac{a}{2} |\bm{n}|} + C\big] \\
&= A B \sum_{\bm{n}\in\mathbb{N}^m} e^{-\frac{a}{2} |\bm{n}|} \ + \ B C \sum_{\bm{n}\in\mathbb{N}^m} e^{- a |\bm{n}|} \,.
\end{align}
Since the values of both infinite sums are finite (since $a>0$), we have thus established \cref{eq:exp-decay-states-satisfy-analytic-vector-condition--proof-mainclaim}.
\end{proof}
\end{lemma}

Having established \cref{lem:exp-decay-states-satisfy-analytic-vector-condition}, our claim (ii) that all exponential-decay states are analytic vectors of $\phi$ now readily follows by application of the above criterion for analytic vectors of strongly-continuous representations.

\subsubsection{Density-operator adapted representation $\Phi$, and smooth/analytic vectors of $\Phi$}
While the representation $\phi$ encodes the "ket picture" dynamics, it is useful to consider another representation (of the same extended metaplectic group $G_0$) induced from it, $\Phi$, which encodes the dynamics in the "density operator picture". It is simply the representation $\Phi:G_0 \to \uni(B_2(\fock))$ defined as the composition of $\phi$ with the adjoint representation of the full-unitary group on Fock space $\operatorname{Ad}:\uni(\fock) \to \uni(B_2(\fock))$; i.e. $\Phi := \operatorname{Ad} \circ \phi$, where $\operatorname{Ad}(U) \cdot A := U A U^\dagger$ for $A \in B_2(\fock)$.
The representation $\Phi$ is, by composition, strongly continuous (indeed one may verify that the map $\operatorname{Ad}$ is continuous with respect to the strong operator topologies on $\uni(\fock)$ and on $\uni(B_2(\fock))$ \cite[Cor. 2.29]{Hofmann-LieTheory-2007}). Therefore, the listed properties of strongly-continuous representations apply for it as well.
Analogously to how Schwartz states and finite-support states were shown to be, respectively, smooth and analytic vectors of $\phi$, it holds that Schwartz operators and finite-support operators (in the Fock basis of $B_2(\fock)$) are respectively smooth and analytic vectors of $\Phi$.
Indeed, with first the above sufficient condition for smooth vectors for $\Phi$, using exact analogous arguments as we did for $\phi$, gives this time that all Schwartz operators are smooth vectors of $\Phi$. 
We hence know that for $X \in \mathfrak{g}$, $\Phi'(X) \cdot S$ is well-defined for all Schwartz operators $S$, i.e. that $(\Phi(e^{tX}) \cdot S - S)/t$ converges as $t \to 0$ to a certain point in $B_2(\fock)$, denoted as $\Phi'(X) \cdot S$.
We can then verify that it acts on $S$ by commutation with the original representation, i.e. we claim that
\begin{equation}\label{eq:Phiprime-action-by-commutation-claim}
\Phi'(X) \cdot S = [\phi'(X), S]
\end{equation}
for all Schwartz operators $S$.

To see this, denote for fixed $X \in \mathfrak{g}$, Schwartz operator $S$, and Schwartz state $\ket{\psi}$, the quantity 
\begin{equation}
\beta(t) := \frac{(\Phi(e^{t X})\cdot S)\ket{\psi} - S\ket{\psi}}{t}\,.
\end{equation}
Let us denote $U_t := \phi(e^{t X})$.

We have
\begin{align}
\beta(t) &= \frac{U_{t} S U_{-t}\ket{\psi} - S\ket{\psi}}{t}\\[12pt]
&=\frac{U_{t} S U_{-t}\ket{\psi} - U_{t} S \ket{\psi}}{t}\nonumber\\[6pt]
&\qquad+ \frac{U_{t} S \ket{\psi} - S\ket{\psi}}{t}\\[12pt]
&:= \ \beta_1(t) \ \ +\ \  \beta_2(t)\label{eq:proof-adjoint-deriv-action--beta-decomp-beta1-beta2}\,.
\end{align}

To ease the notation, let us also denote $\alpha(t) := \frac{U_{-t}\ket{\psi} - \ket{\psi}}{t}$ and $\alpha_0 := \phi'(-X)\ket{\psi}$.
Consider now the first term $\beta_1(t) = U_{t} S \ \alpha(t)$. We claim that it tends to $S \alpha_0 = S \phi'(-X)\ket{\psi} = - S \phi'(X) \ket{\psi}$ when $t \to 0$.
Indeed, we have
\begin{align}
&  \norm{\beta_1(t) - S \alpha_0}\\
&=    \norm{ U_{t} S \, \alpha(t) - U_{t} S \alpha_0  +  U_{t} S \alpha_0 - S \alpha_0 }\\
&=    \norm{ U_{t} S ( \alpha(t) - \alpha_0)  +  (U_{t} - \id) S \alpha_0 }\\
&\leq \norm{ U_{t} S ( \alpha(t) - \alpha_0) }  +  \norm{ (U_{t} - \id) S \alpha_0 }\label{eq:proof-adjoint-deriv-action--beta1-cite1}\\
&= \norm{ S ( \alpha(t) - \alpha_0) }  +  \norm{ (U_{t} - \id) S \alpha_0 }\,,\label{eq:proof-adjoint-deriv-action--beta1-cite2}
\end{align}
where we used the triangle inequality in \cref{eq:proof-adjoint-deriv-action--beta1-cite1} and the unitarity of $U_t$ in \cref{eq:proof-adjoint-deriv-action--beta1-cite2}.
But since $\alpha(t) \to \alpha_0$ (as $\ket{\psi}$ is a Schwartz state and hence a smooth vector for $\phi$), and since the Schwartz operator $S$ is bounded, we get that $\norm{ S ( \alpha(t) - \alpha_0) } \to 0$;
and by strong continuity of $\phi$, we get that $\norm{ (U_{t} - \id) S \alpha_0 } \to 0$.
The claimed limit for $\beta_1(t)$ is thus established.

As for the second term $\beta_2(t)$, it tends to $\phi'(X) S \ket{\psi}$ when $t \to 0$ (since $S \ket{\psi}$ is a Schwartz state and hence a smooth vector for $\phi$).

Therefore, by taking the limit $t \to 0$ in \cref{eq:proof-adjoint-deriv-action--beta-decomp-beta1-beta2}, we obtain that: 
\begin{equation}
\beta(t) \xrightarrow[t \to 0]{}  - S \phi'(X) \ket{\psi} + \phi'(X) S \ket{\psi} = [\phi'(X), S] \ket{\psi}\,;
\end{equation}
and since $\beta(t)$ also tends to $(\Phi'(X) \cdot S) \ket{\psi}$ when $t \to 0$, we have thus shown that the two maps $\ket{\psi} \mapsto (\Phi'(X) \cdot S) \ket{\psi}$ and  $\ket{\psi} \mapsto [\phi'(X), S] \ket{\psi}$ coincide on Schwartz states. 
Furthermore, we already know that $(\Phi'(X) \cdot S) \in B_2(\fock)$, and it also holds that $[\phi'(X), S]=\phi'(X) S - S \phi'(X) \in B_2(\fock)$ (since $S$ is a Schwartz operator and $\phi'(X)$ is a polynomial operator, see \cref{sec:SM-Schwartz-spaces-and-proof-of-lsc}).
Hence these two maps are defined and continuous on the whole Hilbert space $\fock$ (as operators in $B_2(\fock)$ are bounded), and agree on a dense subspace (Schwartz states). This implies that these two maps coincide everywhere, i.e. we have established that at the level of operators in $B_2(\fock)$ that the equality of \cref{eq:Phiprime-action-by-commutation-claim} holds for any $X \in \mathfrak{g}$ and Schwartz operator $S$.

Let us denote by $\operatorname{ad}_P$ the map realizing commutation with a polynomial operator $P$, i.e. $\operatorname{ad}_P(S) := [P,S]$, so that we have established that
\begin{equation}\label{eq:Phiprime-action-by-commutation-ad-notation}
\Phi'(X) \cdot S = \operatorname{ad}_{\phi'(X)}(S)
\end{equation}
for Schwartz operators $S$.

We can besides obtain the analogous version of \cref{lem:exp-decay-states-satisfy-analytic-vector-condition}, for exponential-decay \textit{operators} instead of states:
\begin{lemma}\label{lem:exp-decay-operators-satisfy-analytic-vector-condition}
If $S$ is an exponential-decay operator in $B_2(\fock)$, then for all polynomial operators $P$ of degree $\leq 2$, there exists $t>0$ such that
\begin{equation}\label{eq:exp-decay-operators-satisfy-analytic-vector-condition--proof-mainclaim}
\sum_{k=0}^\infty \frac{t^k}{k!} \norm{\operatorname{ad}_P^k(S)}_{B_2(\fock)} < +\infty\,,
\end{equation}
where the norm is the Hilbert-Schmidt norm on $B_2(\fock)$.
\begin{proof}
Using the isomorphism of Hilbert spaces $B_2(\fock) \simeq \fock \otimes \overline{\fock}$ given by the identification (see e.g. \cite[Prop. 2.6.9]{Kadison-FundamentalsTheory-1983})
\begin{equation}\label{eq:exp-decay-operators-satisfy-analytic-vector-condition--def-isom}
\ketbra{\bm{n}}{\bm{n}'} \mapsto \ket{\bm{n}} \otimes \ket{\overline{\bm{n}'}}\,,
\end{equation}
the Hilbert space of operators $B_2(\fock)$ may be viewed as a $2m$-mode Fock space. Denote this map by $\mathtt{v}: B_2(\fock) \to \fock \otimes \overline{\fock}$.
Furthermore, one can verify, either using properties of this isometry, or more directly using \cref{eq:norm-polyop-on-finitesupfockstate-intermsofnumberoperator--proof-eq2}, that the operator $\operatorname{ad}_P$ acts as $\mathtt{v}( \operatorname{ad}_P(S) ) = \widetilde{P}( \widetilde{S} )$, where 
\begin{equation}\label{eq:exp-decay-operators-satisfy-analytic-vector-condition--def-Ptilde}
\widetilde{P} := P \otimes \id - \id \otimes \overline{P}
\end{equation}
and
\begin{equation}
\widetilde{S} := \mathtt{v}(S)\,.
\end{equation}

It then follows that the corresponding action of $\operatorname{ad}_P^k$ on $S$ is given by $(\widetilde{P})^k (\widetilde{S})$. 
Therefore, since $\mathtt{v}$ is an isometry, we have
\begin{equation}\label{eq:exp-decay-operators-satisfy-analytic-vector-condition--norm-equality}
\norm{\operatorname{ad}_P^k(S)}_{B_2(\fock)} = \norm{\mathtt{v}(\operatorname{ad}_P^k(S))} = \norm{(\widetilde{P})^k (\widetilde{S})}\,.
\end{equation}
But as $P$ is a polynomial operator on $\fock$ of degree $\leq 2$, so is the operator $\widetilde{P}$ (\cref{eq:exp-decay-operators-satisfy-analytic-vector-condition--def-Ptilde}).
Furthermore, as $S$ is an exponential-decay operator in $B_2(\fock)$, the vector $\widetilde{S}$ is an exponential-decay state in the $2m$-mode Fock space (as they have the same coefficients in their respective Fock bases, due to \cref{eq:exp-decay-operators-satisfy-analytic-vector-condition--def-isom}).

Therefore, \cref{lem:exp-decay-states-satisfy-analytic-vector-condition} readily applies to $\widetilde{P}$ and $\widetilde{S}$, giving that there exists $t>0$ such that
\begin{equation}
\sum_{k=0}^\infty \frac{t^k}{k!} \norm{(\widetilde{P})^k (\widetilde{S})} < +\infty\,,
\end{equation}
which along with \cref{eq:exp-decay-operators-satisfy-analytic-vector-condition--norm-equality} establishes \cref{eq:exp-decay-operators-satisfy-analytic-vector-condition--proof-mainclaim}.
\end{proof}
\end{lemma}

This \cref{lem:exp-decay-operators-satisfy-analytic-vector-condition}, along with \cref{eq:Phiprime-action-by-commutation-ad-notation}, now establishes that all exponential-decay operators are analytic vectors of $\Phi$ (by the above criterion for analytic vectors of strongly-continuous representations).

\subsection{Proof of Theorem \ref{thm:orbit-dim-maintext}}
The main text's \cref{thm:orbit-dim-maintext} is now a direct consequence of \cref{thm:SC-repr-orbit-structure} and some of the other seen results:
\paragraph{Proof of \cref{thm:orbit-dim-maintext}}
Let $G_{\fock}$ be one of the $m$-mode quantum optical unitary group considered in \cref{tab:Lie-algebra-bases}, and recall that $G_0$ denotes the $m$-mode extended metaplectic group. By \cref{lem:optical-groups-are-images-of-closed-subgroups-under-EMR}, there exists a closed subgroup $G$ of $G_0$ such that $\phi(G) = G_{\fock}$.

Beware that $G_{\fock}$ is denoted as $G$ in the main text, while here $G$ denotes the associated closed subgroup of the extended metaplectic Lie group.

Since $G$ is a closed subgroup of the Lie group $G_0$, the closed subgroup theorem applies and yields the following two facts (as already used in the proof of \cref{thm:SC-repr-orbit-structure}). 
First, $G$ is a Lie subgroup of $G_0$, and thus a Lie group itself and moreover an embedded submanifold of $G_0$, meaning in particular that its topology is the subset topology inherited from $G_0$. Therefore, the restricted representation $\left.\phi\right|_{G} :G\to \uni(\fock)$ is still strongly-continuous.
Second, the Lie algebra of $G$ is given by
\begin{equation}
\mathfrak{g} = \{ X \in \mathfrak{g}_0 \,|\, \forall t \in \mathbb{R}\ e^{tX} \in G \}\,.
\end{equation}
Let now $\mathcal{B}_{\mathfrak{g}} := \{ X_1,\dots,X_d \}$ ($d:=\dim(\mathfrak{g})$) be the basis of $\mathfrak{g}$ given by $\phi'^{-1}(i H_k)$ where $i\{H_1,\dots,H_d\}$ is the Lie algebra basis of $\mathcal{B}_{\mathfrak{g}_{\fock}}$ given in the main text's \cref{tab:Lie-algebra-bases} for $G_{\rm GO}$ (that the $X_k$'s are unique and form a Lie algebra basis of $\mathfrak{g}$ follows from the fact that $\phi'$ is an injective Lie algebra homomorphism \cite{Folland-HarmonicAnalysis-1989}).

Consider now the ket picture. 
For any Schwartz state $\ket{\psi}$, it is a smooth vector for the representation $\phi:G_0\to \uni(\fock)$, and thus also a smooth vector for its restriction $\left.\phi\right|_{G}$ (as $G$ is an embedded submanifold of $G_0$).
We can therefore apply \cref{thm:SC-repr-orbit-structure} to this representation $\left.\phi\right|_{G}$ and to the point $\ket{\psi}$, which establishes the manifold structure on $\orb_G(\ket{\psi}) = \{ U \ket{\psi} \,|\, U \in G_{\fock} \}$ and the formula
\begin{equation}
\dim(\orb_{G}(\ket{\psi})) = \rank_{\mathbb{R}}(\{ i H_1 \ket{\psi},\dots, i H_d \ket{\psi}\})\,.
\end{equation}
Since the map $\ket{\varphi} \mapsto i \ket{\varphi}$ is a linear isomorphism on $\fock$, the above rank is also equal to that in the claimed \cref{eq:concrete-rank-formula-ket}.

The density operator case is addressed analogously, by this time applying \cref{thm:SC-repr-orbit-structure} to the strongly continuous representation $\left.\Phi\right|_{G}$ and to a Schwartz operator $\rho$ (which is a smooth vector for this representation), yielding the manifold structure on $\orb_G(\rho) = \{ U \rho U^\dagger \,|\, U \in G_{\fock} \}$ and the formula
\begin{equation}
\dim(\orb_{G}(\rho)) = \rank_{\mathbb{R}}(\{ [i H_1,\rho],\dots,[i H_d,\rho]\})\,,
\end{equation}
which, since $[i H_k,\rho] = i[H_k,\rho]$ and since the map $A \mapsto i A$ is a linear isomorphism on $B_2(\fock)$, establishes \cref{eq:concrete-rank-formula-density} as well.$\qed$

\section{Formalization and proof of Theorem \ref{thm:VQC-orbit-dim-maintextinformal}}\label{subsec:proof-thm-2}

The goal of this section is to prove the technical \cref{prop:rank-of-analyticcomp-of-cstrank-and-nonsmooth}, which is our main needed result after which we can prove our formalized \cref{thm:VQC-bounds} (which generalizes \cref{thm:VQC-orbit-dim-maintextinformal} from the main text).
This way of proving our result through the route of \cref{prop:rank-of-analyticcomp-of-cstrank-and-nonsmooth} allows us to bypass the potential mismatch of topologies mentioned in \cref{eq:remark-unkown-if-topologies-coincide-or-not-except-PLO}, at the small cost of assuming that the states are exponential-decay states (so that the orbit map is analytic). 

What \cref{prop:rank-of-analyticcomp-of-cstrank-and-nonsmooth} mainly relies on are the concepts of smooth maps, analytic maps, and maps of constant rank. Crucially though, since our statement of the desired \cref{thm:VQC-bounds} involves some maps taking values in a (potentially infinite-dimensional) Hilbert space, we need such concepts to be meaningful between infinite-dimensional spaces.
This is not an issue, as maps of this type can be defined on quite general classes of spaces. Notably, they can be readily defined on a class of spaces known as (smooth and analytic) \textit{Banach manifolds}, for which we refer the reader to \cite{Abraham-ManifoldsTensor-1988} for an introductory treatment. Let us also note that the basic property (which we will use) of the chain rule readily holds as well for smooth maps between Banach manifolds \cite[Thm. 3.3.7]{Abraham-ManifoldsTensor-1988} (as it does for smooth maps between finite-dimensional manifolds).

As stated, the statements for our purpose involve maps whose domains and codomains are either finite-dimensional manifolds or (potentially infinite-dimensional) Hilbert spaces.
The former are particular cases of the latter, and the latter are particular cases of not just Banach manifolds, but in fact more specifically of \textit{Hilbert manifolds}. These are manifolds that are "modeled on" Hilbert spaces (i.e. informally, they "locally look like" a given Hilbert space). Hence, it will suffice for us in this section to consider maps between Hilbert manifolds instead of Banach manifolds (that will also somewhat simplify our proof of the intermediate \cref{lem:bamber-analytic-lemma-generalized-to-analytic-Hilbert-space-codomain}).

\begin{lemma}\label{lem:rank-of-smoothcomp-of-cstrank-and-nonsmooth}
Let $M_1$ be a finite-dimensional smooth manifold, and let $M_2$ and $M_3$ be (possibly infinite-dimensional) smooth Hilbert manifolds.
Suppose that $f:M_1 \to M_2$ and $g:M_2 \to M_3$ are maps such that $f$ is smooth and of constant rank $r_f$, and the composition $g \circ f:M_1 \to M_3$ is smooth (but $g$ need not be smooth).
Then, for all $p \in M_1$:
\begin{equation}
\rank( D(g \circ f)(p) ) \leq r_f\,.
\end{equation}
\begin{proof}
Denote $r:=r_f$ and $d_k:=\dim(M_k)$ ($k=1,2,3$). (Note that since the domain of $f$ is finite-dimensional, we have $r \leq d_1 < \infty$.)
Denote by $\mathcal{H}$ and $\mathcal{K}$ the Hilbert spaces on which $M_2$ and $M_3$ are modeled.

Fix a $p \in M_1$.
Since $f$ is of constant rank $r$, a version of the constant rank theorem valid for general smooth maps between Banach manifolds \cite[Prop. 3.5.16]{Abraham-ManifoldsTensor-1988}, specialized to the case where the domain is finite-dimensional, gives us the following:
\begin{itemize}
\item there exist subspaces $E_a^{(1)},E_b^{(1)} \subseteq \mathbb{R}^{d_1}$ and $E_a^{(2)},E_b^{(2)} \subseteq \mathcal{H}$ such that $\mathbb{R}^{d_1} = E_a^{(1)} \oplus E_b^{(1)}$, $\mathcal{H} = E_a^{(2)} \oplus E_b^{(2)}$, and $\dim(E_a^{(1)}) = \dim(E_a^{(2)}) = r$;
\item there exists a local chart $(\mathcal{U}^{(1)},\alpha^{(1)})$ around $p$ and a local chart $(\mathcal{U}^{(2)},\alpha^{(2)})$ around $f(p)$, such that $\alpha^{(1)}(p) = 0$, $\alpha^{(2)}(f(p)) = 0$, $f(\mathcal{U}^{(1)}) \subseteq \mathcal{U}^{(2)}$, such that the open sets $U^{(1)} := \alpha^{(1)}(\mathcal{U}^{(1)})$ and $U^{(2)} := \alpha^{(2)}(\mathcal{U}^{(2)})$ are of the form $U^{(1)} = U_a^{(1)} \oplus U_b^{(1)}$ and $U^{(2)} = U_a^{(2)} \oplus U_b^{(2)}$ for some open sets $U_\gamma^{(j)} \subseteq E_\gamma^{(j)}$, $\gamma=a,b$ and $j=1,2$ (where we denoted $W_a \oplus W_b := \{w_a + w_b \ |\  w_a \in W_a, w_b \in W_b\}$ for subsets $W_a,W_b$ of a common vector space);
\item there exists a linear isomorphism $J_a: E_a^{(1)} \to E_a^{(2)}$ such that $J_a(U_a^{(1)}) = U_a^{(2)}$;
\end{itemize}
such that the map 
$$\tilde{f} \,:=\, \alpha^{(2)} \circ f \circ (\alpha^{(1)})^{-1} \ \, :\, U_a^{(1)} \oplus U_b^{(1)} \to U_a^{(2)} \oplus U_b^{(2)}$$ 
acts as (for all $(x_a + x_b) \in U_a^{(1)} \oplus U_b^{(1)}$):
\begin{equation}\label{eq:rank-of-smoothcomp-of-cstrank-and-nonsmooth-proof-ftilde-action}
\tilde{f}(x_a + x_b) = J_a(x_a) + 0\,.
\end{equation}

Since $M_3$ is a smooth Hilbert manifold, there exists a local chart $(\mathcal{U}^{(3)},\alpha^{(3)})$ around $g(f(p))$, which (by reducing the previous $\mathcal{U}^{(1)},\mathcal{U}^{(2)}$) can be assumed to be such that $g(\mathcal{U}^{(2)}) \subseteq \mathcal{U}^{(3)}$, and we let $\tilde{g} := \alpha^{(3)} \circ g \circ (\alpha^{(2)})^{-1}$ be the associated coordinate map of $g$.
Applying then $\tilde{g}$ to \cref{eq:rank-of-smoothcomp-of-cstrank-and-nonsmooth-proof-ftilde-action} gives:
\begin{equation}\label{eq:rank-of-smoothcomp-of-cstrank-and-nonsmooth-proof-gtilde--ftilde-action}
(\tilde{g} \circ \tilde{f})(x_a + x_b) = \tilde{g}(J_a(x_a) + 0)\,.
\end{equation}
Letting $i:E_a^{(1)} \to \mathbb{R}^{d_1} = E_a^{(1)} \oplus E_b^{(1)}$ denote the inclusion map $x_a \mapsto (x_a + 0)$, %
consider the map $\tilde{G}: U_a^{(1)} \to \mathcal{K}$ defined as $\tilde{G} := (\tilde{g} \circ \tilde{f}) \circ \left.i\right|_{U_a^{(1)}}$.
(For concreteness, by \cref{eq:rank-of-smoothcomp-of-cstrank-and-nonsmooth-proof-gtilde--ftilde-action} this map acts as $\tilde{G}(x_a) = \tilde{g}(J_a(x_a) + 0)\,.$)

Since $g \circ f:M_1 \to M_3$ is smooth, the map $(\tilde{g} \circ \tilde{f})$ is also smooth (since it is a local representation of $g \circ f$ by charts on the two smooth Hilbert manifolds $M_1$ and $M_3$) and thus (since $i$ is smooth) the map $\tilde{G}$ is smooth (even though $\tilde{g}$ might not be).

Hence, since the derivative of a smooth map at any point has a rank that is upper-bounded by the dimension of the map's domain, we get for the map $\tilde{G}$ (whose domain is the open set $U_a^{(1)} \subseteq E_a^{(1)}$ and hence has $\dim(U_a^{(1)}) = \dim(E_a^{(1)}) = r$) that for all $x_a \in U_a^{(1)}$:
\begin{equation}\label{eq:rank-of-smoothcomp-of-cstrank-and-nonsmooth-proof-Gtilde-rank-inequality}
\rank( D(\tilde{G})(x_a) ) \leq r\,.
\end{equation}

But since $(\tilde{g} \circ \tilde{f}) = \tilde{G} \circ \pi$, with $\pi: U_a^{(1)} \oplus U_b^{(1)} \to U_a^{(1)}$ the projection map $(x_a + x_b) \mapsto x_a$, and because $\pi$ is a smooth submersion and composing on the right with a smooth submersion does not change the rank (by the chain rule), we have for all $q \in U^{(1)} = U_a^{(1)} \oplus U_b^{(1)}$:
\begin{equation}
\rank( D(\tilde{g} \circ \tilde{f})( q ) ) = \rank( D(\tilde{G})(\pi( q )) )\,,
\end{equation}
and hence in particular at the point $\tilde{p}:=\alpha^{(1)}(p) \in U^{(1)}$:
\begin{equation}
\rank( D(\tilde{g} \circ \tilde{f})( \tilde{p} ) ) = \rank( D(\tilde{G})(\pi(\tilde{p})) )\,.
\end{equation}
Combining the above equality with \cref{eq:rank-of-smoothcomp-of-cstrank-and-nonsmooth-proof-Gtilde-rank-inequality} gives
\begin{equation}\label{eq:rank-of-smoothcomp-of-cstrank-and-nonsmooth-proof-gtilde--ftilde-rank-inequality}
\rank( D(\tilde{g} \circ \tilde{f})( \tilde{p} ) ) \leq r\,,
\end{equation}
but since $(g \circ f) = (\alpha^{(3)})^{-1} \circ (\tilde{g} \circ \tilde{f}) \circ \alpha^{(1)}$ and composing with the charts again does not change the rank, we have $\rank( D(\tilde{g} \circ \tilde{f})( \tilde{p} ) ) = \rank( D(g \circ f)( p ) )$, and hence with \cref{eq:rank-of-smoothcomp-of-cstrank-and-nonsmooth-proof-gtilde--ftilde-rank-inequality} we have established that for all $p \in M_1$:
\begin{equation}
\rank( D(g \circ f)( p ) ) \leq r\,.
\end{equation}

\end{proof}
\end{lemma}

\begin{lemma}[{{\cite[Prop. B.4]{Bamber-HowMany-1985}}}]\label{lem:bamber-analytic-lemma}
Let $\mathcal{O}$ be a nonempty open-connected subset of $\mathbb{R}^a$, let $\varphi:\mathcal{O} \to \mathbb{R}^b$ be a smooth map, and denote $r_{\rm max} := \max_{y \in \mathcal{O}} \rank( D\varphi(y) )$ and
\begin{equation}
\mathcal{O}_{\rm gen} := \{ x \in \mathcal{O} \,|\, \rank( D\varphi(x) ) = r_{\rm \max} \}\,.
\end{equation}
If $\varphi$ is (real-)analytic, then the open subset $\mathcal{O}_{\rm gen}$ is of full Lebesgue measure inside $\mathcal{O}$.
\end{lemma}

\begin{lemma}\label{lem:bamber-analytic-lemma-generalized-to-analytic-Hilbert-space-codomain}
Let $\mathcal{O}$ be a nonempty open-connected subset of $\mathbb{R}^a$, let $\mathcal{H}$ be a (possibly infinite-dimensional) Hilbert space, let $\varphi:\mathcal{O} \to \mathcal{H}$ be a smooth map. Denote $r_{\rm max} := \max_{y \in \mathcal{O}} \rank( D\varphi(y) )$ and
\begin{equation}
\mathcal{O}_{\rm gen} := \{ x \in \mathcal{O} \,|\, \rank( D\varphi(x) ) = r_{\rm \max} \}\,.
\end{equation}
If $\varphi$ is (real-)analytic, then the open subset $\mathcal{O}_{\rm gen}$ is of full Lebesgue measure inside $\mathcal{O}$.
\begin{proof}
Fix an $x_0 \in \mathcal{O}_{\rm gen}$.
Let us consider the subspace $\mathcal{F} := \Im(D\varphi(x_0)) = \spa_\mathbb{R}(\{ D\varphi(x_0) e_1,\dots, D\varphi(x_0) e_a \})$, where $e_1,\dots,e_a$ is any basis of $\mathbb{R}^a$.
By definition of $x_0$, we have $\dim(\mathcal{F}) = r_{\rm max}$.
Since $\mathcal{F}$ is a finite-dimensional subspace of $\mathcal{H}$, it is closed and hence we can consider the orthogonal projection $P_\mathcal{F}:\mathcal{H} \to \mathcal{F}$ that projects onto it. This operator $P_\mathcal{F}$ is a bounded linear operator, and hence is continuous.
Next, since $\{ D\varphi(x_0) e_j\}_{j=1}^a$ is a spanning family of $\mathcal{F}$, we can extract from it a basis of $\mathcal{F}$, which we denote $\{v_1,\dots,v_{r_{\rm max}}\} \subseteq \mathcal{F}$.
Consider now the linear map $L:\mathcal{F} \to \mathbb{R}^{r_{\rm max}}$ defined as $L(v_i) := (0\ 0\ \cdots\ 0 \ 1 \ 0 \ \cdots\ 0)^\intercal$ ($1$ in $i$-th slot, $i=1,\dots,r_{\rm max}$), and extension by linearity.
Since $L$ is a linear map between finite-dimensional normed spaces, it is also continuous.
Lastly, define the map $f:\mathcal{O} \to \mathbb{R}^{r_{\rm max}}$ by $f := L \circ P_\mathcal{F} \circ \varphi$.
Since $\varphi$ is analytic, and since $P_\mathcal{F}$ and $L$ are so as well (being both continuous linear maps between vector spaces), the map $f$ is also analytic.

Introduce the notations
\begin{align*}
r_{\rm max}^{(\varphi)} &:= r_{\rm max}\,,\\
r_{\rm max}^{(f)} &:= \max_{y \in \mathcal{O}} \rank( D f(y) )\,,\\
\mathcal{O}_{\rm gen}^{(\varphi)} &:= \mathcal{O}_{\rm gen}\,,\\
\mathcal{O}_{\rm gen}^{(f)} &:= \{ x \in \mathcal{O} \,|\, \rank( D f(x) ) = r_{\rm max}^{(f)} \}\,.
\end{align*}

Note that we have
\begin{align}
D f (x_0) &=
D (L \circ P_\mathcal{F} \circ \varphi)\\
&= D(L \circ P_\mathcal{F}) \circ D\varphi(x_0)\\
&= (L \circ P_\mathcal{F}) \circ D\varphi(x_0)\\ 
&= L \circ D\varphi(x_0)\,,\label{eq:bamber-analytic-lemma-generalized-to-analytic-Hilbert-space-codomain--proof-cite1}
\end{align}
where the second equality is by the chain rule, the third is because $(L \circ P_\mathcal{F})$ is a linear map, and the last is because $P_\mathcal{F}$ is the projector onto $\Im(D\varphi(x_0))$.
Since $L$ is a linear isomorphism, it follows from \cref{eq:bamber-analytic-lemma-generalized-to-analytic-Hilbert-space-codomain--proof-cite1} that $\rank(D f (x_0)) = \rank(D\varphi(x_0)) = r_{\rm max}^{(\varphi)}$, and thus $r_{\rm max}^{(f)} \geq r_{\rm max}^{(\varphi)}$.
But for all $x \in \mathcal{O}$, we also have from the chain rule that
\begin{align}
\rank( D f (x) ) &=
\rank( D (L \circ P_\mathcal{F} \circ \varphi) ) \\
&= \rank( D(L \circ P_\mathcal{F}) \circ D\varphi(x) )\\
&\leq \rank( D\varphi(x) )\,.\label{eq:bamber-analytic-lemma-generalized-to-analytic-Hilbert-space-codomain--proof-cite2}
\end{align}
It follows from \cref{eq:bamber-analytic-lemma-generalized-to-analytic-Hilbert-space-codomain--proof-cite2} that $r_{\rm max}^{(f)} \leq r_{\rm max}^{(\varphi)}$, and hence we have established that
\begin{equation}\label{eq:bamber-analytic-lemma-generalized-to-analytic-Hilbert-space-codomain--proof-cite3}
r_{\rm max}^{(f)} = r_{\rm max}^{(\varphi)}\,.
\end{equation}
From \cref{eq:bamber-analytic-lemma-generalized-to-analytic-Hilbert-space-codomain--proof-cite2,eq:bamber-analytic-lemma-generalized-to-analytic-Hilbert-space-codomain--proof-cite3}, it also follows that
\begin{equation}\label{eq:bamber-analytic-lemma-generalized-to-analytic-Hilbert-space-codomain--proof-cite4}
\mathcal{O}_{\rm gen}^{(f)} \subseteq \mathcal{O}_{\rm gen}^{(\varphi)} \subseteq \mathcal{O}\,.
\end{equation}
By now applying \cref{lem:bamber-analytic-lemma} to the analytic map $f:\mathcal{O} \to \mathbb{R}^{r_{\rm max}}$, we obtain that $\mathcal{O}_{\rm gen}^{(f)}$ is of full Lebesgue measure inside $\mathcal{O}$. But by \cref{eq:bamber-analytic-lemma-generalized-to-analytic-Hilbert-space-codomain--proof-cite4}, this implies that $\mathcal{O}_{\rm gen}^{(\varphi)}$ is also of full Lebesgue measure inside $\mathcal{O}$, which establishes the lemma. (The claim that $\mathcal{O}_{\rm gen}^{(f)}$ is open is a consequence of the lower semi-continuity of the rank.)
\end{proof}
\end{lemma}

\begin{lemma}[Locally generic integer-valued function on a connected space is globally generic]\label{lem:local-generic-on-connected-implies-global-generic}
Let $X$ be a (nonempty) topological space.
Let $\mu$ be a strictly positive (finite) measure on $X$ (i.e. a finite measure defined on the Borel $\sigma$-algebra of $X$ and that assigns positive measure to all nonempty open sets of $X$).
Let $(A_i)_{i \in I}$ be an (at most) countable family of (nonempty) open sets of $X$ that cover $X$ ($X = \cup_{i \in I} A_i$).
Let $f : X \to \{0,1,\dots,N\}$ be a bounded (for some $N\geq0$) integer-valued function that is \textit{lower-semicontinuous}, and denote $r_{\rm max} := \max_{x \in X} f(x)$.
Let $(F_i)_{i \in I}$ be open sets of $X$ such that for all $i \in I$: (i) $F_i \subseteq A_i$, (ii) $F_i$ is of full-measure inside $A_i$ ($\mu(F_i) = \mu(A_i)$), and (iii) $\left.f\right|_{F_i}$ is constant.
Denote their union by  $F := \cup_{i \in I} F_i$.
The following holds:
\begin{itemize}
    \item $F$ is an open set of full-measure inside $X$.
    \item If $X$ is \textit{connected}, then $\left.f\right|_{F}$ is constant and equal to $r_{\rm max}$.
\end{itemize}

\begin{proof}
First, $F$ is open, since it is a union of open sets. Let us show that it has full measure inside $X$, i.e. that
\begin{equation}\label{eq:countable-partition-and-full-measure-subsets-in-each-proof-goal}
\mu\Big( X \setminus \bigcup_{i \in I} F_i \Big) = 0\,.
\end{equation}
First, if $x \in X \setminus \bigcup_{i \in I} F_i$, there exists an $i \in I$ such that $x \in A_i$ (since the $A_i$'s cover $X$), and since $x \notin F_i$, we have $x \in A_i \setminus F_i$; thus we have the inclusion of sets
\begin{equation}\label{eq:countable-partition-and-full-measure-subsets-in-each-proof-setinclusion}
X \setminus \bigcup_{i \in I} F_i \subseteq \bigcup_{i \in I} (A_i \setminus F_i)\,.
\end{equation}
Hence, we get
\begin{align}
\mu\Big( X \setminus \bigcup_{i \in I} F_i \Big)
&\leq \mu\Big( \bigcup_{i \in I} (A_i \setminus F_i) \Big)\\
&\leq \sum_{i \in I} \mu( A_i \setminus F_i )\\
&= 0\,,
\end{align}
where the first inequality is by the set inclusion \cref{eq:countable-partition-and-full-measure-subsets-in-each-proof-setinclusion}, the second inequality is by countable subadditivity of measures, and the last equality is because by assumption each $F_i$ is of full-measure inside $A_i$.
This establishes \cref{eq:countable-partition-and-full-measure-subsets-in-each-proof-goal} (since measures are non-negative).

Next, we move on to proving the second claim. For each $i \in I$, let us denote $r_i := \max_{x \in A_i} f(x)$. Note that the constant value that $f$ takes on $F_i$ (by assumption) must in fact be $r_i$. Indeed, for all $x \in F_i$, we have $f(x) \leq r_i$ (since the latter is a maximum over all $A_i$), and if we had $f(x) < r_i$ for all $x \in F_i$, then letting $G_i:= \{ x \in A_i \,|\, f(x) = r_i \}$ (which is open by lower-semicontinuity of $f$), we would have that $F_i$ and $G_i$ are two \textit{disjoint} nonempty subsets of $A_i$ with the first being full-measure and the second being open, which is absurd --- as it would successively imply:
\begin{align}
\mu( A_{i} )
&\geq \mu(  F_{i} \cup G_i )\\
&= \mu( F_{i} ) + \mu( G_i )\\
&> \mu( F_{i} )\\
&= \mu( A_{i} )\,.
\end{align}

Consider now two arbitrary indices $i_a,i_b \in I$. We claim that
\begin{equation}\label{local-generic-on-connected-implies-global-generic--intersection-implies-equal-maximum-claim}
A_{i_a} \cap A_{i_b} \neq \emptyset \implies r_{i_a} = r_{i_b}\,.
\end{equation}
To show this, let $S := A_{i_a} \cap A_{i_b}$ denote such a nonempty, open set, and suppose by contradiction that $r_{i_a} \neq r_{i_b}$, say $r_{i_a} < r_{i_b}$ (the other case would be treated symmetrically).
Note that we must have that $F_{i_b}$ intersects with $S$, since like it was argued above, we cannot have two disjoint nonempty subsets (here of $A_{i_b}$) where one is full-measure ($F_{i_b}$) and the other one is open ($S$).
Taking now an element $x \in F_{i_b} \cap S$, we obtain both that $f(x) = r_{i_b} > r_{i_a}$ (since $x \in F_{i_b}$), and $f(x) \leq r_{i_a}$ (since $x \in A_{i_a} \supseteq S$), which is a contradiction, thus establishing \cref{local-generic-on-connected-implies-global-generic--intersection-implies-equal-maximum-claim}.

Suppose now that $X$ is connected. We claim that in this case, all the values $r_i$ ($i \in I$) must be equal, to a certain value $r$.
Indeed, let $i_a$ and $i_b$ be two arbitrary indices in $I$ ($A_{i_a}$ and $A_{i_b}$ may be disjoint). By the connectedness of $X$, the intersection graph of its cover $(A_i)_{i \in I}$ is connected, which implies that there exists a finite "path" of open sets $A_i$ going from $A_{i_a}$ to $A_{i_b}$ and maintaining nonempty intersections between consecutive sets; explicitly, there exists a finite number of indices $i_1,\dots,i_n \in I$ such that $i_1 = i_a$, $i_n = i_b$, and $A_{i_j} \cap A_{i_{j+1}} \neq \emptyset$ for all $j=1,\dots,n-1$. Applying \cref{local-generic-on-connected-implies-global-generic--intersection-implies-equal-maximum-claim} to these $n-1$ consecutive intersecting pairs $(A_{i_j},A_{i_{j+1}})$ yields
\begin{align}
r_{i_a} = r_{i_1} = r_{i_2} = \dots = r_{i_n} = r_{i_b}\,,
\end{align}
which shows that all the values $r_i$ are equal to a common value, which we denote by $r$. We have thus at this stage shown that $\left.f\right|_{F}$ is constant and equal to $r$.

It remains to show that $r = r_{\rm max}$. By contradiction, if we had $r < r_{\rm max}$, then analogously as above, by
letting $G:= \{ x \in X \,|\, f(x) = r_{\rm max} \}$, we would have $F$ be full-measure inside $X$, $G$ be nonempty and open (by lower-semicontinuity of $f$), and $F$ and $G$ be disjoint, which (as we saw) is absurd.
This establishes the second claim of the lemma.

\end{proof} 
\end{lemma}

\begin{lemma}\label{lem:analytic-maps-from-euclidean-domain-to-analytic-Hilbert-manifolds}
Let $\mathcal{O}$ be a nonempty open-connected subset of $\mathbb{R}^a$, and let $M$ be a (possibly infinite-dimensional) analytic Hilbert manifold. Let $\varphi:\mathcal{O} \to M$ be a smooth map. Denote $r_{\rm max} := \max_{y \in \mathcal{O}} \rank( D\varphi(y) )$ and
\begin{equation}
\mathcal{O}_{\rm gen} := \{ x \in \mathcal{O} \,|\, \rank( D\varphi(x) ) = r_{\rm \max} \}\,.
\end{equation}
If $\varphi$ is (real-)analytic, then $\mathcal{O}_{\rm gen}$ is an open subset of $\mathcal{O}$ of full Lebesgue measure.
\begin{proof}
Denote by $\mathcal{H}$ the Hilbert space on which $M$ is modeled.
Since $\varphi$ is an analytic map, it gives us the following property: for any $x\in \mathcal{O}$, there exists an open-connected neighborhood $U_x$ of $x$, and a local chart $(V_x,\alpha_x)$ around $\varphi(x)$ such that $\varphi(U_x) \subseteq V_x$ and such that the map $\alpha_x \circ \left.\varphi\right|_{U_x}: U_x \to \mathcal{H}$ is analytic. (The connectedness of $U_x$ can indeed always be assumed, by possibly taking smaller neighborhoods.)

Now consider the open cover $\mathcal{O} = \cup_{x \in \mathcal{O}} U_x$ of $\mathcal{O}$. Since in $\mathbb{R}^a$ every open cover admits an (at most) countable subcover (the Lindelöf property), there exists an (at most) countable subcover of the above cover of $\mathcal{O}$.
That is, there exists a countable family $(x_i)_{i \in I}$ of points of $\mathcal{O}$ such that $\mathcal{O} = \cup_{i \in I} U_{x_i}$.
For every $i$, the map $\alpha_{x_i} \circ \left.\varphi\right|_{U_{x_i}}: U_{x_i} \to \mathcal{H}$ satisfies the assumptions of \cref{lem:bamber-analytic-lemma-generalized-to-analytic-Hilbert-space-codomain}, which yields that $U_{x_i,\rm gen}$ is an open subset of full Lebesgue measure inside $U_{x_i}$.
We are now exactly in the setting of \cref{lem:local-generic-on-connected-implies-global-generic}, where
on the connected space $\mathcal{O}$ we have a function $f(x) := \rank( D\varphi(x) )$ that is lower-semicontinuous and constant on full-Lebesgue-measure subsets $U_{x_i,\rm gen}$ of every open set $U_{x_i}$ of an (at most) countable cover of $\mathcal{O}$ (and the Lebesgue measure from $\mathbb{R}^a$ is indeed strictly positive on $\mathbb{R}^a$, and so is its restriction to the open set $\mathcal{O}$). This lemma thus yields that $\mathcal{U}:=\cup_{i \in I} U_{x_i,\rm gen}$ is an open set of full Lebesgue measure inside $\mathcal{O}$, and that on it, the function $f$ is constant and equal to $r_{\rm max}$, which gives us that
\begin{equation}\label{eq:analytic-maps-from-euclidean-domain-to-analytic-Hilbert-manifolds--proof-obtained-set-inclusion}
\mathcal{U} \subseteq \mathcal{O}_{\rm gen}\,.
\end{equation}
Note that $\mathcal{O}_{\rm gen}$ is open due to the lower-semicontinuity of $f$. Since $\mathcal{U}$ is of full Lebesgue measure inside $\mathcal{O}$, it follows from \cref{eq:analytic-maps-from-euclidean-domain-to-analytic-Hilbert-manifolds--proof-obtained-set-inclusion} that $\mathcal{O}_{\rm gen}$ is also of full Lebesgue measure inside $\mathcal{O}$.

\end{proof}
\end{lemma}

We can finally establish the desired technical proposition of this section:
\begin{proposition}\label{prop:rank-of-analyticcomp-of-cstrank-and-nonsmooth}
Let $\mathcal{O}$ be a nonempty open-connected subset of $\mathbb{R}^a$, and let $M_2$ and $M_3$ be two (possibly infinite-dimensional) analytic Hilbert manifolds.
Let $f:\mathcal{O} \to M_2$ and $g:M_2 \to M_3$ be maps such that $f$ is smooth and the composition $g \circ f$ is smooth (but $g$ need not be smooth).
Denote 
\begin{align}
r_{\rm max}^{(f)} &:= \max_{y \in \mathcal{O}} \rank( D f(y) )\,,\\
r_{\rm max}^{(g \circ f)} &:= \max_{y \in \mathcal{O}} \rank( D (g \circ f)(y) )\,,
\end{align}
and
\begin{equation}
\mathcal{O}_{\rm gen}^{(f)} := \{ x \in \mathcal{O} \,|\, \rank( D f(x) ) = r_{\rm \max}^{(f)} \}\,.
\end{equation}
The following holds:
\begin{itemize}
\item[i)] If $f$ is (real-)analytic, then
\begin{equation}\label{eq:rank-of-analyticcomp-of-cstrank-and-nonsmooth-claim-i}
\forall x \in \mathcal{O}_{\rm gen}^{(f)} \qquad \rank(D(g \circ f)(x)) \leq r_{\rm max}^{(f)}\,.
\end{equation} 
\item[ii)] If both $f$ and $(g \circ f)$ are (real-)analytic, then it even holds that
\begin{equation}\label{eq:rank-of-analyticcomp-of-cstrank-and-nonsmooth-claim-ii}
r_{\rm max}^{(g \circ f)} \leq r_{\rm max}^{(f)}\,.
\end{equation} 
\end{itemize}
\begin{proof}
Since $f$ is analytic, by \cref{lem:analytic-maps-from-euclidean-domain-to-analytic-Hilbert-manifolds} the open subset $\mathcal{O}_{\rm gen}^{(f)}$ is of full Lebesgue measure inside $\mathcal{O}$, and so $\left.f\right|_{\mathcal{O}_{\rm gen}^{(f)}}: \mathcal{O}_{\rm gen}^{(f)} \to M_2$ is a smooth map of constant rank $r_{\rm max}^{(f)}$.
Since $g\circ f$ is smooth, its restriction to $\mathcal{O}_{\rm gen}^{(f)}$ is also smooth. Hence, we can apply \cref{lem:rank-of-smoothcomp-of-cstrank-and-nonsmooth} to the maps $\left.f\right|_{\mathcal{O}_{\rm gen}^{(f)}}$ and $g$, which exactly yields the claim of \cref{eq:rank-of-analyticcomp-of-cstrank-and-nonsmooth-claim-i}.

For ii), if $g \circ f$ is also analytic, then we can now apply again \cref{lem:analytic-maps-from-euclidean-domain-to-analytic-Hilbert-manifolds}, this time to $g \circ f$. It yields that the open subset
\begin{equation}
\mathcal{O}_{\rm gen}^{(g \circ f)} := \{ x \in \mathcal{O} \,|\, \rank( D (g \circ f)(x) ) = r_{\rm \max}^{(g \circ f)} \}
\end{equation}
is of full Lebesgue measure inside $\mathcal{O}$, and so $\left.(g \circ f)\right|_{\mathcal{O}_{\rm gen}^{(g \circ f)}}: \mathcal{O}_{\rm gen}^{(g \circ f)} \to M_3$ is of constant rank $r_{\rm max}^{(g \circ f)}$.

As $\mathcal{O}_{\rm gen}^{(f)}$ and $\mathcal{O}_{\rm gen}^{(g \circ f)}$ are both full-measure subsets of $\mathcal{O}$, their intersection $\mathcal{S} := \mathcal{O}_{\rm gen}^{(f)} \cap \mathcal{O}_{\rm gen}^{(g \circ f)}$ is again a full-measure subset of $\mathcal{O}$.
Since $\mathcal{S} \subseteq \mathcal{O}_{\rm gen}^{(g \circ f)}$, the map $(g \circ f)$ is still of constant rank on $\mathcal{S}$; denote this rank by $r_0$. If $r_0$ were different from $r_{\rm max}^{(g \circ f)}$, then the subsets $\mathcal{O}_{\rm gen}^{(g \circ f)}$ and $\mathcal{S}$ would be disjoint, which would imply
\begin{align}
\mu( \mathcal{O} )
&\geq \mu( \mathcal{O}_{\rm gen}^{(g \circ f)} \cup \mathcal{S} )\\
&= \mu( \mathcal{O}_{\rm gen}^{(g \circ f)} ) + \mu( \mathcal{S} )\\
&= 2\mu( \mathcal{O} )\,,
\end{align}
which is a contradiction since $\mu( \mathcal{O} )>0$. Hence, $r_0 = r_{\rm max}^{(g \circ f)}$.

We can now conclude by letting $x$ be any point in the nonempty $\mathcal{S}$. Since on one hand $x \in \mathcal{O}_{\rm gen}^{(g \circ f)}$, we have
\begin{equation}
\rank(D(g \circ f)(x)) = r_{\rm max}^{(g \circ f)}\,;
\end{equation}
but since on the other hand $x \in \mathcal{O}_{\rm gen}^{(f)}$, we have by \cref{eq:rank-of-analyticcomp-of-cstrank-and-nonsmooth-claim-i} that
\begin{equation}
\rank(D(g \circ f)(x)) \leq r_{\rm max}^{(f)}\,.
\end{equation}
Combining the two gives \cref{eq:rank-of-analyticcomp-of-cstrank-and-nonsmooth-claim-ii}.
\end{proof}
\end{proposition}

\begin{theorem}[Number of directions explored by bosonic variational quantum circuits bounded by orbit dimension]\label{thm:VQC-bounds}
Consider an $m$-mode unitary variational quantum circuit (VQC) of the form
\begin{equation}\label{eq:VQC-bounds-unitary-VQC-ansatz-def}
U(\theta_1,\dots,\theta_p) = W_{p} e^{-i \theta_p H_{p}} W_{p-1} \cdots W_{1} e^{-i \theta_1 H_{1}} W_{0}\,,
\end{equation}
where $H_1,\dots,H_p$ are polynomial Hamiltonians of degree $\leq 2$ in the canonical operators, and $W_0,\dots,W_p \in G_{\rm GO}$ are fixed $m$-mode Gaussian unitaries.

Let $\Theta$ be a nonempty open-connected subset of $\mathbb{R}^p$.
Let $\ket{\psi}$ and $\rho$ be fixed initial states in $\fock$ and $B_2(\fock)$, respectively.

Consider the associated variational output state maps $\ket{\psi}_{\rm out}: \Theta \to \fock$ and $\rho_{\rm out}: \Theta \to B_2(\fock)$, i.e. the maps $\ket{\psi}_{\rm out}(\bm{\theta}) := U(\bm{\theta})\ket{\psi}$ and $\rho_{\rm out}(\bm{\theta}) := U(\bm{\theta}) \rho U(\bm{\theta})^\dagger$.
Denote 
\begin{align}
r_{\rm max}^{(\ket{\psi}_{\rm out})} &:= \max_{\bm{\theta} \in \Theta} \rank( D \ket{\psi}_{\rm out}(\bm{\theta}) )\,,\\
r_{\rm max}^{(\rho_{\rm out})} &:= \max_{\bm{\theta} \in \Theta} \rank( D \rho_{\rm out}(\bm{\theta}) )\,.
\end{align}
Let $G_{\fock}$ be the subgroup of $\uni(\fock)$ generated by the unitaries $W_0,\dots,W_p$ and by the exponentials $e^{-i \theta H_k}$ for all $\theta \in \mathbb{R}$ and $k=1,\dots,p$, denote by $\phi$ the $m$-mode extended metaplectic representation, and let $G := \phi^{-1}(G_{\fock})$ be the associated subgroup of the $m$-mode extended metaplectic group.
The following holds:
\begin{itemize}
\item If $\ket{\psi}$ is an exponential-decay state, then
\begin{equation}\label{eq:VQC-bounds-goal-ket}
r_{\rm max}^{(\ket{\psi}_{\rm out})} \leq \dim(\orb_{\bar{G}}(\ket{\psi}))\,.
\end{equation}
\item If $\rho$ is an exponential-decay operator, then
\begin{equation}
r_{\rm max}^{(\rho_{\rm out})} \leq \dim(\orb_{\bar{G}}(\rho))\,.
\end{equation}
\end{itemize}
In the above, ${\bar{G}}$ denotes the closure of $G$ (inside the extended metaplectic group), $\orb_{\bar{G}}(\ket{\psi})$ and $\orb_{\bar{G}}(\rho)$ denote the orbits of $\ket{\psi}$ and $\rho$ under the representations $\left.\phi\right|_{{\bar{G}}} :{\bar{G}}\to \uni(\fock)$ and $\operatorname{Ad} \circ \left.\phi\right|_{{\bar{G}}} :{\bar{G}}\to \uni(B_2(\fock))$ of ${\bar{G}}$, and their dimensions are in the sense of their smooth manifold structure provided by \cref{thm:SC-repr-orbit-structure}.

Lastly, for all four groups $G_{\fock}$ considered in the main text (c.f. \cref{tab:Lie-algebra-bases}), we have $\bar{G} = G$ (by \cref{lem:optical-groups-are-images-of-closed-subgroups-under-EMR}).

\begin{proof}
First, let us be precise about the definition of the group $G_{\fock}$ that we introduced in the theorem's statement: it is the smallest subgroup of $\uni(\fock)$ that contains the unitaries $W_0,\dots,W_p$, and $e^{-i \theta H_k}$ for all $\theta \in \mathbb{R}$, $k=1,\dots,p$.
Recall that $G_0$ denotes the $m$-mode extended metaplectic Lie group, $\mathfrak{g}_0$ its Lie algebra, and $\phi: G_0 \to \uni(\fock)$ the $m$-mode extended metaplectic representation.
We then define, as in the statement of the theorem, the preimage:
\begin{equation}\label{eq:VQC-bounds--proof-cite0}
G := \phi^{-1}(G_{\fock})\,.
\end{equation}
From the fact that $\phi$ is a representation and that $G_{\fock}$ is a subgroup, it follows that $G$ is a subgroup of $G_0$. Note that the subgroup $G$ may not be closed in $G_0$, and hence may not be a Lie subgroup of $G_0$.
Consider now the closure $\bar{G}$ of $G$ (inside $G_0$). Note that since $G_0$ is a Lie group hence in particular a topological group, and since the closure of a subgroup in a topological group is still a subgroup (e.g. \cite[Chap. 3, Prop. 6]{Husain-IntroductionTopological-2018}), it holds that $\bar{G}$ is still a subgroup of $G_0$.
Therefore, $\bar{G}$ is a closed subgroup of the Lie group $G_0$, hence by the closed subgroup theorem it follows (as already used in the proof of \cref{thm:SC-repr-orbit-structure}) that $\bar{G}$ is a Lie subgroup of $G_0$ (and thus a Lie group itself) and that its Lie algebra can be identified as
\begin{equation}\label{eq:VQC-bounds--proof-cite1}
\mathfrak{g} = \{ X \in \mathfrak{g}_0 \,|\, \forall t \in \mathbb{R}\ e^{tX} \in \bar{G} \}\,.
\end{equation}

Consider the unitary ansatz $U(\bm{\theta})$ of \cref{eq:VQC-bounds-unitary-VQC-ansatz-def}.
First, note that as $\left.\phi\right|_{G} :G\to G_{\fock}$ is surjective (by \cref{eq:VQC-bounds--proof-cite0}), each $W_k \in G_{\fock}$ writes as $W_k = \phi(S_k)$ for some $S_k \in G \subseteq \bar{G}$. Second, since each $-i H_k$ is a skew-Hermitian polynomial operator of degree $\leq 2$, and since we know that the derived representation is surjective onto such operators (c.f. \cref{sec:SM-EMRep-and-proof-of-orbit-structure}), we have
\begin{equation}\label{eq:VQC-bounds--proof-cite2}
-i H_k = \phi'(X_k)
\end{equation}
for some $X_k \in \mathfrak{g}_0$. But since $\phi(e^{\theta_k X_k}) = e^{\theta_k \phi'(X_k)}  =e^{- i \theta_k H_k}$ (using \cref{eq:SC-repr-def-of-its-derivative-exponentiated-form} and \cref{eq:VQC-bounds--proof-cite2} respectively), and since the right-hand side is by definition an element of $G_{\fock}$, we get that $\phi(e^{\theta_k X_k}) \in G_{\fock}$, and hence (by \cref{eq:VQC-bounds--proof-cite0}) that $e^{\theta_k X_k} \in G \subseteq \bar{G}$ for all $\theta_k \in \mathbb{R}$. This implies (due to \cref{eq:VQC-bounds--proof-cite1}) that $X_k \in \mathfrak{g}$.
Therefore, we can re-write the parametrized unitary as:
\begin{align}
U(\bm{\theta})
&= W_{p} e^{-i \theta_p H_{p}} W_{p-1} \cdots W_{1} e^{-i \theta_1 H_{1}} W_{0}\nonumber\\
&= \phi(S_{p}) e^{\theta_p \phi'(X_{p})} \phi(S_{p-1}) \cdots \phi(S_{1}) e^{\theta_1 \phi'(X_{1})} \phi(S_{0})\nonumber\\
&= \phi(S_{p}) \phi(e^{\theta_p X_{p}}) \phi(S_{p-1}) \cdots \phi(S_{1}) \phi(e^{\theta_1 X_{1}}) \phi(S_{0})\nonumber\\
&= \phi( S_{p} e^{\theta_p X_{p}} S_{p-1}  \cdots  S_{1} e^{\theta_1 X_{1}} S_{0} )\nonumber\\
&= \phi( T(\bm{\theta}) )\,,
\end{align}
with
\begin{align}
T(\bm{\theta})
&:= S_{p} e^{\theta_p X_{p}} S_{p-1}  \cdots  S_{1} e^{\theta_1 X_{1}} S_{0}\,;
\end{align}
where the fourth equality is due to $\phi$ being a representation. And as the $X_k$'s and $S_k$'s in the above expression are all elements of $\mathfrak{g}$ and $\bar{G}$ respectively, the map $T$ takes values inside $\bar{G}$.
Because it is a composition of products and exponentials (which are both analytic operations on Lie groups), this map $T: \Theta \to \bar{G}$ is analytic.

Now, consider the case of the ket picture. Note again (as in the proof of \cref{thm:orbit-dim-maintext} at the end of \cref{sec:SM-EMRep-and-proof-of-orbit-structure}) that 
the representation $\left.\phi\right|_{\bar{G}} :\bar{G}\to \uni(\fock)$ of the Lie group $\bar{G}$ is strongly-continuous (since it is a restriction of the strongly-continuous map $\phi$); 
and that 
since $\ket{\psi}$ is an exponential-decay state, it is in particular a Schwartz state, hence a smooth vector for the representation $\phi:G_0\to \uni(\fock)$, and thus also a smooth vector for its restriction to the embedded subgroup $\bar{G}$ of $G_0$.
We can therefore apply \cref{thm:SC-repr-orbit-structure} to this representation $\left.\phi\right|_{\bar{G}}$ and to the point $\ket{\psi} \in \fock$, which gives us a smooth manifold structure on the orbit $\orb_{\bar{G}}(\ket{\psi})$.
We recall, as explained in proof of \cref{thm:SC-repr-orbit-structure}, that: this smooth manifold structure considered on $\orb_{\bar{G}}(\ket{\psi})$ is the unique one such that the map $F_{\ket{\psi}}: {\bar{G}}/{\bar{G}}_{\ket{\psi}} \to \orb_{\bar{G}}(\ket{\psi})$ is a diffeomorphism, and the map $\pi: {\bar{G}} \to {\bar{G}}/{\bar{G}}_{\ket{\psi}}$ is a smooth submersion. In fact, there exists also a unique analytic structure on the smooth manifold ${\bar{G}}/{\bar{G}}_{\ket{\psi}}$ that makes the map $\pi$ analytic (see e.g. \cite[Chap. 1]{Varadarajan-LieGroups-1984}).

We now decompose the map $\ket{\psi}_{\rm out}$ in two ways (recall also that $f_{\ket{\psi}}: \bar{G} \to \fock$ denotes the orbit map associated to $\ket{\psi}$, i.e. $f_{\ket{\psi}}(S) := \phi(S)\ket{\psi}$):
\begin{equation}\label{eq:VQC-bounds-proof-psiout-decomposition-1}
\ket{\psi}_{\rm out} = f_{\ket{\psi}} \circ T 
\end{equation}
and
\begin{equation}\label{eq:VQC-bounds-proof-psiout-decomposition-2}
\ket{\psi}_{\rm out} = i \circ F_{\ket{\psi}} \circ \pi \circ T\,,
\end{equation}
where $i: \orb_{\bar{G}}(\ket{\psi}) \to \fock$ is the inclusion map,
seen as a map whose domain $\orb_{\bar{G}}(\ket{\psi})$ is considered with its smooth manifold structure considered above (from \cref{thm:SC-repr-orbit-structure}), and its codomain $\fock$ is considered with its Hilbert space structure.
In other words, $\ket{\psi}_{\rm out}$ can be decomposed as the map $T$, followed by either the top or the bottom route of the diagram \labelcref{diag:orbit-map-decomposition} of \cref{sec:SM-EMRep-and-proof-of-orbit-structure}.

Since $\ket{\psi}$ is an analytic vector for $\phi$ (because it is an exponential-decay state, c.f. \cref{sec:SM-EMRep-and-proof-of-orbit-structure}), i.e. the map $f_{\ket{\psi}}$ is analytic, and since $T$ is analytic, it follows from \cref{eq:VQC-bounds-proof-psiout-decomposition-1} that the map $\ket{\psi}_{\rm out}$ is analytic (by composition).

Thus, using then \cref{eq:VQC-bounds-proof-psiout-decomposition-2}, $\ket{\psi}_{\rm out}$ is written in the form of an analytic map "$g \circ f$" with "$f$" the analytic map $\pi \circ T$, and "$g$" the not necessarily smooth map $i \circ F_{\ket{\psi}}$. \Cref{prop:rank-of-analyticcomp-of-cstrank-and-nonsmooth}, case ii) hence applies and yields that
\begin{equation}\label{eq:VQC-bounds-proof-obtained-ineq}
r_{\rm max}^{(\ket{\psi}_{\rm out})} \leq r_{\rm max}^{(\pi \circ T)}\,.
\end{equation} 

But the rank of the map $\pi \circ T$ at any point is upper-bounded by the dimension of its codomain, i.e. by $\dim({\bar{G}}/{\bar{G}}_{\ket{\psi}}) = \dim(\orb_{\bar{G}}(\ket{\psi}))$. Combining with \cref{eq:VQC-bounds-proof-obtained-ineq} gives the desired \cref{eq:VQC-bounds-goal-ket}.

The density operator case is treated analogously.

\end{proof}
\end{theorem}

\paragraph{Proof of \cref{thm:VQC-orbit-dim-maintextinformal}}

The above \cref{thm:VQC-bounds} formalizes and generalizes the content of \cref{thm:VQC-orbit-dim-maintextinformal} in the main text.

It remains to address the point in \cref{thm:VQC-orbit-dim-maintextinformal} about the relation with the Quantum Fisher Information Matrix (QFIM).
Indeed, for pure states, the rank of the QFIM associated with a parametrized state family $\ket{\psi(\bm{\theta})}$ is equal to the rank of the Jacobian of the map $\theta \mapsto \ketbra{\psi(\bm{\theta})}$ (see e.g. the proof of \cite [Lem. D2]{Monbroussou-TrainabilityExpressivity-2025} for more background).$\qed$

\section{The Schwartz spaces and their topologies, and proof of \mbox{Proposition \ref{prop:lsc-orbit-dim}}}\label{sec:SM-Schwartz-spaces-and-proof-of-lsc}

Let $D(\fock) \subset B_1(\fock)$ denote the sets of density and trace-class operators, respectively.
Technically, not all normalized elements $\ket{\psi} \in \fock$ are physically meaningful, as some can have infinite average or higher moments of energy (total photon number). The space of physical states is therefore taken to be the \textit{Schwartz} space, the dense subspace $\mathcal{S}(\fock) \subset \fock$ of states $\ket{\psi}$ whose coefficients in the Fock basis $\braket{\bm{n}}{\psi}$ decay faster than any polynomial in $\bm{n}$ \cite{hall_quantum_2013}. The dense subspace of \textit{Schwartz operators} \cite{keyl_schwartz_2016} $\schop \subset B_1(\fock)$ provides the analogous concept at the operator level, and the space of Schwartz density operators $\mathcal{D}(\fock) \subset D(\fock)$ then coincides with the space of states $\rho$ whose Wigner functions decay faster than any polynomial in phase space \cite{hernandez_rapidly_2022}.
In these Schwartz spaces, with the topologies induced by the trace-norm, $\epsilon$-neighborhoods of Schwartz states still contain other Schwartz states of arbitrarily high energy. It can therefore be more appropriate (as in \cref{prop:lsc-orbit-dim}) to equip $\mathcal{S}(\fock)$ and $\mathcal{D}(\fock)$ with their \textit{Schwartz topologies}, for which this closeness does not hold.

In this section, we detail these concepts further, and then provide a proof for \cref{prop:lsc-orbit-dim}.

\subsection{Schwartz states}

For $\bm{\alpha} \in \NN^m$, we use the shorthand notations $(\bm{N}+\id)^{\bm{\alpha}} := N_1^{\alpha_1}\cdots N_m^{\alpha_m} + \id$, and $(\bm{n}+1)^{\bm{\alpha}} := n_1^{\alpha_1}\cdots n_m^{\alpha_m} + 1$.

The \textit{Schwartz space} $\mathcal{S}(\fock) \subset \fock$ can be defined as the space of all elements $\ket{\psi} \in \fock$ whose coefficients in the Fock basis $\braket{\bm{n}}{\psi}$ decay faster than any polynomial in $\bm{n}$.
Equivalently, one can see that it is the space of all elements $\ket{\psi} \in \fock$ such that the quantities $\norm{\ket{\psi}}_{\bm{\alpha}} := \norm{(\bm{N}+\id)^{\bm{\alpha}}\ket{\psi}}$ are finite for all $\bm{\alpha} \in \NN^m$, since
\begin{align}
\norm{\ket{\psi}}_{\bm{\alpha}} = \Bigg(\sum_{\bm{n} \in \NN^m} (\bm{n}+1)^{\bm{\alpha}} \abs{\braket{\bm{n}}{\psi}}^2\Bigg)^{1/2}\,.
\end{align}
Note that the addition of $\id$ can just be thought of as a useful convention.
The Schwartz space is a vector space (a subspace of $\fock$) and it is in fact dense in $\fock$.
On Schwartz space, the quantities $\norm{\cdot}_{\bm{\alpha}}$ are \textit{seminorms}, i.e. they are non-negative, homogeneous and they satisfy the triangle inequality (in fact they are also norms as they are positive-definite, meaning that $\norm{\ket{\psi}}_{\bm{\alpha}} = 0$ does imply $\ket{\psi}=0$, but this property will not be needed and so it is customary to still refer to them as seminorms).

Given a vector space $X$, a family of seminorms $(\norm{\cdot}_{a})_{a \in A}$ on it induces a topology on $X$, defined as the coarsest topology that makes all the seminorms continuous.
In other words, a nonempty subset $U \subseteq X$ is open in this topology if and only if for all $x\in U$, $\exists \epsilon > 0,\ n \in \NN,\ a_1,\dots,a_n \in A$ s.t. $\cap_{i=1}^n\{ x' \in X \,|\, \norm{x' - x}_{a_i} < \epsilon \} \subseteq U$.
Equipped with such a topology, $X$ is called a \textit{locally convex topological vector space} (LCTVS). Note that a normed vector space (with its standard topology) is a special case of this, where the family of seminorms consists of a single norm.
Let us summarize how several notions of convergence are characterized in such a LCTVS. A sequence $(x_n)_{n \in \NN}$ in $X$ converges to a point $x \in X$ if and only if for all $n \in \NN,\ a_1,\dots,a_n \in A,\ \epsilon_1,\dots,\epsilon_n > 0,\ \exists\,N\in\NN\text{ s.t. } \forall n\geq N\ \norm{x_n - x}_{a_i} < \epsilon_i \text{ for all }i=1\dots,n$.
Furthermore, given now two LCTVSs $X$ and $Y$ whose topologies are induced from the families of seminorms $(\norm{\cdot}_{a})_{a \in A}$ and $(\norm{\cdot}_{b})_{b \in B}$ respectively, a \textit{linear} map $f:X \to Y$ is continuous if and only if (see e.g. \cite[Statement III.1.1]{Schaefer-TopologicalVector-1999}):
\begin{align}
\begin{aligned}\label{eq:linear-map-between-LCTVSs-continuity-criterion}
\forall b \in B \ \exists\, n \in \NN,\ a_1,\dots,a_n \in A,\ C>0\\
\text{ s.t. } \forall x \in X \ \ \, \norm{f(x)}_{b} \leq C \max_{i=1,\dots,n}\norm{x}_{a_i} \,.
\end{aligned}
\end{align}

The \textit{Schwartz topology} on Schwartz space can be defined in the above way, as the topology induced by the seminorms $\norm{\cdot}_{\bm{\alpha}}$ (e.g. \cite[p.~6]{keyl_schwartz_2016}). This topology is \textit{finer} than the standard topology induced by the Hilbert space norm, meaning that it is harder for sequences of Schwartz states to converge under the Schwartz topology than under the standard topology. To illustrate, consider the sequence of states $\ket{\psi_n} = \sqrt{1 - \frac{1}{n}}\ket{0} + \frac{1}{\sqrt{n}}\ket{n}$ in the single-mode Fock space. This is a sequence of Schwartz states (each $\ket{\psi_n}$ has finite support over the Fock basis), that converges to $\ket{0}$ (still a Schwartz state) in the standard topology (since $\norm{\ket{\psi_n} - \ket{0}} \underset{n\to\infty}{\longrightarrow} 0$), however it does not converge to $\ket{0}$ in the Schwartz topology (as $\norm{\ket{\psi_n} - \ket{0}}_{\alpha}$ diverges for any $\alpha\geq2$).

Note that in hard energy cutoff subspace $\mathcal{H}_{m}^{\leq N} \subset \fock$ for a fixed $N\geq0$, all states are Schwartz states, and the Schwartz topology coincides with the standard one. This is because in this case,
we can relate seminorms to the Hilbert space norm up to constants, i.e. we have $c\norm{\cdot} \leq \norm{\cdot}_{\bm{\alpha}} \leq C \norm{\cdot}$ (with $c=1,\ C=\sqrt{N^{|\bm{\alpha}|} + 1}$).

It is clear that the canonical operators $a_k,a^\dagger_k$ ($k=1,\dots,m$) are well-defined on the Schwartz space $\mathcal{S}(\fock)$ and that they stabilize it (i.e. their action on a Schwartz state is again a Schwartz state). It follows that the same is still true for any linear operator $P$ that is a polynomial in the canonical operators, i.e. a complex polynomial (of finite degree) in the variables $\{a_k,a^\dagger_k \,|\,k=1,\dots,m \}$. Therefore, all Schwartz states $\ket{\psi}$ admit finite seminorms $\norm{\ket{\psi}}_{P} := \norm{P\ket{\psi}}$ for all such polynomials $P$. In fact, these $P$-seminorms induce the same topology as the $\bm{\alpha}$-seminorms do on Schwartz space, namely, the Schwartz topology (e.g. \cite[p.~6]{keyl_schwartz_2016}).

Now, let $P$ be any fixed polynomial in the canonical operators, and let us show that as a linear map $P:\mathcal{S}(\fock) \to \mathcal{S}(\fock)$, it is continuous (with respect to the Schwartz topology on both the domain and codomain).
For this, we aim to use the criterion of \cref{eq:linear-map-between-LCTVSs-continuity-criterion}, with the $P$-seminorms on both the domain and codomain.
For any polynomial $P_b$ in the canonical operators, notice that we have for any $\ket{\psi} \in \mathcal{S}(\fock)$, $\norm{P \ket{\psi}}_{P_b} = \norm{P_b P \ket{\psi}} = \norm{\ket{\psi}}_{P_b P}$, and the last expression is a $P_{b'}$-seminorm of $\ket{\psi}$, with $P_{b'}:=P_b P$. Hence, this establishes  \cref{eq:linear-map-between-LCTVSs-continuity-criterion} for the map $P$ (with $n=1$ and $C=1$).

Note that if we had just considered the standard topology on $\mathcal{S}(\fock)$, the map $P:\mathcal{S}(\fock) \to \mathcal{S}(\fock)$ would not have been continuous in general. To illustrate why, the number operator $P:=a_1^\dagger a_1$ for a single-mode system provides a counter-example. Indeed, consider again the sequence of states $\ket{\psi_n}$ defined above. In the standard topology, we have at the same time that $\ket{\psi_n}$ converges to $\ket{0}$ but $P\ket{\psi_n}$ does not converge to $P\ket{0} = 0$ (since $\norm{P\ket{\psi_n} - P\ket{0}} = \sqrt{n}$ does not converge to $0$ as $n\to\infty$), hence $P$ is not continuous (in the standard topology) at the point $\ket{0}$.

Consider a real Hilbert space $\mathcal{V}$ with its inner-product denoted as $\ip{\cdot}{\cdot}$. For a fixed number $k\geq1$, consider the map $\mathcal{R}_k$ that outputs the rank of a given list of $k$ vectors: 
\begin{align}
\begin{aligned}
\mathcal{R}_k \colon \mathcal{V} \times \cdots \times \mathcal{V} &\longrightarrow \mathbb{R}\\
(v_1,\dots,v_k) &\longmapsto \rank(\{ v_1,\dots, v_k\})\,.
\end{aligned} 
\end{align}
It holds that when each copy of $\mathcal{V}$ is given its standard Hilbert space topology, the map $\mathcal{R}_k$ is lower semi-continuous.

Indeed, as we explained in the main text, this rank can be recast in terms of that of the associated Gram matrix, i.e. we have $\mathcal{R}_k(v_1,\dots,v_k) = \rank(\gram(v_1,\dots,v_k))$, where $\gram(v_1,\dots,v_k) \in \mathbb{R}^{k \times k}$ is defined as having entries $[\gram(v_1,\dots,v_k)]_{i,j}:=\ip{v_i}{v_j}$.
But each entry $\ip{v_i}{v_j}$ is a continuous function of the vectors $v_i$ and $v_j$, from which it follows that the whole matrix $\gram(v_1,\dots,v_k)$ is a continuous function of $(v_1,\dots,v_k)$.
It is a fact from linear algebra that the rank of a matrix, as a function of its coefficients, is lower semi-continuous. The reason being that if an $r \times r$ submatrix of a $k \times k$ matrix $A$ is invertible, then (due to the continuity of the determinant) there exists an $\epsilon>0$ such that changing any coefficients in this submatrix by no more than $\epsilon$ keeps it invertible and hence guarantees that the rank of $A$ is at least $r$.
Thus the map $\mathcal{R}_k$, being a composition of a continuous map $(v_1,\dots,v_k) \mapsto \gram(v_1,\dots,v_k)$ with a lower semi-continuous map, is lower semi-continuous.

\paragraph{Proof of \cref{prop:lsc-orbit-dim} for the ket picture}

The proof for the case of the ket picture is now obtained, by composition of continuous maps with lower semi-continuous maps.$\qed$

We now turn to the density operator picture.
\subsection{Schwartz operators}
We briefly introduce the notions of Hilbert-Schmidt and trace-class operators on $\fock$, but for more details we refer to e.g. \cite{hall_quantum_2013}.

The set $B_2(\fock)$ of \textit{Hilbert-Schmidt} operators on $\fock$ is the space of linear maps $A:\fock \to \fock$ for which $\norm{A}_2^2 := \Tr[A^\dagger A] < \infty$.
The set $B_1(\fock)$ of \textit{trace-class} operators on $\fock$ is the space of linear maps $A:\fock \to \fock$ for which $\norm{A}_1 := \Tr[|A|] < \infty$.
The quantities $\norm{\cdot}_1$ and $\norm{\cdot}_2$ define norms on $B_1(\fock)$ and $B_2(\fock)$, called the \textit{trace norm} and the \textit{Hilbert-Schmidt norm}, respectively.
Moreover, the quantity $\ip{A}{B} := \Tr[A^\dagger B]$ is well-defined for $A,B \in B_2(\fock)$, and $\ip{\cdot}{\cdot}$ provides an inner product on $B_2(\fock)$, which turns it into a Hilbert space. The (operator) Fock basis $\{\ketbra{\bm{n}}{\bm{n}'} \ |\ \bm{n},\bm{n}' \in \NN^m\}$ is then an orthonormal basis of $B_2(\fock)$. Note also that in finite-dimensions, $B_1(\fock)$ and $B_2(\fock)$ coincide and become merely the space of linear maps on $\fock$, however in infinite-dimensions there is the strict inclusion $B_1(\fock) \subset B_2(\fock)$.

The space $\schop$ of \textit{Schwartz operators} on $\fock$ can be defined as the space of all $A \in B_2(\fock)$ such that its Fock basis coefficients $\braket{\bm{n}}{A|\bm{n}'}$ decay faster than any polynomial in $\bm{n},\bm{n}'$ \cite[Prop.~3.7]{keyl_schwartz_2016}.
In fact, Schwartz operators are always in $B_1(\fock)$ \cite[Lem.~3.6]{keyl_schwartz_2016}.

Now, consider again any fixed polynomial $P$ in the canonical operators (as we did when discussing Schwartz states).

The \textit{Schwartz topology} on Schwartz operators can be defined as the topology induced by the seminorms $\norm{A}_{P_L,P_R} := \norm{P_L A P_R}_1$, where $P_L$ runs over all polynomials of the form $P_L = \bm{q}^{\bm{\alpha}} \bm{p}^{\bm{\beta}}$ and $P_R$ runs over all polynomials of the form $P_R = \bm{p}^{\bm{\alpha'}} \bm{q}^{\bm{\beta'}}$ \cite[p.16]{keyl_schwartz_2016}. Here, we used shorthand notations $\bm{q}^{\bm{\alpha}} := q_1^{\alpha_1}\cdots q_m^{\alpha_m}$ and $\bm{p}^{\bm{\beta}} := p_1^{\beta_1}\cdots p_m^{\beta_m}$, with $\bm{\alpha},\bm{\beta},\bm{\alpha'},\bm{\beta'} \in \NN^m$, and $q_k,p_k$ are the position and momentum operators of \cref{eq:def-generator-q,eq:def-generator-p}.
Note that for any such two polynomials $P_L,P_R$ in the canonical operators, and any Schwartz operator $A$, the operator $P_L A P_R$ is clearly still a Schwartz operator, thus it is still in $B_1(\fock)$. This justifies that the seminorms $\norm{\cdot}_{P_L,P_R}$ are well-defined on Schwartz operators.

\paragraph{Proof of \cref{prop:lsc-orbit-dim} for the density operator picture}

Now, for fixed $P_L,P_R$, let us show that the two linear maps $f_L,f_R:\schop \to B_1(\fock)$ defined by $f_L(A):=P_L A$ and $f_R(A):=A P_R$, are continuous (with respect to the Schwartz topology on the domain, and the trace norm topology on the codomain).
For this, we again aim to use the criterion of \cref{eq:linear-map-between-LCTVSs-continuity-criterion}, with the above seminorms on the domain and with the trace norm on the codomain. We have $\norm{f_L(A)}_1 = \norm{P_L A}_1 = \norm{A}_{P_L,I}$, which establishes \cref{eq:linear-map-between-LCTVSs-continuity-criterion} for the map $f_L$ (with $n=1$ and $C=1$).
Similarly, one obtains the desired continuity of the map $f_R$.
Now, given any complex polynomial $P$ in the canonical operators $\{a_k,a^\dagger_k \,|\,k=1,\dots,m \}$, $P$ can be re-written (by inverting relations \cref{eq:def-generator-q,eq:def-generator-p} and repeatedly using the commutation relations of \cref{eq:CCR} --- see also the first paragraph of \cref{sec:measuring-pure-gram-matrix-entries} for more background on orderings of polynomial operators) as a complex linear combination of polynomials entirely of the form of the $P_L$'s, or also of the form of the $P_R$'s. Hence, the maps $f_L,f_R$ above are also continuous, as linear combinations of continuous maps. Likewise, the map $F:\schop \to B_1(\fock)$ defined by $F(A):=[P,A] = PA - AP$ is also continuous (as a sum of two continuous maps).$\qed$

\section{Measurement of Gram matrix entries for pure states (including proof of Proposition \ref{prop:homodyne-measurements-maintext})}\label{sec:measuring-pure-gram-matrix-entries}

\subsection{Polynomial operators and their orderings}

Within the following paragraphs, we employ the bold shorthand notations $\bm{q} = (q_1,\dots,q_m),\bm{p} = (p_1,\dots,p_m), \bm{a} = (a_1,\dots,a_m),\bm{a}^\dagger = (a_1^\dagger,\dots,a_m^\dagger)$ for $m$ modes, and we denote the canonical operators with hats (e.g. $\hat{q}_k$) to distinguish them from abstract symbols ($q_k$). We denote the space of formal complex polynomials in the $2m$ noncommuting variables $\bm{q},\bm{p}$ by $\mathbb{C}\langle\bm{q},\bm{p}\rangle$. Given a polynomial $A=A(\bm{q},\bm{p}) \in \mathbb{C}\langle\bm{q},\bm{p}\rangle$, we denote by $\hat{A}:=A(\hat{\bm{q}},\hat{\bm{p}})$ the corresponding \textit{polynomial operator} on Fock space (obtained by substituting the formal variables $q_k,p_k$ by the operators $\hat{q}_k,\hat{p}_k$). The purpose of these precisions in notations is that the canonical commutation relations (CCR)
\begin{align}\label{eq:CCR-q-p}
    [\hat{q}_k,\hat{p}_l] = i \delta_{kl}
\end{align}
are valid at the level of operators, but not at the level of formal polynomials, and hence  different polynomials $A(\bm{q},\bm{p}),B(\bm{q},\bm{p})$ can become equal as operators $\hat{A}=\hat{B}$.
One can also always go back and forth between operators $\hat{\bm{q}},\hat{\bm{p}}$ and operators $\hat{\bm{a}},\hat{\bm{a}}^\dagger$, via the relations \cref{eq:def-generator-q,eq:def-generator-p}. Similarly as above, a polynomial $A'=A'(\bm{a},\bm{a}^\dagger) \in \mathbb{C}\langle\bm{a},\bm{a}^\dagger\rangle$ induces the operator $\hat{A'}:=A'(\hat{\bm{a}},\hat{\bm{a}}^\dagger)$, and we recall the form of the CCR with those operators:
\begin{align}
    [\hat{a}_k,\hat{a}^\dagger_l] = \delta_{kl}\,.
\end{align}
We denote by $\operatorname{Poly}(\mathcal{H}_m)$ the space of polynomial operators on $m$-mode Fock space.

Having discussed how the CCR enables different writings of the same operator, we are led to the concept of \textit{orderings}. Let $A=A(\bm{q},\bm{p}) \in \mathbb{C}\langle\bm{q},\bm{p}\rangle$, and $\hat{A}$ its corresponding operator. By repeated use of the CCR, one can always rewrite $\hat{A}$ in a form where all the $q$'s are to the left of all the $p$'s  (and in increasing order of modes). This is called the \textit{standard ordering} of $\hat{A}$, which we denote as $\hat{A} = A_{\rm s}(\hat{\bm{q}},\hat{\bm{p}})$, and it can be shown to be unique (see e.g. \cite[Prop. 1.2.1]{Coutinho-PrimerAlgebraic-1995}).
Likewise, other orderings can be considered \cite{Lee-TheoryApplication-1995}. The \textit{Weyl ordering} $\hat{A} = A_{\rm W}(\hat{\bm{q}},\hat{\bm{p}})$ of $\hat{A}$ is its rewriting into a linear combination of terms that are all products (in increasing order of modes) of totally-symmetric expressions in $q_k,p_k$. We also need to mention the \textit{antinormal ordering} $\hat{A} = A_{\rm an}(\hat{\bm{a}},\hat{\bm{a}}^\dagger)$, which is the rewriting of $\hat{A}$ in terms of $\hat{\bm{a}},\hat{\bm{a}}^\dagger$ and in the form where all the $a$'s are to the left of all the $a^\dagger$'s (and in increasing order of modes).

We introduce the Weyl symmetrization operator $\operatorname{W}$, which acts on polynomial operators. On single-mode polynomial operators, it is defined via 
\begin{align}\label{eq:def-Weyl-symmetrization-singlemode}
\operatorname{W}(\hat{q}^{j} \hat{p}^{N-j}) := \frac{1}{N!} \sum_{\sigma \in S_N} \hat{b}_{\sigma(1)}\cdots \hat{b}_{\sigma(N)}\,,
\end{align} 
with $\hat{b}_1 = \cdots = \hat{b}_j = \hat{q}$, $\hat{b}_{j+1} = \cdots = \hat{b}_N = \hat{p}$, and $S_N$ the permutation group on $\{1,\dots,N\}$.
It is then defined on all multimode polynomial operators, via
\begin{align}\label{eq:def-Weyl-symmetrization-multimode}
\operatorname{W}(\hat{\bm{q}}^{\bm{j}} \hat{\bm{p}}^{\bm{N}-\bm{j}})
:= \operatorname{W}(\hat{q}_1^{j_1} \hat{p}_1^{N_1 - j_1}) \cdots \operatorname{W}(\hat{q}_m^{j_m} \hat{p}_m^{N_m - j_m})\,.
\end{align}
and extension by linearity. We use here the bold notation for exponents, $\hat{\bm{q}}^{\bm{j}} \hat{\bm{p}}^{\bm{N}-\bm{j}} := \hat{q}_1^{j_1} \cdots \hat{q}_m^{j_m}  \hat{p}_1^{N_1 - j_1} \cdots \hat{p}_m^{N_m - j_m}$.

Note that another way to say that any polynomial operator $\hat{A}$ can be uniquely written in standard ordering, or in Weyl ordering, is to say that the sets of standard-ordered monomials
$\{ \hat{\bm{q}}^{\bm{j}} \hat{\bm{p}}^{\bm{N}-\bm{j}} \,|\, \bm{N},\bm{j}\in\mathbb{N}^m, 0 \leq j_k \leq N_k\}$
and Weyl-ordered monomials
$\{ \operatorname{W}( \hat{\bm{q}}^{\bm{j}} \hat{\bm{p}}^{\bm{N}-\bm{j}} ) \,|\, \bm{N},\bm{j}\in\mathbb{N}^m, 0 \leq j_k \leq N_k\}$
are bases of the space of all polynomial operators.
As the apparent \textit{degree} of a polynomial operator $\hat{A}$ can vary between different rewrittings of it via the CCR, one may define its degree $\deg(\hat{A})$ with respect to a reference ordering. Let us define $\deg(\hat{A}) := \deg(A_{\rm s})$.
By going through the process of rewriting a polynomial operator from standard ordering to Weyl ordering, one sees that $\deg(A_{\rm s}) = \deg(A_{\rm W})$. Therefore our definition of $\deg(\hat{A})$ is the degree of either its standard or Weyl ordering.
We denote by $\operatorname{Poly}_{\leq N}(\mathcal{H}_m)$ the space of all polynomial operators over $m$ modes of degree at most $N$.
It now follows that Weyl-ordered monomials of degree at most $N$ form a basis of this space, and hence in the single-mode case:
\begin{align}\label{eq:Weyl-basis-deg-leq-N-spans-all-polys-deg-leq-N}
\begin{aligned}
\{ \operatorname{W}(\hat{q}^{j} \hat{p}^{n-j}) &\,|\, 0 \leq n\leq N,\ 0 \leq j \leq n \}\\[3pt]
&=
\operatorname{Poly}_{\leq N}(\mathcal{H}_1)\,.
\end{aligned}
\end{align}

\subsection{General observable estimation with heterodyne measurements}
A \textit{heterodyne} measurement setup gives the ability to sample the \textit{Husimi Q-function} $Q_\rho(\bm{\alpha}) := \frac{1}{\pi^m} \braket{\bm{\alpha}}{\rho|\bm{\alpha}}$ of a state $\rho$ \cite{Serafini-QuantumContinuous-2023}; 
where $\ket{\bm{\alpha}}$ is the coherent state \cite{Glauber-CoherentIncoherent-1963} $\ket{\bm{\alpha}} = \otimes_{k=1}^m \ket{\alpha_k}$ with
\begin{equation}\label{eq:coherent-state-Fock-repr}
\ket{\alpha_k} := e^{-{|\alpha_k |^{2} \over 2}}\sum _{n=0}^{\infty }\frac{\alpha_k^{n}}{\sqrt {n!}}\ket{n}
\end{equation}
and $\alpha_k \in \mathbb{C}$. That is, each shot of the (multimode) heterodyne measurement produces a random value $\bm{\alpha} \in \mathbb{C}^m$ distributed according to the probability density function $Q_\rho$ on $\mathbb{C}^m$.

In turn, having access to samples drawn from the Husimi Q-function enables the estimation of expectation values of arbitrary polynomial operators on Schwartz operators $\rho$. Indeed, by the so-called \textit{optical equivalence theorem}, it holds (see e.g. \cite[Sec. 2.3]{Lee-TheoryApplication-1995}) that 
\begin{align}\label{eq:opt-equiv-thm}
    \Tr[ \rho \hat{A} ] = \int_{\mathbb{C}^m} Q_\rho(\bm{\alpha}) A_{\rm an}(\bm{\alpha}) \, d^2\bm{\alpha}\,,
\end{align}
where $A_{\rm an}(\bm{\alpha}):=A_{\rm an}(\bm{\alpha}, \bar{\bm{\alpha}})$ with $A_{\rm an}(\bm{a},\bm{a}^\dagger) \in \mathbb{C}\langle\bm{a},\bm{a}^\dagger\rangle$ the polynomial expressing $\hat{A}$ in the antinormal ordering.  
Note that both sides of \cref{eq:opt-equiv-thm} are indeed well-defined, since $\rho \hat{A} \in B_1(\fock)$ (being the product of a Schwartz operator and a polynomial operator, c.f. \cref{sec:SM-Schwartz-spaces-and-proof-of-lsc}), and since $Q_\rho(\cdot) A_{\rm an}(\cdot)$ is a Schwartz function (being the product of a Schwartz and a polynomial function) and thus is integrable.

Since $Q_\rho$ is a genuine probability distribution function, \cref{eq:opt-equiv-thm} can be more succinctly written as the expectation value of the random variable $\bm{\alpha}$ post-processed with $A_{\rm an}$:
\begin{align}\label{eq:opt-equiv-thm-succint}
    \Tr[ \rho \hat{A} ] = \mathbb{E}_{\bm{\alpha}\sim Q_\rho}[A_{\rm an}(\bm{\alpha})]\,.
\end{align}
\Cref{eq:opt-equiv-thm-succint} establishes that a multimode heterodyne measurement on $\rho$ post-processed with the classical function $\bm{\alpha}\mapsto A_{\rm an}(\bm{\alpha})$ provides an unbiased estimator of the quantity $\Tr[\rho \hat{A}]$.

\subsection{General observable estimation with homodyne measurements}
Let us turn our focus to homodyne measurements.
The motivation and purpose of this paragraph is to formalize and generalize the following standard observation/"trick": if one can experimentally only measure the observables $\hat{q}$, $\hat{p}$, and $(\hat{q}+\hat{p})$ (which corresponds up to pre-factors to homodyne measurement at angles $\theta=0$, $\pi/2$ and $\pi/4$ respectively), then because $(\hat{q}+\hat{p})^2 = \hat{q}^2 + 2\{\hat{q}, \hat{p}\} + \hat{p}^2$ (with the symmetrized product notation $\{A,B\} := (AB + BA)/2$), one can estimate the expectation value of the observable $\{\hat{q}, \hat{p}\}$ as well "for free", by taking shots of the three previous measurements and then post-processing them by squaring the outcomes and combining them via the relation $\{\hat{q}, \hat{p}\} = ((\hat{q}+\hat{p})^2 - \hat{q}^2 - \hat{p}^2)/2$.

To understand this idea in a more general context, we first need to recall general facts about PVMs, observables, and spaces of operators which can be estimated from a given set of observables. These needed facts are part of standard theory of quantum measurement (in both finite and infinite dimensions), but we precisely recall them as it is challenging to find them explicitly stated in the literature.

Our goal is ultimately to show that 5 angles of homodyne measurements per mode (measured 4 modes at a time, for all combinations of these angle settings, and for all 4-mode subsets of the $m$ modes) suffice to build unbiased estimators for any Gram matrix entries.

We recall that contrary to the heterodyne measurement, which is mathematically merely a POVM (see e.g. \cite{Davies-QuantumTheory-1976}), homodyne measurements are projective (PVMs), and such projective measurements can be equivalently described by self-adjoint observables. The outcome space of PVMs can also be defined to be $\mathbb{C}$ instead of just $\mathbb{R}$, in which case the PVM is equivalently described by a \textit{normal} operator (via the spectral theorem, see \cite[Section 5.5.1]{Schmudgen-UnboundedSelfadjoint-2012}, or see also \cite{Busch-QuantumMeasurement-2016}). An operator is said to be normal if it commutes with its adjoint.

Suppose $\mathcal{O}=\{O_1,O_2,\dots\}$ is a finite set of "implementable" observables, i.e. for each $O \in \mathcal{O}$ it is assumed that the PVM associated to $O$ is accessible in the lab.
Let $A$ be a normal operator. Given access to the measurement devices of $\mathcal{O}$, we say (loosely) that $A$ is \textit{samplable} if using only measurements from $\mathcal{O}$ and some classical post-processing, one is able to "simulate" a PVM whose associated normal operator is $A$.
Similarly, we say that $A$ is \textit{estimatable} if using only measurements from $\mathcal{O}$ and some classical post-processing, one is able to produce an unbiased estimator of the quantity $\Tr[\rho A]$ on the arbitrary state $\rho$. (We do not consider notions of computational complexity here.)
Notice that if $A$ is samplable, then it is estimatable (since a shot of the simulated PVM associated to $A$ gives a random variable whose expectation value is $\Tr[\rho A]$).

We introduce the following sets of normal operators:
\begin{align}
\text{SAMPL}_{\mathcal{O}} &:= \left\{ f(O_{1},\dots,O_{k}) \,\Bigg|\,
\begin{aligned}[c]
&k\geq 1,\ f \in \mathcal{B}(\mathbb{R}^{\times k},\mathbb{C}),\\
&O_{j} \in \mathcal{O},\ [O_{j},O_{j'}]=0
\end{aligned}\right\}\,,\label{eq:SAMPL-set}\\
\text{ESTIM}_{\mathcal{O}} &:= \text{span}_{\mathbb{C}}(\text{SAMPL}_{\mathcal{O}})\,.\label{eq:ESTIM-set}
\end{align}
In the above \cref{eq:SAMPL-set}, $\mathcal{B}(\mathbb{R}^{\times k},\mathbb{C})$ denotes Borel measurable functions from $\mathbb{R}^{\times k}$ to $\mathbb{C}$ (this includes continuous functions), and the commutativity condition $[O_{j},O_{j'}]=0$ should actually be understood as \textit{strong commutativity} (i.e., the associated PVMs commute), but note that this is fulfilled anyways for observables $O_{j},O_{j'}$ acting on separate modes. Furthermore, $f(O_{1},\dots,O_{k})$ is the normal operator defined by the \textit{functional calculus} of the tuple of pairwise commuting operators $(O_{1},\dots,O_{k})$, see \cite[Section 5.5.2]{Schmudgen-UnboundedSelfadjoint-2012}; but note that when $f$ is a polynomial, this operator $f(O_{1},\dots,O_{k})$ coincides with the usual way of applying the polynomial $f$ to the operators $O_{1},\dots,O_{k}$ (c.f. \cite[Thm. 5.9]{Schmudgen-UnboundedSelfadjoint-2012}).

We now make the following claims:
\begin{lemma}\label{claims:ESTIM-SAMPL-sets}\mbox{}\\[-12pt]
\begin{enumerate}
    \item All operators $A \in \text{SAMPL}_{\mathcal{O}}$ are samplable.
    Explicitly, one can sample an operator $A=f(O_1,\dots,O_k) \in \text{SAMPL}_{\mathcal{O}}$ by first performing a joint measurement of the observables $O_1,\dots,O_k$, and then post-processing the outcomes $(o_1,\dots,o_k)$ via $(o_1,\dots,o_k)\mapsto f(o_1,\dots,o_k)$.
    \item All operators $A \in \text{ESTIM}_{\mathcal{O}}$ are estimatable.
    Explicitly, if $A=c_1 A_1 + \cdots + c_l A_l$ for some $l\geq1$, $c_1,\dots,c_l \in \mathbb{C}$, $A_1,\dots,A_l \in \text{SAMPL}_{\mathcal{O}}$, then one has as an unbiased estimator of $\Tr[\rho A]$ (given any state $\rho$) the random variable $c_1 a_1 + \cdots + c_l a_l$, where each $a_j$ is the sample associated to operator $A_j$ (which can be obtained through the previous point). 
\end{enumerate}
\end{lemma}
\begin{proof}
The second claim is immediate by linearity of expectation values, and the first claim could be rigorously verified by checking that the PVM uniquely associated to $f(O_1,\dots,O_k)$ (via the spectral theorem) is none other than the one describing the joint measurement $(O_1,\dots,O_k)$ post-processed with $f$ (c.f. \cite[Section 5.7]{Busch-QuantumMeasurement-2016}).
\end{proof}

The observable describing a single-mode homodyne measurement with phase parameter $\theta$ is $Q_{\theta} := \cos(\theta) \hat{q} + \sin(\theta) \hat{p}$ \cite{Lvovsky-ContinuousvariableOptical-2009}.

Notice that the operator $(\hat{q} + \hat{p})^N = \hat{q}\cdots\hat{q} + \hat{q}\cdots\hat{q}\hat{p} + \cdots + \hat{p}\cdots\hat{p}$ expands to give a sum of all possible words of the form $\hat{b}_1 \cdots \hat{b}_N$ (with each $\hat{b}_j$ being either $\hat{q}$ or $\hat{p}$). Regrouping the common words yields the following formula (a generalization of Newton's binomial formula to the noncommutative setting):
\begin{align}
(\hat{q} + \hat{p})^N &= \sum_{j=0}^{N} \binom{N}{j} \operatorname{W}(\hat{q}^j \hat{p}^{N-j})\,.
\end{align}

Proceeding likewise but expanding this time the operator $Q_\theta^N = (\cos(\theta) \hat{q} + \sin(\theta) \hat{p})^N$ yields more generally:
\begin{align}\label{eq:Qtheta-power-N-expansion}
Q_\theta^N &= \sum_{j=0}^{N} \binom{N}{j} \cos(\theta)^j \sin(\theta)^{N-j} \operatorname{W}(\hat{q}^j \hat{p}^{N-j})\,.
\end{align}

Consider now $N+1$ phase parameters $\theta_0,\theta_1,\dots,\theta_N \in \mathbb{R}$. \Cref{eq:Qtheta-power-N-expansion} applied to each $\theta_k$ provides a linear system relating the $N+1$ operators $Q_{\theta_k}^N$ to the $N+1$ operators $\operatorname{W}(\hat{q}^k \hat{p}^{N-k})$ ($k=0,\dots,N$). Explicitly,
\begin{align}\label{eq:Qtheta-power-N-matrix-M-relation}
\begin{pmatrix}
    Q_{\theta_0}^N \\
    \vdots \\
    Q_{\theta_k}^N \\
    \vdots \\
    Q_{\theta_N}^N
\end{pmatrix}
= M
\begin{pmatrix}
    \operatorname{W}(\hat{q}^0 \hat{p}^{N}) \\
    \vdots \\
    \operatorname{W}(\hat{q}^k \hat{p}^{N-k}) \\
    \vdots \\
    \operatorname{W}(\hat{q}^N \hat{p}^{0})
\end{pmatrix}\,,
\end{align}
where the matrix $M=M(\theta_0,\dots,\theta_N) \in \mathbb{R}^{(N+1) \times (N+1)}$ is given by 
$[M]_{kl} := \binom{N}{l} \cos(\theta_k)^{l} \sin(\theta_k)^{N-l}$.

Let us first assume that $\sin(\theta_k) \neq 0$ for all $k$. The matrix $M$ may then be seen as a rescaling of a certain Vandermonde matrix $V$:
\begin{align}\label{eq:Qtheta-power-N-expansion-matrix-M-decomposition}
    M = D_a V D_b \,,
\end{align}
with $D_a := \operatorname{diag}(\sin(\theta_0)^N,\dots,\sin(\theta_N)^N)$, $D_b := \operatorname{diag}(\binom{N}{0},\dots,\binom{N}{N})$ and $V \in \mathbb{R}^{(N+1) \times (N+1)}$ given by $[V]_{kl} := \cot(\theta_k)^l$, where $\cot(\theta):=\cos(\theta)/\sin(\theta)$.
Recalling that the Vandermonde matrix $V$ has determinant $\det(V) = \prod_{k<l} (\cot(\theta_l) - \cot(\theta_k))$, \cref{eq:Qtheta-power-N-expansion-matrix-M-decomposition} yields after simplifications:
\begin{align}\label{eq:Qtheta-power-N-detM-formula}
    \det(M) = \prod_{j} \binom{N}{j} \ \prod_{k<l} \sin(\theta_k - \theta_l)\,.
\end{align}
By continuity of both sides of \cref{eq:Qtheta-power-N-detM-formula}, this determinant formula stays valid even if some $\sin(\theta_k)$ are zero. We thus conclude from it that $M$ is invertible if and only if the phase parameters $(\theta_0,\dots,\theta_N)$ are such that no two phases $\theta_k,\theta_l$ are equal modulo $\pi$.

Let us henceforth fix $N$ and $(\theta_0,\dots,\theta_N)$, and assume that this condition is fulfilled. Then, the invertibility of $M$ implies (by \cref{eq:Qtheta-power-N-matrix-M-relation}) that 
\begin{align}\label{eq:Qtheta-power-N-invertibility-consequence}
\{ \operatorname{W}(\hat{q}^{j} \hat{p}^{N-j}) \,|\, 0 \leq j \leq N \} \subset \spa_{\mathbb{R}}\left(\{ Q_{\theta_k}^N \,|\, 0 \leq k \leq N \}\right)\,.
\end{align}
By considering again the result of \cref{eq:Qtheta-power-N-invertibility-consequence} but for only the first $n$ phases $(\theta_0,\dots,\theta_n)$, for every $n=0,\dots,N$, one obtains:
\begin{align}
\begin{aligned}
\{ \operatorname{W}(\hat{q}^{j} \hat{p}^{n-j}) &\,|\, 0 \leq n\leq N,\ 0 \leq j \leq n \}\\[3pt]
&\subset
\spa_{\mathbb{R}}\left(\left\{ Q_{\theta_k}^n \,\Bigg|\,
\begin{aligned}[c]
&k=0,\dots,N \,,\\
&n=k,\dots,N
\end{aligned}\right\}\right)\,.
\end{aligned}
\end{align}
Now taking the complex span on both sides and using \cref{eq:Weyl-basis-deg-leq-N-spans-all-polys-deg-leq-N}, we conclude that
\begin{align}\label{eq:homodyne-single-mode-claim}
\operatorname{Poly}_{\leq N}(\mathcal{H}_1) = 
\spa_{\mathbb{C}}\left(\left\{ Q_{\theta_k}^n \,\Bigg|\,
\begin{aligned}[c]
&k=0,\dots,N \,,\\
&n=k,\dots,N
\end{aligned}\right\}\right)\,.
\end{align}
To get to the multimode setting, we take the tensor product of the vector space of \cref{eq:homodyne-single-mode-claim} with itself $m$ times, with $N=N_1,\dots,N_m$, which yields
\begin{align}\label{eq:homodyne-multimode-first-step}
\begin{aligned}
&\bigotimes_{k=1}^m \operatorname{Poly}_{\leq N_k}(\mathcal{H}_1)\\[5pt]
&\,= \spa_{\mathbb{C}}\left(\left\{ Q_{\theta_{k_1}}^{n_1} \otimes \cdots \otimes Q_{\theta_{k_m}}^{n_m} \,\Bigg|\, 0 \leq k_j \leq n_j \leq N_j \right\}\right)\,.
\end{aligned}
\end{align}
We now specialize the values of $N_1,\dots,N_m$ in \cref{eq:homodyne-multimode-first-step} to either $N_k = N$ for some of the modes or $N_k=0$ for the other modes, which finally establishes our main result of this section:
\begin{theorem}\label{thm:homodyne-claim}
Fix integers $m\geq1$, $N\geq0$, $K\geq1$ and $\{i_1,\dots,i_K\} \subseteq \{1,\dots,m\}$.
Fix any $N+1$ phase angles $(\theta_0,\dots,\theta_N) \subset \mathbb{R}$ such that no two angles are equal modulo $\pi$.
Denote by $\operatorname{Poly}_{\!\leq (N,\dots,N)}^{(i_1,\dots,i_K)}(\mathcal{H}_m)$ the space of polynomial operators on $m$-mode Fock space of degree at most $N$ in each of the $K$ modes $i_1,\dots,i_K$ and that act trivially on the other modes. It holds that:
\begin{align}
\begin{aligned}
&\operatorname{Poly}_{\!\leq (N,\dots,N)}^{(i_1,\dots,i_K)}(\mathcal{H}_m)\\[5pt]
&\,= \spa_{\mathbb{C}}\left(\left\{ Q_{i_1,\theta_{k_1}}^{n_1} \cdots \, Q_{i_K,\theta_{k_K}}^{n_K} \,\Bigg|\,
\begin{aligned}[c]
&0 \leq k_l \leq n_l \leq N\\
&\forall l \in \{1,\dots,K\}
\end{aligned}
\right\}\right)\,,
\end{aligned}
\end{align}
where $Q_{i,\theta}$ denotes the homodyne observable $Q_{\theta}$ of mode $i$ with phase parameter $\theta$.
\end{theorem}

\subsection{Gram matrix estimation for pure states}

We can finally detail the steps to follow for the experimental estimation of the orbit Gram matrices entries in the ket and ketbra pictures, for the case of pure states (\cref{eq:concrete-gram-formula-ket-expression,eq:concrete-gram-formula-ketbra-expression}) --- and in particular detail and prove the protocol of \cref{prop:homodyne-measurements-maintext} from the main text. Explicitly, we recall that the entries to estimate are respectively of the form
\begin{align}
[\gram_G(\ket{\psi})]_{I,J} &= \mathbb{E}_{\psi}\left(\frac{H_{I} H_{J} + H_{J} H_{I}}{2}\right)\,,\label{eq:concrete-gram-formula-ket-expression-explicit}\\[6pt]
[\gram_G(\ketbra{\psi})]_{I,J} &= 
\begin{aligned}[t]
&2\,[\gram_G(\ket{\psi})]_{I,J}\\
- &2\, \mathbb{E}_{\psi}\left(H_{I}\right) \mathbb{E}_{\psi}\left(H_{J}\right)\,,
\end{aligned}\label{eq:concrete-gram-formula-ketbra-expression-explicit}
\end{align}
where $H_I,H_J \in \mathcal{B}_{\mathfrak{g}}$, with $\mathcal{B}_{\mathfrak{g}}$ the Lie algebra basis of consideration among those listed in \cref{tab:Lie-algebra-bases}. We consider the case of Gaussian optics ($G = G_{\rm GO}$), since its Lie algebra basis contains all other cases as subsets anyways.

First, after making the substitutions
\begin{align}
a_k^\dagger &= \frac{1}{\sqrt{2}}(q_k - i p_k)\,,\\
a_k &= \frac{1}{\sqrt{2}}(q_k + i p_k)\,
\end{align}
in each basis Hamiltonians $H_I \in \mathcal{B}_{\mathfrak{g}}$ (whose forms are detailed in \cref{eq:def-generator-e,eq:def-generator-E,eq:def-generator-r,eq:def-generator-R,eq:def-generator-N,eq:def-generator-s,eq:def-generator-S,eq:def-generator-q,eq:def-generator-p}), one obtains writings of them in standard ordering that are all of degree at most $2$. This establishes that $H_I \in \operatorname{Poly}_{\leq 2}(\mathcal{H}_m)$ for all $H_I \in \mathcal{B}_{\mathfrak{g}}$. Indeed, here are for completeness the standard ordering forms of all these basis elements:
\begin{align}
e_{kl} &= \frac{1}{2}\bigl(q_k q_l + p_k p_l\bigr)\,,\label{eq:def-generator-e-qpform}\\
E_{kl} &= \frac{1}{2}\bigl(q_l p_k - q_k p_l\bigr)\,,\label{eq:def-generator-E-qpform}\\
r_{kl} &= \frac{1}{2}\bigl(q_k q_l - p_k p_l\bigr)\,,\label{eq:def-generator-r-qpform}\\
R_{kl} &= \frac{1}{2}\bigl(q_k p_l + q_l p_k\bigr)\,,\label{eq:def-generator-R-qpform}\\
N_{k} &= \frac{1}{2}\bigl(q_k^{2} + p_k^{2}\bigr) - \frac{1}{2} \id\,,\label{eq:def-generator-N-qpform}\\
s_{k} &= \frac{1}{2}\bigl(q_k^{2} - p_k^{2}\bigr)\,,\label{eq:def-generator-s-qpform}\\
S_{k} &= q_k p_k - \frac{i}{2} \id\,,\label{eq:def-generator-S-qpform}\\
q_{k} &= q_k\,,\label{eq:def-generator-q-qpform}\\
p_{k} &= p_k\,,\label{eq:def-generator-p-qpform}\\
\id &= \id\,.\label{eq:def-generator-id-qpform}
\end{align}
Second, one then expands the symmetrized products $(H_I H_J + H_J H_I)/2$ for all pairs of basis Hamiltonians $H_I,H_J \in \mathcal{B}_{\mathfrak{g}}$ and rewrites them in standard order. One could verify that this will always yield polynomials of degree at most $4$, and more precisely of degree \textit{per mode} at most $N:=4$ and involving at most $K:=4$ different modes.%

Therefore, to estimate all Gram matrix entries $I,J$, one can according to \cref{eq:concrete-gram-formula-ket-expression-explicit,eq:concrete-gram-formula-ketbra-expression-explicit} estimate expectation values of all observables of form $(H_I H_J + H_J H_I)/2$ over all pairs $H_I,H_J\in \mathcal{B}_\mathfrak{g}$ (as well as expectation values of all the observables $H_I$ themselves, to get orbit dimensions also in the ketbra picture using then \cref{eq:concrete-gram-formula-ketbra-expression-explicit}). This amounts to estimating, on the probed pure state $\psi$, expectation values of a list $\mathcal{L}$ of observables, of size $|\mathcal{L}| = \mathcal{O}(\dim(\mathfrak{g})^2) = \mathcal{O}(m^4)$, with 
\begin{align}
\mathcal{L} \subset \bigcup_{1\leq i_1 < \dots < i_K \leq m} \operatorname{Poly}_{\!\leq (N,\dots,N)}^{(i_1,\dots,i_K)}(\mathcal{H}_m)
\end{align}
for $N=4$ and $K=4$.

By \cref{thm:homodyne-claim}, each of these observables $O \in \mathcal{L}$ can be written as a complex linear combination of certain products of powers of homodyne observables:
\begin{align}\label{eq:homodyne-decomposition-of-O}
O = \sum_{\substack{0 \leq k_1 \leq n_1 \leq 4, \\[-3pt] \vdots \\[1pt] 0 \leq k_4 \leq n_4 \leq 4}} c_{(k_1,n_1,\dots,k_4,n_4)} \ Q_{i_1,\theta_{k_1}}^{n_1} \cdots\, Q_{i_4,\theta_{k_4}}^{n_4}\!,\,\\[-15pt]\nonumber
\end{align}
for some modes $(i_1,\dots,i_4)$, where $(\theta_0,\dots,\theta_4)$ are any $5$ homodyne phase angles such that no two are equal modulo $\pi$.

While proving \cref{thm:homodyne-claim}, a method to obtain explicitly the complex coefficients $c_{(\cdots)}$ in \cref{eq:homodyne-decomposition-of-O} was obtained along the way, which we now summarize. One first re-writes (using the CCR \cref{eq:CCR-q-p}) the polynomial operator $O$ of order at most $4$ from standard ordering to Weyl ordering, i.e. one re-writes $O$ as a linear combination of Weyl-ordered monomials $\operatorname{W}( \hat{\bm{q}}^{\bm{j}} \hat{\bm{p}}^{\bm{N}-\bm{j}} )$ of order at most $4$ (we spell out an example in the last paragraph). For each such Weyl-ordered monomial, which is a product of single-mode Weyl-ordered monomials $\operatorname{W}(\hat{q}_k^{j_k} \hat{p}_k^{N_k - j_k})$ (\cref{eq:def-Weyl-symmetrization-multimode}) of degree at most $4$, one re-expresses each of these single-mode Weyl-ordered monomials $\operatorname{W}(\hat{q}_k^{j_k} \hat{p}_k^{N_k - j_k})$ as a (real) linear combination of single-mode homodynes (of different phases) raised to the power of the degree $N_k=\deg(\operatorname{W}(\hat{q}_k^{j_k} \hat{p}_k^{N_k - j_k})) \in \{0,\dots,4\}$ of this single-mode monomial, by inverting the $(N_k + 1) \times (N_k + 1)$ matrix $M$ in the linear system of \cref{eq:Qtheta-power-N-matrix-M-relation}.
By doing so, one arrives at the desired form of $O$ of \cref{eq:homodyne-decomposition-of-O} (with explicit coefficients $c_{(\cdots)}$). Note that in total one only needs to invert $5$ different matrices $M$ (for $N=0,1,2,3,4$), and so these $5$ inversions (which may be done with symbolic computation or numerically) can be performed once and for all as a pre-computation, as soon as the homodyne angles $(\theta_0,\dots,\theta_4)$ have been chosen.

Since each observable in the linear combination of \cref{eq:homodyne-decomposition-of-O} is of the form $O':= Q_{i_1,\theta_{k_1}}^{n_1} \cdots\, Q_{i_4,\theta_{k_4}}^{n_4} = f(Q_{i_1,\theta_{k_1}},\dots,Q_{i_4,\theta_{k_4}})$ with the continuous map $f(o_1,\dots,o_4) := o_1^{n_1} \cdots\, o_4^{n_4}$ and with $4$ locally-disjoint observables $Q_{i_1,\theta_{k_1}},\dots,Q_{i_4,\theta_{k_4}}$, each such observable $O' \in \text{ESTIM}_{\mathcal{O}}$ (\cref{eq:ESTIM-set}), with $\mathcal{O} := \{ Q_{i_1,\theta_{k_1}},\dots,Q_{i_4,\theta_{k_4}} \,|\, (k_1,\dots,k_4) \in \{0,\dots,4\}^4 \}$.
Hence by the second claim of \cref{claims:ESTIM-SAMPL-sets}, this observable $O'$ is estimatable: explicitly, one can estimate the expectation value of $O'$ by first performing a joint measurement of the $4$ commuting observables $Q_{i_1,\theta_{k_1}},\dots,Q_{i_4,\theta_{k_4}}$, and then post-processing the outcomes $(o_1,\dots,o_4)$ via $(o_1,\dots,o_4)\mapsto o_1^{n_1} \cdots\, o_4^{n_4}$ to obtain an unbiased estimator of the expectation value. 

Therefore, $O$ itself is estimatable ($O \in \text{ESTIM}_{\mathcal{O}}$), with an unbiased estimator for its expectation value given by the linear combination (using the same complex coefficients $c_{(\cdots)}$ as those found for the decomposition of $O$ of \cref{eq:homodyne-decomposition-of-O}) of all these post-processed outcomes $o_1^{n_1} \cdots\, o_4^{n_4}$.

The procedure just described shows that making shots of all combinations of the joint homodyne measurements $\{ (Q_{i_1,\theta_{k_1}},\dots,Q_{i_4,\theta_{k_4}}) \,|\, (k_1,\dots,k_4) \in \{0,\dots,4\}^4 \}$ suffices to estimate (via suitable classical post-processing) the expectation values of all observables in $\operatorname{Poly}_{\!\leq (4,\dots,4)}^{(i_1,\dots,i_4)}(\mathcal{H}_m)$. That is, $5^4=625$ joint homodyne measurement settings on the four modes $(i_1,i_2,i_3,i_4)$, suffices to estimate expectation values of all the desired observables $O \in \mathcal{L}$ that involve these four modes.
Since there are $\binom{m}{4}$ possible choices of $4$ modes $(i_1,i_2,i_3,i_4)$ in total, it hence suffices to make altogether $\binom{m}{4} 5^4 \in \mathcal{O}(m^4)$ different types of $4$-mode joint homodyne measurements to estimate all the desired expectation values.
Note that this is just the counting for one sufficient method, but further optimization for lower constants may be possible.
This concludes the proof and explanation of the \cref{prop:homodyne-measurements-maintext} of the main text.$\qed$

As for the heterodyne method, it was explained earlier (\cref{eq:opt-equiv-thm-succint}) that for any operator $O \in \operatorname{Poly}(\mathcal{H}_m)$, shots of the ($m$-mode) heterodyne measurement post-processed with the classical function $\bm{\alpha}\mapsto O_{\rm an}(\bm{\alpha})$ provides an unbiased estimator for the expectation value of $O$. Therefore, the only task required for this method is to cast all the desired observables $O \in \mathcal{L}$ in anti-normal form.

Note that such heterodyne measurements come with an increased initial cost for the variance of obtained estimators, however deciding which of the two unbiased estimator approaches (homodyne or heterodyne) for Gram matrix entries ends up requiring the lowest amount of shots overall would involve a study of both variances, which may depend on properties of the specific state at play, such as its average energy \cite{MauroDAriano-QuantumTomography-2003}.

\subsection{Example of re-writing of an operator into different orderings}
Consider the Gram matrix entries (ket or ketbra pictures) associated to $H_I := E_{12}$ and $H_J := N_2$. That is, consider the specific observable $O := \{E_{12},N_{2}\} = (E_{12} N_{2} + N_{2} E_{12})/2$. Let us now drop the hats from canonical operators to ease the notation.

One first expands this symmetrized product by substituting the standard-ordered forms of the basis Hamiltonians $E_{12}$ and $N_2$ (\cref{eq:def-generator-E-qpform,eq:def-generator-N-qpform}), followed then, in the obtained linear combination of monomials, with some swapping of orders between the different $b_k^{n_k},b_l^{m_l}$'s that belong to different modes ($b = q$ or $p$) to get individual operators ordered by mode and operators $q_k^{n_k},p_k^{m_k}$'s of the same mode next to each other. This yields:
\begin{equation}
\begin{aligned}
O &= \frac{1}{4} p_1 q_2^3
+ \frac{1}{8} p_1 q_2 p_2^2
+ \frac{1}{8} p_1 p_2^2 q_2
+ \frac{1}{4} q_1 p_2\\
&- \frac{1}{4} p_1 q_2 
- \frac{1}{8} q_1 p_2 q_2^2
- \frac{1}{8} q_1 q_2^2 p_2
- \frac{1}{4} q_1 p_2^3
\,.
\end{aligned}
\end{equation}

In order to bring $O$ in standard order, the $q_k^{n_k}$'s are placed to the left of their respective $p_k^{m_k}$'s, with an addition of the appropriate commutator term to make the equality still hold:
\begin{equation}
\begin{aligned}
O &= \frac{1}{4} p_1 q_2^3
+ \frac{1}{8} p_1 q_2 p_2^2
+ \frac{1}{8} p_1 (q_2 p_2^2 + [p_2^2, q_2])
+ \frac{1}{4} q_1 p_2\\
&- \frac{1}{4} p_1 q_2 
- \frac{1}{8} q_1 (q_2^2 p_2 + [p_2, q_2^2])
- \frac{1}{8} q_1 q_2^2 p_2
- \frac{1}{4} q_1 p_2^3
\,.
\end{aligned}
\end{equation}
The commutators introduced above are then evaluated using (repeated uses of) the CCR (\cref{eq:CCR-q-p}). In fact, the following useful generalizations of the single-mode CCR formula can be established by induction:
\begin{align}
[q^n, p] &= i n q^{n-1}\,,\label{eq:generalized-CCRs-qpform-1}\\
[q, p^m] &= i m p^{m-1}\,.\label{eq:generalized-CCRs-qpform-2}
\end{align}
Thanks to these formulas \cref{eq:generalized-CCRs-qpform-1,eq:generalized-CCRs-qpform-2}, one simplifies the above into:
\begin{equation}
\begin{aligned}
O &= \frac{1}{4} p_1 q_2^3
+ \frac{1}{8} p_1 q_2 p_2^2
+ \frac{1}{8} p_1 (q_2 p_2^2 - 2 i p_2)
+ \frac{1}{4} q_1 p_2\\
&- \frac{1}{4} p_1 q_2 
- \frac{1}{8} q_1 (q_2^2 p_2 - 2 i q_2)
- \frac{1}{8} q_1 q_2^2 p_2
- \frac{1}{4} q_1 p_2^3
\,,
\end{aligned}
\end{equation}
which by expanding it and regrouping terms, yields
\begin{equation}
\begin{aligned}
O &= \frac{1}{4} p_1 q_2^3
+ \frac{1}{4} p_1 q_2 p_2^2
- \frac{1}{4} i p_1 p_2
+ \frac{1}{4} q_1 p_2\\
& - \frac{1}{4} p_1 q_2 
- \frac{1}{4} q_1 q_2^2 p_2
+ \frac{1}{4} i q_1 q_2
- \frac{1}{4} q_1 p_2^3
\,.
\end{aligned}
\end{equation}
Actually, since we previously defined the standard order to be such that all the $q_k$'s are to the left of all the $p_k$'s (in each monomial), 
we now again swap the orders between the operators of different modes, to finally obtain the standard order form of $O$:
\begin{equation}\label{eq:example-operator-O-standard-order}
\begin{aligned}
O &= \frac{1}{4} q_2^3 p_1
+ \frac{1}{4} q_2 p_1 p_2^2
- \frac{1}{4} i p_1 p_2
+ \frac{1}{4} q_1 p_2\\
& - \frac{1}{4} q_2 p_1
- \frac{1}{4} q_1 q_2^2 p_2
+ \frac{1}{4} i q_1 q_2
- \frac{1}{4} q_1 p_2^3
\,.
\end{aligned}
\end{equation}

Next, we turn to the conversion of $O$ from its standard ordering to its Weyl ordering. First, one notices in this case that among the $8$ monomials in the standard order form of \cref{eq:example-operator-O-standard-order}, $6$ of them are already Weyl-ordered monomials (the ones with only one kind of operator per mode), i.e. we have:
\begin{equation}\label{eq:example-operator-O-partial-Weyl-order}
\begin{aligned}
O &= \frac{1}{4} \operatorname{W}(q_2^3 p_1)
+ \frac{1}{4} q_2 p_1 p_2^2
- \frac{1}{4} i \operatorname{W}(p_1 p_2)
+ \frac{1}{4} \operatorname{W}(q_1 p_2)\\
& - \frac{1}{4} \operatorname{W}(q_2 p_1)
- \frac{1}{4} q_1 q_2^2 p_2
+ \frac{1}{4} i \operatorname{W}(q_1 q_2)
- \frac{1}{4} \operatorname{W}(q_1 p_2^3)
\,.
\end{aligned}
\end{equation}
Therefore, it is only left to turn each of the two remaining monomials $q_2 p_1 p_2^2$ and $q_1 q_2^2 p_2$ into Weyl ordered form. Let us focus first on the former. Since its mode-$1$ part ($p_1$) is already Weyl-ordered ($p_1 = \operatorname{W}(p_1)$), one only needs (c.f. \cref{eq:def-Weyl-symmetrization-multimode}) to focus on its mode-$2$ part, $q_2 p_2^2$. It can be turned into Weyl ordered form, by iteratively making all successive permutations of monomials appear in its decomposition, as follows. Denote all (three) possible permutations of $q p^2$ (we drop the mode-$2$ index to ease notation) as $A := q p^2$, $B := p q p$ and $C := p^2 q$.

First, one relates $A$ to $B$ by writing
\begin{align}
A = q p p = (p q + [q, p]) p =   p q p + i p\,.
\end{align}
where the CCR was used in the last equality.
Hence $A = B + b$, with $b := i p$.
Likewise, one relates $B$ to $C$, by writing
\begin{align}
B = p q p = p (p q + [q, p]) = p^2 q + i p\,,
\end{align}
which gives $B = C + c$, with $c := i p$.
These two relations imply that:
\begin{align}
3A
&= A + A + A\\
&= A + (B + b) + (B + b)\\
&= A + (B + b) + ((C +c) + b)\\
&= (A + B + C) + (2b + c)\,,
\end{align}
Hence,
\begin{align}
A = \frac{1}{3}(A + B + C) \,+\, \frac{1}{3}(2b + c)\,.
\end{align}
But since $(A + B + C)/3 = \operatorname{W}(q p^2)$ (c.f. \cref{eq:def-Weyl-symmetrization-singlemode}), and $(2b + c)/3 = i p$, we have obtained:
\begin{align}
q p^2 = \operatorname{W}(q p^2) + i p\,.
\end{align}
Through this process, the standard-ordered monomial $qp^2$ has been re-written as its Weyl-symmetrization, plus a remainder term \textit{of lower degree}. Here, this remainder term $i p$ is already Weyl-ordered, i.e.
\begin{align}
q p^2 = \operatorname{W}(q p^2) + i \operatorname{W}(p)\,,
\end{align}
so the process stops here (but in general, it is clear how one could iterate the same process further on the lower-degree remainder terms, to turn those into Weyl-ordered form as well).

We have thus obtained that
\begin{align}
q_2 p_1 p_2^2 &= \operatorname{W}(p_1) \operatorname{W}(q_2 p_2^2) + i \operatorname{W}(p_1) \operatorname{W}(p_2)\\
&= \operatorname{W}(q_2 p_1 p_2^2) + i \operatorname{W}(p_1 p_2)\,.\label{eq:example-operator-O-partial-Weyl-order-term1}
\end{align}
After applying the same process to the other term $q_1 q_2^2 p_2$, one finds likewise:
\begin{align}
q_1 q_2^2 p_2 &= \operatorname{W}(q_1) \operatorname{W}(q_2^2 p_2) + i \operatorname{W}(q_1) \operatorname{W}(q_2)\\
&= \operatorname{W}(q_1 q_2^2 p_2) + i \operatorname{W}(q_1 q_2)\,.\label{eq:example-operator-O-partial-Weyl-order-term2}
\end{align}

One may now substitute \cref{eq:example-operator-O-partial-Weyl-order-term1,eq:example-operator-O-partial-Weyl-order-term2} into the current expression of $O$ of \cref{eq:example-operator-O-partial-Weyl-order}, to finally obtain the Weyl-ordered form of $O$:
\begin{equation}\label{eq:example-operator-O-full-Weyl-order}
\begin{aligned}
O &= \frac{1}{4} \operatorname{W}(q_2^3 p_1)
+ \frac{1}{4} \operatorname{W}(q_2 p_1 p_2^2) + \frac{1}{4} i \operatorname{W}(p_1 p_2)
\\
& - \frac{1}{4} i \operatorname{W}(p_1 p_2)
+ \frac{1}{4} \operatorname{W}(q_1 p_2)
- \frac{1}{4} \operatorname{W}(q_2 p_1)\\
&- \frac{1}{4} \operatorname{W}(q_1 q_2^2 p_2) - \frac{1}{4} i \operatorname{W}(q_1 q_2) 
+ \frac{1}{4} i \operatorname{W}(q_1 q_2)\\
&- \frac{1}{4} \operatorname{W}(q_1 p_2^3)
\,.
\end{aligned}
\end{equation}

\section{Sufficient condition for correctness of numerical rank of noisy Gram matrix}\label{sec:SM-numerical-rank-noisy-Gram-matrix}
In this section, we spell out the proof of the claim made in the main text (below \cref{prop:homodyne-measurements-maintext}) stating that:
if $\gram_{G,\mathrm{exp}}(\rho_{\mathrm{exp}})$ and $\gram_G(\rho)$ are $\epsilon$-close in each entry with $\epsilon< \lambda_{\mathrm{min}}^+(\gram_G(\rho))/(2\dim(\mathfrak{g}))$, then the numerical rank evaluation on the experimentally built matrix is guaranteed to yield $\dim(\orb_{G}(\rho))$ when $\tau \in (\dim(\mathfrak{g}) \epsilon,\ \lambda_{\mathrm{min}}^+(\gram_G(\rho)) - \dim(\mathfrak{g}) \epsilon)$.
Here, recall that $\lambda_{\mathrm{min}}^+(\cdot)$ denotes the smallest positive eigenvalue.

The hypothesis that $\gram_{G,\mathrm{exp}}(\rho_{\mathrm{exp}})$ and $\gram_G(\rho)$ are $\epsilon$-close in each entry writes as
\begin{align}\label{eq:entrywise-epsilon-closeness-of-Grams}
\norm{ \gram_{G,\mathrm{exp}}(\rho_{\mathrm{exp}}) - \gram_G(\rho) }_{\mathrm{ew},\infty} &\leq \epsilon\,,
\end{align}
where $\norm{\cdot}_{\mathrm{ew},\infty}$ denotes the entrywise infinity norm (maximum absolute value of entries). Since these matrices are of size $\dim(\mathfrak{g}) \times \dim(\mathfrak{g})$, \cref{eq:entrywise-epsilon-closeness-of-Grams} implies by a standard norm inequality that
\begin{align}
\norm{ \gram_{G,\mathrm{exp}}(\rho_{\mathrm{exp}}) - \gram_G(\rho) }_{\infty} &\leq \dim(\mathfrak{g}) \epsilon\,,
\end{align}
where the norm $\norm{\cdot}_{\infty}$ denotes this time the Schatten infinity norm, i.e. the maximal singular value. 
But since these two Gram matrices are real symmetric, Weyl's inequality \cite[Thm. 4.3.1]{horn_matrix_2013} gives
\begin{equation}
\begin{aligned}    
&|\lambda_k(\gram_{G,\mathrm{exp}}(\rho_{\mathrm{exp}})) - \lambda_k(\gram_G(\rho))|\\[5pt]
&\leq \ \norm{ \gram_{G,\mathrm{exp}}(\rho_{\mathrm{exp}}) - \gram_G(\rho) }_{\infty}
\end{aligned}
\end{equation}
for all $k=1,\dots,\dim(\mathfrak{g})$, where $\lambda_1(\cdot) \leq \dots \leq \lambda_{\dim(\mathfrak{g})}(\cdot)$ denote the ordered eigenvalues.
Combining the two previous inequalities yields for all $k$:
\begin{align}\label{eq:Weyl-inequality-consequence-Grams}
|\lambda_k(\gram_{G,\mathrm{exp}}(\rho_{\mathrm{exp}})) - \lambda_k(\gram_G(\rho))| &\leq \dim(\mathfrak{g}) \epsilon\,.
\end{align}
Now, consider the numerical rank evaluation on $\gram_{G,\mathrm{exp}}(\rho_{\mathrm{exp}})$ with a cutoff threshold $\tau>0$, which will return the number of its eigenvalues strictly greater than $\tau$ (note that eigenvalues of Gram matrices are always non-negative since they are positive semi-definite).
A sufficient condition for this evaluation to yield the true rank of $\gram_G(\rho)$ is that (i) the zero eigenvalues of $\gram_G(\rho)$ remain smaller than $\tau$ after the perturbation, and (ii) the positive eigenvalues of $\gram_G(\rho)$ are greater than $\tau$ and remain so after the perturbation.
(Here, "perturbation" means going from the matrix $\gram_G(\rho)$ to the matrix $\gram_{G,\mathrm{exp}}(\rho_{\mathrm{exp}})$.)
Since by \cref{eq:Weyl-inequality-consequence-Grams}, each eigenvalue can be shifted by at most $\dim(\mathfrak{g}) \epsilon$ after the perturbation, a further sufficient condition to (i) and (ii) respectively are (i') $\dim(\mathfrak{g}) \epsilon \leq \tau$ and (ii') $\lambda_{\mathrm{min}}^+(\gram_G(\rho)) - \dim(\mathfrak{g}) \epsilon > \tau$.
Combining (i') and (ii') gives the claimed sufficient condition
\begin{equation}\label{eq:sufficient-condition-for-numerical-rank-evaluation-correctness-interval}
\dim(\mathfrak{g}) \epsilon < \tau < \lambda_{\mathrm{min}}^+(\gram_G(\rho)) - \dim(\mathfrak{g}) \epsilon\,;
\end{equation}
and there exists $\tau>0$ satisfying \cref{eq:sufficient-condition-for-numerical-rank-evaluation-correctness-interval} if and only if $\lambda_{\mathrm{min}}^+(\gram_G(\rho)) > 2 \dim(\mathfrak{g}) \epsilon$, i.e.
\begin{equation}
\epsilon< \lambda_{\mathrm{min}}^+(\gram_G(\rho))/(2\dim(\mathfrak{g}))\,.\qed
\end{equation}

\section{Proof of Gram-matrix expression in terms of second-derivatives of SWAP-tests (Eq.~(\ref{eq:gram-matrix-second-order-derivative-expression}))}\label{sec:SM-swap-test-second-derivatives}

If $\phi:G\to \uni(\mathcal{H})$ is a strongly-continuous unitary representation, $p \in \mathcal{H}^\infty$, and $X \in \mathfrak{g}$, then (as we had established during the proof of \cref{thm:SC-repr-orbit-structure}, \cref{eq:SC-repr-orbit-structure-proof-gammatdot-at-t-and-at-0-relation}) it holds that for all $t \in \mathbb{R}$:
\begin{equation}\label{eq:justification-second-derivatives--cite1}
\frac{d}{dt} \Big( \phi(e^{tX}) \cdot p \Big) = \phi(e^{tX}) \cdot \big( \phi'(X) \cdot p \big)\,.
\end{equation}
As both $p$ and $(\phi'(X) \cdot p)$ are in $\mathcal{H}^\infty$ (c.f. \cref{sec:SM-EMRep-and-proof-of-orbit-structure}), we can differentiate both sides of \cref{eq:justification-second-derivatives--cite1} again with respect to $t$, and evaluate at $t=0$, which gives
\begin{equation}
\evalat[\Big]{\frac{d^2}{dt^2}}{t=0} \Big( \phi(e^{tX}) \cdot p \Big) = \phi'(X) \cdot \big( \phi'(X) \cdot p \big)\,,
\end{equation}
i.e.
\begin{equation}\label{eq:justification-second-derivatives--cite2}
\evalat[\Big]{\frac{d^2}{dt^2}}{t=0} \Big( \phi(e^{tX}) \cdot p \Big) = \phi'(X)^2 \cdot p\,.
\end{equation}

We can hence apply \cref{eq:justification-second-derivatives--cite2} to our (density operator picture) representation $\Phi:G_0 \to \uni(B_2(\fock))$ of the extended metaplectic group (c.f. \cref{sec:SM-EMRep-and-proof-of-orbit-structure}), and to a Schwartz operator $\rho$ (which is a smooth vector for $\Phi$, c.f. \cref{sec:SM-EMRep-and-proof-of-orbit-structure}). As we have seen that $\Phi'(X)$ acts as $\operatorname{ad}_{\phi'(X)}$ on Schwartz operators (\cref{eq:Phiprime-action-by-commutation-ad-notation}), we get:
\begin{equation}
\evalat[\Big]{\frac{d^2}{dt^2}}{t=0} \Big( \Phi(e^{tX}) \cdot \rho \Big) = \Phi'(X)^2 \cdot \rho\,,
\end{equation}
i.e.
\begin{equation}\label{eq:justification-second-derivatives--cite3}
\evalat[\Big]{\frac{d^2}{dt^2}}{t=0} \Big( \Phi(e^{tX}) \cdot \rho \Big) = \big[\phi'(X),\ \left[\phi'(X), \rho\right]\big]\,.
\end{equation}

Now, for a fixed Schwartz operator $\rho$, let us define
\begin{equation}\label{eq:justification-second-derivatives--defbetaXIXJ}
\beta_{X_I, X_J}(t) := \ipbig{ \Phi(e^{t X_I}) \cdot \rho }{\ \Phi(e^{t X_J}) \cdot \rho }
\end{equation}
for $X_I, X_J \in \mathfrak{g}$ and $t \in \mathbb{R}$ (recall that the inner-product here is, as in the main text, the Hilbert-Schmidt inner-product $\ip{A}{B}:=\Tr[A^\dagger B]$ on $B_2(\fock)$).
By differentiating \cref{eq:justification-second-derivatives--defbetaXIXJ} twice with respect to $t$, and evaluating at $t=0$, we obtain by bilinearity of the inner-product when restricted to Hermitian inputs (Leibniz' product rule):
\begin{equation}\label{eq:justification-second-derivatives--betaXIXJdotdotzero}
\evalat[\Big]{\frac{d^2}{dt^2}}{t=0} \beta_{X_I, X_J}(t) =
\begin{aligned}[t]
&\ipbig{ \Phi'(X_I)^2 \cdot \rho }{\ \rho }\\
+\ 2\,&\ipbig{ \Phi'(X_I) \cdot \rho }{\ \Phi'(X_J) \cdot \rho }\\ 
+\ \ \ &\ipbig{ \rho }{\ \Phi'(X_J)^2 \cdot \rho }\,,
\end{aligned}
\end{equation}
Specializing \cref{eq:justification-second-derivatives--betaXIXJdotdotzero} to the cases $X_J=0$ and $X_I=0$, we also get:
\begin{align}
\evalat[\Big]{\frac{d^2}{dt^2}}{t=0} \beta_{X_I, 0}(t) &= \ipbig{ \Phi'(X_I)^2 \cdot \rho }{\ \rho }\,,\label{eq:justification-second-derivatives--betaXI0dotdotzero}\\[3pt]
\evalat[\Big]{\frac{d^2}{dt^2}}{t=0} \beta_{0, X_J}(t) &= \ipbig{ \rho }{\ \Phi'(X_J)^2 \cdot \rho }\,.\label{eq:justification-second-derivatives--beta0XJdotdotzero}
\end{align}
Hence, combining \cref{eq:justification-second-derivatives--defbetaXIXJ,eq:justification-second-derivatives--betaXIXJdotdotzero,eq:justification-second-derivatives--betaXI0dotdotzero,eq:justification-second-derivatives--beta0XJdotdotzero}, directly yields:
\begin{equation}\label{eq:justification-second-derivatives--finalexpression}
\ipbig{ \Phi'(X_I) \cdot \rho }{\ \Phi'(X_J) \cdot \rho } =
\begin{aligned}[t]
\frac{1}{2}\Big(&\evalat[\Big]{\frac{d^2}{dt^2}}{t=0} \beta_{X_I, X_J}(t)\\[4pt]
-\ &\evalat[\Big]{\frac{d^2}{dt^2}}{t=0} \beta_{X_I, 0}(t)\\[4pt] 
-\ &\evalat[\Big]{\frac{d^2}{dt^2}}{t=0} \beta_{0, X_J}(t)
\Big)\,.
\end{aligned}
\end{equation}

By now picking the basis  $\mathcal{B}_{\mathfrak{g}}=\{ X_1,\dots,X_d \}$ of $\mathfrak{g}$ given by $\phi'^{-1}(i H_k)$ with $\{H_1,\dots,H_d\}$ the basis of quadratic Hamiltonians of the main text (c.f. proof of \cref{thm:VQC-orbit-dim-maintextinformal} in \cref{sec:SM-EMRep-and-proof-of-orbit-structure}), \cref{eq:justification-second-derivatives--finalexpression} applied to $X_I, X_J \in \mathcal{B}_{\mathfrak{g}}$ establishes \cref{eq:gram-matrix-second-order-derivative-expression} of the main text, since $[\gram_G(\rho)]_{I,J}=\ip{[H_{I},\rho]}{[H_{J},\rho]}$.$\qed$

\section{Genericity of orbit dimensions, and proof of \mbox{Proposition \ref{prop:generic-orbit-dim}}}\label{sec:SM-genericity}

\begin{lemma}\label{lem:generic-orbit-dim}
Let $\mathcal{H}_{\rm fin} \subset \mathcal{S}(\fock)$ be a finite-dimensional subspace of Schwartz states, and let a pure state $\ket{\psi}$ be drawn at random from the complex unit sphere $S_{\mathcal{H}_{\rm fin}}$ (according to the uniform measure).
Then, with probability one,
\begin{align}\label{eq:generic-orbit-dim-lem-equality}
\dim(\orb_{G}(\ket{\psi})) = d_{G,\rm max}(\mathcal{H}_{\rm fin})\,,
\end{align}
with
\begin{align}
d_{G,\rm max}(\mathcal{H}_{\rm fin}) := \max_{\ket{\psi} \in S_{\mathcal{H}_{\rm fin}}}\!\!\big( \dim(\orb_{G}(\ket{\psi})) \,\big)\,.
\end{align}
As a consequence, it also holds that the set of states satisfying \cref{eq:generic-orbit-dim-lem-equality} is dense in $S_{\mathcal{H}_{\rm fin}}$.
\end{lemma}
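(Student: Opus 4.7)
The plan is to express the orbit dimension as the rank of a matrix whose entries are real polynomials in the Fock coefficients of $\ket{\psi}$, then invoke the classical fact that the zero set of a not-identically-vanishing real-analytic function on a connected real-analytic manifold has Lebesgue measure zero and is nowhere dense \cite{Mityagin-ZeroSet-2015,Amann-AnalysisIII-2009}.

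First, I would use \cref{eq:concrete-rank-gram-formula-ket} to rewrite $\dim(\orb_G(\ket{\psi})) = \rank(\gram_G(\ket{\psi}))$. Since $\mathcal{H}_{\rm fin} \subseteq \mathcal{H}_m^{\leq N}$ is finite-dimensional and each generator $H_I$ is a polynomial of bounded degree in $\{a_k, a_k^\dagger\}$, the vector $H_I \ket{\psi}$ lies in the finite-dimensional subspace $\mathcal{H}_m^{\leq N+2}$. Consequently each entry $[\gram_G(\ket{\psi})]_{I,J} = \Re \braket{H_I \psi}{H_J \psi}$ is a real polynomial in the real and imaginary parts of the Fock coefficients $\{\braket{\bm{n}}{\psi}\}_{\bm{n}}$. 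Letting $r := d_{G,\rm max}(\mathcal{H}_{\rm fin})$, the subset $U_r := \{\ket{\psi} \in S_{\mathcal{H}_{\rm fin}} : \rank \gram_G(\ket{\psi}) \geq r\}$ equals, by the standard minor characterization of rank, the union over all $r \times r$ submatrices $M$ of $\gram_G$ of the open sets $\{\det M(\ket{\psi}) \neq 0\}$; by maximality of $r$, $U_r$ coincides with $\{\rank \gram_G(\ket{\psi}) = r\}$ and is non-empty.

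Second, because $U_r$ is non-empty, at least one such minor $\det M^{\star}$ is not identically zero on $S_{\mathcal{H}_{\rm fin}}$. The sphere $S_{\mathcal{H}_{\rm fin}}$ is a connected real-analytic manifold (the unit sphere of a real Hilbert space of real dimension at least two), and $\det M^{\star}$ is polynomial, hence real-analytic, on it. By the cited zero-set theorem, $\{\det M^{\star}(\ket{\psi}) = 0\}$ has canonical surface measure zero and is nowhere dense in $S_{\mathcal{H}_{\rm fin}}$. Since the complement $F := S_{\mathcal{H}_{\rm fin}} \setminus U_r$ is contained in this zero set, and since the uniform probability measure on $S_{\mathcal{H}_{\rm fin}}$ is, up to normalization, the canonical surface measure, $F$ has probability zero. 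This establishes \cref{eq:generic-orbit-dim-lem-equality} with probability one, and density of $U_r$ follows from $F$ being nowhere dense.

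The hard part is not conceptual but rather a matter of verifying technical prerequisites: connectedness and real-analyticity of the sphere (immediate in finite dimensions), absolute continuity of the uniform measure with respect to the canonical surface measure (standard), and the existence of at least one non-vanishing $r \times r$ minor (which the very definition of the maximum $r$ furnishes, without need for any explicit construction of a maximal-orbit-dimension state). The same template extends straightforwardly to a mixed-state version through \cref{eq:concrete-rank-gram-formula-density}, and would also carry over to the CV case provided the Gram entries are indeed real-analytic there, as conjectured in the main text.
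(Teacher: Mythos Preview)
Your proof is correct and rests on the same core idea as the paper's: express the orbit dimension as the rank of $\gram_G(\ket{\psi})$, observe that its entries (hence its minors) are real-analytic on the sphere, and invoke the measure-zero/nowhere-dense property of the zero set of a non-trivial real-analytic function on a connected real-analytic manifold. The execution differs, however. The paper works chart by chart in a stereographic atlas: on each chart $U_i$ it picks a minor that is nonvanishing at a point realizing the \emph{local} maximum $d_i$, shows its zero set is Lebesgue-null in chart coordinates, carries this back to the sphere via an explicit bound on the Riemannian volume density, and then runs a separate overlap argument to show all the $d_i$ coincide with the global maximum. Your version short-circuits this: by selecting one global $r\times r$ minor $M^\star$ nonvanishing at a global maximizer and observing that the entire sub-maximal set $F=\{\rank<r\}$ is contained in $\{\det M^\star=0\}$ (since on $F$ \emph{every} $r\times r$ minor vanishes), you bypass the per-chart patching altogether. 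Both arguments are valid; yours is tighter, while the paper's makes the transfer from chart-Lebesgue measure to uniform surface measure fully explicit.
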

\begin{proof}
Consider the complex sphere $S_{\mathcal{H}_{\rm fin}}$. By choosing an orthonormal basis of $\mathcal{H}_{\rm fin}$, this sphere is identified to the unit complex sphere in $\mathbb{C}^{\dim(\mathcal{H}_{\rm fin})}$, and hence as well (by splitting real and imaginary parts) to the unit real sphere in $\mathbb{R}^{2\dim(\mathcal{H}_{\rm fin})}$, which we may also denote by $S_{\mathbb{R}}^{b}$ with $b:=2\dim(\mathcal{H}_{\rm fin})-1$. It is a compact, connected manifold of (real) dimension $b$. In fact, it is a smooth and even analytic manifold, which we recall, means that it possesses atlases of local coordinate charts from which it is possible to define the notion of analytic maps $F:S_{\mathcal{H}_{\rm fin}}\to \mathbb{R}$ (they are maps for which their local coordinate representations, from open subsets of $\mathbb{R}^b$ to $\mathbb{R}$, are \textit{real-analytic}, i.e. are smooth and agree locally with their Taylor series). Let $U_1,\dots,U_k$ be such a (finite) atlas of $S_{\mathcal{H}_{\rm fin}}$, i.e., open connected sets $U_i$ of the sphere whose union is the whole sphere, along with their associated local coordinate charts $x_i$, which are homeomorphisms $x_i:U_i\to V_i\subseteq\mathbb{R}^b$ between $U_i \subseteq S_{\mathcal{H}_{\rm fin}}$ and an open subset $V_i\subseteq\mathbb{R}^b$.

Let us fix an $i \in \{1,\dots,k\}$ and focus only on the domain $U_i$ of the sphere, along with its local coordinate chart $x_i:U_i\to V_i\subseteq\mathbb{R}^b$.
Let $\ket{\psi_i} \in U_i$ be a point in this domain whose orbit dimension (under $G$) achieves the maximal value reached in this domain:
\begin{align}
\dim(\orb_{G}(\ket{\psi_i})) = d_i\,,
\end{align}
where we denoted
\begin{align}\label{eq:def-local-max-orbit-dim-di}
d_i := \max_{\ket{\psi} \in U_i}\!\!\big( \dim(\orb_{G}(\ket{\psi})) \,\big)\,.
\end{align}
Consider the $\dim(G) \times \dim(G)$ Gram matrix $\gram_G(\ket{\psi_i})$ associated to $\ket{\psi_i}$. We recall (see \cref{sec:orbs-of-q-optics}) that it is the matrix defined by the entries
\begin{align}
[\gram_G(\ket{\psi_i})]_{I,J} := \Re \braket{H_I \psi_i}{H_J \psi_i}\,,
\end{align}
with $\mathcal{B}_\mathfrak{g}=i\{H_1,\dots,H_{\dim(G)}\}$ (\cref{tab:Lie-algebra-bases}).
By assumption, this matrix has rank $d_i \leq \dim(G)$. It therefore admits a $d_i \times d_i$ submatrix that is still of rank $d_i$ and hence invertible. Let 
$1 \leq I_1 < \cdots < I_{d_i} \leq \dim(G)$ and $1 \leq J_1 < \cdots < J_{d_i} \leq \dim(G)$ be respectively row and column indices such that the associated submatrix of $\gram_G(\ket{\psi_i})$ with these rows and columns is invertible. Let us denote $\bm{I} := (I_1,\dots,I_{d_i})$ and $\bm{J} := (J_1,\dots,J_{d_i})$, and use the notation $A^{(\bm{I},\bm{J})}$ for the submatrix of a matrix $A \in \mathbb{R}^{\dim(G) \times \dim(G)}$  consisting of rows $\bm{I}$ and columns $\bm{J}$.
Now, consider the maps
\begin{align}
\begin{aligned}
M \colon V_i &\longrightarrow \mathbb{R}^{d_i \times d_i}\\
q &\longmapsto \big(\gram_G(x_i^{-1}(q))\,\big)^{(\bm{I},\bm{J})}\,,
\end{aligned} 
\end{align}
\begin{align}
\begin{aligned}
D \colon V_i &\longrightarrow \mathbb{R}\\
q &\longmapsto \det(M(q))\,,
\end{aligned} 
\end{align}
and for all entry indices $1 \leq r,s \leq d_i$ the maps
\begin{align}
\begin{aligned}
M_{rs} \colon V_i &\longrightarrow \mathbb{R}\\
q &\longmapsto [M(q)]_{rs}\,.
\end{aligned} 
\end{align} 
First, we claim that the atlas on the sphere $S_{\mathbb{R}}^{b} \subset \mathbb{R}^{b+1}$ can be chosen in such a way that the maps $M_{rs}$ are real-analytic.
Indeed, we choose the standard "stereographic projection" atlas (which is an open cover of size $k=2$), for which one observes (e.g. \cite[p.30]{lee_introduction_2012}) that the charts $x_i$ have all the "entries in the ambient space $\mathbb{R}^{b+1}$" maps $q \mapsto (\tilde{x}_i^{-1}(q))_a := (\iota(x_i^{-1}(q)))_a$ be analytic (for all $a=1,\dots,b+1$, where $\iota: S_{\mathbb{R}}^{b} \to \mathbb{R}^{b+1}$ is the inclusion map).

Then, each map $M_{rs}$ takes the form (for some basis Hamiltonians labels $I_r,J_s$)
\begin{align}\label{eq:proof-prop1-Mrs-form}
M_{rs}(q) = \Re \braket{H_{I_r} \mathcal{I}(\tilde{x}_i^{-1}(q))}{H_{J_s} \mathcal{I}(\tilde{x}_i^{-1}(q))}\,,
\end{align}
where $\mathcal{I}:\mathbb{R}^{b+1} \to \mathcal{H}_{\rm fin}$ is the isomorphism achieving the identification mentioned above.
Recall that polynomial operators are continuous when seen as operators from Schwartz space to itself (see \cref{sec:SM-Schwartz-spaces-and-proof-of-lsc}). Thus the Hamiltonians $H_I$, considered as operators $H_I:\mathcal{S}(\fock) \to \mathcal{S}(\fock)$, are linear and continuous on the topological vector space $\mathcal{S}(\fock)$. But clearly a linear and continuous map on such a space is differentiable with its derivative being constant and equal to itself; and thus it is smooth (the derivatives of superior orders are all zero), and hence even analytic (it is equal to its Taylor series at any point).
Likewise, the identification map $\mathcal{I}$ is linear and continuous, and hence analytic, and the real part of the inner-product map $\Re \braket{\cdot}{\cdot \cdot}: \fock \times \fock \to \mathbb{R}$ being continuous and bilinear, is also an analytic map for analogous reasons.
Lastly, as the Schwartz topology is finer than the norm topology from $\fock$ (c.f. \cref{sec:SM-Schwartz-spaces-and-proof-of-lsc}), the inclusion map $\mathcal{S}(\fock) \to \fock$ is continuous, and hence analytic as well. 
Thus, the map $M_{rs}$ is analytic, by being a composition (\cref{eq:proof-prop1-Mrs-form}) of analytic maps.

Since the determinant of a matrix is an analytic function of its entries, the analyticity of the maps $M_{rs}$ implies by composition that the map $D$ is also analytic.
Because $D$ is a real-analytic function defined on an open connected domain of $V_i \subseteq \mathbb{R}^b$ over which it is not identically zero (it is nonzero at the point $x_i(\ket{\psi_i}) \in V_i$), its zero set $D^{-1}(\{0\})$ must have zero Lebesgue measure in $\mathbb{R}^{b}$ \cite{Mityagin-ZeroSet-2020}.

We will now justify why this fact carries back over to the sphere $S_{\mathcal{H}_{\rm fin}}$, i.e. to the fact that the set $Z_i := x_i^{-1}( D^{-1}(\{0\}) )$ has measure zero in $S_{\mathcal{H}_{\rm fin}}$ for its uniform measure.

The uniform measure on the sphere $S_{\mathbb{R}}^{b} \subset \mathbb{R}^{b+1}$ is the unique rotationally-invariant measure on the sphere which assigns measure $1$ to the whole sphere. But it can also be defined in a geometrical way, as the measure $\mu^{(g)}$ induced by the Riemannian metric $g$ on the sphere (the "round metric"), itself induced by the Euclidean metric on $\mathbb{R}^{b+1}$. Indeed, the metric $g$ induces a measure $\mu^{(g)}$ on the sphere, which is defined by pulling back the Lebesgue measure $\nu_{\mathbb{R}^{b}}$ of $\mathbb{R}^{b}$ along the local charts $x_i$ and re-weighting them using $g$. Explicitly, for any set $W \subseteq U_i$ (such that $x_i(W)$ is $\nu_{\mathbb{R}^{b}}$-measurable):
\begin{align}\label{eq:measure-induced-by-metric}
\mu^{(g)}(W) := \int_{x_i(W)} \sqrt{|\mathcal{G}_i(x_i^{-1}(q))|} \,d\nu_{\mathbb{R}^{b}}(q)\,,
\end{align}
where $\mathcal{G}_i(p)$ denotes the determinant of the metric tensor $g$'s matrix representation in the local coordinates $x_i$, evaluated at point $p\in U_i$. Note that this re-weighting by the metric is responsible for making the local definition of the measure $\mu^{(g)}$ on the sphere (\cref{{eq:measure-induced-by-metric}}) actually independent of the local charts chosen. 
We refer to \cite[Section XII.1]{Amann-AnalysisIII-2009} for a clear exposition.
As one can check (e.g. \cite[p.61]{Lee-IntroductionRiemannian-2018}), for the stereographic atlas of the sphere $S_{\mathbb{R}}^{b}$ both functions $\mathcal{G}_i$ are bounded on their domain $U_i$, i.e. $0 < \mathcal{G}_i \leq C^2,\ i=1,2$ (with $C=2^b$). It thus follows from \cref{eq:measure-induced-by-metric} that
\begin{align}
\mu^{(g)}(W) \leq C \,\, \nu_{\mathbb{R}^{b}}(x_i(W))\,.
\end{align}
Applying this inequality to the case $W=Z_i$ gives a zero right-hand side and hence that $Z_i$ has measure zero in $S_{\mathcal{H}_{\rm fin}}$ for its uniform measure.

Letting go of a fixed index $i$ and a particular local domain $U_i$ of the sphere, we are now exactly in the setting of \cref{lem:local-generic-on-connected-implies-global-generic} (which we had already used in \cref{subsec:proof-thm-2}), where
on the connected space $S_{\mathcal{H}_{\rm fin}}$ we have a function $f(\ket{\psi}) := \dim(\orb_{G}(\ket{\psi}))$ that is lower-semicontinuous (c.f. \cref{prop:lsc-orbit-dim}, noting that when restricting to a finite-dimensional setting as is done here, the Schwartz topology coincides back to the standard/norm topology of $S_{\mathcal{H}_{\rm fin}}$) and constant on full-$\mu^{(g)}$-measure subsets $U_i \setminus Z_i$ of every open set $U_i$ of a (finite) cover of $S_{\mathcal{H}_{\rm fin}}$ (and a measure $\mu^{(g)}$ induced from a metric $g$ is indeed always strictly positive \cite[Thm. XIII.1.7]{Amann-AnalysisIII-2009}).

This lemma thus yields that $\mathcal{U}:=\cup_{i=1,\dots,k} (U_i \setminus Z_i)$ is an open set of full $\mu^{(g)}$-measure inside $S_{\mathcal{H}_{\rm fin}}$, and that on it, the function $f$ is constant and equal to $d_{G,\rm max}(\mathcal{H}_{\rm fin})$. This establishes the main claim of the lemma (as when we "draw a point $\ket{\psi}$ at random from $S_{\mathcal{H}_{\rm fin}}$ according to the uniform measure", we land inside this full-measure subset with probability $1$).

As a consequence of $\mathcal{U}$ being of full $\mu^{(g)}$-measure, because this measure is strictly positive, $S_{\mathcal{H}_{\rm fin}} \setminus \mathcal{U}$ cannot contain any open set, i.e. it has empty interior. Equivalently, this means that $\mathcal{U}$ is dense in $S_{\mathcal{H}_{\rm fin}}$.

\end{proof}
We remark that a version of \cref{lem:generic-orbit-dim} can be stated and proved in full analogy for the ketbra case instead of the ket case.
Indeed, one proceeds with the same proof, still on the unit ket sphere on $\mathcal{H}_{\rm fin}$, but given a $\ket{\psi}$ on this sphere, the orbit dimension of the ketbra $\ketbra{\psi}{\psi}$ is considered instead, i.e. the Gram matrix with entries \cref{eq:concrete-gram-formula-ketbra-expression} is considered instead of \cref{eq:concrete-gram-formula-ket-expression}. The proof proceeds likewise, and yields a new full $\mu^{(g)}$-measure inside $S_{\mathcal{H}_{\rm fin}}$, on which the ketbra orbit-dimension function $\dim(\orb_{G}(\ketbra{\psi}{\psi}))$ is constant and equal to the associated maximal value. Since the Haar measure of pure density matrices on $\mathcal{H}_{\rm fin}$ is realized by the pushforward of the uniform measure on the unit sphere $S_{\mathcal{H}_{\rm fin}}$ via the projection map (the map that sends a $\ket{\psi}$ to $\ketbra{\psi}{\psi}$), this readily gives that if a pure density matrix $\ketbra{\psi}{\psi}$ on $\mathcal{H}_{\rm fin}$ is drawn at random from the Haar measure, then with probability one its $\dim(\orb_{G}(\ketbra{\psi}{\psi}))$ value is the maximal value attained over all such pure density matrices.

We are now able to prove the main text's \cref{prop:generic-orbit-dim}:

\paragraph{Proof of \cref{prop:generic-orbit-dim}}
Due to \cref{lem:generic-orbit-dim}, it remains 
to give a justification for the expression of the quantity $d_{G,\rm max}(\mathcal{H}_{\rm fin})$ in the case of the finite-dimensional subspaces
$\mathcal{H}_{\rm fin} := \mathcal{H}_{m}^{\leq N}$ of total photon number at most $N$.
We claim that:
\begin{align}\label{eq:generic-orbit-dim-eq-SM}
d_{G,\rm max}(\mathcal{H}_{m}^{\leq N}) = \dim(G) - \delta_{N=0} m^2 - \delta_{N=1} (m-1)^2\,.
\end{align}
We first prove that for all $\ket{\psi} \in \mathcal{H}_{m}^{\leq N}$
\begin{equation}\label{eq:generic-orbit-dim-ineqfirst-SM}
\dim(\orb_{G}(\ket{\psi})) \leq \dim(G) - \delta_{N=0} m^2 - \delta_{N=1} (m-1)^2\,.
\end{equation}
Note that here we will again use the orbit-stabilizer dimension formula (c.f. \cref{eq:orb-stab-dim-formula-generic}).
For $N\geq2$, the bound $d_{G,\rm max}(\mathcal{H}_{m}^{\leq N})\leq \dim(G)$ is immediate from \cref{eq:concrete-rank-formula-ket}. For $N=0$, as the vacuum is annihilated by the whole $m^2$-dimensional PLO Lie algebra (since they preserve photon number), and as this PLO algebra is contained inside all four Lie algebras of \cref{tab:Lie-algebra-bases}, the vacuum's orbit dimensions is at most $\dim(G) - m^2$. For $N=1$, let $\ket{\psi} = \alpha \ket{\bm{0}} + \beta\ket{\psi^{1}}  \in \mathcal{H}_{m}^{\leq1}$, with $\ket{\psi^{1}}$ in the single-photon subspace $\hilone^1$. If $\beta=0$, the above $N=0$ bound applies, which implies that the (looser) $N=1$ bound also holds. Suppose now that $\beta \neq 0$. Since $G_{\rm PLO}$ is universal on $\hilone^1$, there always exists a PLO unitary $U$ that maps $\ket{1,0\dots,0}$ to $\ket{\psi^{1}}$, and hence such that $\ket{\psi} = \alpha \ket{\bm{0}} + \beta U\ket{1,0\dots,0} = U\left( \alpha \ket{\bm{0}} + \beta \ket{1,0\dots,0} \right) := U\ket{\psi'}$. Consequently, $\ket{\psi}$ and $\ket{\psi'}$ have the same $G$-orbit dimension, but the latter is annihilated by all PLO Hamiltonians acting only on the modes $2,\dots,m$, which form a subalgebra (of all $m$-mode PLO Hamiltonians) of dimension $(m-1)^2$; hence $\dim(\orb_{G}(\ket{\psi})) \leq \dim(G) - (m-1)^2$.

Next, let us prove that these upper-bounds on orbit dimensions are attained, by exhibiting for each setting $(m,N)$ explicit example states $\ket{\psi_{m,N}}$ for which \cref{eq:generic-orbit-dim-ineqfirst-SM} becomes an equality.
Recall that Hamiltonians in the Lie algebra of ($m$-mode) $G_{\rm GO}$ are exactly (in the normal-ordered form) all Hermitian polynomial operators of degree $\leq 2$. Those can be exactly written as the Hamiltonians of the form
\begin{align}\label{eq:generic-orbit-dim-general-GO-Hamiltonian-normal-ordered}
H
=&\ c\id+\sum_{j,k=1}^m A_{jk}a_j^\dagger a_k
+\frac{1}{2}\sum_{j,k=1}^m\left(B_{jk}a_j^\dagger a_k^\dagger+\overline{B_{jk}}a_j a_k\right)\nonumber\\
&+\sum_{k=1}^m\left(u_k a_k^\dagger+\overline{u_k}a_k\right)\,,
\end{align}
for $c\in\mathbb{R}$, $A=A^\dagger\in\mathbb{C}^{m\times m}$, $B=B^\top\in\mathbb{C}^{m\times m}$, and $\bm{u}=(u_1,\dots,u_m)\in\mathbb{C}^m$. Furthermore, note that the Hamiltonians in the algebras of the other three groups are still of this form but with additional constraints on these parameters.

For the $N=0$ case, the only state choice is the vacuum $\ket{\psi_{m,0}}:=\ket{\bm0}$. We could appeal to the results given in \cref{tab:Lie-algebra-bases}, but to be explicit, we have, for an $H$ of the above form:
\begin{align}
H\ket{\bm0}
=c\ket{\bm0}+\sum_{k=1}^m u_k a_k^\dagger\ket{\bm0}
+\frac{1}{2}\sum_{j,k=1}^m B_{jk}a_j^\dagger a_k^\dagger\ket{\bm0}\,.
\end{align}
These three terms lie in pairwise orthogonal photon-number sectors, hence $H\ket{\bm0}=0$ implies that $c=0$, $\bm{u}=0$, and $B=0$ ($A$ is left arbitrary). This implies that $H$ must be a Hamiltonian in the algebra of $G_{PLO}$. Since conversely those all annihilate the vacuum, this establishes that $\dim(\orb_{G}(\ket{\bm0})) = \dim(G) - m^2$ for all four considered groups $G$.

For the $N=1$ case, let us take $\ket{\psi_{m,1}} := \alpha \ket{\bm{0}} + \beta \ket{1,0\dots,0} = \alpha \ket{\bm{0}} + \beta a_1^\dagger\ket{\bm{0}}$, with $\alpha,\beta \in \mathbb{C}\setminus\{0\}$. Suppose that $H\ket{\psi_{m,1}}=0$.
Since the three-photon component of $H\ket{\psi_{m,1}}$ is 
\begin{align}
\frac{\beta}{2}\sum_{j,k=1}^m B_{jk}a_j^\dagger a_k^\dagger a_1^\dagger\ket{\bm0}\,,
\end{align}
this equality implies that $B=0$. The two-photon component of $H\ket{\psi_{m,1}}$ thus reduces to
\begin{align}
\beta \sum_{k=1}^m u_k a_k^\dagger a_1^\dagger\ket{\bm0}\,,
\end{align}
so the equality implies in turn that $\bm{u}=0$. With $B=0$ and $\bm{u}=0$, only the constant and passive terms in $H$ remain. Since the passive part annihilates the vacuum component, and since furthermore
\begin{align}
a_k a_1^\dagger\ket{\bm0}
= [a_k,a_1^\dagger]\ket{\bm0}+a_1^\dagger a_k\ket{\bm0}
=\delta_{k1}\ket{\bm0}\,,
\end{align}
we get
\begin{align}
H\ket{\psi_{m,1}}
=c \alpha \ket{\bm0}
+\beta\left(c a_1^\dagger+\sum_{j=1}^m A_{j1}a_j^\dagger\right)\ket{\bm0}\,,
\end{align}
which implies that $c=0$ and that the first column of $A$ vanishes. Since $A$ is Hermitian, it means that both its first row and column vanish, leaving its remaining block to be an arbitrary $(m - 1) \times (m-1)$ Hermitian matrix.
This means that $H$ must be a Hamiltonian in the subalgebra of $G_{PLO}$ of passive Hamiltonians over the modes $2,\dots,m$. Since conversely these annihilate $\ket{\psi_{m,1}}$, this establishes that $\dim(\orb_{G}(\ket{\psi_{m,1}})) = \dim(G) - (m-1)^2$ for all four considered groups $G$.

Finally, for the $N\geq2$ case, let us take the state $\ket{\psi_{m,2}} \in \mathcal{H}_{m}^{\leq2}\subseteq\mathcal{H}_{m}^{\leq N}$ given by
\begin{align}
\ket{\psi_{m,2}}:=\alpha \ket{\bm{0}} +\beta\, Q(\bm{a}^\dagger)\ket{\bm0}\,,\quad
Q(\bm{a}^\dagger):=\frac{1}{2}\sum_{k=1}^m \lambda_k (a_k^\dagger)^2\,,
\end{align}
for any pairwise distinct positive reals $\lambda_1,\dots,\lambda_m$ and complex scalars $\alpha,\beta \in \mathbb{C}\setminus\{0\}$.
If $H\ket{\psi_{m,2}}=0$, the four-photon component is
\begin{align}
\frac{\beta}{2}\sum_{j,k=1}^m B_{jk}a_j^\dagger a_k^\dagger Q(\bm{a}^\dagger)\ket{\bm0}\,,
\end{align}
and hence $B=0$. The three-photon component then gives
\begin{align}
\beta \sum_{k=1}^m u_k a_k^\dagger Q(\bm{a}^\dagger)\ket{\bm0}=0\,,
\end{align}
implying that $\bm{u}=0$. With $B=0$ and $\bm{u}=0$, the equality on the zero-photon component gives $c=0$. The remaining two-photon component is
\begin{align}
\sum_{j,k=1}^m A_{jk}\lambda_k a_j^\dagger a_k^\dagger\ket{\bm0}\,.
\end{align}
The coefficient of $(a_k^\dagger)^2\ket{\bm0}$ is $A_{kk}\lambda_k$, implying that $A_{kk}=0$. For $j<k$, the coefficient of $a_j^\dagger a_k^\dagger\ket{\bm0}$ is
\begin{align}
A_{jk}\lambda_k+\overline{A_{jk}}\lambda_j\,,
\end{align}
hence
\begin{align}
A_{jk}\lambda_k = - \overline{A_{jk}}\lambda_j\,,
\end{align}
which implies, by taking the modulus and rearranging,
\begin{align}
|A_{jk}|(\lambda_k - \lambda_j) = 0\,,
\end{align}
implying that $A_{jk}=0$.
Hence $A=0$, and so $H=0$, establishing that among all Hamiltonians of degree $\leq 2$, only the zero Hamiltonian annihilates $\ket{\psi_{m,2}}$, thus $\dim(\operatorname{Stab}_{G}(\ket{\psi_{m,2}})) = 0$, and hence $\dim(\orb_{G}(\ket{\psi_{m,2}})) = \dim(G)$ for all four considered groups $G$.

Therefore the upper bound $d_{G,\rm max}(\mathcal{H}_{m}^{\leq N})\leq \dim(G)$ is achieved for every $N\geq2$ and for all four groups.
We have hence shown that
\begin{equation}
\dim(\orb_{G}(\ket{\psi_{m,N}})) = \dim(G) - \delta_{N=0} m^2 - \delta_{N=1} (m-1)^2\,,
\end{equation}
which concludes the proof.$\qed$

For the ketbra picture, one could show (by an analogous treatment) the analogous claims, i.e. that the value that a Haar random pure density matrix on $\mathcal{H}_{\rm fin}$ achieves with probability one for its orbit dimension, is 
\begin{align}
\dim(\orb_{G}(\ketbra{\psi}{\psi})) = 
\begin{aligned}[t]
&d_{G,\rm max}(\mathcal{H}_{m}^{\leq N})\\
&-\ \delta_{G \in (G_{\rm DPLO},G_{\rm ALO},G_{\rm GO})}\,,
\end{aligned}
\end{align}
where $d_{G,\rm max}$ is the ket picture maximal value given in \cref{eq:generic-orbit-dim-eq-SM}. The subtraction of $1$ in the cases of $G=G_{\rm DPLO},G_{\rm ALO},G_{\rm GO}$ reflects the fact that their Lie algebra bases given in \cref{tab:Lie-algebra-bases} contain the identity operator, which always contributes to a zero vector ($[\id,\rho]=0$) in the list of vectors of \cref{eq:concrete-rank-formula-density}.

In earlier numerical explorations, we also observed that the following different family of states $\ket{\psi_{m,N}}$ achieves the generic orbit dimension value: it is the uniform superposition of all Fock basis states of total photon number at most $\min(2,N)$, with each term in the superposition having a different phase factor, with equal spacing of phases.

While we do not prove its genericity, we still include it as a conjecture for completeness.
Explicitly, define
\begin{align}\label{eq:generic-orbit-dim-uniform-superp-2photons-Nmodes-linearstate}
\ket{\psi_{m,N}} := \frac{1}{\sqrt{J_m}} \sum_{j=1}^{J_m} e^{2\pi i j/J_m} \ket{\bm{n}_{j}}\,,
\end{align}
where the Fock basis of $0$ to $\min(2,N)$ photons $(\ket{\bm{n}_{1}},\ket{\bm{n}_{2}},\dots,\ket{\bm{n}_{J_m}})$ is ordered lexicographically, and contains $J_m = 1, (m+1), (m+1)(m+2)/2$ elements for $N=0$, $N=1$, and $N\geq2$ respectively. Then:
\begin{conjecture}\label{conj:universal-maximal-orbit-dim-family}
For all $N\geq0, m\geq1$, and for all four groups $G$ considered (\cref{tab:Lie-algebra-bases}), the state $\ket{\psi_{m,N}}$ of \cref{eq:generic-orbit-dim-uniform-superp-2photons-Nmodes-linearstate} achieves the maximal orbit dimension value (right-hand side of \cref{eq:generic-orbit-dim-eq-SM}). 
\end{conjecture}

\section{Minimal orbit dimensions, and some relations with non-Gaussianity, stellar rank, and symplectic rank}\label{sec:SM-min-orb-dims}

\subsection{Affine symplectic representation of Gaussian optics for mixed states, and symplectic spectrum}\label{subsecs:affine-q-optics-introduction}

For clarity, in this section we denote canonical operators with hats.
Let $\hat{\bm{r}} := (\hat{q}_1,\hat{p}_1,\dots,\hat{q}_m,\hat{p}_m)^\intercal$ be the vector of canonical displacement operators, and denote by $\Omega$ the symplectic form associated to this "$qpqp\dots$" convention \cite{Adesso-ContinuousVariable-2014}, i.e. the $2m \times 2m$ matrix
\begin{equation}
\Omega := \bigoplus_{j=1}^m \Omega_1\,,\quad\text{ with }\quad\Omega_1 := \begin{pmatrix} 0 & 1\\ -1 & 0 \end{pmatrix}\,,
\end{equation}
Denote also by $\mathrm{Sp}(2m,\mathbb{R})$ the symplectic group of matrices represented in this convention:
\begin{equation}
\mathrm{Sp}(2m,\mathbb{R}) := \{ S \in \mathrm{GL}(2m,\mathbb{R}) \,|\, S \,\Omega\, S^\intercal = \Omega \}\,.
\end{equation}

For a Schwartz operator $\rho$ and a polynomial operator $O$, denote the associated expectation value of $O$ by 
$\left\langle O \right\rangle_\rho := \Tr[ \rho \, O ]$.

The \textit{first moment} of $\rho$ is the vector $\bm{r}_\rho \in \mathbb{R}^{2m}$ defined by $(\bm{r}_\rho)_{j} := \left\langle \hat{r}_j \right\rangle_\rho$.
The \textit{covariance matrix} of $\rho$ is the matrix $V_\rho \in \mathbb{R}^{2m \times 2m}$ with entries
\begin{equation}
[V_\rho]_{kl} := \operatorname{Cov}_{\rho}(\hat{r}_k,\hat{r}_l)\,,
\end{equation}
where (like in the main text) we use the notation $\operatorname{Cov}_{\rho}(O,O'):=\left\langle \{O,O'\} \right\rangle_\rho - \left\langle O \right\rangle_\rho \left\langle O' \right\rangle_\rho$, with $\{A,B\}:=\frac{1}{2}(AB + BA)$ \cite{Adesso-ContinuousVariable-2014}.

Previously in the appendix (in \cref{sec:SM-EMRep-and-proof-of-orbit-structure} and onwards), unitaries of Gaussian optics were modelled as the image of the extended metaplectic representation $\phi:G_0 \to \uni(\fock)$.
To recall in more details (c.f. proof of \cref{lem:optical-groups-are-images-of-closed-subgroups-under-EMR}), here $G_0$ is the extended metaplectic Lie group given by (c.f. \cref{eq:EMG-explicit-semidirect-def})
\begin{equation}
G_0 := \mathrm{H}_{2m+1} \rtimes \mathrm{Mp}(2m,\mathbb{R})\,,
\end{equation}
with $\mathrm{H}_{2m+1}$ and $\mathrm{Mp}(2m,\mathbb{R})$ the so-called Heisenberg and metaplectic groups, respectively; and where $\mathrm{Mp}(2m,\mathbb{R})$ is related to the symplectic group $\mathrm{Sp}(2m,\mathbb{R})$ through a 2-to-1 Lie homomorphism $\Pi : \mathrm{Mp}(2m,\mathbb{R}) \to \mathrm{Sp}(2m,\mathbb{R})$.

In practice however, a Gaussian quantum optics unitary is legitimately specified via a pair $(\bm{d},S)$ of a displacement vector $\bm{d} \in \mathbb{R}^{2m}$ and a symplectic matrix $S \in \mathrm{Sp}(2m,\mathbb{R})$.
This is because when acting on density operators, the technical issues associated with global phase on the unitary operator (which were mentioned in \cref{subsec:background-on-optics-through-a-repr}) disappear due to the adjoint action. In other words, the pair $(\bm{d},S)$ \textit{does} specify a Gaussian unitary \textit{channel} uniquely (and there is in fact a one-to-one correspondence) \cite{weedbrook_gaussian_2012,arvind_real_1995}, and furthermore this mapping is a valid group homomorphism $\tilde{\Phi}: \tilde{G_0} \to \uni(B_2(\fock))$,
where $\tilde{G_0} := \mathbb{R}^{2m} \rtimes \mathrm{Sp}(2m,\mathbb{R})$ is called the affine symplectic group.

To ease the notation, from now on in this section we will drop the tildes and simply write $G_0$ for $\tilde{G_0}$ and $\Phi$ for $\tilde{\Phi}$ --- so that contrary to the previous appendix sections, $G_0$ is here the affine symplectic group and $\Phi$ is its representation (that acts on density operators).
Note in particular that
\begin{equation}\label{eq:dim-G0-affine-symplectic}
\dim(G_0) = 2m^2 + 3m
\end{equation}
is one less than the main text's Gaussian optics unitary group dimension $\dim(G_{\rm GO})$ (\cref{tab:Lie-algebra-bases}),
corresponding to the fact that the global phase unitaries are not part of $G_0$ anymore.
Likewise, as the Lie algebra bases for the groups $G_{\rm DPLO}$ and $G_{\rm ALO}$ given in the main text's \cref{tab:Lie-algebra-bases} contain the identity, their corresponding Lie subgroups $G$ of $G_0$ in this section will also be one dimension smaller. This does not change density-operator orbits (as the global phases disappear through the adjoint action).
The four quantum optical settings considered in the main text correspond here to the Lie subgroups of $G_0$ whose underlying sets are:
\begin{align}
G_{\rm PLO} =&\ \{ \bm{0} \} \,\times\, \mathrm{Sp}(2m,\mathbb{R}) \cap  \mathrm{O}(2m)\,;\label{eq:G0-affine-symplectic-subgroup-def-PLO}\\
G_{\rm DPLO} =&\ \mathbb{R}^{2m} \,\times\, \mathrm{Sp}(2m,\mathbb{R}) \cap  \mathrm{O}(2m)\,;\\
G_{\rm ALO} =&\ \{ \bm{0} \} \,\times\, \mathrm{Sp}(2m,\mathbb{R})\,;\\
G_{\rm GO} =&\ \mathbb{R}^{2m} \,\times\, \mathrm{Sp}(2m,\mathbb{R}) = G_0\,.
\end{align}

For a Lie subgroup $G \subset G_0$, denote by $\operatorname{Stab}_{G}(\rho)$ the stabilizer subgroup of $\rho$ under the action of the representation $\Phi$ restricted to $G$, i.e.
\begin{equation}
\operatorname{Stab}_{G}(\rho) := \{ g=(\bm{d},S) \in G \,|\, \Phi(g) \cdot \rho = \rho \}\,.
\end{equation}
Likewise, we will denote the corresponding orbits in Fock space by
\begin{equation}
\orb_{G}(\rho) := \{ \Phi(g) \cdot \rho \,|\, g \in G \}\,.
\end{equation}

We recall the action of Gaussian unitary channels on a state's first moment and covariance matrix \cite{weedbrook_gaussian_2012,Adesso-ContinuousVariable-2014,arvind_real_1995,Serafini-QuantumContinuous-2023}: for any (Schwartz) state $\rho$ and  $(\bm{d},S) \in G_0$, the corresponding evolved state $\rho' := \Phi(\bm{d},S) \cdot \rho$ satisfies
\begin{align}
\bm{r}_{\rho'} &= S \, \bm{r}_\rho + \bm{d}\,,\label{eq:affine-symplectic-action-on-first-moment}\\
V_{\rho'} &= S \,V_\rho\, S^\intercal\,.\label{eq:affine-symplectic-action-on-covariance-matrix}
\end{align}

Next, we recall concepts relative to the symplectic spectrum and the symplectic rank of a state.
Williamson's theorem \cite{Adesso-ContinuousVariable-2014,Nicacio-WilliamsonTheorem-2021} asserts that for any matrix $V \in \mathbb{R}^{2m \times 2m}$ that is symmetric and positive-definite, there exists a unique diagonal matrix $D=\operatorname{diag}(\sigma_1,\sigma_1,\sigma_2,\sigma_2,\dots,\sigma_m,\sigma_m)$ with $0 \leq \sigma_1 \leq \cdots \leq \sigma_m$ such that
\begin{equation}\label{eq:Williamson-dg-claim}
S \,V\, S^\intercal = D
\end{equation}
for some symplectic matrix $S \in \mathrm{Sp}(2m,\mathbb{R})$.
As the covariance matrix $V_\rho$ of a (Schwartz) state $\rho$ is always (symmetric and) positive definite \cite{Adesso-ContinuousVariable-2014}, this theorem is always applicable to it. The $m$-tuple $\bm{\sigma}:=(\sigma_1,\sigma_2,\dots,\sigma_m)$ is then called the \textit{symplectic spectrum} of $V$ (resp. $\rho$), and its elements $\sigma_j$ the \textit{symplectic eigenvalues}. In fact, such covariance matrices $V_\rho$ of states $\rho$ always satisfy $1/2 \leq \sigma_1$ \cite{Adesso-ContinuousVariable-2014} (beware that in several references such as \cite{Adesso-ContinuousVariable-2014}, different convention choices such as a covariance matrix doubled compared to ours, lead to $1 \leq \sigma_1$ instead).
Throughout this section, we will denote by $\aleph(\rho)$ the number of symplectic eigenvalues $\sigma_j$ in $\bm{\sigma}$ that are strictly greater than $1/2$; this quantity satisfies $0 \leq \aleph(\rho) \leq m$ and is called the \textit{symplectic rank} of $\rho$ \cite{Adesso-ContinuousVariable-2014,Serafini-QuantumContinuous-2023}. (Beware that $\aleph(\rho)$ is called \textit{ordinary symplectic rank} in the main text, and for impure states it differs from the notion of "symplectic rank" that is defined in \cite{Mele-SymplecticRank-2026}, c.f. \cref{subsec:relations-with-stellar-or-symplectic-ranks}.)
We will also denote by $r=r_V$ (resp. $r_{\rho}$) the number of \textit{distinct} symplectic eigenvalues of $V$ (resp. $\rho$), by $\eta_1 < \cdots < \eta_{r}$ the distinct symplectic eigenvalues, and by $k_1,\dots,k_{r}$ their multiplicities, i.e. $k_a$ is the number of times that $\eta_a$ appears in $\bm{\sigma}$.

\subsection{Technical lemma: symplectic stabilizers of covariance matrices}

\begin{lemma}\label{lem:stabilizer-of-V-under-Sp}
Let $V \in \mathbb{R}^{2m \times 2m}$ be a symmetric positive-definite matrix. 
Then, the stabilizer subgroup of $V$ under the congruence action of $\mathrm{Sp}(2m,\mathbb{R})$ is given by:
\begin{equation}\label{eq:stabilizer-of-V-under-Sp-claim}
\operatorname{Stab}_{\mathrm{Sp}(2m,\mathbb{R})}(V)= S \left(\bigtimes_{a=1}^{r}  \mathrm{Sp}(2 k_a,\mathbb{R}) \cap  \mathrm{O}(2 k_a) \right)  S^{-1}\,,
\end{equation}
where $S \in \mathrm{Sp}(2m,\mathbb{R})$ is a symplectic matrix that achieves a Williamson decomposition of $V$ (i.e. such that \cref{eq:Williamson-dg-claim} holds).
Here, $k_1,\dots,k_r$ are the multiplicities in the symplectic spectrum of $V$, and the cartesian product in parentheses denotes the subset of $\mathrm{Sp}(2m,\mathbb{R})$ consisting of block-diagonal matrices $S_1 \oplus \cdots \oplus S_r$ with $S_a \in \mathrm{Sp}(2 k_a,\mathbb{R}) \cap  \mathrm{O}(2 k_a)$ for all $a=1,\dots,r$.

\begin{proof}
Let
\begin{equation}\label{eq:stabilizer-of-V-under-Sp-Williamson-decomposition}
V = S D S^\intercal
\end{equation}
be a Williamson decomposition of $V$ (\cref{eq:Williamson-dg-claim}).
Note that we have
\begin{equation}\label{eq:stabilizer-of-V-under-Sp-D-block-form}
D = \oplus_{a=1}^r \eta_a \id_{2 k_a}
\end{equation}
and
\begin{equation}\label{eq:stabilizer-of-V-under-Sp-Omega-block-form}
\Omega = \oplus_{a=1}^r \Omega_{k_a}\,,
\end{equation}
where $\id_{2 k_a}$ is the $2 k_a \times 2 k_a$ identity matrix, and $\Omega_{k_a}$ the $2 k_a \times 2 k_a$ matrix given by $\Omega_{k_a} := \bigoplus_{j=1}^{k_a} \Omega_1$.

The "$\supseteq$" inclusion in \cref{eq:stabilizer-of-V-under-Sp-claim} is straightforward to check: if $\mathcal{S} := S \left( \oplus_{a=1}^r S_a \right)  S^{-1}$ with each $S_a \in \mathrm{Sp}(2 k_a,\mathbb{R}) \cap  \mathrm{O}(2 k_a)$, then
\begin{align}
\mathcal{S} V \mathcal{S}^\intercal 
&= S \left( \oplus_{a=1}^r S_a \right)  S^{-1} \,\,V\,\, (S^{-1})^{\intercal} \left( \oplus_{a=1}^r S_a^\intercal \right) S^\intercal\\
&= S \left( \oplus_{a=1}^r S_a \right)  S^{-1} \,\,(S D S^\intercal)\,\, (S^{-1})^{\intercal} \left( \oplus_{a=1}^r S_a^\intercal \right) S^\intercal\\
&= S \left( \oplus_{a=1}^r S_a \right) \,\, D \,\, \left( \oplus_{a=1}^r S_a^\intercal \right) S^\intercal\\
&= S \left( \oplus_{a=1}^r S_a \right) \,\, \left(\oplus_{a=1}^r \eta_a \id_{2 k_a}\right) \,\, \left( \oplus_{a=1}^r S_a^\intercal \right) S^\intercal\\
&= S \left( \oplus_{a=1}^r \eta_a S_a S_a^\intercal \right) S^\intercal\\
&= S \left( \oplus_{a=1}^r \eta_a \id_{2 k_a} \right) S^\intercal\\
&= S D S^\intercal\\
&= V\,;
\end{align}
where \cref{eq:stabilizer-of-V-under-Sp-Williamson-decomposition} was used in the second and the last equality, \cref{eq:stabilizer-of-V-under-Sp-D-block-form} was used in the fourth equality, and $S_a \in \mathrm{O}(2 k_a)$ was used in the sixth equality.

We now show the "$\subseteq$" inclusion.
Let $\mathcal{S} \in \operatorname{Stab}_{\mathrm{Sp}(2m,\mathbb{R})}(V)$.
Let us define
\begin{align}
\tilde{\mathcal{S}} &:= S^{-1} \,\mathcal{S}\, S\label{eq:stabilizer-of-V-under-Sp-tilde-calS-def}
\intertext{as well as}
A &:= S^{-1} \,V\Omega\, S\,.
\end{align}
We claim that
\begin{equation}\label{eq:stabilizer-of-V-under-Sp-eq-claim-on-A}
\tilde{\mathcal{S}} \, A \, \tilde{\mathcal{S}}^{-1} = A\,.
\end{equation}
Indeed, we have
\begin{align}
\tilde{\mathcal{S}} \, A \, \tilde{\mathcal{S}}^{-1}
&= (S^{-1} \mathcal{S} S) (S^{-1} V \Omega S) (S^{-1} \mathcal{S}^{-1} S)\\
&= S^{-1} \mathcal{S} \, V \,\Omega\, \, \mathcal{S}^{-1} S\\
&= S^{-1} \mathcal{S} \, V \,(\mathcal{S}^\intercal \Omega \mathcal{S})\, \, \mathcal{S}^{-1} S\\
&= S^{-1} \,V\Omega\, S\\
&= A\,,
\end{align}
where in the third equality we used $\mathcal{S}^\intercal \,\Omega\, \mathcal{S} = \Omega$ (which follows from $\mathcal{S}\in\mathrm{Sp}(2m,\mathbb{R})$ and the fact that the latter set is stable under transposition \cite{Nicacio-WilliamsonTheorem-2021}), and in the fourth equality we used $\mathcal{S} \in \operatorname{Stab}_{\mathrm{Sp}(2m,\mathbb{R})}(V)$.
Thus, $\tilde{\mathcal{S}}$ commutes with $A$, which implies that it also commutes with $A^2$. But
\begin{align}
A &= S^{-1} \,V\Omega\, S
= S^{-1} (S D S^\intercal) \Omega S
= D \, (S^\intercal  \Omega  S)\\
&= D \Omega
= \left(\oplus_{a=1}^r \eta_a \id_{2 k_a} \right) \left(\oplus_{a=1}^r \Omega_{k_a} \right)\\
&= \oplus_{a=1}^r \eta_a \Omega_{k_a}\,,
\end{align}
where in the fourth equality we used $S^\intercal \,\Omega\, S = \Omega$ (which again is a consequence of $S\in\mathrm{Sp}(2m,\mathbb{R})$); and hence
\begin{align}\label{eq:stabilizer-of-V-under-Sp-Asquared-block-form}
A^2 &= \oplus_{a=1}^r \eta_a^2 \Omega_{k_a}^2
= - \left(\oplus_{a=1}^r \eta_a^2 \id_{2 k_a}\right)\,,
\end{align}
where we used $\Omega_{k_a}^2 = -\id_{2 k_a}$ in the second equality.
Since $\tilde{\mathcal{S}}$ commutes with the above matrix, it must stabilize each of its eigenspaces. But evidently from the diagonal form of \cref{eq:stabilizer-of-V-under-Sp-Asquared-block-form}, this means exactly that $\tilde{\mathcal{S}}$ is block-diagonal with blocks of size $2 k_a \times 2 k_a$, i.e. it must be of the form
\begin{equation}\label{eq:stabilizer-of-V-under-Sp-tilde-calS-block-form}
\tilde{\mathcal{S}} = \oplus_{a=1}^r S_a\,,
\end{equation}
with each $S_a \in \mathbb{R}^{2 k_a \times 2 k_a}$.

Now, first, the condition $\tilde{\mathcal{S}} \in \mathrm{Sp}(2m,\mathbb{R})$ ($\tilde{\mathcal{S}} \Omega \tilde{\mathcal{S}}^\intercal = \Omega$) becomes an equation of block-matrices, yielding inside each block the condition $S_a \Omega_{k_a} S_a^\intercal = \Omega_{k_a}$, i.e. that $S_a \in \mathrm{Sp}(2 k_a,\mathbb{R})$ for all $a=1,\dots,r$.
Second, the condition $\tilde{\mathcal{S}} \in \operatorname{Stab}_{\mathrm{Sp}(2m,\mathbb{R})}(V)$ ($\tilde{\mathcal{S}} D \tilde{\mathcal{S}}^\intercal = D$) becomes also an equation of block-matrices, yielding inside each block
\begin{equation}
S_a \, (\eta_a \id_{2 k_a}) \, S_a^\intercal = \eta_a \id_{2 k_a}\,,
\end{equation}
which by dividing both sides by $\eta_a$ (which is nonzero since $1/2 \leq \eta_a$) becomes $S_a S_a^\intercal = \id_{2 k_a}$, i.e. $S_a \in \mathrm{O}(2 k_a)$ for all $a=1,\dots,r$. Putting the two together, we have shown that each $S_a$ must be in $\mathrm{Sp}(2 k_a,\mathbb{R}) \cap  \mathrm{O}(2 k_a)$, which along with \cref{eq:stabilizer-of-V-under-Sp-tilde-calS-def,eq:stabilizer-of-V-under-Sp-tilde-calS-block-form} establishes the desired form of $\mathcal{S}$.
\end{proof}
\end{lemma}

\subsection{Lower bounds on minimal orbit dimensions, and cases of minimality}\label{subsec:SM-minimal-orbit-dimensions}

We may now prove the following theorem, which is exactly the main text's \cref{thm:minimal-orbdim-statements-main} (here "$\dim(G)$" appears instead of the main text's conditional substraction by 1, "$\dim(G)^{\scriptscriptstyle\!-}$", because recall that in this entire \cref{sec:SM-min-orb-dims}, the groups considered have no global-phase component at all, as per \cref{subsecs:affine-q-optics-introduction}).

\begin{theorem}[Characterization of minimal orbit dimensions]
\label{thm:minimal-orbdim-statements}
Let $\rho$ be a Schwartz state, and consider one of the four Lie groups $G \in \{G_{\rm PLO},G_{\rm DPLO},G_{\rm ALO},G_{\rm GO}\}$.
Then,
\begin{itemize}
\item the following lower bounds hold:
\begin{equation}\label{eq:minimal-centered-orbdim-general-lowerbounds}
\dim(\orb_{G}(\rho)) \geq \dim(G) - K(\rho) \geq \dim(G) - m^2\,,
\end{equation}
where $K(\rho) := k_1^2 + \cdots + k_r^2$ denotes the sum of the squares of the multiplicities in the symplectic spectrum of $\rho$;

\item
and the following equivalence holds:
\begin{equation}\label{eq:minimal-orbdim-purestates-characterization-proof-GO-equiv}
\begin{gathered}
\dim(\orb_G(\rho)) = \dim(G) - m^2\\
\iff\\
\rho \sim_G \sigma,\\
\end{gathered}
\end{equation}
where $\sigma$ is a state that is stabilized by all of $G_{\rm PLO}$, i.e.
$\sigma = \oplus_{n \in \mathbb{N}} \, c_n \id_{\fockn}$ for some $c_n \in \mathbb{R}\,.$
\end{itemize}

\begin{proof}
Firstly, in the cases $G \in \{G_{\rm DPLO},G_{\rm GO}\}$, we will, without loss of generality, suppose that the state $\rho$ is centered in phase space, i.e. $\bm{r}_\rho = \bm{0}$ (indeed, the validity for a state $\rho$ of all the statements of the present theorem implies their validity for all $\rho' \in \orb_G(\rho)$, and in these cases where $G$ contains displacement unitaries, orbits always contain a centered state).

Let $g = (\bm{d},S) \in G$, denote $\rho' := \Phi(g) \cdot \rho$, and suppose that $g \in \operatorname{Stab}_G(\rho)$.
If $G \in \{G_{\rm PLO},G_{\rm ALO}\}$, we have $\bm{d} = \bm{0}$ by definition of these groups $G$, but if $G \in \{G_{\rm DPLO},G_{\rm GO}\}$, we also have $\bm{d} = \bm{0}$, since the fact that $\rho' = \rho$ implies \textit{a fortiori} that $\bm{r}_{\rho'} = \bm{r}_{\rho}$ ($=\bm{0}$ as assumed above), which by \cref{eq:affine-symplectic-action-on-first-moment} yields $\bm{d} = \bm{0}$. The fact that the second moments also match, i.e. $V_{\rho'} = V_\rho$, in turn yields (in all four cases of $G$), via \cref{eq:affine-symplectic-action-on-covariance-matrix}, that $S \in \operatorname{Stab}_{\mathrm{Sp}(2m,\mathbb{R})}(V_\rho)$. Here,
\begin{equation}
\operatorname{Stab}_{\mathrm{Sp}(2m,\mathbb{R})}(V_\rho) := \{ S \in \mathrm{Sp}(2m,\mathbb{R})\,|\, S \,V_\rho\, S^\intercal = V_\rho \}
\end{equation}
is the stabilizer subgroup of $V_\rho$ under the congruence action of $\mathrm{Sp}(2m,\mathbb{R})$.
We have hence obtained the following inclusion of Lie subgroups of $G_0$:
\begin{equation}\label{eq:stab-G-subseteq-stab-Sp}
\operatorname{Stab}_{G}(\rho) \,\subseteq\, \{\bm{0}\} \,\times\, \operatorname{Stab}_{\mathrm{Sp}(2m,\mathbb{R})}(V_\rho)\,,
\end{equation}
which implies that
\begin{equation}\label{eq:stab-G-dim-leq-stab-Sp-dim}
\dim(\operatorname{Stab}_{G}(\rho)) \leq \dim(\operatorname{Stab}_{\mathrm{Sp}(2m,\mathbb{R})}(V_\rho))\,.
\end{equation}
By the orbit-stabilizer dimension formula (\cref{eq:orb-stab-dim-formula-generic}, applied to the representation $\Phi$ restricted to $G$), we have:
\begin{equation}\label{eq:orbstabdim-formula-G}
\dim(\orb_{G}(\rho)) = \dim(G) - \dim(\operatorname{Stab}_{G}(\rho))\,.
\end{equation}
But it follows from \cref{lem:stabilizer-of-V-under-Sp} that
\begin{equation}\label{eq:stab-Sp-dim-consequence}
\dim(\operatorname{Stab}_{\mathrm{Sp}(2m,\mathbb{R})}(V_\rho)) =  K(\rho)
\end{equation}
with
\begin{equation}\label{eq:K-of-rho-def}
K(\rho) := k_1^2 + \cdots + k_r^2\,.
\end{equation}
Indeed, the Lie group $\mathrm{Sp}(2 k,\mathbb{R}) \cap  \mathrm{O}(2 k)$ is isomorphic to the unitary group $\uni(k)$ \cite{arvind_real_1995}, and hence $\dim( \mathrm{Sp}(2 k,\mathbb{R}) \cap  \mathrm{O}(2 k) ) = \dim( \uni(k) ) = k^2$.
Notice from \cref{eq:K-of-rho-def} that
\begin{equation}\label{eq:K-of-rho-general-bounds}
m \leq K(\rho) \leq m^2\,,
\end{equation}
with (i) $K(\rho) = m$ if and only if the symplectic eigenvalues of $\rho$ are all distinct, and (ii) $K(\rho) = m^2$ if and only they are all equal.
Combining \cref{eq:stab-G-dim-leq-stab-Sp-dim,eq:orbstabdim-formula-G,eq:stab-Sp-dim-consequence,eq:K-of-rho-general-bounds} now establishes the desired lower bounds of \cref{eq:minimal-centered-orbdim-general-lowerbounds}.

Next, we prove the equivalence of \cref{eq:minimal-orbdim-purestates-characterization-proof-GO-equiv}.
The direction "$\impliedby$" is straightforward: if $\rho \sim_G \sigma$ with $G_{\rm PLO} \subseteq  \operatorname{Stab}_G(\sigma)$, then $\dim(\operatorname{Stab}_G(\sigma)) \geq \dim(G_{\rm PLO}) = m^2$, hence by the orbit stabilizer dimension formula,
\begin{align}
\dim(\orb_G(\rho)) &= \dim(\orb_G(\sigma))\\
&= \dim(G) - \dim(\operatorname{Stab}_G(\sigma))\\
&\leq \dim(G) - m^2\,,
\end{align}
and this must be an equality since the corresponding lower-bound of \cref{eq:minimal-centered-orbdim-general-lowerbounds} is already established.

Turning to proving the "$\implies$" direction, suppose that
\begin{equation}\label{eq:minimal-centered-orbdim-general-lowerbounds-proof-impl-forward-hyp}
\dim(\orb_G(\rho)) = \dim(G) - m^2\,.
\end{equation}
Let us consider the possible groups $G$ separately.
\begin{itemize}
\item For $G=G_{\rm PLO}$:

the right-hand side of \cref{eq:minimal-centered-orbdim-general-lowerbounds-proof-impl-forward-hyp} is zero, hence (since $G_{\rm PLO}$ is connected) $\orb_{G_{\rm PLO}}(\rho)$ consists of just a single point ($\rho$), meaning that $\operatorname{Stab}_{G_{\rm PLO}}(\sigma) = G_{\rm PLO}$.

\item For $G=G_{\rm DPLO}$:

since we saw that elements in $\operatorname{Stab}_{G}(\rho)$ have no displacement part (as $\rho$ is assumed centered for this case, c.f. \cref{eq:stab-G-subseteq-stab-Sp}), we have
\begin{equation}\label{eq:minimal-centered-orbdim-general-lowerbounds-proof-impl-DPLO-inclusion}
\operatorname{Stab}_{G_{\rm DPLO}}(\rho) \subseteq G_{\rm PLO}\,.
\end{equation}

But using again the orbit stabilizer dimension formula, along with \cref{eq:minimal-centered-orbdim-general-lowerbounds-proof-impl-forward-hyp}, we have
\begin{align}
&\dim(\operatorname{Stab}_{G_{\rm DPLO}}(\rho))\nonumber\\
&= \dim(G_{\rm DPLO}) - \dim(\orb_{G_{\rm DPLO}}(\rho))\\
&= m^2\,,
\end{align}
and thus \cref{eq:minimal-centered-orbdim-general-lowerbounds-proof-impl-DPLO-inclusion} is an inclusion $A \subseteq B$ of two Lie subgroups of the same dimension. But if a Lie subgroup $A$ of a Lie group $B$ is such that $\dim(A)=\dim(B)$, then their Lie algebras must be equal, from which it follows that $A^0 = B^0$, where $A^0$ (resp. $B^0$) denotes the connected component of $A$ (resp. $B$) containing the identity. Here, since $B = G_{\rm PLO}$ is already connected (c.f. \cref{eq:G0-affine-symplectic-subgroup-def-PLO}), we have $B^0 = B$ and therefore $B = B^0 = A^0 \subseteq A$, which establishes that $A = B$, i.e.
\begin{equation}
\operatorname{Stab}_{G_{\rm DPLO}}(\rho) = G_{\rm PLO}\,.
\end{equation}

\item For $G \in \{G_{\rm ALO},G_{\rm GO}\}$:

by \cref{eq:minimal-centered-orbdim-general-lowerbounds-proof-impl-forward-hyp}, the two lower bounds of \cref{eq:minimal-centered-orbdim-general-lowerbounds} must be equalities, that is
\begin{equation}\label{eq:minimal-centered-orbdim-general-lowerbounds-proof-imp1-equality}
\qquad \dim(\orb_{G}(\rho)) = \dim(G) - K(\rho) = \dim(G) - m^2\,,
\end{equation}
which in turn implies that $K(\rho) = m^2$, i.e. (as mentioned below \cref{eq:K-of-rho-general-bounds}) that all the symplectic eigenvalues of $\rho$ are equal. \Cref{lem:stabilizer-of-V-under-Sp} thus yields that there exists a state $\sigma \in \orb_{G_{\rm ALO}}(\rho)$ such that
\begin{equation}\label{eq:minimal-centered-orbdim-general-lowerbounds-proof-imp1-max-stab-Sp}
\qquad \operatorname{Stab}_{\mathrm{Sp}(2m,\mathbb{R})}(V_{\sigma}) = \mathrm{Sp}(2m,\mathbb{R}) \cap \mathrm{O}(2m)\,.
\end{equation}
Note that $\sigma$ is still a centered state (as $\rho$ is, and $G_{\rm ALO}$ does not affect the first moment, c.f. \cref{eq:affine-symplectic-action-on-first-moment}). Thus, applying \cref{eq:stab-G-subseteq-stab-Sp} to $\sigma$, and combining with \cref{eq:minimal-centered-orbdim-general-lowerbounds-proof-imp1-max-stab-Sp}, gives that
\begin{equation}\label{eq:minimal-centered-orbdim-general-lowerbounds-proof-imp1-stab-G-subseteq-SpcapO}
\operatorname{Stab}_{G}(\sigma) \,\subseteq\, \{\bm{0}\} \,\times\, \mathrm{Sp}(2m,\mathbb{R}) \cap \mathrm{O}(2m)\,,
\end{equation}
or in other words (c.f. \cref{eq:G0-affine-symplectic-subgroup-def-PLO}):
\begin{equation}\label{eq:minimal-centered-orbdim-general-lowerbounds-proof-imp1-stab-G-subseteq-PLO}
\operatorname{Stab}_{G}(\sigma) \subseteq G_{\rm PLO}\,.
\end{equation}

Using again the orbit stabilizer dimension formula along with \cref{eq:minimal-centered-orbdim-general-lowerbounds-proof-impl-forward-hyp}, we have
\begin{align}
&\dim(\operatorname{Stab}_{G}(\sigma))\nonumber\\
&= \dim(G) - \dim(\orb_{G}(\sigma))\\
&= \dim(G) - \dim(\orb_{G}(\rho))\\
&= m^2\,,
\end{align}
and thus \cref{eq:minimal-centered-orbdim-general-lowerbounds-proof-imp1-stab-G-subseteq-PLO} is an inclusion $A \subseteq B$ of two Lie subgroups of the same dimension (with $B$ connected). As explained above, this implies that $A = B$, i.e.
\begin{equation}
\operatorname{Stab}_{G}(\sigma) = G_{\rm PLO}\,.
\end{equation}
\end{itemize}

Lastly, we justify the last "i.e." in the Theorem's statement, which claims that a state $\sigma$ is stabilized by all of $G_{\rm PLO}$ if and only if it is of the form
\begin{equation}\label{eq:minimal-centered-orbdim-general-lowerbounds-proof-PLO-invariant-form}
\sigma = \oplus_{n \in \mathbb{N}} \, c_n \id_{\fockn}\text{ for some }c_n \in \mathbb{R}\,.
\end{equation}

Being stabilized by all of $G_{\rm PLO}$ writes equivalently as
\begin{equation}
U  \,\sigma = \sigma \, U \qquad \forall U \in G_{\fock,\rm PLO} \subset \uni(\fock)\,,
\end{equation}
but by Schur's lemma and the irreducibility of the PLO representation on each $\fockn$ (c.f. proof of \cref{prop:QO-orbit-span-infinite-dim-conditions}), this implies that $\left.\sigma\right|_{\fockn} = c_n \id_{\fockn}$ for some $c_n \in \mathbb{C}$, and in fact $c_n \in \mathbb{R}$ by hermiticity. And conversely, if $\sigma$ is of the form of \cref{eq:minimal-centered-orbdim-general-lowerbounds-proof-PLO-invariant-form}, then clearly it is stabilized by all of $G_{\rm PLO}$ (since these unitaries preserve particle number).

\end{proof}

\end{theorem}

We recall \cite{Walschaers-NonGaussianQuantum-2021} that a state $\rho$ is said to be \textit{Gaussian} if its Wigner function is a Gaussian function in phase space; and that if $\rho$ is pure, then it is Gaussian if and only if it is in the Gaussian orbit of the vacuum, i.e. $\rho \sim_{G_{\rm GO}} \ketbra{\bm{0}}$.

Specializing the above \cref{thm:minimal-orbdim-statements} to \textit{pure} states, we can now obtain the following \cref{cor:minimal-orbdim-purestates-characterization}.
Note that in its statement, we refer to Fock basis states ($\ketbra{\bm{n}}$, $\bm{n} \in \mathbb{N}^m$) as just "Fock states" for short.

\begin{corollary}[Characterization of pure states of minimal orbit dimension]\label{cor:minimal-orbdim-purestates-characterization}
Consider one of the four Lie groups $G \in \{G_{\rm PLO},G_{\rm DPLO},G_{\rm ALO},G_{\rm GO}\}$.
If we restrict to (Schwartz) states $\rho$ that are \emph{pure}, then:

\begin{itemize}
\item In the single mode setting ($m=1$):

There are as many $G$-orbits of minimal dimension (i.e. of dimension $\dim(G) - 1$) as there are Fock basis states ($\ketbra{n}$, $n \in \mathbb{N}$), and together they consist of exactly the following states:
\begin{itemize}
\item[$\circ$] for $G=G_{\rm PLO}$: the Fock states;
\item[$\circ$] for $G=G_{\rm DPLO}$: the displaced Fock states;
\item[$\circ$] for $G=G_{\rm ALO}$: the squeezed Fock states;
\item[$\circ$] for $G=G_{\rm GO}$: the displaced-and-squeezed Fock states \cite{Kral-DisplacedSqueezed-1990}.
\end{itemize}

\item In the multimode setting ($m\geq2$):

There is a unique $G$-orbit of minimal dimension (i.e. of dimension $\dim(G) - m^2$), and it consists of exactly the following states:
\begin{itemize}
\item for $G=G_{\rm PLO}$: the vacuum state;
\item for $G=G_{\rm DPLO}$: the coherent states;
\item for $G=G_{\rm ALO}$: the squeezed vacuum states;
\item for $G=G_{\rm GO}$: the Gaussian (pure) states.
\end{itemize}

\end{itemize}
\end{corollary}
\begin{proof}%
This is just specializing \cref{thm:minimal-orbdim-statements}'s characterization of \cref{eq:minimal-orbdim-purestates-characterization-proof-GO-equiv} to the case of pure states. Indeed, suppose that $\rho$ is pure, i.e. $\rank(\rho) = 1$. Unitary evolution preserves the rank of the density operator, but 
\begin{equation}\label{eq:minimal-orbdim-purestates-characterization-proof-rank-of-blockdiag}
\rank( \oplus_{n \in \mathbb{N} }p_n \id_{\fockn} ) = \sum_{\substack{n \in \mathbb{N};\\ p_n>0}} \dim(\fockn)\,
\end{equation}
and $\dim(\fockn) = \binom{m+n-1}{n} \geq 1$ is equal to $1$ if and only if $m=1$ or $n=0$.
Hence, \cref{eq:minimal-orbdim-purestates-characterization-proof-rank-of-blockdiag} is equal to $1$ if and only if (i) there is exactly one $n \in \mathbb{N}$ such that $p_n > 0$ \textit{and} (ii) we have $m=1$ or $n=0$.

Consider the single mode case $(m=1)$ first. Since $\fockn$ is then the one-dimensional subspace spanned by $\ket{n}$, the above means that for pure states, the bottom condition of \cref{eq:minimal-orbdim-purestates-characterization-proof-GO-equiv} reduces (imposing also unit-trace) to $\rho$ being $G$-equivalent to the Fock basis state $\ketbra{n}$ for some $n \in \mathbb{N}$.

For the multimode case $(m\geq 2)$, the above means that the bottom condition of \cref{eq:minimal-orbdim-purestates-characterization-proof-GO-equiv} this time reduces to $\rho$ being $G$-equivalent to the multimode vacuum state $\ketbra{\bm{0}}$, which means exactly \cite{weedbrook_gaussian_2012} that $\rho$ is a pure Gaussian state.
\end{proof}

Looking at the multimode setting of the above \cref{cor:minimal-orbdim-purestates-characterization}, we directly recognise these as the free pure states in the RTs of non-Gaussianity (for $G=G_{\rm GO}$) and of optical nonclassicality (for $G=G_{\rm DPLO}$) introduced in the main text. In fact, we may also recognise vacuum states (for $G=G_{\rm PLO}$) and squeezed vacuum states (for $G=G_{\rm ALO}$) as the free pure states of other existing resource theories of quantum optics, although they are arguably more artificial ways to regard these two classes of states as free, in part because they are resource theories that are restricted overall to just Gaussian states anyways. Still, we include these $G=G_{\rm PLO}$ and $G=G_{\rm ALO}$ cases anecdotally, in the following corollary, which encompasses the main text's \cref{cor:minimal-orbdim-purestates-characterization-main}:

\begin{corollary}[Multimode pure states of minimal orbit dimension are exactly the free pure states of existing resource theories]\label{cor:minimal-orbdim-purestates-multimode-and-RTs}
In the multimode setting $(m\geq 2)$, the unique pure $G$-orbit of minimal dimension (c.f. \cref{cor:minimal-orbdim-purestates-characterization}) coincides exactly with the \emph{pure free states} (over $m$ modes) of certain resource theories (RTs) of quantum optics, namely:
\begin{itemize}
\item for $G=G_{\rm PLO}$: the RT of \emph{Gaussian coherence} \cite{Xu-QuantifyingCoherence-2016,Streltsov-ColloquiumQuantum-2017,Gianfelici-HierarchyContinuousvariable-2021} and also the RT of \emph{local Gaussian work extraction} \cite{Singh-QuantumThermodynamics-2019};
\item for $G=G_{\rm DPLO}$: the RT of \emph{optical nonclassicality} \cite{Yadin-OperationalResource-2018,Tan-ResourceTheories-2019};
\item for $G=G_{\rm ALO}$: the RT of \emph{Gaussian parity-asymmetry} \cite{Koukoulekidis-SymmetryAsymmetry-2025};
\item for $G=G_{\rm GO}$: the RT of \emph{non-Gaussianity} \cite{Takagi-ConvexResource-2018,Albarelli-ResourceTheory-2018} or the RT of \emph{Wigner negativity} \cite{Albarelli-ResourceTheory-2018}.
\end{itemize}
\begin{proof}
These observations follow directly from \cref{cor:minimal-orbdim-purestates-characterization} and the definitions/characterization of the \textit{free pure states} in the respective cited resource theories. The cases of the main text ($G=G_{\rm GO}$ and $G=G_{\rm DPLO}$) are immediate from the description of the associated RTs and their simplifications for pure states, given in \cref{subsec:RTs-of-qo,subsec:minimal-orbdims}. For the other two: %

\begin{itemize}
\item Case $G_{\rm PLO}$:

In the RT of Gaussian coherence, the free states are defined as the subset of Gaussian states that are \emph{incoherent} (i.e. that are diagonal) in the Fock basis, and those are shown to exactly be the thermal states \cite{Xu-QuantifyingCoherence-2016}; but the only pure thermal state is the vacuum state.

In the RT of local Gaussian work extraction, the free states are defined as the union of $G_{\rm PLO}$-orbits of thermal states, but again the only pure thermal state is the vacuum state, whose $G_{\rm PLO}$-orbit is just itself.

\item Case $G_{\rm ALO}$:

We saw (\cref{cor:minimal-orbdim-purestates-characterization}) that the states in question are the squeezed vacuum states, or in other words the centered Gaussian pure states.
Recently in \cite{Koukoulekidis-SymmetryAsymmetry-2025}, the so-called RT of \emph{Gaussian asymmetry} was defined, with the asymmetry being in terms of a certain fixed subgroup $G$ of $G_{\rm ALO}$ to be chosen. There, the free states are defined as $G$-invariant Gaussian states. If we choose $G$ to be the parity subgroup ($G := \{ \id, \Pi := e^{i \pi (\hat{N}_1 + \cdots + \hat{N}_m)} \}$), then the only parity-invariant Gaussian states are the centered Gaussian states (since the parity operator $\Pi$ acts on Wigner functions as $W(\cdot) \mapsto W(-(\cdot))$). Hence, the pure free states then correspond to the pure centered Gaussian states.
\end{itemize}
\end{proof}
\end{corollary}

Lastly, we have a third corollary to \cref{thm:minimal-orbdim-statements}, which is exactly the main text's \cref{cor:orbdim-lowerbound-symplectic-rank-main}: %
\begin{corollary}[A relation between orbit dimensions and symplectic rank]\label{cor:orbdim-lowerbound-symplectic-rank}
Let $\rho$ be a Schwartz state, and consider one of the four Lie groups $G \in \{G_{\rm PLO},G_{\rm DPLO},G_{\rm ALO},G_{\rm GO}\}$.
Then:
\begin{equation}\label{eq:orbdim-lowerbound-symplectic-rank-claim}
\dim(\orb_{G}(\rho)) \geq \dim(G) - m^2 + 2 \, \aleph(\rho) \, (m - \aleph(\rho))\,,
\end{equation}
where $\aleph(\rho)$ denotes the (ordinary) symplectic rank of $\rho$ (see \cref{subsecs:affine-q-optics-introduction}).
\begin{proof}
This is just the first lower bound of \cref{thm:minimal-orbdim-statements}'s \cref{eq:minimal-centered-orbdim-general-lowerbounds}, combined with the following simple upper-bound on the quantity $K(\rho)$ in terms of the symplectic rank $\aleph(\rho)$:
\begin{equation}\label{eq:orbdim-lowerbound-symplectic-rank-upperbound-K-of-rho-claim}
K(\rho) \leq m^2 - 2 \, \aleph(\rho) \, (m - \aleph(\rho))\,.
\end{equation}
To see why \cref{eq:orbdim-lowerbound-symplectic-rank-upperbound-K-of-rho-claim} holds, notice that when $\aleph(\rho) = m$, it becomes the inequality $K(\rho) \leq m^2$, which always holds (c.f. \cref{eq:K-of-rho-general-bounds}), while when $\aleph(\rho) < m$, we can write $k_1 = m - \aleph(\rho)$, which lets us obtain, successively:
\begin{align}
K(\rho)
:&= k_1^2 + \cdots + k_r^2\\
&= (m -\aleph(\rho))^2 + (k_2^2 + \cdots + k_r^2)\\
&\leq (m -\aleph(\rho))^2 + (k_2 + \cdots + k_r)^2\\
&= (m -\aleph(\rho))^2 + \aleph(\rho)^2\\
&= m^2 - 2 \aleph(\rho) (m - \aleph(\rho))\,.
\end{align}
\end{proof}
\end{corollary}

\subsection{Proof that $G_{\rm GO}$ and $G_{\rm DPLO}$ orbit dimensions are not fixed-mode monotones for the respective resource theories}\label{subsec:SM-fixed-mode-nonmonotonicity}
In this section, we prove the main text's \cref{thm:orbdims-are-not-fixed-mode-monotones}, by counter-example.
\paragraph{Proof of \cref{thm:orbdims-are-not-fixed-mode-monotones}}
Consider $m=2$ modes, and the (Hong-Ou-Mandel) state
\begin{align}
\ket{\psi_{\rm in}} := \frac{\ket{2,0} + \ket{0,2}}{\sqrt{2}}\,,
\end{align}
which is $G_{\rm PLO}$-equivalent to $\ket{1,1}$.

Consider the (50-50) \textit{pure-loss} one-mode channel \cite{weedbrook_gaussian_2012} $\mathcal{L}_{1/2}$, given by
\begin{align}\label{eq:counterexample-loss-one-mode-channel-def}
\mathcal{L}_{1/2}(\rho) := \Tr_A \left[ U_{1/2} \left( \rho \otimes \ketbra{0}_A \right) U_{1/2}^\dagger \right]\,,
\end{align}
where $U_{1/2} := e^{- i ({\pi/2}) E_{1\, 2}}$, denote the single-mode identity channel by $\mathcal{I}$, and let
\begin{align}
\rho_{\rm out} := (\mathcal{L}_{1/2} \otimes \mathcal{I})(\ketbra{\psi_{\rm in}})\,.
\end{align}

We find (by using the Kraus representation of $\mathcal{L}_{1/2}$ \cite{Ivan-OperatorsumRepresentation-2011}) that

\begin{align}\label{eq:counterexample-rhoout-defining-decomp}
\rho_{\rm out} =
\frac{5}{8}\ketbra{\psi}
+\frac{1}{4}\ketbra{1,0}
+\frac{1}{8}\ketbra{0,0}\,,
\end{align}
where
\begin{align}
\ket{\psi} := \frac{\ket{2,0} + 2 \ket{0,2}}{\sqrt{5}}\,.
\end{align}

By the equivalence to $\ket{1,1}$ and \cref{tab:orbit-dimensions}, we have $\dim(\orb_{G_{\rm GO}}(\ketbra{\psi_{\rm in}})) = 12$, and since the state is pure, its convex-roof orbit dimension is also $\dim(\orb_{G_{\rm GO}}^{\rm cr}(\ketbra{\psi_{\rm in}})) = 12$.

We now turn to showing that
\begin{align}\label{eq:counterexample-convex-roof-output-claim}
\operatorname{dimOrb}_{G_{\rm GO}}^{\rm cr}(\rho_{\rm out})=13\,.
\end{align}
First, let us study the $G_{\rm GO}$-orbit dimension of an arbitrary pure state in the support of $\rho_{\rm out}$, which can be written as
\begin{align}\label{eq:counterexample-convex-roof-general-support-state}
\ket{\varphi}
=\alpha\ket{\psi}+\beta\ket{1,0}+\gamma\ket{0,0}\,,
\end{align}
for some $\alpha,\beta,\gamma\in\mathbb{C}$, and let us suppose for now that $\alpha\neq0$.

Let us denote by $i\mathcal{S}$ the Lie-algebra stabilizer of $\ketbra{\varphi}$ (for the Lie algebra $\mathfrak{g}_{\rm GO}$ of $G_{\rm GO}$), i.e.:
\begin{equation}
\mathcal{S} := \left\{ H \in i \mathfrak{g}_{\rm GO} \,|\, \left[\,H, \ketbra{\varphi}\,\right] = 0 \right\}\,.
\end{equation}
First, notice that $\mathcal{S}$ is a real vector space, and that $\lambda\id \in \mathcal{S}$ for all $\lambda \in \mathbb{R}$; therefore, it holds that $H \in \mathcal{S}$ if and only if $H - \lambda\id \in \mathcal{S}$. 

We now proceed similary as in the proof of \cref{prop:generic-orbit-dim} in \cref{sec:SM-genericity}. 
Recall (\cref{eq:generic-orbit-dim-general-GO-Hamiltonian-normal-ordered}) that (up to factor $i$), an arbitrary Hamiltonian in $\mathfrak{g}_{\rm GO}$ writes, for $m=2$, as
\begin{align}\label{eq:counterexample-convex-roof-general-GO-Hamiltonian}
H
=&\ c\id+\sum_{j,k=1}^2 A_{jk}a_j^\dagger a_k
+\frac{1}{2}\sum_{j,k=1}^2\left(B_{jk}a_j^\dagger a_k^\dagger+\overline{B_{jk}}a_j a_k\right)\nonumber\\
&+\sum_{k=1}^2\left(u_k a_k^\dagger+\overline{u_k}a_k\right)\,,
\end{align}
for $c\in\mathbb{R}$, $A=A^\dagger\in\mathbb{C}^{2\times2}$, $B=B^\top\in\mathbb{C}^{2\times2}$, and $\bm{u}=(u_1,u_2)\in\mathbb{C}^2$.
Suppose that $H \in \mathcal{S}$, which is equivalent to $H\ket{\varphi}=\lambda\ket{\varphi}$ for some $\lambda\in\mathbb{R}$.
As per the above remark, we may suppose here that $\lambda=0$ (and will later keep in mind that the direction $\spa\!{\{\id\}}$ is also always in $\mathcal{S}$). That is, suppose that
\begin{align}\label{eq:counterexample-convex-roof-annihilation-condition}
H\ket{\varphi}=0\,.
\end{align}

As we have
\begin{align}
\ket{\psi}
=\frac{1}{\sqrt{10}}\left((a_1^\dagger)^2+2(a_2^\dagger)^2\right)\ket{0,0}\,,
\end{align}
the four-photon component of $H\ket{\varphi}$ is
\begin{align}
\frac{\alpha}{2\sqrt{10}}
\sum_{j,k=1}^2 B_{jk}a_j^\dagger a_k^\dagger
\left((a_1^\dagger)^2+2(a_2^\dagger)^2\right)\ket{0,0}\,.
\end{align}
Since $\alpha\neq0$, \cref{eq:counterexample-convex-roof-annihilation-condition} implies that $B=0$.

The three-photon component of \cref{eq:counterexample-convex-roof-annihilation-condition} now reduces to
\begin{align}
\frac{\alpha}{\sqrt{10}}
\sum_{k=1}^2 u_k a_k^\dagger
\left((a_1^\dagger)^2+2(a_2^\dagger)^2\right)\ket{0,0}\,,
\end{align}
so the equality implies in turn that $\bm{u}=\bm{0}$.
With $B=0$ and $\bm{u}=\bm{0}$, only the constant ($c$) and passive terms ($A$) remain in $H$.
Since the two-photon component of $H\ket{\varphi}$ is
\begin{align}\label{eq:counterexample-convex-roof-two-photon-component}
\frac{\alpha}{\sqrt{5}}
\big[
&(c+2A_{11})\ket{2,0}
+2(c+2A_{22})\ket{0,2}\nonumber\\
&+\sqrt{2}\big(\overline{A_{12}}+2A_{12}\big)\ket{1,1}
\big]\,,
\end{align}
\cref{eq:counterexample-convex-roof-annihilation-condition} implies on the 
$\ket{2,0}$ and $\ket{0,2}$ components that
\begin{align}
A_{11}=A_{22}=-\frac{c}{2}\,,
\end{align}
while it implies on the $\ket{1,1}$ component that
\begin{align}
\overline{A_{12}}+2A_{12}=0\,.
\end{align}
Writing $A_{12}=x+iy$ for $x,y\in\mathbb{R}$, the latter equality becomes $3x+iy=0$, implying that $A_{12}=0$.
We have hence obtained that $A=-\frac{c}{2}\id_2$, and hence that
\begin{align}\label{eq:counterexample-convex-roof-annihilating-H-form}
H=c\id-\frac{c}{2}(N_1+N_2)\,.
\end{align}

We are left to imposing \cref{eq:counterexample-convex-roof-annihilation-condition} on the one and zero photon components.
These are now, respectively,
\begin{align}\label{eq:counterexample-convex-roof-small-components}
\frac{c}{2}\beta\ket{1,0}\quad\text{and}\quad c\gamma\ket{0,0}\,.
\end{align}
Suppose first that $\beta=\gamma=0$, i.e. that $\ketbra{\varphi}=\ketbra{\psi}$.
Then no further constraints arise beyond \cref{eq:counterexample-convex-roof-annihilating-H-form}, so we obtained that $H \in \spa\!{\{\id-(N_1+N_2)/2\}}$. Since conversely this direction is indeed in $\mathcal{S}$, and so is $\spa\!{\{\id\}}$, we have established that in this case:
\begin{equation}\label{eq:counterexample-convex-roof-stablizer-expression-case-1}
\mathcal{S} = \spa\!{\{\id,\ \id-(N_1+N_2)/2\}}\,.
\end{equation}
Since it is two-dimensional (and since $\dim(G_{\rm GO})=15$ for $m=2$), the orbit-stabilizer dimension formula then gives
\begin{align}\label{eq:counterexample-convex-roof-psi-orbit-dim}
\dim(\orb_{G_{\rm GO}}(\ketbra{\psi}))
=15-2=13\,.
\end{align}

Suppose now that $\beta\neq0$ or $\gamma\neq0$.
Then one of the components in \cref{eq:counterexample-convex-roof-small-components} is non-zero, which by \cref{eq:counterexample-convex-roof-annihilation-condition} then implies that $c=0$, and hence, by \cref{eq:counterexample-convex-roof-annihilating-H-form}, that $H=0$. In that case, we thus obtain only:
\begin{equation}
\mathcal{S} = \spa\!{\{\id\}}\,,
\end{equation}
and hence
\begin{align}\label{eq:counterexample-convex-roof-general-support-state-orbit-dim}
\dim(\orb_{G_{\rm GO}}(\ketbra{\varphi}))
=15-1=14\,.
\end{align}

In both cases, we have hence shown that every pure state of the form of \cref{eq:counterexample-convex-roof-general-support-state} with $\alpha\neq0$ satisfies
\begin{align}\label{eq:counterexample-convex-roof-support-state-lower-bound}
\dim(\orb_{G_{\rm GO}}(\ketbra{\varphi}))\geq13\,.
\end{align}

Next, consider an arbitrary convex decomposition of $\rho_{\rm out}$ into pure states,
\begin{align}\label{eq:counterexample-convex-roof-arbitrary-decomposition}
\rho_{\rm out}
=\int d\lambda\,p_\lambda\ketbra{\varphi_\lambda}\,.
\end{align}
As claimed elsewhere in the main text, the fact that $p_\lambda$ is a probability density implies that, up to exceptions of measure zero, every state $\ket{\varphi_\lambda}$ in this decomposition must live inside $\operatorname{supp}(\rho_{\rm out})$. This is because, by denoting $P_{\perp}$ the orthogonal projector onto $\operatorname{supp}(\rho_{\rm out})^\perp$, then we have
\begin{align}\label{eq:counterexample-convex-roof-zero-orthogonal-support}
0
=\Tr(P_{\perp}\rho_{\rm out})
=\int d\lambda\,p_\lambda
\|P_{\perp}\ket{\varphi_\lambda}\|^2\,,
\end{align}
and as the integrand is nonnegative, it must vanish almost everywhere.
Hence, almost every state in the decomposition \cref{eq:counterexample-convex-roof-arbitrary-decomposition} can be written as
\begin{align}
\ket{\varphi_\lambda}
=\alpha_\lambda\ket{\psi}
+\beta_\lambda\ket{1,0}
+\gamma_\lambda\ket{0,0}\,.
\end{align}
Furthermore, we have
\begin{align}
0 < \frac{5}{8}
=\bra{\psi}\rho_{\rm out}\ket{\psi}
=\int d\lambda\,p_\lambda|\alpha_\lambda|^2\,,
\end{align}
which implies that $\alpha_\lambda\neq0$ on a positive ($p_\lambda$-)measure set (for all convex decompositions $\{p_\lambda,\rho_\lambda\}$ of $\rho_{\rm out}$).
Therefore, by \cref{eq:counterexample-convex-roof-support-state-lower-bound}, the essential supremum of the pure-state orbit dimensions in every decomposition \cref{eq:counterexample-convex-roof-arbitrary-decomposition} is at least $13$, and therefore
\begin{align}\label{eq:counterexample-convex-roof-output-lower-bound}
\operatorname{dimOrb}_{G_{\rm GO}}^{\rm cr}(\rho_{\rm out})\geq13\,.
\end{align}

Conversely, \cref{eq:counterexample-convex-roof-psi-orbit-dim} and the first column of \cref{tab:orbit-dimensions} give
\begin{align}
\dim(\orb_{G_{\rm GO}}(\ketbra{\psi}))&=13\,,\\
\dim(\orb_{G_{\rm GO}}(\ketbra{1,0}))&=12\,,\\
\dim(\orb_{G_{\rm GO}}(\ketbra{0,0}))&=10\,.
\end{align}
Since the maximum of these three values is $13$, the particular (defining) decomposition of $\rho_{\rm out}$ of \cref{eq:counterexample-rhoout-defining-decomp} yields that
\begin{align}
\operatorname{dimOrb}_{G_{\rm GO}}^{\rm cr}(\rho_{\rm out}) \leq 13\,.
\end{align}
Combining this with \cref{eq:counterexample-convex-roof-output-lower-bound} establishes \cref{eq:counterexample-convex-roof-output-claim}.

Finally, since $\ketbra{\psi_{\rm in}}$ is pure, its convex-roof orbit dimension agrees with its ordinary orbit dimension, and using again its $G_{\rm PLO}$-equivalence to $\ketbra{1,1}$ and \cref{tab:orbit-dimensions}, we get:
\begin{align}
\operatorname{dimOrb}_{G_{\rm GO}}^{\rm cr}(\ketbra{\psi_{\rm in}})=12<13=\operatorname{dimOrb}_{G_{\rm GO}}^{\rm cr}(\rho_{\rm out})\,.
\end{align}

For completeness, note that the calculation leading to \cref{eq:counterexample-convex-roof-psi-orbit-dim} also gives the ordinary orbit dimension of $\rho_{\rm out}$.
Indeed, its three nonzero eigenvalues are all distinct (c.f. \cref{eq:counterexample-rhoout-defining-decomp}), so a Hamiltonian commuting with $\rho_{\rm out}$ must also commute with $\ketbra{\psi}$, i.e. lie in the $\mathcal{S}$ of \cref{eq:counterexample-convex-roof-stablizer-expression-case-1}.

Since the latter is spanned by $\id$ and $N_1+N_2$, and since conversely both of these Hamiltonians commute with $\rho_{\rm out}$ as well, we get
\begin{align}
\dim(\orb_{G_{\rm GO}}(\rho_{\rm out}))
=15-2=13\,.
\end{align}

Furthermore, notice that the above derived expression(s) for the stabilizer $\mathcal{S}$ of $\rho_{\rm out}$ under $G_{\rm GO}$, are still valid as stabilizers under the subgroup $G_{\rm DPLO} \subset G_{\rm GO}$ (since the directions $\id$ and $N_1+N_2$ found in the former case still belong to the Lie algebra of $G_{\rm DPLO}$). Therefore when using the orbit-stabilizer dimension formula for $G_{\rm DPLO}$, the obtained orbit dimensions are all those of the $G_{\rm GO}$-orbits but substracted by $6$ (as for $m=2$, $\dim(G_{\rm GO})=15$ and $\dim(G_{\rm DPLO})=9$, which differ by $6$).

To summarize, we have obtained that for both types of orbit dimensions (standard or convex-roof), those of $\ketbra{\psi_{\rm in}}$ are $12$ for $G_{\rm GO}$ and $6$ for $G_{\rm DPLO}$, while those of $\rho_{\rm out}$ are increased by one, i.e. $13$ for $G_{\rm GO}$ and $7$ for $G_{\rm DPLO}$.

Since the channel $\mathcal{L}_{1/2}$ of \cref{eq:counterexample-loss-one-mode-channel-def} (and hence $\mathcal{L}_{1/2} \otimes \mathcal{I}$ as well) is free in both the RTs of non-Gaussianity \cite{Takagi-ConvexResource-2018} and nonclassicality \cite{Yadin-OperationalResource-2018}, this shows that the $G_{\rm GO}$ and the $G_{\rm DPLO}$ orbit dimensions (both standard and convex-roof) are not fixed-mode monotones for these resource theories.$\qed$

\subsection{Justification that orbit dimensions are not equivalent to the stellar or symplectic ranks}\label{subsec:SM-orbit-dimensions-vs-stellar-symplectic-ranks}

In this section, we justify the fact (stated in the main text in \cref{subsec:relations-with-stellar-or-symplectic-ranks}) that the properties of $G$-orbit dimensions (for $G = G_{\rm PLO},G_{\rm DPLO},G_{\rm ALO},G_{\rm GO}$, in both the ket and density operator pictures) are generally independent from the properties of stellar and symplectic ranks.
We show this by highlighting pairs of states which have equal $G$-orbit dimensions but different stellar rank (resp. symplectic rank), as well as pairs of states where the converse holds. Since these states all have relatively simple forms, we do not detail the proofs for the obtained values.
For all $m\geq1$, let
\begin{align}
\ket{\psi_1^{m}}:&=\ket{1}^{\otimes m}\,,\\
\ket{\psi_2^{m}}:&= \ket{1}^{\otimes(m-1)}\otimes\ket{0}\,,
\intertext{and}
\ket{\psi_3^{m}}:&=\ket{1,0,\dots,0}\,,\\
\ket{\psi_4^{m}}:&=\frac{\ket{0,\dots,0}+\ket{1,0,\dots,0}}{\sqrt{2}}\,.
\end{align}
The states in the first pair have \textit{different} stellar and symplectic ranks,
\begin{align}
r^\star(\psi_1^{m})&=\aleph(\psi_1^{m})&&= m\,,\\
r^\star(\psi_2^{m})&=\aleph(\psi_2^{m})&&= m-1\,,
\end{align}
but being Fock basis states with different numbers of unoccupied modes, they have \textit{equal} $G$-orbit dimensions (in the ketbra picture) for all four groups $G$, c.f. \cref{tab:orbit-dimensions}.

Conversely, the states in the second pair have \textit{equal} stellar and symplectic ranks,
\begin{align}
r^\star(\psi_3^{m})=\aleph(\psi_3^{m})=r^\star(\psi_4^{m})=\aleph(\psi_4^{m})= 1\,,
\end{align}

but by the first two columns of \cref{tab:orbit-dimensions}, they have \textit{different} $G$-orbit dimensions (in the ketbra picture) for all four groups $G$, namely: 
\begin{equation}
\dim(\orb_G(\ketbra{\psi_3^{m}}))=\dim(\orb_G(\ketbra{\psi_4^{m}}))+1\,.
\end{equation}

As for orbits in the ket picture instead of the ketbra picture: for $G\in\{G_{\rm DPLO},G_{\rm ALO},G_{\rm GO}\}$, the orbit dimensions of the four states above equally increase by one, therefore they serve as counterexamples as well in that case.
For $G=G_{\rm PLO}$, we can this time take, for all $m\geq 1$,
\begin{align}
\ket{\varphi_1^{m}}:&=\ket{1,0,\dots,0}\,,\\
\ket{\varphi_2^{m}}:&= \ket{\alpha} \otimes \ket{0,\dots,0}\,,
\intertext{and}
\ket{\varphi_3^{m}}:&=\ket{0,0,\dots,0}\,,\\
\ket{\varphi_4^{m}}:&= \ket{\alpha} \otimes \ket{0,\dots,0} = \ket{\varphi_2^{m}}\,,
\end{align}
for some coherent state $\ket{\alpha}$ with $\alpha \in \mathbb{C} \setminus \{0\}$.
The states in the first pair have \textit{different} stellar and symplectic ranks,
\begin{align}
r^\star(\varphi_1^{m})&=\aleph(\varphi_1^{m})&&= 1\,,\\
r^\star(\varphi_2^{m})&=\aleph(\varphi_2^{m})&&= 0\,,
\end{align}
but they have an \textit{equal} $G_{\rm PLO}$-orbit dimension (in the ket picture) of $2m-1$.
Conversly, the states in the second pair have \textit{equal} stellar and symplectic ranks,
\begin{align}
r^\star(\varphi_3^{m})=\aleph(\varphi_3^{m})=r^\star(\varphi_4^{m})=\aleph(\varphi_4^{m})= 0\,,
\end{align}
but they have \textit{different} $G_{\rm PLO}$-orbit dimensions (in the ket picture), namely: 
\begin{align}
\dim(\orb_G(\ket{\varphi_3^{m}})) &= 0\,,\\
\dim(\orb_G(\ket{\varphi_4^{m}})) &= 2m-1\,.
\end{align}

\section{Phase-space representations of quadratic Hamiltonians}\label{sec:SM-phase-space-representations}

\subsection{Passage to a phase-space picture}

First, we need to introduce some notation. If $\mathcal{V}$ and $\mathcal{W}$ are two (possibly infinite-dimensional, complex) vector spaces, 
and if 
\begin{equation}\label{eq:starting-isomorphism-map-for-phase-space-reprs}
\mathtt{T} : \mathcal{V} \to \mathcal{W}
\end{equation}
is a (complex-)linear isomorphism, then any operator $A : \mathcal{V} \to \mathcal{V}$ can naturally be turned into an operator $\widehat{A} : \mathcal{W} \to \mathcal{W}$ via $\widehat{A} := \mathtt{T} \circ A \circ \mathtt{T}^{-1}$. That is, $\mathtt{T}$ induces another isomorphism $\tilde{\mathtt{T}} : \operatorname{End}(\mathcal{V}) \to \operatorname{End}(\mathcal{W})$ between the two spaces of operators, given by $\tilde{\mathtt{T}}(A) := \widehat{A}$. This map $\tilde{\mathtt{T}}$ is in fact clearly an algebra isomorphism (it preserves addition but also composition and scalar multiplication). If $\mathcal{D} \subseteq \mathcal{V}$ is a vector subspace, then of course the isomorphism $\tilde{\mathtt{T}}$ still provides (by appropriate restrictions) an algebra isomorphism
\begin{equation}\label{eq:pushforward-of-operator-map-restricted-common-domain}
\tilde{\mathtt{T}} : \operatorname{End}(\mathcal{D}) \to \operatorname{End}(\mathtt{T}(\mathcal{D}))\,.
\end{equation}
Concretely, this means that if we are given some operators $\{A_k\}$ on $\mathcal{V}$ that are only defined on a common domain $\mathcal{D} \subseteq \mathcal{V}$ and that all stabilize this domain, then we can turn each of them into the corresponding operator $\widehat{A_k}:=\tilde{\mathtt{T}}(A_k)$ on $\mathcal{W}$, and it then holds that polynomials in the operators $\{A_k\}$ get mapped transparently, i.e. for example the operator $A:= A_1A_2^5 + 2A_3^2$ gets mapped to $\widehat{A} = \widehat{A_1}\widehat{A_2}^5 + 2\widehat{A_3}^2$.
By regarding \cref{eq:starting-isomorphism-map-for-phase-space-reprs} as a real-linear isomorphism between real vector spaces (by restrictions of the scalars), and since $\widehat{A_k}\cdot \mathtt{T}(p) = \mathtt{T}(A_k \cdot p)$ and linear isomorphisms preserve the rank, it of course holds that
\begin{align}
&\rank_{\mathbb{R}}(\{ A_1\cdot p,\dots, A_d\cdot p\})\nonumber\\
=
&\rank_{\mathbb{R}}(\{ \widehat{A_1}\cdot \mathtt{T}(p),\dots, \widehat{A_d}\cdot \mathtt{T}(p)\})
\end{align}
for any $p \in \mathcal{D}$.

\subsection{Case of the stellar representation}

The \textit{stellar representation} of (pure) quantum states \cite{chabaud_stellar_2020,Chabaud-ResourcesBosonic-2023} provides us with a (complex) Hilbert-space isomorphism
\begin{equation}
\mathtt{T} : \fock \to \mathcal{H}L^2(\mathbb{C}^m)
\end{equation}
between the Fock space $\fock$ and a certain Hilbert space $\mathcal{H}L^2(\mathbb{C}^m)$ of complex functions called the Segal-Bargmann space \cite{hall_quantum_2013}.
Explicitly, $\mathcal{H}L^2(\mathbb{C}^m) := \{ f : \mathbb{C}^m \to \mathbb{C} \,|\, f \text{ holomorphic },\ \int_{\mathbb{C}^m} |f(\bm{z})|^2 \frac{e^{-|\bm{z}|^2}}{\pi^m} d\bm z < \infty\}\,$, 
and
\begin{equation}\label{eq:stellar-isomorphism-explicit}
(\mathtt{T} \ket{\psi})(\bm{z}) := \sum_{\bm{n} \in \mathbb{N}^m} \bra{\bm{n}}\ket{\psi} \frac{\bm{z}^{\bm{n}}}{\sqrt{\bm{n}!}}\,.
\end{equation}
Here, we take the subspace $\mathcal{D} \subseteq \fock$ (in the notation of \cref{eq:pushforward-of-operator-map-restricted-common-domain}) to be the space of Schwartz states.

It is then known that creation and annihilation operators are mapped under $\tilde{\mathtt{T}}$ to multiplication and derivative operators \cite[Sec. 14.4]{hall_quantum_2013}, i.e. :
\begin{align}
\widehat{a_k^\dagger} &= z_k\,,\label{eq:stellar-creation}\\
\widehat{a_k} &= \partial_{z_k}\,,\label{eq:stellar-annihilation}
\end{align}
where $(z_k f)(\bm{z}):= z_k f(\bm{z})$ and $(\partial_{z_k} f)(\bm{z}):= (\frac{\partial f}{\partial z_k})(\bm{z})$.
Since the Hamiltonians $H$ considered in the main text (\cref{tab:Lie-algebra-bases}) are polynomials in $\{a_k^\dagger, a_k\}$, the correspondences of \cref{eq:stellar-creation,eq:stellar-annihilation} suffice (using the mentioned algebra isomorphism property of $\tilde{\mathtt{T}}$) to obtain the corresponding differential operators $\widehat{H}$. We list the obtained operators, for completeness:
\begin{align}
\widehat{e}_{kl} &= \frac{1}{2}\Big(z_k\,\partial_{z_l}+z_l\,\partial_{z_k}\Big)\,,\label{eq:stellar-generator-e}\\
\widehat{E}_{kl} &= \frac{i}{2}\Big(z_k\,\partial_{z_l}-z_l\,\partial_{z_k}\Big)\,,\label{eq:stellar-generator-E}\\
\widehat{r}_{kl} &= \frac{1}{2}\Big(z_k z_l+\partial_{z_k}\partial_{z_l}\Big)\,,\label{eq:stellar-generator-r}\\
\widehat{R}_{kl} &= \frac{i}{2}\Big(z_k z_l-\partial_{z_k}\partial_{z_l}\Big)\,,\label{eq:stellar-generator-R}\\
\widehat{N}_{k} &= z_k\,\partial_{z_k}\,,\label{eq:stellar-generator-N}\\
\widehat{s}_{k} &= \frac{1}{2}\Big(z_k^2+\partial_{z_k}^2\Big)\,,\label{eq:stellar-generator-s}\\
\widehat{S}_{k} &= \frac{i}{2}\Big(z_k^2-\partial_{z_k}^2\Big)\,,\label{eq:stellar-generator-S}\\
\widehat{q}_{k} &= \frac{1}{\sqrt{2}}\Big(z_k+\partial_{z_k}\Big)\,,\label{eq:stellar-generator-q}\\
\widehat{p}_{k} &= \frac{i}{\sqrt{2}}\Big(z_k-\partial_{z_k}\Big)\,,\label{eq:stellar-generator-p}\\
\widehat{\id} &= 1\,,\label{eq:stellar-generator-id}
\end{align}
where in the last line, $1$ denotes the identity operator $f\mapsto f$.

\subsection{Case of the Wigner representation}
The \textit{Wigner representation} of (mixed) quantum states provides us with a (complex) Hilbert-space isomorphism
\begin{equation}
\mathtt{T} : B_2(\fock) \to L^2(\mathbb{R}^{2m})
\end{equation}
between the Hilbert-Schmidt operator space $B_2(\fock)$ and the Hilbert space $L^2(\mathbb{R}^{2m})$ of square-integrable complex functions \cite{Pool-MathematicalAspects-1966} (which become real-valued functions for the  operators in $B_2(\fock)$ that are self-adjoint, such as density operators).
Here, we take the subspace $\mathcal{D} \subseteq B_2(\fock)$ (in the notation of \cref{eq:pushforward-of-operator-map-restricted-common-domain}) to be the space of Schwartz operators.
Given an operator $H$ on $\fock$ that is defined on $\mathcal{D} \subseteq \fock$ and that stabilizes $\mathcal{D}$, let us denote by $L_H, R_H \in \operatorname{End}(\mathcal{D})$ the operators that perform left and right multiplication by $H$, i.e. $L_H(X) := H X$ and $R_H(X) := X H$ ($X \in \mathcal{D}$).

It is then known that the operators $L_{H}, R_{H}$ performing left/right multiplication by position and momentum operators ($H = q_k,p_k$) are mapped under $\tilde{\mathtt{T}}$ to the following differential operators (known as \textit{Bopp operators}) \cite[Sec. 2.3]{Hillery-DistributionFunctions-1984}:
\begin{align}
\widehat{L_{q_k}} &= \Big(q_k+\frac{i}{2}\,\partial_{p_k}\Big)\,,\label{eq:Wigner-left-multiplication-q}\\
\widehat{R_{q_k}} &= \Big(q_k-\frac{i}{2}\,\partial_{p_k}\Big)\,,\label{eq:Wigner-right-multiplication-q}\\
\widehat{L_{p_k}} &= \Big(p_k-\frac{i}{2}\,\partial_{q_k}\Big)\,,\label{eq:Wigner-left-multiplication-p}\\
\widehat{R_{p_k}} &= \Big(p_k+\frac{i}{2}\,\partial_{q_k}\Big)\,,\label{eq:Wigner-right-multiplication-p}
\end{align}
where
\begin{align}
(q_k W)(\bm{q},\bm{p}) &:= q_k W(\bm{q},\bm{p})\,,\\
(p_k W)(\bm{q},\bm{p}) &:= p_k W(\bm{q},\bm{p})\,,\\
(\partial_{q_k} W)(\bm{q},\bm{p}) &:= \Big(\frac{\partial W}{\partial q_k}\Big)(\bm{q},\bm{p})\,,\\
(\partial_{p_k} W)(\bm{q},\bm{p}) &:= \Big(\frac{\partial W}{\partial p_k}\Big)(\bm{q},\bm{p})\,,
\end{align}
and where indeed all the above operators are well defined on (and preserve) Wigner functions $W$ of Schwartz operators, since such Wigner functions are known to be Schwartz functions \cite{keyl_schwartz_2016,hernandez_rapidly_2022} and hence are smooth.

Here, being in a density operator setting, the linear operators $A \in \operatorname{End}(\mathcal{D})$ that we are interested in casting into operators $\widehat{A}$ acting on Wigner functions, are not directly the Hamiltonians $H$ from the main text, but are their associated commutators $A:=[H, \cdot] := \operatorname{ad}_H$ (c.f. \cref{eq:concrete-rank-formula-density}).
Since the Hamiltonians $H$ considered in the main text (\cref{tab:Lie-algebra-bases}) can be written as polynomials in $\{q_k, p_k\}$ (c.f. \cref{sec:measuring-pure-gram-matrix-entries}), using such rewritings (e.g. \cref{eq:def-generator-e-qpform,eq:def-generator-E-qpform,eq:def-generator-r-qpform,eq:def-generator-R-qpform,eq:def-generator-N-qpform,eq:def-generator-s-qpform,eq:def-generator-S-qpform,eq:def-generator-q-qpform,eq:def-generator-p-qpform,eq:def-generator-id-qpform}), the equation
\begin{equation}
\operatorname{ad}_H = L_H - R_H\,,
\end{equation}
the correspondences of \cref{eq:Wigner-left-multiplication-q,eq:Wigner-right-multiplication-q,eq:Wigner-left-multiplication-p,eq:Wigner-right-multiplication-p}, and the fact that $L_H, R_H$ and $\tilde{\mathtt{T}}$ are algebra homomorphisms, again suffices to obtain the corresponding differential operators $\widehat{\operatorname{ad}_H}$.
For completeness, we also list the obtained operators:
\begin{align}
\widehat{\operatorname{ad}_{e_{kl}}}
&= i\left(\frac{q_l}{2}\,\partial_{p_k}-\frac{p_l}{2}\,\partial_{q_k}
 +\frac{q_k}{2}\,\partial_{p_l}-\frac{p_k}{2}\,\partial_{q_l}\right)\,,
\\[2pt]
\widehat{\operatorname{ad}_{E_{kl}}}
&= i\left(-\frac{p_l}{2}\,\partial_{p_k}-\frac{q_l}{2}\,\partial_{q_k}
 +\frac{p_k}{2}\,\partial_{p_l}+\frac{q_k}{2}\,\partial_{q_l}\right)\,,
\\[2pt]
\widehat{\operatorname{ad}_{r_{kl}}}
&= i\left(\frac{q_l}{2}\,\partial_{p_k}+\frac{p_l}{2}\,\partial_{q_k}
 +\frac{q_k}{2}\,\partial_{p_l}+\frac{p_k}{2}\,\partial_{q_l}\right)\,,
\\[2pt]
\widehat{\operatorname{ad}_{R_{kl}}}
&= i\left(\frac{p_l}{2}\,\partial_{p_k}-\frac{q_l}{2}\,\partial_{q_k}
 +\frac{p_k}{2}\,\partial_{p_l}-\frac{q_k}{2}\,\partial_{q_l}\right)\,,
\\[2pt]
\widehat{\operatorname{ad}_{N_k}}
&= i\left(q_k\,\partial_{p_k}-p_k\,\partial_{q_k}\right)\,,\label{eq:Wigner-adjoint-action-N}
\\[2pt]
\widehat{\operatorname{ad}_{s_k}}
&= i\left(q_k\,\partial_{p_k}+p_k\,\partial_{q_k}\right)\,,
\\[2pt]
\widehat{\operatorname{ad}_{S_k}}
&= i\left(p_k\,\partial_{p_k}-q_k\,\partial_{q_k}\right)\,,
\\[2pt]
\widehat{\operatorname{ad}_{q_k}}
&= i\,\partial_{p_k}\,,
\\[2pt]
\widehat{\operatorname{ad}_{p_k}}
&= -i\,\partial_{q_k}\,,
\\[2pt]
\widehat{\operatorname{ad}_{\id}}
&= 0\,.
\end{align}
Indeed, to illustrate with an example, we have
\begin{align*}
\operatorname{ad}_{N_{k}}
&= L_{N_{k}} - R_{N_{k}}\\
&=  L_{\frac{1}{2}\bigl(q_k^{2} + p_k^{2} - \id\bigr)} - R_{\frac{1}{2}\bigl(q_k^{2} + p_k^{2} - \id\bigr)}\\
&=  \frac{1}{2}L_{q_k}^2 + \frac{1}{2}L_{p_k}^2 - \frac{1}{2}L_{\id} - \left(\frac{1}{2}R_{q_k}^2 + \frac{1}{2}R_{p_k}^2 - \frac{1}{2}R_{\id}\right)\\
&=  \frac{1}{2}L_{q_k}^2 + \frac{1}{2}L_{p_k}^2 - \frac{1}{2}R_{q_k}^2 - \frac{1}{2}R_{p_k}^2\,,
\intertext{and hence}
\widehat{\operatorname{ad}_{N_k}} 
&= \frac{1}{2}\widehat{L_{q_k}}^2 + \frac{1}{2}\widehat{L_{p_k}}^2 - \frac{1}{2}\widehat{R_{q_k}}^2 - \frac{1}{2}\widehat{R_{p_k}}^2\,,
\end{align*}
which, after injecting \cref{eq:Wigner-left-multiplication-q,eq:Wigner-right-multiplication-q,eq:Wigner-left-multiplication-p,eq:Wigner-right-multiplication-p} and expanding the squares (terms inside the square commute), readily simplifies to \cref{eq:Wigner-adjoint-action-N}.

Other phase-space representations, such as the more general $s$-parametrized quasidistributions (including $Q$ and $P$-functions) which provide vector space isomorphisms $\mathtt{T}$ from $B_2(\fock)$ to suitable function/distribution spaces \cite{Cahill-DensityOperators-1969,Cahill-OrderedExpansions-1969,Brif-PhasespaceFormulation-1999}, can in principle be treated in the same way.

\subsection{Example computations in the stellar representation}\label{subsec:SM-stellar-representation-examples}
Consider the single-mode ($m=1$) coherent state $\ket{\alpha}$ for some $\alpha \in \mathbb{C}$ (c.f. \cref{eq:coherent-state-Fock-repr}), and its orbit under the $6$-dimensional group $G_{\rm GO}$. The stellar function of this state is $f_{\alpha}(z) = e^{-|\alpha|^2/2} \, e^{\alpha z}$, which discarding the normalization factor (which doesn't affect the rank computations), we simply write as
\begin{equation}
f_{\alpha}(z) = e^{\alpha z}\,.
\end{equation}
By applying each operator of \cref{eq:stellar-generator-N,eq:stellar-generator-s,eq:stellar-generator-S,eq:stellar-generator-q,eq:stellar-generator-p,eq:stellar-generator-id} to $f_{\alpha}$, one obtains the following 6 functions:
\begin{align}
\widehat{N} f_{\alpha}(z) &= \alpha z e^{\alpha z}\,,\label{eq:stellar-action-N-on-coherent-state}\\
\widehat{s} f_{\alpha}(z) &= \frac{1}{2}\Big(z^2 + \alpha^2\Big) e^{\alpha z}\,,\label{eq:stellar-action-s-on-coherent-state}\\
\widehat{S} f_{\alpha}(z) &= \frac{i}{2}\Big(z^2 - \alpha^2\Big) e^{\alpha z}\,,\label{eq:stellar-action-S-on-coherent-state}\\
\widehat{q} f_{\alpha}(z) &= \frac{1}{\sqrt{2}}\Big(z + \alpha\Big) e^{\alpha z}\,,\label{eq:stellar-action-q-on-coherent-state}\\
\widehat{p} f_{\alpha}(z) &= \frac{i}{\sqrt{2}}\Big(z - \alpha\Big) e^{\alpha z}\,,\label{eq:stellar-action-p-on-coherent-state}\\
\widehat{\id} f_{\alpha}(z) &= e^{\alpha z}\,.\label{eq:stellar-action-id-on-coherent-state}
\end{align}
There are many ways to evaluate the real rank of this list of 6 functions.
For instance, here since all these functions are elements of
$\spa_{\mathbb{R}}(\mathcal{B})$ with
\begin{equation}\label{eq:basis-functions-for-coherent-state-calc}
\mathcal{B} = \left\{ e^{\alpha z}, i e^{\alpha z}, z e^{\alpha z}, i z e^{\alpha z}, z^2 e^{\alpha z}, i z^2 e^{\alpha z}\right\}\,,
\end{equation}
and since the functions in $\mathcal{B}$ can easily be shown to be linearly independent over $\mathbb{R}$, the rank of our $6$ functions \cref{eq:stellar-action-N-on-coherent-state,eq:stellar-action-s-on-coherent-state,eq:stellar-action-S-on-coherent-state,eq:stellar-action-q-on-coherent-state,eq:stellar-action-p-on-coherent-state,eq:stellar-action-id-on-coherent-state} can be computed as the rank of their matrix in the basis of functions $\mathcal{B}$, i.e. the real rank of
\begin{equation}
M = 
\begin{pmatrix}
0 & 0 & a & b & 0 & 0\\[4pt]
\frac{a^2-b^2}{2} & ab & 0 & 0 & \frac{1}{2} & 0\\[6pt]
ab & -\frac{a^2-b^2}{2} & 0 & 0 & 0 & \frac{1}{2}\\[6pt]
\frac{a}{\sqrt{2}} & \frac{b}{\sqrt{2}} & \frac{1}{\sqrt{2}} & 0 & 0 & 0\\[6pt]
\frac{b}{\sqrt{2}} & -\frac{a}{\sqrt{2}} & 0 & \frac{1}{\sqrt{2}} & 0 & 0\\[6pt]
1 & 0 & 0 & 0 & 0 & 0
\end{pmatrix}
\,,
\end{equation}
with $a := \operatorname{Re}(\alpha),\ b := \operatorname{Im}(\alpha)$.
One finds, e.g. with symbolic computation software, that $\rank(M) = 5$ (independently of $a,b \in \mathbb{R}$).
Thus, we obtained through the stellar representation that $\dim(\orb_{G_{\rm GO}}(\ket{\alpha})) = 5$ for all $\alpha \in \mathbb{C}$.

Of course, for coherent states this was the expected result (since those are Gaussian pure states and we already saw in the main text that $\dim(\orb_{G_{\rm GO}}(\ket{0})) = 5$), but the same techniques can be applied to e.g. infinite stellar-rank states \cite{Walschaers-NonGaussianQuantum-2021}.
Indeed, let us for instance follow the same procedure for the single-mode cat state $\ket{\mathrm{cat}_\alpha} := \frac{1}{\sqrt{2}}(\ket{\alpha} + \ket{-\alpha})$ for some $\alpha \in \mathbb{C}$, whose stellar function writes (up to normalization) as:
\begin{equation}
f_{\alpha}(z) = \cosh(\alpha z)\,.
\end{equation}
This time, we get:
\begin{align}
\widehat{N} f(z) &= \alpha z\,\sinh(\alpha z),\\
\widehat{s} f(z) &= \frac{1}{2}\Big(z^2+\alpha^2\Big)\cosh(\alpha z),\\
\widehat{S} f(z) &= \frac{i}{2}\Big(z^2-\alpha^2\Big)\cosh(\alpha z),\\
\widehat{q} f(z) &= \frac{1}{\sqrt{2}}\Big(z\cosh(\alpha z)+\alpha\sinh(\alpha z)\Big),\\
\widehat{p} f(z) &= \frac{i}{\sqrt{2}}\Big(z\cosh(\alpha z)-\alpha\sinh(\alpha z)\Big),\\
\widehat{\id} f(z) &= \cosh(\alpha z),
\end{align}
which are hence all elements of $\spa_{\mathbb{R}}(\mathcal{B})$ with
\begin{equation}\label{eq:basis-functions-for-cat-state-calc}
\mathcal{B} = \left\{
\begin{aligned}
\cosh(\alpha z),\ i\cosh(\alpha z),\\
\sinh(\alpha z),\ i\sinh(\alpha z),\\
z\cosh(\alpha z),\ iz\cosh(\alpha z),\\
z\sinh(\alpha z),\ iz\sinh(\alpha z),\\
z^2\cosh(\alpha z),\ iz^2\cosh(\alpha z)
\end{aligned}
\right\}\,,
\end{equation}
whose 10 functions are again linearly independent over $\mathbb{R}$.
Thus, we get that
$\dim(\orb_{G_{\rm GO}}(\ket{\mathrm{cat}_\alpha})) = \rank(M)$, with $M$ the $6 \times 10$ real matrix:
\begin{equation}
M =
\begin{pmatrix}
0 & 0 & 0 & 0 & 0 & 0 & a & b & 0 & 0\\[4pt]
\frac{a^2-b^2}{2} & ab & 0 & 0 & 0 & 0 & 0 & 0 & \frac12 & 0\\[6pt]
ab & -\frac{a^2-b^2}{2} & 0 & 0 & 0 & 0 & 0 & 0 & 0 & \frac12\\[6pt]
0 & 0 & \frac{a}{\sqrt2} & \frac{b}{\sqrt2} & \frac{1}{\sqrt2} & 0 & 0 & 0 & 0 & 0\\[6pt]
0 & 0 & \frac{b}{\sqrt2} & -\frac{a}{\sqrt2} & 0 & \frac{1}{\sqrt2} & 0 & 0 & 0 & 0\\[6pt]
1 & 0 & 0 & 0 & 0 & 0 & 0 & 0 & 0 & 0
\end{pmatrix}
\,,
\end{equation}
with $a := \operatorname{Re}(\alpha)$ and $b := \operatorname{Im}(\alpha)$.
Using symbolic computation, we find that $\rank(M) = 6$ for all $(a,b )\in \mathbb{R}^2\setminus\{(0,0)\}$,
and $\rank(M) = 5$ for $a=b=0$ (consistent with the previous vacuum state).

Hence we have shown that for all $\alpha\neq0$:
\begin{equation}
\dim(\orb_{G_{\rm GO}}(\ket{\mathrm{cat}_\alpha})) = 6\,
\end{equation}
which is the maximal possible dimension (for $m=1$, $\dim(G_{\rm GO}) = 6$).

We end the examples by considering and comparing two different flavors of two-mode cat states.
On one hand, consider the 2-mode state given by the stellar function
\begin{equation}
f_a(z_1,z_2) = c\,(1 + e^{z_1} + e^{z_2})\,,
\end{equation}
with $c$ a normalization constant ($c=(7 + 2 e)^{-1/2}$).
Note that this corresponds in the Fock representation (inverting \cref{eq:stellar-isomorphism-explicit}) to the (normalized) state
\begin{equation}
\ket{\psi_a} = c\left(
3\ket{0,0} + 
\sum_{n\geq1} \frac{1}{\sqrt{n!}} \ket{n,0} +
\sum_{n\geq1} \frac{1}{\sqrt{n!}} \ket{0,n}
\right)\,.
\end{equation}
This time, the actions of the (15) generators of $G_{\rm GO}$ on $f_a$ give (discarding $c$):
\begin{align}
\widehat{\mathrm{id}}\,f_a
&= 1+e^{z_1}+e^{z_2},\\[2pt]
\widehat e_{12}\,f_a
&= \frac12\!\left(z_1 e^{z_2}+z_2 e^{z_1}\right),\\
\widehat E_{12}\,f_a
&= \frac{i}{2}\!\left(z_1 e^{z_2}-z_2 e^{z_1}\right),\\[2pt]
\widehat r_{12}\,f_a
&= \frac12\,z_1 z_2\!\left(1+e^{z_1}+e^{z_2}\right),\\
\widehat R_{12}\,f_a
&= \frac{i}{2}\,z_1 z_2\!\left(1+e^{z_1}+e^{z_2}\right),\\[2pt]
\widehat N_{1}\,f_a
&= z_1 e^{z_1},\\
\widehat N_{2}\,f_a
&= z_2 e^{z_2},\\[2pt]
\widehat s_{1}\,f_a
&= \frac12\!\left(z_1^{2}\!\left(1+e^{z_1}+e^{z_2}\right)+e^{z_1}\right),\\
\widehat s_{2}\,f_a
&= \frac12\!\left(z_2^{2}\!\left(1+e^{z_1}+e^{z_2}\right)+e^{z_2}\right),\\[2pt]
\widehat S_{1}\,f_a
&= \frac{i}{2}\!\left(z_1^{2}\!\left(1+e^{z_1}+e^{z_2}\right)-e^{z_1}\right),\\
\widehat S_{2}\,f_a
&= \frac{i}{2}\!\left(z_2^{2}\!\left(1+e^{z_1}+e^{z_2}\right)-e^{z_2}\right),\\[2pt]
\widehat q_{1}\,f_a
&= \frac{1}{\sqrt2}\!\left(z_1\!\left(1+e^{z_1}+e^{z_2}\right)+e^{z_1}\right),\\
\widehat q_{2}\,f_a
&= \frac{1}{\sqrt2}\!\left(z_2\!\left(1+e^{z_1}+e^{z_2}\right)+e^{z_2}\right),\\[2pt]
\widehat p_{1}\,f_a
&= \frac{i}{\sqrt2}\!\left(z_1\!\left(1+e^{z_1}+e^{z_2}\right)-e^{z_1}\right),\\
\widehat p_{2}\,f_a
&= \frac{i}{\sqrt2}\!\left(z_2\!\left(1+e^{z_1}+e^{z_2}\right)-e^{z_2}\right).
\end{align}
These are all elements of $\spa_{\mathbb{R}}(\mathcal{B})$ with $\mathcal{B} := (\mathcal{B}_{0} \cup \mathcal{B}_{1} \cup \mathcal{B}_{2} \cup i\mathcal{B}_{0} \cup i\mathcal{B}_{1} \cup i\mathcal{B}_{2})\setminus\{i\}$, $g_0 := 1,\ g_1 := e^{z_1},\ g_2 := e^{z_2}$,
\begin{equation}
\mathcal{B}_j := \left\{ g_j,\  g_jz_1,\  g_jz_2,\  g_jz_1z_2,\  g_jz_1^2,\  g_jz_2^2 \right\}\,,
\end{equation}
for $j=0,1,2$, and the 35 functions in $\mathcal{B}$ are again linearly independent over $\mathbb{R}$.

Thus, we get that
$\dim(\orb_{G_{\rm GO}}(\ket{\psi_a})) = \rank(M)$, with $M$ the associated $15 \times 35$ real matrix of the 15 above functions $\widehat{H}f_a$ in the basis $\mathcal{B}$. Using symbolic computation, we find that this (sparse) matrix is full-rank, i.e. $\rank(M) = 15$. We have thus shown that
\begin{equation}
\dim(\orb_{G_{\rm GO}}(\ket{\psi_a})) = 15\,.
\end{equation}
On the other hand, proceeding likewise for
\begin{equation}
f_b(z_1,z_2) = c'(\cosh(\alpha z_1 z_2))\,
\end{equation}
with $\alpha \in \mathbb{C}$, $0<|\alpha|<1$ (so that the state is normalizable), and with $c'$ a normalization constant ($c'=(1 - |\alpha|^4)^{1/2}$), which in the Fock representation reads as
\begin{equation}
\ket{\psi_b} = c'
\sum_{n\geq0} \alpha^{2n} \ket{2n,2n}\,,
\end{equation}
we find that
\begin{equation}
\dim(\orb_{G_{\rm GO}}(\ket{\psi_b})) = 14\,.
\end{equation}

This showcases that $\ket{\psi_a}$ and $\ket{\psi_b}$ are 2-mode states that are both of infinite stellar rank, but that have different orbit dimensions under $G_{\rm GO}$. In this way, we obtain that these two states are inequivalent under $G_{\rm GO}$, which could not have been inferred from the concept of the stellar rank alone.

Note that we also chose the forms of $\ket{\psi_a}$ and $\ket{\psi_b}$ to be both non-product (and likely non-separable under $G_{\rm GO}$) states, to make things less trivial.

\end{document}